\def\showauthornotes{0}
\def\showtableofcontents{1}
\def\showkeys{0}
\def\showdraftbox{0}
\def\showcolorlinks{1}
\def\usemicrotype{1}
\def\showfixme{1}
\def\arxivmode{1}
\newtheorem{theorem}{Theorem}[section]
\newtheorem*{theorem*}{Theorem}
\newtheorem{proposition}[theorem]{Proposition}
\newtheorem*{proposition*}{Proposition}
\newtheorem{lemma}[theorem]{Lemma}
\newtheorem*{lemma*}{Lemma}
\newtheorem{corollary}[theorem]{Corollary}
\newtheorem*{conjecture*}{Conjecture}
\newtheorem{fact}[theorem]{Fact}
\newtheorem*{fact*}{Fact}
\newtheorem*{hypothesis*}{Hypothesis}
\theoremstyle{definition}
\newtheorem{problem}[theorem]{Problem}
\theoremstyle{remark}
\newtheorem*{claim*}{Claim}
\newtheorem{remark}[theorem]{Remark}
\newtheorem*{remark*}{Remark}
\newtheorem*{observation*}{Observation}
\let\mathbb\varmathbb
\newcommand{\savehyperref}[2]{\texorpdfstring{\hyperref[#1]{#2}}{#2}}
\newcommand{\Sref}[1]{\hyperref[#1]{\S\ref*{#1}}}
\newcommand{\Authornote}[2]{{\sffamily\small\color{red}{[#1: #2]}}}
\newcommand{\Authornotecolored}[3]{{\sffamily\small\color{#1}{[#2: #3]}}}
\newcommand{\Authorcomment}[2]{{\sffamily\small\color{gray}{[#1: #2]}}}
\newcommand{\Authorstartcomment}[1]{\sffamily\small\color{gray}[#1: }
\newcommand{\Authorfnote}[2]{\footnote{\color{red}{#1: #2}}}
\newcommand{\Authorfixme}[1]{\Authornote{#1}{\textbf{??}}}
\newcommand{\Authormarginmark}[1]{\marginpar{\textcolor{red}{\fbox{\Large #1:!}}}}
\newcommand{\Authornote}[2]{}
\newcommand{\Authornotecolored}[3]{}
\newcommand{\Authorcomment}[2]{}
\newcommand{\Authorstartcomment}[1]{}
\newcommand{\Authorfnote}[2]{}
\newcommand{\Authorfixme}[1]{}
\newcommand{\Authormarginmark}[1]{}
\newcommand{\Jnote}{\Authornote{J}}
\newcommand{\paren}[1]{(#1)}
\newcommand{\Paren}[1]{\left(#1\right)}
\newcommand{\brac}[1]{[#1]}
\newcommand{\Brac}[1]{\left[#1\right]}
\newcommand{\abs}[1]{\lvert#1\rvert}
\newcommand{\Abs}[1]{\left\lvert#1\right\rvert}
\newcommand{\Set}[1]{\left\{#1\right\}}
\newcommand{\norm}[1]{\lVert#1\rVert}
\newcommand{\Norm}[1]{\left\lVert#1\right\rVert}
\newcommand{\iprod}[1]{\langle#1\rangle}
\newcommand{\Iprod}[1]{\left\langle#1\right\rangle}
\newcommand{\Esymb}{\mathbb{E}}
\newcommand{\Psymb}{\mathbb{P}}
\DeclareMathOperator*{\E}{\Esymb}
\DeclareMathOperator*{\ProbOp}{\Psymb}
\renewcommand{\Pr}{\ProbOp}
\newcommand{\Prob}[2][]{\Pr_{{#1}}\Set{#2}}
\newcommand{\ex}[1]{\E\brac{#1}}
\newcommand{\Ex}[2][]{\E_{{#1}}\Brac{#2}}
\newcommand{\tensor}{\otimes}
\newcommand{\textparen}[1]{\text{(#1)}}
\newcommand{\because}[1]{\textparen{because #1}}
\renewcommand{\because}[1]{\textparen{because #1}}
\newcommand{\lmax}{\lambda_{\max}}
\newcommand{\vbig}{\vphantom{\bigoplus}}
\newcommand{\defeq}{\stackrel{\mathrm{def}}=}
\newcommand{\seteq}{\mathrel{\mathop:}=}
\newcommand{\from}{\colon}
\newcommand{\mper}{\,.}
\newcommand{\mcom}{\,,}
\newcommand\bdot\bullet
\DeclareMathOperator{\Ind}{\mathbb{I}}
\DeclareMathOperator{\Ind}{\mathds 1}}
\DeclareMathOperator{\Tr}{Tr}
\DeclareMathOperator{\poly}{poly}
\DeclareMathOperator{\polylog}{polylog}
\newcommand{\etal}{et al.\xspace}
\newcommand{\N}{\mathbb N}
\newcommand{\R}{\mathbb R}
\newcommand{\cD}{\mathcal D}
\newcommand{\cL}{\mathcal L}
\newcommand{\cN}{\mathcal N}
\renewcommand{\leq}{\leqslant}
\renewcommand{\le}{\leqslant}
\renewcommand{\geq}{\geqslant}
\renewcommand{\ge}{\geqslant}
\newcommand{\draftbox}{\begin{center}
  \fbox{%
    \begin{minipage}{2in}%
      \begin{center}%
          \Large\textsc{Working Draft}\\%
        Please do not distribute%
      \end{center}%
    \end{minipage}%
  }%
\end{center}
\vspace{0.2cm}}
\newcommand{\draftbox}{}
\let\epsilon=\varepsilon
\numberwithin{equation}{section}
\newcommand\MYcurrentlabel{xxx}
\newcommand{\MYstore}[2]{%
  \global\expandafter \def \csname MYMEMORY #1 \endcsname{#2}%
}
\newcommand{\MYload}[1]{%
  \csname MYMEMORY #1 \endcsname%
}
\newcommand{\MYnewlabel}[1]{%
  \renewcommand\MYcurrentlabel{#1}%
  \MYoldlabel{#1}%
}
\newcommand{\MYdummylabel}[1]{}
\newcommand{\torestate}[1]{%
  \let\MYoldlabel\label%
  \let\label\MYnewlabel%
  #1%
  \MYstore{\MYcurrentlabel}{#1}%
  \let\label\MYoldlabel%
}
\newcommand{\restatetheorem}[1]{%
  \let\MYoldlabel\label
  \let\label\MYdummylabel
  \begin{theorem*}[Restatement of \prettyref{#1}]
    \MYload{#1}
  \end{theorem*}
  \let\label\MYoldlabel
}
\newcommand{\restatelemma}[1]{%
  \let\MYoldlabel\label
  \let\label\MYdummylabel
  \begin{lemma*}[Restatement of \prettyref{#1}]
    \MYload{#1}
  \end{lemma*}
  \let\label\MYoldlabel
}
\newcommand{\restateprop}[1]{%
  \let\MYoldlabel\label
  \let\label\MYdummylabel
  \begin{proposition*}[Restatement of \prettyref{#1}]
    \MYload{#1}
  \end{proposition*}
  \let\label\MYoldlabel
}
\newcommand{\restatefact}[1]{%
  \let\MYoldlabel\label
  \let\label\MYdummylabel
  \begin{fact*}[Restatement of \prettyref{#1}]
    \MYload{#1}
  \end{fact*}
  \let\label\MYoldlabel
}
\newcommand{\restate}[1]{%
  \let\MYoldlabel\label
  \let\label\MYdummylabel
  \MYload{#1}
  \let\label\MYoldlabel
}
\newcommand{\addreferencesection}{
  \phantomsection
  \addcontentsline{toc}{section}{References}
}
\newcommand{\e}{\epsilon}
\let\origparagraph\paragraph
\renewcommand{\paragraph}[1]{\origparagraph{#1.}}
\let\citet\cite
\theoremstyle{definition}
\newtheorem{algo}[theorem]{Algorithm}
\g@addto@macro\TPT@defaults{\footnotesize}
\DeclareMathOperator{\Span}{Span}
\DeclareMathOperator{\Id}{\mathrm{Id}}
\newcommand{\bS}{\mathbf{S}}
\newcommand{\bM}{{\mathbf M}}
\newcommand{\bT}{\mathbf{T}}
\newcommand{\bA}{\mathbf{A}}
\newcommand{\op}[1]{\left \| #1 \right \|_{\mbox{\scriptsize op}}}
\newcommand{\tO}{{\tilde O}}
\newcommand{\tOmega}{{\tilde \Omega}}
\newcommand{\tTheta}{{\tilde \Theta}}
\newcommand{\signh}{R}
\newcommand{\Event}{\mathcal{E}}
\newcommand{\tallsep}{ \ \bigg{|}\ }
\newcommand{\owl}{n^{-\omega(1)}}
\newcommand{\Eone}{\mathcal{G}_{1}}
\newcommand{\Eother}{\mathcal{G}_{i>1}}
\newcommand{\ampc}{\alpha}
\newcommand{\restc}{(2 + \rho)}
\newcommand{\inn}{ { i \in [n] } }
\newcommand{\jni} { {j \neq i} }
\newcommand{\cconst}{\tfrac{d}{n}} %
\renewcommand{\d}{\mathop{}\!\mathrm{d}}
\newcommand{\Mdiag}{M_{\mathrm{diag}}}
\newcommand{\Mcross}{M_{\mathrm{cross}}}
\newcommand{\Mdiff}{M_{\mathrm{diff}}}
\newcommand{\Msame}{M_{\mathrm{same}}}
\DeclareUrlCommand\email{}
\newcommand{\wovp}{\text{w.ov.p.}\xspace}
\newcommand{\YES}{\textsc{yes}\xspace}
\newcommand{\NO}{\textsc{no}\xspace}
\newcommand{\wovple}{\hspace{-3mm}\stackrel{\text{\wovp}}{\le}}
\newcommand{\Pisym}{\Pi_{\mathrm{sym}}}
\let\pref=\prettyref
\title{Fast spectral algorithms from sum-of-squares proofs: tensor decomposition and planted sparse vectors}
\author{%
\normalsize
Samuel B. Hopkins\thanks{Cornell University.
\protect\email{samhop@cs.cornell.edu}.
Supported by an NSF Graduate Research Fellowship (NSF award no. 1144153) and by David Steurer's NSF CAREER award.}
\and
\normalsize
Tselil Schramm\thanks{UC Berkeley, \protect\email{tschramm@cs.berkeley.edu}.
Supported by an NSF Graduate Research Fellowship (NSF award no 1106400).}
\and
\normalsize
Jonathan Shi\thanks{Cornell University, \protect\email{jshi@cs.cornell.edu}. Supported by David Steurer's NSF CAREER award.}
\and
\normalsize
David Steurer\thanks{Cornell University, \protect\email{dsteurer@cs.cornell.edu}.
Supported by a Microsoft Research Fellowship, a Alfred P. Sloan Fellowship, an NSF CAREER award, and the Simons Collaboration for Algorithms and Geometry.
}
}
\begin{document}

\maketitle
\draftbox
\thispagestyle{empty}

\begin{abstract}
We consider two problems that arise in machine learning applications:
the problem of recovering a planted sparse vector in a random linear subspace and the problem of decomposing a random low-rank overcomplete 3-tensor.
For both problems, the best known guarantees are based on the sum-of-squares method.
We develop new algorithms inspired by analyses of the sum-of-squares method.
Our algorithms achieve the same or similar guarantees as sum-of-squares for these problems but the running time is significantly faster.

For the planted sparse vector problem, we give an algorithm with running time nearly linear in the input size that approximately recovers a planted sparse vector with up to constant relative sparsity in a random subspace of $\mathbb R^n$ of dimension up to $\tilde \Omega(\sqrt n)$.
These recovery guarantees match the best known ones of Barak, Kelner, and Steurer (STOC 2014) up to logarithmic factors.

For tensor decomposition, we give an algorithm with running time close to linear in the input size (with exponent $\approx 1.086$) that approximately recovers a component of a random 3-tensor over $\mathbb R^n$ of rank up to $\tilde \Omega(n^{4/3})$.
The best previous algorithm for this problem due to Ge and Ma (RANDOM 2015) works up to rank $\tilde \Omega(n^{3/2})$ but requires quasipolynomial time.
\end{abstract}

\clearpage

\ifnum\showtableofcontents=1
{
\tableofcontents
\thispagestyle{empty}
 }
\fi

\clearpage

\setcounter{page}{1}

\section{Introduction}

\newcommand{\sos}{SoS\xspace}

The sum-of-squares (\sos) method (also known as the Lasserre hierarchy) \cite{MR931698,parrilo2000structured,MR1748764,MR1814045} is a powerful, semidefinite-programming based meta-algorithm that applies to a wide-range of optimization problems.
The method has been studied extensively for moderate-size polynomial optimization problems that arise for example in control theory and in the context of approximation algorithms for combinatorial optimization problems, especially constraint satisfaction and graph partitioning (see e.g. the survey \cite{DBLP:journals/corr/BarakS14}).
For the latter, the \sos method captures and generalizes the best known approximation algorithms based on linear programming (LP), semidefinite programming (SDP), or spectral methods, and it is in many cases the most promising approach to obtain algorithms with better guarantees---especially in the context of Khot's Unique Games Conjecture \cite{DBLP:conf/stoc/BarakBHKSZ12}.

A sequence of recent works applies the sum-of-squares method to basic problems that arise in unsupervised machine learning:
in particular, recovering sparse vectors in linear subspaces and decomposing tensors in a robust way \cite{DBLP:conf/stoc/BarakKS14, DBLP:conf/stoc/BarakKS15, DBLP:conf/colt/HopkinsSS15, BM15, DBLP:conf/approx/GeM15}.
For a wide range of parameters of these problems, \sos achieves significantly stronger guarantees than other methods, in polynomial or quasi-polynomial time.

Like other LP and SDP hierarchies, the sum-of-squares method comes with a degree parameter $d\in \N$ that allows for trading off running time and solution quality.
This trade-off is appealing because for applications the additional utility of better solutions could vastly outweigh additional computational costs.
Unfortunately, the computational cost grows rather steeply in terms of the parameter $d$:
the running time is $n^{O(d)}$ where $n$ is the number of variables (usually comparable to the instance size).
Further, even when the SDP has size polynomial in the input (when $d = O(1)$), solving the underlying semidefinite programs is prohibitively slow for large instances.

In this work, we introduce spectral algorithms for planted sparse vector, tensor decomposition, and tensor principal components analysis (PCA) that exploit the same high-degree information as the corresponding sum-of-squares algorithms without relying on semidefinite programming, and achieve the same (or close to the same) guarantees.
The resulting algorithms are quite simple (a couple of lines of \textsc{matlab} code) and have considerably faster running times---quasi-linear or close to linear in the input size.

A surprising implication of our work is that for some problems, spectral algorithms can exploit information from larger values of the parameter $d$ without spending time $n^{O(d)}$.
For example, one of our algorithms runs in nearly-linear time in the input size, even though it uses properties that the sum-of-squares method can only use for degree parameter $d\ge 4$.
(In particular, the guarantees that the algorithm achieves are strictly stronger than the guarantees that \sos achieves for values of $d<4$.)

The initial successes of \sos in the machine learning setting gave hope that techniques developed in the theory of approximation algorithms, specifically the techniques of hierarchies of convex relaxations and rounding convex relaxations, could broadly impact the practice of machine learning.
This hope was dampened by the fact that in general, algorithms that rely on solving large semidefinite programs are too slow to be practical for the large-scale problems that arise in machine learning.
Our work brings this hope back into focus by demonstrating for the first time that with some care \sos algorithms can be made practical for large-scale problems.

In the following subsections we describe each of the problems that we consider, the prior best-known guarantee via the \sos hierarchy, and our results.

\subsection{Planted sparse vector in random linear subspace}
The problem of finding a sparse vector planted in a random linear subspace was introduced by Spielman, Wang, and Wright as a way of learning sparse dictionaries \cite{DBLP:journals/jmlr/SpielmanWW12}.
Subsequent works have found further applications and begun studying the problem in its own right \cite{Demanet01122014,DBLP:conf/stoc/BarakKS14, DBLP:conf/nips/QuSW14}.
In this problem, we are given a basis for a $d$-dimensional linear subspace of $\R^n$ that is random except for one planted sparse direction, and the goal is to recover this sparse direction.
The computational challenge is to solve this problem even when the planted vector is only mildly sparse (a constant fraction of non-zero coordinates) and the subspace dimension is large compared to the ambient dimension ($d \ge n^{\Omega(1)}$).

Several kinds of algorithms have been proposed for this problem based on linear programming (LP), basic semidefinite programming (SDP), sum-of-squares, and non-convex gradient descent (alternating directions method).

An inherent limitation of simpler convex methods (LP and basic SDP) \cite{DBLP:journals/jmlr/SpielmanWW12, DBLP:conf/nips/dAspremontGJL04} is that they require the relative sparsity of the planted vector to be polynomial in the subspace dimension (less than $n/\sqrt d$ non-zero coordinates).

Sum-of-squares and non-convex methods do not share this limitation.
They can recover planted vectors with constant relative sparsity even if the subspace has polynomial dimension (up to dimension $O(n^{1/2})$ for sum-of-squares \cite{DBLP:conf/stoc/BarakKS14} and up to $O(n^{1/4})$ for non-convex methods \cite{DBLP:conf/nips/QuSW14}).

\medskip
We state the problem formally:

\begin{problem}[Planted sparse vector problem with ambient dimension $n\in \N$, subspace dimension $d\le n$, sparsity $\e>0$, and accuracy $\eta>0$]
    Given an arbitrary orthogonal basis of a subspace spanned by vectors $v_0,v_1,\ldots,v_{d-1} \in \R^n$, where $v_0$ is a vector with at most $\e n$ non-zero entries and $v_1,\ldots,v_{d-1}$ are vectors sampled independently at random from the standard Gaussian distribution on $\R^n$, output a unit vector~$v\in\R^n$ that has correlation~$\langle v, v_0 \rangle^2 \ge 1-\eta$ with the sparse vector $v_0$.
\end{problem}

\paragraph{Our Results}
Our algorithm runs in nearly linear time in the input size, and matches the best-known guarantees up to a polylogarithmic factor in the subspace dimension \cite{DBLP:conf/stoc/BarakKS14}.

\begin{theorem}[Planted sparse vector in nearly-linear time]
\label{thm:planted-fast}
There exists an algorithm that, for every sparsity $\e>0$, ambient dimension $n$, and subspace dimension $d$ with $d \le \sqrt n / (\log n)^{O(1)}$, solves the planted sparse vector problem with high probability for some accuracy $\eta \le O(\epsilon^{1/4}) + o_{n\to \infty}(1)$.
The running time of the algorithm is $\tO(n d)$.
\end{theorem}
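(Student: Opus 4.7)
The plan is to design a spectral algorithm that implements the degree-4 sum-of-squares analysis of Barak, Kelner, and Steurer \cite{DBLP:conf/stoc/BarakKS14} without ever constructing any of the large matrices it suggests. The starting point is the $d^2 \times d^2$ matrix $\bM := \sum_{i=1}^n (U_i U_i^\top)^{\otimes 2}$, where $U_1,\ldots,U_n \in \R^d$ are the rows of the given orthogonal basis $U \in \R^{n\times d}$; by the analysis of \cite{DBLP:conf/stoc/BarakKS14}, after subtracting an explicit ``Gaussian bias'' matrix $B$, the top eigenvector of $\bM - B$ (reshaped as a $d \times d$ matrix) is $\Omega(1)$-close to $c_0 c_0^\top$, where $v_0 = U c_0$ is the planted sparse vector. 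The signal eigenvalue is of order $\|v_0\|_4^4 \geq 1/(\epsilon n)$, while concentration for $d \le \sqrt n / \polylog n$ will bound the noise operator norm well below this threshold.

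The nearly-linear running time hinges on never forming $\bM - B$ or its $d \times d$ reshaping explicitly. The key observation is that the action of $\bM$ on a \emph{rank-one} symmetric matrix $yy^\top$ has the compact form
\[
\bM\cdot(yy^\top) \;=\; U^\top \mathrm{diag}\bigl((Uy)^{\odot 2}\bigr)\, U,
\]
which is stored implicitly by the diagonal $d_y := (Uy)^{\odot 2} \in \R^n$ (computable in $O(nd)$ time via a single multiplication by $U$); the bias correction $B(yy^\top)$ has a comparable closed form in $y$. Any subsequent mat-vec of the resulting implicit matrix against a $z \in \R^d$, namely $z \mapsto U^\top(d_y \odot (Uz))$, again costs $O(nd)$. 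The algorithm then alternates (i) assembling the implicit representation of $(\bM - B)(y_t y_t^\top)$, (ii) extracting its top eigenvector $y_{t+1}$ by an inner power iteration using this mat-vec oracle, and (iii) iterating $\tO(1)$ times, for total cost $\tO(nd)$. A small number of random Gaussian initializations $y_0 \in \R^d$ ensures $\langle y_0, c_0\rangle^2 \geq \Omega(1/d)$ with constant probability.

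Correctness is argued by decomposing $(\bM - B)(yy^\top) = \alpha(y)\, c_0 c_0^\top + N(y)$, where the signal coefficient $\alpha(y)$ grows with $\langle y, c_0 \rangle^2 \cdot \|v_0\|_4^4$ (the sparse coordinates $i \in \supp(v_0)$ contribute $(v_0)_i^4 c_0 c_0^\top$ terms that sum to $\|v_0\|_4^4\, c_0 c_0^\top$), and $N(y)$ is a quartic polynomial in the Gaussian vectors $v_1,\ldots,v_{d-1}$ whose operator norm is controlled by matrix concentration. For $d \le \sqrt n/\polylog n$ the signal dominates once $\langle y, c_0\rangle^2$ exceeds a polynomially small threshold, so the extracted $y_{t+1}$ is $\Omega(1)$-close to $c_0$; further outer iterations amplify the correlation to $1 - O(\epsilon^{1/4})$ as in \cite{DBLP:conf/stoc/BarakKS14}. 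The main obstacle will be establishing a bound on $\|N(y_t)\|_{\mathrm{op}}$ that holds \emph{uniformly} across iterations, since each iterate $y_t$ depends on the random basis $U$ itself. We plan to handle this via a net argument over unit vectors $y \in \R^d$, combined with decoupling and matrix Bernstein bounds of the kind used in the fast tensor-PCA analyses of \cite{DBLP:conf/colt/HopkinsSS15}, so that the per-iteration noise control survives the worst-case dependence between $y_t$ and $U$.
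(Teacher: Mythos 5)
There is a genuine gap in your proposal, and it is quantitative: the iterate-from-random-initialization scheme cannot reach $d = \tilde\Omega(\sqrt n)$. Write the operator you use as $\bM(yy^\top) = \sum_{i=1}^n \langle a_i, y\rangle^2\, a_i a_i^\top$ (rows $a_i$ of the basis matrix), and work in the ``good basis'' where the planted direction is $e_1$. If $\rho = \langle y, c_0\rangle^2$, the signal component of $(\bM - B)(yy^\top)$ in the $e_1 e_1^\top$ direction has magnitude about $\rho\,\|v_0\|_4^4 \approx \rho/(\epsilon n)$: the signal is \emph{attenuated by $\rho$}. The bias-corrected Gaussian part $\sum_i \bigl(\langle b_i, y\rangle^2 - \E\bigr) b_i b_i^\top$ has, for a fixed $y$, operator norm on the order of $\sqrt{d}/n^{3/2}$ (matrix Bernstein: variance proxy $\approx d/n^3$), and under the net/union-bound argument you need for later iterations (where $y_t$ depends on the basis) this degrades further to order $d/n^{3/2}$. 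With a random start, $\rho \approx 1/d$, so signal dominance $\rho/(\epsilon n) \gg \sqrt d/n^{3/2}$ forces $d \lesssim n^{1/3}\epsilon^{-2/3}$ even for the very first step, and the uniform bound needed thereafter forces $\rho \gtrsim \epsilon d/\sqrt n$, which at $d \approx \sqrt n$ means you must already have constant correlation before the iteration can maintain it. So the scheme bottoms out around the guarantees of the nonconvex methods ($d \approx n^{1/3}$ or worse), not the claimed $d \le \sqrt n/\polylog(n)$; at low correlation the top eigenvector of $(\bM-B)(y_ty_t^\top)$ is dominated by noise and need not correlate with $c_0$ at all.

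The paper's algorithm avoids this trap by never evaluating the operator on a rank-one guess: it applies it to the \emph{identity}, i.e., takes a centered partial trace, forming the single explicit $d\times d$ matrix $\sum_{i}(\|a_i\|^2 - \tfrac dn)\,a_i a_i^\top$. Because $\langle \Id, c_0c_0^\top\rangle = 1$, the signal enters at full strength $\|v_0\|_4^4 \ge 1/(\epsilon n)$ with no $\rho$ attenuation, while the centering by $d/n$ makes the noise cancel under the trace so its norm is $O(\|v_0\|_4^3 n^{-1/4} + \dots + n^{-1}) = o(\|v_0\|_4^4)$ for $d \le \sqrt n/\polylog(n)$ (\pref{lem:fps-main-technical}); the top eigenvector of this one matrix already gives $\langle Su, v_0\rangle^2 \ge 1 - O(\epsilon^{1/4}) - o(1)$, and a separate lemma (\pref{lem:fps-basis-change}) transfers the analysis from the convenient non-orthogonal basis to the arbitrary orthogonal basis actually given as input --- a step your proposal also does not address. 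If you want to salvage your implicit-operator viewpoint, the fix is exactly this: feed it the identity (with the $d/n$ centering) rather than iterating on rank-one matrices, at which point no initialization, net argument, or outer loop is needed and the $\tO(nd)$ running time is immediate.
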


We give a technical overview of the proof in \pref{sec:techniques}, and a full proof in \pref{sec:fps}.

Previous work also showed how to recover the planted sparse vector exactly.
The task of going from an approximate solution to an exact one is a special case of standard compressed sensing (see e.g. \cite{DBLP:conf/stoc/BarakKS14}).

\begin{table}[t]
    \caption{Comparison of algorithms for the planted sparse vector problem with ambient dimension $n$, subspace dimension $d$, and relative sparsity~$\epsilon$.}
    \begin{tabular}{|c|c|c|c|c|}
      \hline
      Reference & Technique & Runtime & Largest $d$ & Largest $\epsilon$ \\
      \hline \hline
      Demanet, Hand \cite{Demanet01122014} & linear programming & poly& any & $\Omega(1/\sqrt{d})$\\
      \hline
      Barak, Kelner, Steurer \cite{DBLP:conf/stoc/BarakKS14} & SoS, general SDP & poly& $\Omega(\sqrt{n})$ & $\Omega(1)$ \\
      \hline
      Qu, Sun, Wright \cite{DBLP:conf/nips/QuSW14} & alternating minimization & $\tO(n^2 d^5)$ & $\Omega(n^{1/4})$ & $\Omega(1)$ \\
      \hline
      this work & SoS, partial traces & $\tO(n d)$  & $\tOmega(\sqrt n)$ & $\Omega(1)$ \\
      \hline
    \end{tabular}
  \label{table:comparison-fps}
\end{table}

\subsection{Overcomplete tensor decomposition}
Tensors naturally represent multilinear relationships in data.
Algorithms for tensor decompositions have long been studied as a tool for data analysis across a wide-range of disciplines (see the early work of Harshman~\cite{H70} and the survey~\cite{DBLP:journals/siamrev/KoldaB09}).
While the problem is NP-hard in the worst-case \cite{H90,HL13}, algorithms for special cases of tensor decomposition have recently led to new provable algorithmic results for several unsupervised learning problems \cite{DBLP:journals/jmlr/AnandkumarGHKT14,DBLP:conf/stoc/BhaskaraCMV14,DBLP:conf/stoc/GoyalVX14,DBLP:journals/jmlr/AnandkumarGHK14} including independent component analysis, learning mixtures of Gaussians \cite{DBLP:conf/stoc/GeHK15}, Latent Dirichlet topic modeling~\cite{DBLP:journals/algorithmica/AnandkumarFHKL15} and dictionary learning~\cite{DBLP:conf/stoc/BarakKS15}.
Some previous learning algorithms can also be reinterpreted in terms of tensor decomposition \cite{Chang96, MosselRoch06, NguyenRegev09}.

A key algorithmic challenge for tensor decompositions is \emph{overcompleteness}, when the number of components is larger than their dimension (i.e., the components are linearly dependent).
Most algorithms that work in this regime require tensors of order $4$ or higher \cite{DBLP:journals/tsp/LathauwerCC07,DBLP:conf/stoc/BhaskaraCMV14}.
For example, the FOOBI algorithm of \cite{DBLP:journals/tsp/LathauwerCC07} can recover up to $\Omega(d^2)$ components given an order-$4$ tensor in dimension $d$ under mild algebraic independence assumptions for the components---satisfied with high probability by random components.
For overcomplete $3$-tensors, which arise in many applications of tensor decompositions, such a result remains elusive.

Researchers have therefore turned to investigate average-case versions of the problem, when the  components of the overcomplete $3$-tensor are random:
Given a $3$-tensor $T\in \cramped{\R^{d^3}}$ of the form
\begin{displaymath}
  T = \sum_{i=1}^n a_i \otimes a_i \otimes a_i\mcom
\end{displaymath}
where $a_1,\ldots,a_n$ are random unit or Gaussian vectors, the goal is to approximately recover the components $a_1,\ldots,a_n$.

Algorithms based on \emph{tensor power iteration}---a gradient-descent approach for tensor decomposition---solve this problem in polynomial time when $n\le C\cdot d$ for any constant $C\ge 1$ (the running time is exponential in $C$) \cite{DBLP:conf/colt/AnandkumarGJ15}.
Tensor power iteration also admits local convergence analyses for up to $n\le \tilde\Omega(d^{1.5})$ components \cite{DBLP:conf/colt/AnandkumarGJ15,DBLP:journals/corr/AnandkumarGJ14b}.
Unfortunately, these analyses do not give polynomial-time algorithms because it is not known how to efficiently obtain the kind of initializations assumed by the analyses.

Recently, Ge and Ma~\cite{DBLP:conf/approx/GeM15} were able to show that a tensor-decomposition algorithm \cite{DBLP:conf/stoc/BarakKS15} based on sum-of-squares solves the above problem for $n \le \tilde\Omega(d^{1.5})$ in quasi-polynomial time~$n^{O(\log n)}$.
The key ingredient of their elegant analysis is a subtle spectral concentration bound for a particular degree-$4$ matrix-valued polynomial associated with the decomposition problem of random overcomplete $3$-tensors.

\begin{table}[t]
  \caption{
    Comparison of decomposition algorithms for overcomplete 3-tensors with $n$ components in dimension $d$.}
  \begin{threeparttable}
    \centering
    \begin{tabular}{|c|c|c|c|c|}
      \hline
      Reference & Technique & Runtime & Largest $n$ & Components \\ \hline \hline
      Anandkumar \etal \cite{DBLP:conf/colt/AnandkumarGJ15}\tnote{a} & tensor power iteration & poly & $C\cdot d$ & incoherent\\ \hline
      Ge, Ma \cite{DBLP:conf/approx/GeM15} & SoS, general SDP & $n^{O(\log n)}$ & $\tOmega(d^{3/2})$  & $\cN(0,\tfrac {1}{d}\Id_d)$ \\ \hline
      this work\tnote{b} & SoS, partial traces & $\tO(nd^{1 + \omega})$ & $\tOmega(d^{4/3})$ & $\cN(0,\tfrac{1}{d}\Id_d)$ \\ \hline
    \end{tabular}
    \begin{tablenotes}
    \item[a] The analysis shows that for every constant $C\ge 1$, the running time is polynomial for $n \le C\cdot d$ components.
The analysis assumes that the components also satisfy other random-like properties besides incoherence.
    \item[b] Here, $\omega\le 2.373$ is the constant so that $d \times d$ matrices can be multiplied in $O(d^\omega)$ arithmetic operations.
    \end{tablenotes}
  \end{threeparttable}
  \label{table:comparison-tdecomp}
\end{table}

\medskip
We state the problem formally:
\begin{problem}[Random tensor decomposition with dimension $d$, rank $n$, and accuracy $\eta$]
Let $a_1,\ldots,a_n \in \R^d$ be independently sampled vectors from the Gaussian distribution $\cN(0,\tfrac 1 d \Id_d)$, and let $\bT \in \cramped{(\R^d)^{\otimes 3}}$ be the 3-tensor $\bT = \sum_{i=1}^n a_i^{\tensor 3}$.
\begin{quote}
\begin{description}
\item[Single component:] Given $\bT$ sampled as above, find a unit vector $b$ that has correlation $\max_i\langle  a_i, b \rangle \ge 1-\eta$ with one of the vectors $a_i$.
\item[All components:] Given $\bT$ sampled as above, find a set of unit vectors $\{b_1,\ldots,b_n\}$ such that $\langle  a_i, b_i \rangle \ge 1-\eta$ for every $i\in [n]$.
\end{description}
\end{quote}
\end{problem}

\paragraph{Our Results}
We give the first polynomial-time algorithm for decomposing random overcomplete 3-tensors with up to $\omega(d)$ components.
Our algorithms works as long as the number of components satisfies $n \le \tilde \Omega(d^{4/3})$, which comes close to the bound $\tilde \Omega(d^{1.5})$ achieved by the aforementioned quasi-polynomial algorithm of Ge and Ma.
For the single-component version of the problem, our algorithm runs in time close to linear in the input size.

\begin{theorem}[Fast random tensor decomposition]
\label{thm:tensor-fast}
There exist randomized algorithms that, for every dimension $d$ and rank $n$ with $d\le n \le d^{4/3}/(\log n)^{O(1)}$, solve the random tensor decomposition problem with probability $1-o(1)$ for some accuracy $\eta\le \tO(n^3/d^4)^{1/2}$.
The running time for the single-component version of the problem is $\tO(\min\{d^{1+\omega}, d^{3.257}\})$, where $d^\omega$ is the time to multiply two $d$-by-$d$ matrices.
The running time for the all-components version of the problem is $\tO(n\cdot \min\{d^{1+\omega}, d^{3.257}\})$.
\end{theorem}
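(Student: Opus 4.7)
The strategy is to design a spectral algorithm that recovers one component using only $d\times d$ matrix operations, analyze its correctness via matrix concentration, and iterate via deflation to recover all components.

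\paragraph{Single-component algorithm}
The degree-$4$ sum-of-squares relaxation of this problem would analyze the $d^2\times d^2$ moment matrix $\Psi = \sum_i (a_ia_i^\top)^{\otimes 2}$, whose spectrum (after appropriate reweighting) picks out the components~$a_i$. To avoid the $d^2$ blow-up, the key observation is that for a random Gaussian $g \in \R^d$ the partial trace $N_g := \mathrm{PTr}_2\bigl((\Id \otimes gg^\top)\Psi\bigr)$ equals the squared contracted matrix $\bT(g,\cdot,\cdot)^2$, which can be formed directly from~$\bT$ in time $\tO(d^{1+\omega})$ via block fast matrix multiplication on an unfolding of $\bT$. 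After subtracting a scalar multiple of the identity (estimated from $\tr N_g$) to remove the dominant bulk eigenvalue, I would argue that the top eigenvector of the resulting matrix is approximately the component $a_{i^\star}$ with $i^\star = \argmax_i \abs{\iprod{g,a_i}}$, and then refine the estimate with a few steps of tensor power iteration on $\bT$.

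\paragraph{Spectral analysis and main obstacle}
Expanding gives $N_g = \sum_i \iprod{g,a_i}^2 \norm{a_i}^2 \, a_ia_i^\top + E_g$ where $E_g = \sum_{i\neq j}\iprod{g,a_i}\iprod{g,a_j}\iprod{a_i,a_j}\,a_ia_j^\top$. After bulk subtraction, the diagonal part has a spectral gap of order $\log n$ in the direction $a_{i^\star}$, since the maximum of $n$ independent chi-squared--like scalars $\iprod{g,a_i}^2$ is $\Theta(\log n)$. The main technical challenge will be bounding $\norm{E_g}_{\mathrm{op}}$ by $o(\log n)$ up to $n \le \tilde\Omega(d^{4/3})$. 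A naive Frobenius or matrix-Bernstein bound only works up to $n \le \tilde O(d)$; the plan is to view $E_g$ as a matrix-valued Gaussian chaos of degree $2$ in $g$ (after conditioning on the $\{a_i\}$) and apply a matrix Khintchine / decoupling bound, combined with trace-moment control over the inner products $\iprod{a_i,a_j}$, to obtain $\norm{E_g}_{\mathrm{op}} = \tO(n^{3/2}/d^2)$. This will be a degree-$2$ analog of the degree-$4$ concentration bound driving the quasi-polynomial analysis of Ge--Ma~\cite{DBLP:conf/approx/GeM15}, and the drop in degree is precisely what turns their $n^{O(\log n)}$ runtime into the near-linear bound claimed here.

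\paragraph{Deflation for all components}
A Davis--Kahan argument applied to the spectral gap versus noise bound then yields $\iprod{v,a_{i^\star}/\norm{a_{i^\star}}}^2 \ge 1 - \tO(n^3/d^4)$, giving the stated accuracy $\eta \le \tO(n^3/d^4)^{1/2}$. To recover all components I would, after producing an approximate $\tilde a$ correlated with some $a_{i^\star}$, subtract $\tilde a^{\otimes 3}$ from $\bT$ and recurse. An inductive argument shows that the random structure of $\bT$ survives each deflation (since $\tilde a$ is correlated with only one $a_i$), and that the per-step errors accumulate at most linearly over the $n$ rounds, so the recursion remains within the regime where the single-component analysis applies. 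Multiplying the single-component runtime by the number of rounds yields the claimed $\tO(n \cdot d^{1+\omega})$ total.
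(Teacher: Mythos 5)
There is a genuine gap, and it sits at the heart of your single-component step. Your algorithm works with the $d\times d$ contraction $N_g=\bT(g,\cdot,\cdot)^2=\sum_i\iprod{g,a_i}^2\|a_i\|^2\,a_ia_i^\top+E_g$, and you hope to prove $\Norm{E_g}=\tO(n^{3/2}/d^2)$ and beat it with a $\Theta(\log n)$ signal. Neither half survives the overcomplete regime $n\gg d$. First, $E_g=\sum_{i\neq j}\iprod{g,a_i}\iprod{g,a_j}\iprod{a_i,a_j}\,a_ia_j^\top$ has Frobenius norm $\approx n/\sqrt d$ but lives in only $d$ dimensions, so its spectral norm is heuristically (and in fact) at least $\tOmega(n/d)$; the claimed $\tO(n^{3/2}/d^2)$ is smaller than this by a factor $\sqrt n/d<1$ and cannot hold. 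Second, even the ``diagonal'' part fails: for $n\gg d$ the vectors $a_i$ are far from orthonormal, $\Norm{\sum_{i\neq i^\star}w_i\,a_ia_i^\top}\gtrsim (n/d)\cdot\mathrm{avg}_i\,w_i$, and subtracting a multiple of the identity removes only the mean of this bulk, not its $\tOmega(\sqrt{n/d})$ fluctuation (which with the weights $\iprod{g,a_i}^2$ is again of order $n/d$). So a weight boost of size $O(\log n)$ — and note the gap between the largest and second-largest of $n$ values $\iprod{g,a_i}^2$ is only $O(1)$, not $\Theta(\log n)$ — is swamped once $n\gg d\,\polylog(d)$. This is exactly the obstruction the paper's introduction flags for flattenings/contractions of $\bT$ itself, and it is why the paper does \emph{not} contract down to $d\times d$: it forms the $d^2\times d^2$ matrix $M=\sum_{i,j}\iprod{g,T(a_i\tensor a_j)}(a_i\tensor a_j)(a_i\tensor a_j)^\top$ from $\bT^{\tensor 2}$, preconditions by $R=\sqrt2\,(\Sigma^+)^{1/2}$ so that the lifted components satisfy $\Norm{\sum_i R(a_i^{\tensor 2})(a_i^{\tensor 2})^\top R}\le 1+\tO(n/d^{3/2})$ (near-orthonormality up to $n\approx d^{3/2}$), and then a mere $(1+1/\log n)$ relative boost of one weight — which happens with probability about $1/n$ and is amplified by $\tO(n)$ independent repetitions — already yields a spectral gap, with the cross terms bounded by $\tO(n^{3/2}/d^2)$ \emph{in that lifted space}. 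The fast runtime comes from implementing power iteration on this implicit $d^2\times d^2$ matrix via $d\times d^2$ matrix products, not from shrinking the matrix to $d\times d$.

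Two secondary issues: your success probability analysis needs the top weight to beat the runner-up by a usable factor, which for a single Gaussian draw happens only with polynomially small probability (the paper handles this by $\tO(n)$ restarts plus a certification test $\sum_i\iprod{a_i,v}^3\ge 1-c(n,d)$, not by a one-shot argument); and your deflation scheme is not controlled at the stated accuracy — each subtraction of $\tilde a^{\otimes 3}$ leaves a residual of Frobenius size about $\eta^{1/2}=\tO(n^{3/2}/d^2)$, and over $n$ rounds these can accumulate to $\tO(n^{5/2}/d^2)\gg 1$ at $n\approx d^{4/3}$, far beyond what the noise analysis tolerates. The paper avoids deflation entirely (independent restarts, then optional boosting via local search).
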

We give a technical overview of the proof in \pref{sec:techniques}, and a full proof in \pref{sec:tdecomp}.

We remark that the above algorithm only requires access to the input tensor with some fixed inverse polynomial accuracy because each of its four steps amplifies errors by at most a polynomial factor (see \pref{alg:four-thirds}).
In this sense, the algorithm is robust.

\subsection{Tensor principal component analysis}
The problem of tensor principal component analysis is similar to the tensor decomposition problem.
However, here the focus is not on the number of components in the tensor, but about recovery in the presence of a large amount of random noise.
We are given as input a tensor $\tau \cdot v^{\tensor 3} + \bA$, where $v \in \R^n$ is a unit vector and the entries of $\bA$ are chosen iid from $\cN(0,1)$.
This \emph{spiked tensor} model was introduced by Montanari and Richard~\cite{DBLP:conf/nips/RichardM14}, who also obtained the first algorithms to solve the model with provable statistical guarantees.
The spiked tensor model was subsequently addressed by a subset of the present authors \cite{DBLP:conf/colt/HopkinsSS15}, who applied the \sos approach to improve the signal-to-noise ratio required for recovery from odd-order tensors.

\medskip
We state the problem formally:

\begin{problem}[Tensor principal components analysis with signal-to-noise ratio $\tau$ and accuracy $\eta$]
    Let $\bT \in (\R^d)^{\tensor 3}$ be a tensor so that $\bT = \tau \cdot v^{\tensor 3} + A$, where $A$ is a tensor with independent standard gaussian entries and $v \in \R^d$ is a unit vector.
    Given $\bT$, recover a unit vector $v' \in \R^d$ such that $\iprod{v',v}\ge 1 - \eta$.
\end{problem}

\paragraph{Our results}
For this problem, our improvements over the previous results are more modest---we achieve signal-to-noise guarantees matching \cite{DBLP:conf/colt/HopkinsSS15}, but with an algorithm that runs in linear time rather than near-linear time (time $O(d^3)$ rather than $O(d^3\polylog d)$, for an input of size $d^3$).

\begin{theorem}[Tensor principal component analysis in linear time]\label{thm:tpca-intro}
  There is an algorithm which solves the tensor principal component analysis problem with accuracy $\eta>0$ whenever the signal-to-noise ratio satisfies $\tau \geq O(n^{3/4}/\eta \cdot \log^{1/2} n)$.
  Furthermore, the algorithm runs in time $O(d^3)$.
\end{theorem}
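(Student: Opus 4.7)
The plan is to follow the paper's template of implementing what the degree-$4$ sum-of-squares relaxation ``sees'' through a purely spectral computation on a small matrix obtained by a partial trace. Concretely, let $\bT_{(1)} \in \R^{d \times d^2}$ denote the mode-$1$ unfolding of $\bT$ and set
\[
  M \;=\; \bT_{(1)}\bT_{(1)}^\top \;\in\; \R^{d\times d}\mcom \qquad \text{i.e.,}\quad M_{ii'} \;=\; \sum_{j,k \in [d]} \bT_{ijk}\bT_{i'jk}\mper
\]
The algorithm returns the top eigenvector of $M$ (equivalently, the top left singular vector of $\bT_{(1)}$) as its estimate $u$ of~$v$.

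Substituting $\bT = \tau\, v^{\otimes 3} + A$ and expanding,
\[
  M \;=\; \tau^2\, vv^\top \;+\; \tau\,(v w^\top + w v^\top) \;+\; B\mcom
\]
where $w_i := \sum_{j,k} v_j v_k A_{ijk}$ and $B := A_{(1)}A_{(1)}^\top$. A direct calculation shows that $w$ has iid $\cN(0,1)$ coordinates (distinct rows of $A$ are independent and each row contributes variance $\|v\|^4 = 1$), hence $\|w\| = \tilde O(\sqrt d)$ with high probability and the cross-term has operator norm $\tilde O(\tau\sqrt d)$. The matrix $B$ is a $d \times d$ Wishart matrix built from $d^2$ ``samples''; Marchenko--Pastur (equivalently, Gordon's inequality for the extreme singular values of the $d \times d^2$ Gaussian matrix $A_{(1)}$) yields $B = d^2 \cdot \Id + B'$ with $\|B'\| = \tilde O(d^{3/2})$ w.h.p. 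Since $d^2 \cdot \Id$ acts as a scalar on every eigenvector, the effective noise in $M$ is $E := \tau(vw^\top + wv^\top) + B'$ with $\|E\| = \tilde O(d^{3/2} + \tau\sqrt d)$, and Davis--Kahan gives
\[
  \sin^2 \angle(u,v) \;\leq\; \|E\|^2\big/\tau^4 \;=\; \tilde O\bigparen{d^3/\tau^4}\mcom
\]
which at $\tau \geq C\cdot d^{3/4}(\log d)^{1/2}/\eta$ implies $\iprod{u,v} \geq 1 - \eta$ as stated.

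For the running time, $M$ is never formed explicitly. A matrix--vector product $x \mapsto Mx$ can be computed via two tensor contractions---first $Y_{jk} = \sum_{i'} \bT_{i'jk}\, x_{i'}$, then $(Mx)_i = \sum_{j,k}\bT_{ijk}\, Y_{jk}$---each of cost $O(d^3)$. A bounded number of power (or Lanczos) iterations from a random start, combined with a final refinement step $u \mapsto \bT(u,u,\cdot)/\|\bT(u,u,\cdot)\|$ to amplify the correlation, then recovers a vector satisfying the required guarantee; the total cost is dominated by $O(1)$ such contractions, giving the claimed $O(d^3)$ bound.

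The main technical obstacle is the spectral estimate $\|B - d^2\Id\| = \tilde O(d^{3/2})$. \Naive Wigner-type bounds do not apply directly because the entries $B_{ii'} = \sum_{jk} A_{ijk}A_{i'jk}$ for different row pairs share the underlying Gaussians $\{A_{ijk}\}$ and are correlated; the clean route is to treat $B$ as a Wishart matrix and invoke Marchenko--Pastur (or run an $\epsilon$-net argument on the bilinear form $(x,y)\mapsto x^\top A_{(1)} y$). This concentration estimate is precisely what pins down the $d^{3/4}$ signal-to-noise threshold and matches the guarantee obtained in~\cite{DBLP:conf/colt/HopkinsSS15} via explicit sum-of-squares reasoning.
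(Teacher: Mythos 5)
Your statistical analysis is essentially sound, and it takes a genuinely different route from the paper. You analyze the unfolding Gram matrix $\bT_{(1)}\bT_{(1)}^\top = \tau^2 vv^\top + \tau(vw^\top + wv^\top) + A_{(1)}A_{(1)}^\top$ and exploit that the Wishart part is $d^2\Id$ plus a fluctuation of norm $O(d^{3/2})$, so that only the \emph{spread} of the noise spectrum, not its norm, competes with $\tau^2$; this is the Richard--Montanari unfolding matrix with the refined recentering analysis, and the concentration input is just extreme singular values of a $d\times d^2$ Gaussian matrix. The paper instead works with the partial-trace matrix $\sum_i \Tr(T_i)\cdot T_i$, whose noise part $\sum_i \Tr(A_i)A_i$ is controlled by a truncated matrix Bernstein inequality (\pref{lem:tpca-concentration}); there the noise is genuinely centered, no bulk subtraction is needed. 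Both give the threshold $\tau \gtrsim n^{3/4}$ with the stated $\eta$ dependence, so as far as the recovery guarantee goes your argument is a legitimate alternative (your final refinement step $u\mapsto \bT(u,u,\cdot)$ also works, but note $u$ depends on $\bA$, so you need the uniform bound $\max_{\|x\|=\|y\|=1}|\bA(x,x,y)| \le O(\sqrt d)$, i.e.\ the injective norm of the noise tensor, rather than treating $\bA(u,u,\cdot)$ as a fixed Gaussian vector).

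The genuine gap is in the running-time claim. Your matrix--vector product costs $\Theta(d^3)$ per iteration, and the assertion that ``$O(1)$ contractions'' suffice fails at the stated signal-to-noise ratio. Even after subtracting the bulk (which, incidentally, your algorithm description omits: power iteration on $M$ itself has gap ratio $1-\tO(d^{-1/2})$ and would need $\tO(\sqrt d)$ iterations, so the shift by $\approx d^2\Id$ must be built into the iteration), the shifted matrix has top eigenvalue $\approx \tau^2$ and second eigenvalue as large as $\Theta(d^{3/2})$, so the eigenvalue ratio is only $\Theta(\eta^2/\log n)$ when $\tau = \Theta(n^{3/4}\log^{1/2}n/\eta)$. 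A random starting vector has correlation $\Theta(d^{-1/2})$ with the top eigenvector, so power (or Lanczos) iteration needs $\Theta(\log d/\log\log d)$ iterations (for constant $\eta$) to reach even constant correlation, giving total time $\omega(d^3)$, i.e.\ only $\tO(d^3)$ --- matching the prior near-linear-time algorithms rather than the theorem's $O(d^3)$. The alternative of forming $M$ explicitly is a $(d\times d^2)\times(d^2\times d)$ product, costing $O(d^4)$ naively or $O(d^{3.25+})$ with fast rectangular multiplication, again superlinear. This is precisely the point the paper's construction is designed around: $\sum_i \Tr(T_i)\cdot T_i$ can be assembled in a single $O(d^3)$ pass over the tensor, after which each iteration costs only $O(d^2)$, so the $O(\log(n/\epsilon))$ iterations are absorbed below the input size. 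To rescue your route you would either need $\tau$ polynomially above the threshold (so that $O(1)$ implicit iterations suffice), or an $O(d^3)$-time way to compress $\bT_{(1)}\bT_{(1)}^\top$ to a small explicit matrix --- which is essentially what the partial trace accomplishes.
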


Though for tensor PCA our improvement over previous work is modest, we include the results here as this problem is a pedagogically poignant illustration of our techniques.
We give a technical overview of the proof in \pref{sec:techniques}, and a full proof in \pref{sec:tpca}.

\begin{table}
    \begin{center}
  \caption{Comparison of algorithms for principal component analysis of 3-tensors in dimension $d$ and with signal-to-noise ratio $\tau$.}
  \begin{tabular}{|c|c|c|c|}
    \hline
    Reference & Technique & Runtime & Smallest $\tau$ \\ \hline \hline
    Montanari, Richard \cite{DBLP:conf/nips/RichardM14} & spectral & {\Large\strut} $\tO(d^3)$ & $n$\\ \hline
    Hopkins, Shi, Steurer \cite{DBLP:conf/colt/HopkinsSS15} & SoS, spectral & {\Large\strut} $\tO(d^3)$ & $\Omega(n^{3/4})$\\ \hline
    this work & SoS, partial traces & $O(d^3)$ & $\tOmega(n^{3/4}) $ \\ \hline
  \end{tabular}
  \label{table:comparison-tpca}
  \end{center}
\end{table}

\subsection{Related work}

Foremost, this work builds upon the \sos algorithms of \cite{DBLP:conf/stoc/BarakKS14, DBLP:conf/stoc/BarakKS15, DBLP:conf/approx/GeM15, DBLP:conf/colt/HopkinsSS15}.
In each of these previous works, a machine learning decision problem is solved using an SDP relaxation for \sos.
In these works, the SDP value is large in the \YES case and small in the \NO case, and the SDP value can be bounded using the spectrum of a specific matrix.
This was implicit in \cite{DBLP:conf/stoc/BarakKS14, DBLP:conf/stoc/BarakKS15}, and in \cite{DBLP:conf/colt/HopkinsSS15} it was used to obtain a fast algorithm as well.
In our work, we design spectral algorithms which use smaller matrices, inspired by the \sos certificates in previous works, to solve these machine-learning problems much faster, with almost matching guarantees.

A key idea in our work is that given a large matrix with information encoded in the matrix's spectral gap, one can often efficiently ``compress'' the matrix to a much smaller one without losing that information.
This is particularly true for problems with planted solutions.
In this way, we are able to improve running time by replacing an $n^{O(d)}$-sized SDP with an eigenvector computation for an $n^{k} \times n^{k}$ matrix, for some $k < d$.

The idea of speeding up LP and SDP hierarchies for specific problems has been investigated in a series of previous works \cite{DBLP:conf/soda/VegaK07, DBLP:conf/focs/BarakRS11, DBLP:conf/focs/GuruswamiS12}, which shows that with respect to local analyses of the sum-of-squares algorithm it is sometimes possible to improve the running time from $n^{O(d)}$ to $2^{O(d)} \cdot n^{O(1)}$.
However, the scopes and strategies of these works are completely different from ours.
First, the notion of local analysis from these works does not apply to the problems considered here.
Second, these works employ the ellipsoid method with a separation oracle inspired by rounding algorithms, whereas we reduce the problem to an ordinary eigenvector computation.

It would also be interesting to see if our methods can be used to speed up some of the other recent successful applications of \sos to machine-learning type problems, such as \cite{BM15}, or the application of \cite{DBLP:conf/stoc/BarakKS14} to tensor decomposition with components that are well-separated (rather than random).
Finally, we would be remiss not to mention that \sos lower bounds exist for several of these problems, specifically for tensor principal components analysis, tensor prediction, and sparse PCA \cite{DBLP:conf/colt/HopkinsSS15, BM15, MW15}.
The lower bounds in the \sos framework are a good indication that we cannot expect spectral algorithms achieving better guarantees.

\section{Techniques}
\label{sec:techniques}

\paragraph{Sum-of-squares method (for polynomial optimization over the sphere)}

The problems we consider are connected to optimization problems of the following form:
Given a homogeneous $n$-variate real polynomial $f$ of constant degree, find a unit vector $x\in\R^n$ so as to maximize $f(x)$.
The sum-of-squares method allows us to efficiently compute upper bounds on the maximum value of such a polynomial $f$ over the unit sphere.

For the case that $k=\deg (f)$ is even, the most basic upper bound of this kind is the largest eigenvalue of a \emph{matrix representation} of $f$. A \emph{matrix representation} of a polynomial $f$ is a symmetric matrix $M$ with rows and columns indexed by monomials of degree $k/2$ so that $f(x)$ can be written as the quadratic form $f(x)=\langle  x^{\otimes k/2} , M x^{\otimes k/2}\rangle$, where $x^{ \otimes k/2}$ is the $k/2$-fold tensor power of $x$.
The largest eigenvalue of a matrix representation $M$ is an upper bound on the value of $f(x)$ over unit $x \in \R^n$ because
\[
    f(x)=\langle  x^{\otimes k/2}, M x^{\otimes k/2} \rangle\le \lmax(M) \cdot \lVert  x^{\otimes k/2} \rVert_2^2=\lmax(M)\mper
\]

The sum-of-squares methods improves on this basic spectral bound systematically by associating a large family of polynomials (potentially of degree higher than $\deg (f)$) with the input polynomial $f$ and computing the best possible spectral bound within this family of polynomials.
Concretely, the sum-of-squares method with degree parameter $d$ applied to a polynomial $f$ with $\deg (f) \le d$ considers the affine subspace of polynomials $\{ f + (1-\lVert  x \rVert_2^2)\cdot g \mid \deg (g) \le d-2\}\subseteq \R[x]$ and minimizes $\lmax(M)$ among all matrix representations
\footnote{
    Earlier we defined matrix representations only for homogeneous polynomials of even degree.
    In general, a matrix representation of a polynomial $g$ is a symmetric matrix $M$ with rows and columns indexed by monomials of degree at most $\ell=\deg(g)/2$ such that $g(x)=\langle  x^{\otimes \le  \ell}, M x^{\otimes \le  \ell} \rangle$ (as a polynomial identity), where $x^{\otimes \le \ell}=\tfrac 1 {\sqrt{\ell+1}}(x^{\otimes 0},x^{\otimes 1},\ldots,x^{\otimes \ell})$ is the vector of all monomials of degree at most $\ell$.
 Note that $\lVert  x^{\otimes \le \ell} \rVert=1$ for all $x$ with $\lVert  x \rVert=1$.
}
$M$ of polynomials in this space.\footnote{
The name of the method stems from the fact that this last step is equivalent to finding the minimum number $\lambda$ such that the space contains a polynomial of the form $\lambda - (g_1^2 + \cdots + g_t^2)$, where $g_1,\ldots,g_t$ are polynomials of degree at most $d/2$.}
The problem of searching through this affine linear space of polynomials and their matrix representations and finding the one of smallest maximum eigenvalue can be solved using semidefinite programming.

Our approach for faster algorithms based on \sos algorithms is to construct specific matrices (polynomials) in this affine linear space, then compute their top eigenvectors.
By designing our matrices carefully, we ensure that our algorithms have access to the same higher degree information that the sum-of-squares algorithm can access, and this information affords an advantage over the basic spectral methods for these problems.
At the same time, our algorithms avoid searching for the best polynomial and matrix representation, which gives us faster running times since we avoid semidefinite programming.
This approach is well suited to average-case problems where we avoid the problem of adversarial choice of input;
in particular it is applicable to machine learning problems where noise and inputs are assumed to be random.

\paragraph{Compressing matrices with partial traces}
A serious limitation of the above approach is that the representation of a degree-$d$ polynomial requires size roughly $n^d$.
Hence, even avoiding the use of semidefinite programming, improving upon running time $O(n^{d})$ requires additional ideas.

In each of the problems that we consider, we have a large matrix (suggested by a \sos algorithm) with a ``signal'' planted in some amount of ``noise''.
We show that in some situations, this large matrix can be compressed significantly without loss in the signal by applying \emph{partial trace} operations.
In these situations, the partial trace yields a smaller matrix with the same signal-to-noise ratio as the large matrix suggested by the \sos algorithm, even in situations when lower degree sum-of-squares approaches are known to fail.

The partial trace $\Tr_{\R^d} \from \R^{d^2 \times d^2}\to \R^{d\times d}$ is the linear operator that satisfies $\Tr_{\R^d} A \otimes B = (\Tr A) \cdot B$ for all $A,B\in \R^{d \times d}$.
To see how the partial trace can be used to compress large matrices to smaller ones with little loss, consider the following problem:
Given a matrix $M \in \R^{d^2 \times d^2}$ of the form $ M = \tau \cdot (v \tensor v)(v \tensor v)^\top + A \tensor B$ for some unit vector $v\in \R^d$ and matrices $A,B\in \R^{d\times d}$, we wish to recover the vector $v$.
(This is a simplified version of the planted problems that we consider in this paper, where $\tau \cdot (v \tensor v)(v \tensor v)^\top$ is the signal and $A\tensor B$ plays the role of noise.)

It is straightforward to see that the matrix $A\tensor B$ has spectral norm $\|A\tensor B\| = \|A\|\cdot \|B\|$, and so when $\tau \gg \|A\|\|B\|$, the matrix $M$ has a noticeable spectral gap, and the top eigenvector of $M$ will be close to $v \tensor v$.
If $|\Tr A | \approx \|A\|$, the matrix $\Tr_{\R^d} M = \tau \cdot vv^\top + \Tr(A) \cdot B$ has a matching spectral gap, and we can still recover $v$, but now we only need to compute the top eigenvector of a $d \times d$ (as opposed to $d^2 \times d^2$) matrix.%
\footnote{In some of our applications, the matrix $M$ is only represented implicitly and has size super linear in the size of the input, but nevertheless we can compute the top eigenvector of the partial trace $\Tr_{\R^d} M$ in nearly-linear time.}

If $A$ is a Wigner matrix (e.g. a symmetric matrix with iid $\pm 1$ entries), then both $\Tr(A),\|A\| \approx \sqrt{n}$, and the above condition is indeed met.
In our average case/machine learning settings the ``noise'' component is not as simple as $A \tensor B$ with $A$ a Wigner matrix.
Nonetheless, we are able to ensure that the noise displays a similar behavior under partial trace operations.
In some cases, this requires additional algorithmic steps, such as random projection in the case of tensor decomposition, or centering the matrix eigenvalue distribution in the case of the planted sparse vector.

It is an interesting question if there are general theorems describing the behavior of spectral norms under partial trace operations.
In the current work, we compute the partial traces explicitly and estimate their norms directly.
Indeed, our analyses boil down to concentrations bounds for special matrix polynomials.
A general theory for the concentration of matrix polynomials is a notorious open problem (see \cite{DBLP:journals/corr/MekaW13}).

Partial trace operations have previously been applied for rounding \sos relaxations.
Specifically, the operation of \emph{reweighing} and \emph{conditioning}, used in rounding algorithms for sum-of-squares such as~\cite{DBLP:conf/focs/BarakRS11,DBLP:conf/soda/RaghavendraT12,DBLP:conf/stoc/BarakKS14, DBLP:conf/stoc/BarakKS15,LR15}, corresponds to applying a partial trace operation to the moments matrix returned by the sum-of-squares relaxation.
\medskip

We now give a technical overview of our algorithmic approach for each problem, and some broad strokes of the analysis for each case.
Our most substantial improvements in runtime are for the planted sparse vector  and overcomplete tensor decomposition problems
(\pref{sec:techniques-planted-fast} and \pref{sec:techniques-tensor-decomp} respectively).
Our algorithm for tensor PCA is the simplest application of our techniques, and it may be instructive to skip ahead and read about tensor PCA first (\pref{sec:techniques-tpca}).

\subsection{Planted sparse vector in random linear subspace}
\label{sec:techniques-planted-fast}
Recall that in this problem we are given a linear subspace $U$ (represented by some basis) that is spanned by a $k$-sparse unit vector $v_0\in \R^d$ and random unit vectors $v_1,\ldots,v_{d-1}\in \R^d$.
The goal is to recover the vector $v_0$ approximately.

\vspace{-4mm}
\paragraph{Background and \sos analysis}
Let $A\in \R^{n\times d}$ be a matrix whose columns form an orthonormal basis for $U$.
Our starting point is the polynomial $f(x)=\lVert A x \rVert_4^4=\sum_{i=1}^n (A x)_i^4$.
Previous work showed that for $d\ll \sqrt n$ the maximizer of this polynomial over the sphere corresponds to a vector close to $v_0$ and that degree-$4$ sum-of-squares is able to capture this fact \cite{DBLP:conf/stoc/BarakBHKSZ12, DBLP:conf/stoc/BarakKS14}.
Indeed, typical random vectors $v$ in $\R^n$ satisfy $\lVert v \rVert_4^4\approx 1/n$ whereas our planted vector satisfies $\lVert  v_0 \rVert_4^4 \ge 1/k \gg 1/n$, and this degree-$4$ information is leveraged by the \sos algorithms.

The polynomial $f$ has a convenient matrix representation $M = \sum_{i=1}^n (a_i a_i^\top)^{\otimes 2}$, where $a_1,\ldots,a_n$ are the rows of the generator matrix $A$.
It turns out that the eigenvalues of this matrix indeed give information about the planted sparse vector $v_0$.
In particular, the vector $x_0\in \R^{d}$ with $A x_0 = v_0$ witnesses that $M$ has an eigenvalue of at least $1/k$ because $M$'s quadratic form with the vector $x_0^{\otimes 2}$ satisfies $\langle x_0^{\otimes 2}, M x_0^{\otimes 2}\rangle = \lVert  v_0 \rVert_4^4 \ge 1/k$.
If we let $M'$ be the corresponding matrix for the subspace $U$ without the planted sparse vector, $M'$ turns out to have only eigenvalues of at most $O(1/n)$ up to a single spurious eigenvalue
with eigenvector far from any vector of the form $x \otimes x$ \cite{DBLP:conf/stoc/BarakBHKSZ12}.

It follows that in order to distinguish between a random subspace with a planted sparse vector (\YES case) and a completely random subspace (\NO case), it is enough to compute the second-largest eigenvalue of a $d^2$-by-$d^2$ matrix (representing the $4$-norm polynomial over the subspace as above).
This decision version of the problem, while strictly speaking easier than the search version above, is at the heart of the matter: one can show that the large eigenvalue for the \YES case corresponds to an eigenvector which encodes the coefficients of the sparse planted vector in the basis.

\vspace{-4mm}
\paragraph{Improvements}
The best running time we can hope for with this basic approach is $O(d^4)$ (the size of the matrix).
Since we are interested in $d\le O(\sqrt n)$, the resulting running time $O(n d^2)$ would be subquadratic but still super-linear in the input size $n \cdot d$ (for representing a $d$-dimensional subspace of $\R^n$).
To speed things up, we use the partial trace approach outlined above.
We will begin by applying the partial trace approach naively, obtaining reasonable bounds, and then show that a small modification to the matrix before the partial trace operation allows us to achieve even smaller signal-to-noise ratios.

In the planted case, we may approximate $M \approx \tfrac{1}{k} (x_0 x_0^\top)^{\tensor 2} + Z$, where $x_0$ is the vector of coefficients of $v_0$ in the basis representation given by $A$ (so that $A x_0 = v_0$), and $Z$ is the noise matrix.
Since $\lVert  x_0 \rVert=1$, the partial trace operation preserves the projector $(x_0x_0^\top)^{\otimes 2}$ in the sense that $\Tr_{\R^d} (x_0x_0^\top)^{\otimes 2}= x_0 x_0^\top$.
Hence, with our heuristic approximation for $M$ above, we could show that the top eigenvector of $\Tr _{\R^d} M$ is close to $x_0$ by showing that the spectral norm bound $\lVert \Tr _{\R^d} Z \rVert \le o(1/k)$.

The partial trace of our matrix $M=\sum_{i=1}^n (a_ia_i^\top)\tensor (a_ia_i^\top)$ is easy to compute directly,
\begin{displaymath}
  N = \Tr_{\R^d} M = \sum_{i=1}^n \lVert  a_i \rVert_2^2 \cdot a_ia_i^\top\mper
\end{displaymath}
In the \YES case (random subspace with planted sparse vector), a direct computation shows that
\[
    \lambda_{\YES} \geq \langle  x_0, N x_0 \rangle\approx \tfrac d n \cdot \left(1 + \tfrac n d \lVert v_0 \rVert_4^4\right)\ge \tfrac d n \left( 1+ \tfrac n {d k} \right).
\]
Hence, a natural approach to distinguish between the \YES case and \NO case (completely random subspace) is to upper bound the spectral norm of $N$ in the \NO case.

In order to simplify the bound on the spectral norm of $N$ in the \NO case, suppose that the columns of $A$ are iid samples from the Gaussian distribution $\cN(0,\tfrac{1}{d}\Id)$ (rather than an orthogonal basis for the random subspace)--\pref{lem:fps-basis-change} establishes that this simplification is legitimate.
In this simplified setup, the matrix $N$ in the \NO case is the sum of $n$ iid matrices $\{\lVert  a_i \rVert^2\cdot a_i a_i^\top\}$, and we can upper bound its spectral norm $\lambda_{\NO}$ by $d/n  \cdot (1+ O(\sqrt { d /n}))$ using standard matrix concentration bounds.
Hence, using the spectral norm of $N$, we will  be able to distinguish between the \YES case and the \NO case as long as  $$\sqrt { d /n} \ll n/(dk) \implies \lambda_{\NO} \ll \lambda_{\YES}.$$
For linear sparsity $k=\e \cdot n$, this inequality is true so long as $d \ll (n/ \e^2)^{1/3}$, which is somewhat worse than the bound $\sqrt n$ bound on the dimension that we are aiming for.

Recall that $\Tr B = \sum_i \lambda_i(B)$ for a symmetric matrix $B$.
As discussed above, the partial trace approach works best when the noise behaves as the tensor of two Wigner matrices, in that there are cancellations when the eigenvalues of the noise are summed.
In our case, the noise terms $(a_i a_i^{\top}) \tensor (a_i a_i^{\top})$ do not have this property, as in fact $\Tr a_i a_i^\top = \|a_i\|^2 \approx d/n$.
Thus, in order to improve the dimension bound, we will center the eigenvalue distribution of the noise part of the matrix.
This will cause it to behave more like a Wigner matrix, in that the spectral norm of the noise will not increase after a partial trace.
Consider the partial trace of a matrix of the form $$M - \alpha\cdot  \Id \otimes \sum_i a_i a_i^\top \mcom$$ for some constant $\alpha>0$.
The partial trace of this matrix is $$N'=\sum_{i=1}^n (\lVert  a_i  \rVert_2^2 - \alpha )\cdot a_i a_i^\top\mper $$
We choose the constant $\alpha\approx d/n$ such that our matrix $N'$ has expectation $0$ in the \NO case, when the subspace is completely random.
In the \YES case, the Rayleigh quotient of $N'$ at $x_0$ simply shifts as compared to $N$, and we have $\lambda_{\YES} \ge \langle  x_0, N' x_0 \rangle \approx \lVert  v_0 \rVert_4^4\ge 1/k$ (see \pref{lem:fps-main-technical} and sublemmas).
On the other hand, in the \NO case, this centering operation causes significant cancellations in the eigenvalues of the partial trace matrix (instead of just shifting the eigenvalues).
In the \NO case, $N'$ has spectral norm $\lambda_{\NO} \le O(d/n^{3/2})$ for $d\ll \sqrt n$ (using standard matrix concentration bounds; again see \pref{lem:fps-main-technical} and sublemmas).
Therefore, the spectral norm of the matrix $N'$ allows us to distinguish between the \YES and \NO case as long as $d/n^{3/2}\ll 1/k$, which is satisfied as long as $k\ll n$ and $d\ll \sqrt n$.
We give the full formal argument in \pref{sec:fps}.

\subsection{Overcomplete tensor decomposition}
\label{sec:techniques-tensor-decomp}
Recall that in this problem we are given a $3$-tensor $T$ of the form $T=\sum_{i=1}^n a_i^{\otimes 3}\in\R^{d^3}$, where $a_1,\ldots,a_n \in \R^d$ are independent random vectors from $N(0,\tfrac 1d \Id)$.
The goal is to find a unit vector $a\in \R^d$ that is highly correlated with one\footnote{
We can then approximately recover all the components $a_1,\ldots,a_n$ by running independent trials of our randomized algorithm repeatedly on the same input.} of the vectors $a_1,\ldots,a_n$.

\vspace{-4mm}
\paragraph{Background}
The starting point of our algorithm is the polynomial $f(x)=\sum_{i=1}^n \langle  a_i,x \rangle^3$.
It turns out that for $n \ll d^{1.5}$ the (approximate) maximizers of this polynomial are close to the components $a_1,\ldots,a_n$, in the sense that $f(x) \approx 1$ if and only if $\max _{i\in [n]}\langle  a_i ,x \rangle^2\approx 1$.
Indeed, Ge and Ma \cite{DBLP:conf/approx/GeM15} show that the sum-of-squares method already captures this fact at degree 12, which implies a quasipolynomial time algorithm for this tensor decomposition problem via a general rounding result of Barak, Kelner, and Steurer \cite{DBLP:conf/stoc/BarakKS15}.

The simplest approach to this problem is to consider the tensor representation of the polynomial $\bT = \sum_{i\in[n]} a_i^{\tensor 3}$, and flatten it, hoping the singular vectors of the flattening are correlated with the $a_i$.
However, this approach is doomed to failure for two reasons: firstly, the simple flattenings of $\bT$ are $d^2 \times d$ matrices, and since $n \gg d$ the $a_i^{\tensor 2}$ collide in the column space, so that it is impossible to determine $\Span\{a_i^{\tensor 2}\}$.
Secondly, even for $n \le d$, because the $a_i$ are random vectors, their norms concentrate very closely about $1$.
This makes it difficult to distinguish any one particular $a_i$ even when the span is computable.

\vspace{-4mm}
\paragraph{Improvements}
We will try to circumvent both of these issues by going to higher dimensions.
Suppose, for example, that we had access to $\sum_{i\in[n]} a_i^{\tensor 4}$.\footnote{
As the problem is defined, we assume that we do not have access to this input, and in many machine learning applications this is a valid assumption, as gathering the data necessary to generate the $4$th order input tensor requires a prohibitively large number of samples.
}
The eigenvectors of the flattenings of this matrix are all within $\Span_{i\in[n]} \{a_i^{\tensor 2}\}$, addressing our first issue, leaving us only with the trouble of extracting individual $a_i^{\tensor 2}$ from their span.
If furthermore we had access to $\sum_{i\in[n]} a_i^{\tensor 6}$,  we could perform a partial random projection $(\Phi \tensor \Id \tensor \Id)\sum_{i\in[n]} a_i^{\tensor 6}$ where $\Phi \in \R^{d\times d}$ is a matrix with independent Gaussian entires, and then taking a partial trace, we end up with
$$
\Tr_{\R^d} \left((\Phi \tensor \Id \tensor \Id)\sum_{i\in[n]} a_i^{\tensor 6}\right)
= \sum_{i \in [n]} \iprod{\Phi, a_i^{\tensor 2}} a_i^{\tensor 4}\mper
$$
With reasonable probability (for exposition's sake, say with probability $1/n^{10}$), $\Phi$ is closer to some $a_i^{\tensor 2}$ than to all of the others so that $\iprod{\Phi, a_i^{\tensor 2}} \ge 100 \iprod{\Phi, a_j^{\tensor 2}}$ for all $j\in[n]$, and then $a_i^{\tensor 2}$ is distinguishable from the other vectors in the span of our matrix, taking care of the second issue .
As we show, a much smaller gap is sufficient to distinguish the top $a_i$ from the other $a_j$, and so the higher-probability event that $\Phi$ is only slightly closer to $a_i$ suffices (allowing us to recover all vectors at an additional runtime cost of a factor of $\tO(n)$).
This discussion ignores the presence of a single spurious large eigenvector, which we address in the technical sections.

Of course, we do not have access to the higher-order tensor $\sum_{i\in[n]} a_i^{\tensor 6}$.
Instead, we can obtain a noisy version of this tensor.
Our approach considers the following matrix representation of the polynomial $f^2$,
\begin{displaymath}
  M = \sum_{i,j} a_i a_j^\top \otimes (a_i a_i ^\top) \otimes (a_j a_j^\top ) \in \R^{d^3 \times d^3}\mper
\end{displaymath}
Alternatively, we can view this matrix as a particular flattening of the Kronecker-squared tensor $\bT^{\otimes 2}$.
It is instructive to decompose $M=\Mdiag + \Mcross$ into its diagonal terms $\Mdiag=\sum_i (a_ia_i^\top )^{\otimes 3}$ and its cross terms $\Mcross=\sum_{i\neq j} a_i a_j^\top \otimes (a_i a_i ^\top) \otimes (a_j a_j^\top )$.
The algorithm described above is already successful for $\Mdiag$; we need only control the eigenvalues of the partial trace of the ``noise'' component, $\Mcross$.
The main technical work will be to show that $\| \Tr_{\R^d}\Mdiag\|$ is small.
In fact, we will have to choose $\Phi$ from a somewhat different distribution---observing that $\Tr_{\R^d}(\Phi \tensor \Id \tensor \Id) = \sum_{i,j} \iprod{a_i, \Phi a_j} \cdot (a_i \tensor a_j)(a_i \tensor a_j)^\top$, we will sample $\Phi$ so that $\iprod{a_i, \Phi a_i} \gg \iprod{a_i, \Phi a_j}$.
We give a more detailed overview of this algorithm in the beginning of
\pref{sec:tdecomp},
explaining in more detail our choice of $\Phi$ and justifying heuristically the boundedness of the spectral norm of the noise.

\vspace{-4mm}
\paragraph{Connection to \sos analysis}
To explain how the above algorithm is a speedup of \sos, we give an overview of the \sos algorithm of \cite{DBLP:conf/approx/GeM15,DBLP:conf/stoc/BarakKS15}.
There, the degree-$t$ \sos SDP program is used to obtain an order-$t$ tensor $\chi_t$ (or a \emph{pseudodistribution}).
Informally speaking, we can understand $\chi_t$ as a proxy for $\sum_{i\in[n]} a_i^{\tensor t}$, so that $\chi_t = \sum_{i\in[n]} a_i^{\tensor t} + N$, where $N$ is a noise tensor.
While the precise form of $N$ is unclear, we know that $N$ must obey a set of constraints imposed by the \sos hierarchy at degree $t$.
For a formal discussion of pseudodistributions, see \cite{DBLP:conf/stoc/BarakKS15}.

To extract a single component $a_i$ from the tensor $\sum_{i\in[n]} a_i^{\tensor t}$,  there are many algorithms which would work (for example, the algorithm we described for $\Mdiag$ above).
However, any algorithm extracting an $a_i$ from $\chi_t$ must be robust to the noise tensor $N$.
For this it turns out the following algorithm will do: suppose we have the tensor $\sum_{i\in[n]} a_i^{\tensor t}$, taking $t = O(\log n)$.
Sample $g_1,\ldots,g_{\log(n) - 2}$ random unit vectors, and compute the matrix $M = \sum_i (\prod_{1 \leq j \leq \log(n) - 2} \iprod{g_j, a_i})\cdot a_i a_i^\top$.
If we are lucky enough, there is some $a_i$ so that every $g_j$ is a bit closer to $a_i$ than any other $a_{i'}$, and $M = a_i a_i^\top +E$ for some $\|E\| \ll 1$.
The proof that $\|E\|$ is small can be made so simple that it applies also to the SDP-produced proxy tensor $\chi_{\log n}$, and so this algorithm is robust to the noise $N$.
This last step is very general and can handle tensors whose components $a_i$ are less well-behaved than the random vectors we consider, and also more overcomplete, handling tensors of rank up to $n =  \tilde\Omega(d^{1.5})$.\footnote{
    It is an interesting open question whether taking $t = O(\log n)$ is really necessary, or whether this heavy computational requirement is simply an artifact of the \sos proof.
}

Our subquadratic-time algorithm can be viewed as a low-degree, spectral analogue of the \cite{DBLP:conf/stoc/BarakKS15} \sos algorithm.
However, rather than relying on an SDP to produce an object close to $\sum_{i\in[n]} a_i^{\tensor t}$, we manufacture one ourselves by taking the Kronecker square of our input tensor.
We explicitly know the form of the deviation of $\bT^{\tensor 2}$ from $\sum_{i\in[n]} a_i^{\tensor 6}$, unlike in \cite{DBLP:conf/stoc/BarakKS15}, where the deviation of the SDP certificate $\chi_{t}$ from $\sum_{i\in[n]} a_i^{\tensor t}$ is poorly understood.
We are thus able to control this deviation (or ``noise'') in a less computationally intensive way, by cleverly designing a partial trace operation which decreases the spectral norm of the deviation.
Since the tensor handled by the algorithm is much smaller---order $6$ rather than order $\log n$---this provides the desired speedup.

\subsection{Tensor principal component analysis}
\label{sec:techniques-tpca}
Recall that in this problem we are given a tensor $\bT = \tau \cdot v^{\tensor 3} + \bA$, where $v \in \R^d$ is a unit vector, $\bA$ has iid entries from $\cN(0,1)$, and $\tau > 0$ is the signal-to-noise ratio.
The aim is to recover $v$ approximately.

\vspace{-4mm}
\paragraph{Background and \sos analysis}
A previous application of \sos techniques to this problem discussed several \sos or spectral algorithms,
including one that runs in quasi-linear time \cite{DBLP:conf/colt/HopkinsSS15}.
Here we apply the partial trace method to a subquadratic spectral \sos algorithm discussed in $\cite{DBLP:conf/colt/HopkinsSS15}$
to achieve nearly the same signal-to-noise guarantee in only linear time.

Our starting point is the polynomial $\bT(x) = \tau \cdot \iprod{v,x}^3 + \iprod{x^{\tensor 3}, \bA}$.
The maximizer of $\bT(x)$ over the sphere is close to the vector $v$ so long as $\tau \gg \sqrt{n}$ \cite{DBLP:conf/nips/RichardM14}.
In \cite{DBLP:conf/colt/HopkinsSS15}, it was shown that degree-4 \sos maximizing this polynomial can recover $v$ with a signal-to-noise ratio of at least $\tOmega(n^{3/4})$, since there exists a suitable \sos bound on the noise term $\iprod{x^{\tensor 3}, \bA}$.

Specifically, let $A_i$ be the $i$th slice of $\bA$, so that $\iprod{x, A_i x}$ is the quadratic form $\sum_{j,k} \bA_{ijk} x_j x_k$.
Then there is a \sos proof that $\bT(x)$ is bounded by $|\bT(x) - \tau \cdot \iprod{v,x}^3| \le f(x)^{1/2}\cdot \|x\|$, where $f(x)$ is the degree-4 polynomial $f(x) = \sum_i \iprod{x, A_i x}^2$.
The polynomial $f$ has a convenient matrix representation: $f(x) = \iprod{x^{\tensor 2}, (\sum_i A_i \tensor A_i)x^{\tensor 2}}$:
since this matrix is a sum of iid random matrices $A_i \tensor A_i$, it is easy to show that this matrix spectrally concentrates to its expectation.
So with high probability one can show that the eigenvalues of $\sum_i A_i \tensor A_i$ are at most $\approx d^{3/2} \log(d)^{1/2}$ (except for a single spurious eigenvector), and it follows that degree-$4$ SoS solves tensor PCA so long as $\tau \gg d^{3/4} \log(d)^{1/4}$.

This leads the authors to consider a slight modification of $f(x)$, given by $g(x) = \sum_i \iprod{x, T_i x}^2$, where $T_i$ is the $i$th slice of $\bT$.
Like $\bT$, the function $g$ also contains information about $v$, and the \sos bound on the noise term in $\bT$ carries over as an analogous bound on the noise in $g$.
In particular, expanding $T_i \tensor T_i$ and ignoring some negligible cross-terms yields
\[
  \sum_i T_i \tensor T_i \approx \tau^2 \cdot (v \tensor v)(v \tensor v)^\top + \sum_i A_i \tensor A_i
\]
Using $v \tensor v$ as a test vector, the quadratic form of the latter matrix can be made at least $\tau^2 - O(d^{3/2} \log(d)^{1/2})$.
Together with the boundedness of the eigenvalues of $\sum_i A_i \tensor A_i$ this shows that when $\tau \gg d^{3/4} \log(d)^{1/4}$ there is a spectral algorithm to recover $v$.
Since the matrix $\sum_i T_i \tensor T_i$ is $d^2 \times d^2$, computing the top eigenvector requires $\tO(d^4 \log n)$ time, and by comparison to the input size $d^3$ the algorithm runs in subquadratic time.
\vspace{-4mm}
\paragraph{Improvements}
In this work we speed this up to a linear time algorithm via the partial trace approach.
As we have seen, the heart of the matter is to show that taking the partial trace of $\tau^2 \cdot (v \tensor v)(v \tensor v)^\top +\sum_i A_i \tensor A_i$ does not increase the spectral noise.
That is, we require that
\[
  \Norm{\Tr_{\R^d} \sum_i A_i \tensor A_i} = \Norm{\sum_i \Tr(A_i) \cdot A_i} \leq O(d^{3/2} \log(d)^{1/2})\mper
\]
Notice that the $A_i$ are essentially Wigner matrices, and so it is roughly true that $|\Tr(A_i)| \approx \|A_i\|$, and the situation is very similar to our toy example of the application of partial traces in \pref{sec:techniques}.

Heuristically, because $\sum_{i\in[n]} A_i \tensor A_i$ and $\sum_{i\in[n]} \Tr(A_i) \cdot A_i$ are random matrices, we expect that their eigenvalues are all of roughly the same magnitude.
This means that their spectral norm should be close to their Frobenius norm divided by the square root of the dimension, since for a matrix $M$ with eigenvalues $\lambda_1,\ldots,\lambda_n$, $\|M\|_F = \sqrt{\sum_{i\in[n]} \lambda_i^2}$.
By estimating the sum of the squared entries, we expect that the Frobenious norm of $\sum_i \Tr(A_i) \cdot A_i$ is less than that of $\sum_i A_i \tensor A_i$ by a factor of $\sqrt d$ after the partial trace, while the dimension decreases by a factor of $d$, and so assuming that the eigenvalues are all of the same order, a typical eigenvalue should remain unchanged.
We formalize these heuristic calculations using standard matrix concentration arguments in \pref{sec:tpca}.

\section{Preliminaries}

\paragraph{Linear algebra}
We will work in the real vector spaces given by $\R^{n}$.
A vector of indeterminates may be denoted $x = (x_1, \dots, x_n)$, although
we may sometimes switch to parenthetical notation for indexing, i.e.
$x = (x(1), \dots, x(n))$ when subscripts are already in use.
We denote by $[n]$ the set of all valid indices for a vector in $\R^n$.
Let $e_i$ be the $i$th canonical basis vector so that $e_i(i) = 1$ and
$e_i(j) = 0$ for $j \ne i$.

For a vectors space $V$, we may denote by
$\cL(V)$ the space of linear operators from $V$ to $V$.
The space orthogonal to a vector $v$ is denoted $v^{\perp}$.

For a matrix $M$, we use $M^{-1}$ to denote its inverse or its
Moore-Penrose pseudoinverse; which one it is will be clear from context.
For $M$ PSD, we write $M^{-1/2}$ for the unique PSD matrix with
$(M^{-1/2})^2 = M^{-1}$.

\paragraph{Norms and inner products}
We denote the usual entrywise inner product by $\iprod{{\cdot}, {\cdot}}$, so that $\iprod{u, v} = \sum_{i \in [n]} u_iv_i$ for $u,v \in \R^n$.
The $\ell_p$-norm of a vector $v \in \R^n$ is given by $\|v\|_p = \cramped{\paren{\sum_{i \in [n]} \cramped{{v_i}^p}}^{1/p}}$, with $\|v\|$ denoting the $\ell_2$-norm by default.
The matrix norm used throughout the paper will be the operator / spectral norm,
denoted by $\lVert  M \rVert = \|M\|_{op} \seteq \max_{x\neq 0} \lVert  M x \rVert / \lVert  x \rVert$.

\paragraph{Tensor manipulation}
Boldface variables will reserved for tensors
$\bT \in \R^{n \times n \times n}$, of which we consider only order-3 tensors.
We denote by $\bT(x,y,z)$ the multilinear function in $x,y,z \in \R^n$
such that $\bT(x,y,z) = \sum_{i,j,k \in [n]} T_{i,j,k} x_i y_j z_k$,
applying $x$, $y$, and $z$ to the first, second, and third modes of the tensor
$\bT$ respectively.
If the arguments are matrices $P$, $Q$, and $R$ instead, this lifts
$\bT(P,Q,R)$ to the unique multilinear tensor-valued function such that
$[\bT(P,Q,R)](x,y,z) = \bT(Px, Qy, Rz)$ for all vectors $x, y, z$.

Tensors may be flattened to matrices in the multilinear way such that
for every $u\in \R^{n \times n}$ and $v \in \R^n$, the tensor $u \otimes v$
flattens to the matrix $uv^\top \in \R^{n^2 \times n}$ with $u$ considered as a vector.
There are 3 different ways to flatten a $3$-tensor $\bT$, corresponding
to the 3 modes of $\bT$.
Flattening may be understood as reinterpreting the indices of a tensor
when the tensor is expressed as an 3-dimensional array of numbers.
The expression $v^{\otimes 3}$ refers to $v \otimes v \otimes v$ for a
vector $v$.

\paragraph{Probability and asymptotic bounds}
We will often refer to collections of independent and identically distributed
(or \emph{iid}) random variables.
The Gaussian distribution with mean $\mu$ and variance $\sigma^2$
is denoted $\cN(\mu, \sigma^2)$.
Sometimes we state that an event happens with overwhelming probability.
This means that its probability is at least $1-n^{-\omega(1)}$.
A function is $\tilde O(g(n))$ if it is $O(g(n))$ up to polylogarithmic factors.

\section{Planted sparse vector in random linear subspace}
\label{sec:fps}

In this section we give a nearly-linear-time algorithm to recover a sparse vector planted in a random subspace.
\begin{problem}\label{prob:planted}
    Let $v_0 \in \R^n$ be a unit vector such that $\|v_0\|_4^4 \ge \frac{1}{\epsilon n}$.
    Let $v_1,\ldots,v_{d-1} \in \R^n$ be iid from $\cN(0,\tfrac 1 n \Id_n)$.
    Let $w_0,\ldots,w_{d-1}$ be an orthogonal basis for $\Span \{v_0,\ldots,v_{d-1} \}$.
    { \bf Given: } $w_0,\ldots,w_{d-1}$
    { \bf Find: } a vector $v$ such that $\iprod{v, v_0}^2 \ge 1 - o(1)$.
\end{problem}

\begin{center}
\fbox{\begin{minipage}{\textwidth}
\begin{center}\textbf{Sparse Vector Recovery in Nearly-Linear Time}\end{center}
\begin{algo}
\label{alg:fast-planted}
Input: $w_0,\ldots,w_{d-1}$ as in \pref{prob:planted}.
Goal: Find $v$ with $\iprod{\hat v, v_0}^2 \geq 1 - o(1)$.
\item
\begin{itemize}
  \item Compute leverage scores $\|a_1\|^2,\ldots,\|a_n\|^2$, where $a_i$ is the $i$th row of the $n \times d$ matrix $S \seteq \Paren{\begin{array}{ccc} w_0 & \cdots & w_{d-1}\end{array}}$.
  \item Compute the top eigenvector $u$ of the matrix
  \[
    A \defeq \sum_{i \in [n]} (\|a_i\|_2^2 - \cconst) \cdot a_i a_i^\top\mper
  \]
\item
  Output $Su$.
\end{itemize}
\end{algo}
\end{minipage}}
\end{center}

\begin{remark}[Implementation of \pref{alg:fast-planted} in nearly-linear time]
  The leverage scores $\|a_1\|^2,\ldots,\|a_n\|^2$ are clearly computable in time $O(nd)$.
  In the course of proving correctness of the algorithm we will show that $A$ has constant spectral gap, so by a standard analysis $O(\log d)$ matrix-vector multiplies suffice to recover its top eigenvector.
  A single matrix-vector multiply $Ax$ requires computing $c_i \seteq (\|a_i\|^2 - \cconst) \iprod{a_i,x}$ for each $i$ (in time $O(nd)$) and summing $\sum_{i \in[n]} c_i x_i$ (in time $O(nd)$).
  Finally, computing $Su$ requires summing $d$ vectors of dimension $n$, again taking time $O(nd)$.
\end{remark}

The following theorem expresses correctness of the algorithm.
\begin{theorem}
  \label{thm:fps-correctness}
  Let $v_0 \in \R^n$ be a unit vector with $\|v_0\|_4^4 \geq \tfrac 1 {\epsilon n}$.
  Let $v_1,\ldots,v_{d-1} \in \R^n$ be iid from $\cN(0,\tfrac 1 n \Id_n)$.
  Let $w_0,\ldots,w_{d-1}$ be an orthogonal basis for $\Span \{v_0,\ldots,v_{d-1} \}$.
  Let $a_i$ be the $i$-th row of the $n \times d$ matrix $S \seteq \Paren{\begin{array}{ccc} w_0 & \cdots & w_{d-1}\end{array}}$.

  When $d \leq n^{1/2}/\polylog(n)$, for any sparsity $\epsilon > 0$,
  \wovp the top eigenvector $u$ of $\sum_{i = 1}^n (\|a_i\|^2 - \tfrac d n) \cdot a_i a_i^\top$ has
  $\iprod{Su,v_0}^2 \geq 1 - O(\epsilon^{1/4}) - o(1)$.
\end{theorem}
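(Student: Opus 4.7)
The plan is to analyze the matrix $A = \sum_{i=1}^n (\|a_i\|^2 - d/n)\, a_i a_i^\top$ and show that it has a spectral gap with top eigenvector close to $x_0$, where $x_0 \in \R^d$ is the unit coefficient vector of $v_0$ in the basis (so that $Sx_0 = v_0$). Since $S$ has orthonormal columns, $\iprod{Su, v_0} = \iprod{u, x_0}$, so it suffices to show $\iprod{u, x_0}^2 \geq 1 - O(\epsilon^{1/4}) - o(1)$.

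First I would invoke the basis-change lemma $\pref{lem:fps-basis-change}$ to reduce to the cleaner model in which the rows $a_i$ of $S$ can be written as $a_i = v_0(i)\, x_0 + g_i$ with $g_i \in x_0^\perp$ iid $\cN(0, \tfrac{1}{n}\Id)$, absorbing the orthogonalization error into lower-order perturbations. In this model the expansion
\[
(\|a_i\|^2 - \tfrac d n)\, a_i a_i^\top = \bigl(v_0(i)^2 + \|g_i\|^2 - \tfrac{d}{n}\bigr)\, (v_0(i) x_0 + g_i)(v_0(i) x_0 + g_i)^\top
\]
sorts, after summing over $i$, into a rank-one signal $\|v_0\|_4^4 \cdot x_0 x_0^\top$; two diagonal Gaussian contributions $\sum_i v_0(i)^2 g_i g_i^\top$ and $N' \defeq \sum_i (\|g_i\|^2 - d/n) g_i g_i^\top$ acting on $x_0^\perp$; and off-diagonal cross terms of the form $(\text{scalar})_i \cdot (x_0 g_i^\top + g_i x_0^\top)$ that couple $x_0$ to $x_0^\perp$.

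Testing against $x_0$ (where all Gaussian-only terms vanish since $g_i \perp x_0$) and using Bernstein concentration for $\|g_i\|^2$ around $d/n$ yields $\iprod{x_0, A x_0} \geq (1-o(1))\, \|v_0\|_4^4 \geq (1-o(1))/(\epsilon n)$ with overwhelming probability; the fluctuation $\sum_i (\|g_i\|^2 - d/n) v_0(i)^2$ has variance $O(d/n^2) \cdot \|v_0\|_4^4$ and is therefore negligible when $d \leq \sqrt n/\polylog(n)$. For the spectral norm on $x_0^\perp$, the expectations of both diagonal Gaussian contributions are scalar multiples of the identity on $x_0^\perp$ of size $O(1/n)$, and I would bound their fluctuations using matrix Bernstein. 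Here the centering $\|a_i\|^2 - d/n$ is essential: without it $\|N'\| = \Theta(d/n)$, but with it the summands of $N'$ are mean-zero to leading order, yielding $\|N' - \E N'\| \leq \tO(d/n^{3/2})$, while a Wishart-type argument gives $\|\sum_i v_0(i)^2 g_i g_i^\top - \E\| \leq \tO(\sqrt{d/(\epsilon n)}/n)$. The off-diagonal cross terms are rank-$2$ perturbations controlled by standard Gaussian concentration of $\|\sum_i v_0(i)^3 g_i\|$ and analogous sums, and contribute $o(1/(\epsilon n))$ to the relevant norms.

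Combining these bounds, on $x_0^\perp$ the spectrum of $A$ lies below $O(1/n) + \tO(d/n^{3/2})$, while $\iprod{x_0, A x_0} \geq (1-o(1))/(\epsilon n)$, yielding a spectral gap of order $(1-O(\epsilon))/(\epsilon n)$. A Davis--Kahan sin-theta argument then gives $\iprod{u, x_0}^2 \geq 1 - O(\epsilon^{1/4}) - o(1)$; the $\epsilon^{1/4}$ slack (rather than the naive $\epsilon$ from the noise-to-gap ratio) arises from the worst-case instance of the hypothesis $\|v_0\|_4^4 \geq 1/(\epsilon n)$, saturated by ``spiky'' sparse vectors whose $\ell_4$ mass sits on a small support, where converting an $\ell_2$ alignment bound on $u$ and $x_0$ into an $\ell_4$-sensitive statement about $v_0$ costs a Cauchy--Schwarz factor. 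The main technical obstacle is the $\tO(d/n^{3/2})$ bound on $\|N' - \E N'\|$: the summands $(\|g_i\|^2 - d/n) g_i g_i^\top$ are heavy-tailed (operator norm as large as $\approx d^{3/2}/n^{3/2}$), so applying matrix Bernstein sharply requires truncating $\|g_i\|^2$ at $d/n + \tO(\sqrt{d}/n)$ followed by a union bound, or equivalently a trace-moment method estimate on $\E \Tr (N' - \E N')^{2k}$ for suitably growing $k$.
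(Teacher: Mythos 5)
Your overall architecture is the same as the paper's (reduce to the ``good basis'' via the basis-change lemma, split the matrix into the rank-one signal $\|v_0\|_4^4\, x_0x_0^\top$, diagonal Gaussian blocks, and off-diagonal cross terms, control each by truncated matrix Bernstein, and finish with a top-eigenvector perturbation argument), and your treatment of $N'$ and of the diagonal blocks matches \pref{lem:fps-good-basis-1} and \pref{lem:fps-good-basis-2}. The problem is in your accounting of the cross terms and, as a consequence, your explanation of the exponent $1/4$. The cross block is $e_1 w^\top + w e_1^\top$ with $w = \sum_i v_0(i)^3 b_i$, a Gaussian vector with covariance $\tfrac{\sum_i v_0(i)^6}{n}\Id_{d-1}$, so its norm is $\tO\bigl(\sqrt{d/n}\bigr)\|v_0\|_4^3$ (the paper bounds it by $O(\|v_0\|_4^3 n^{-1/4})$ via the PSD Cauchy--Schwarz of \pref{lem:op-cs-block}). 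This is \emph{not} $o(1/(\epsilon n))$ in general: if $v_0$ is supported on $O(1)$ coordinates then $\|v_0\|_4^3 = \Theta(1)$ and the cross block has norm $\approx \sqrt{d/n} \gg 1/(\epsilon n)$ for constant $\epsilon$. The correct comparison is to the actual signal $\|v_0\|_4^4$, and there the ratio is $\tO(\sqrt{d/n})/\|v_0\|_4 \le O(\epsilon^{1/4})$ (using only $\|v_0\|_4 \ge (\epsilon n)^{-1/4}$ and $d \le \sqrt n/\polylog n$). In other words, the cross term is the \emph{dominant} error and is exactly what produces the theorem's $O(\epsilon^{1/4})$; your claimed spectral gap of order $(1-O(\epsilon))/(\epsilon n)$ is not achievable by this decomposition when $\epsilon$ is small, e.g.\ polynomially small in $n$.

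Relatedly, the closing explanation of the $\epsilon^{1/4}$ is not a real step: since $S$ has orthonormal columns, $\iprod{Su,v_0}^2 = \iprod{u,x_0}^2$ up to the $o(1)$ basis-change error, so the conclusion \emph{is} the $\ell_2$ alignment and no ``$\ell_2$-to-$\ell_4$ conversion'' occurs anywhere. If your bounds as stated (perp-block $O(1/n)$, cross terms $o(1/(\epsilon n))$) were correct, Davis--Kahan would give $1 - O(\epsilon) - o(1)$, which is stronger than the theorem --- a sign that the bookkeeping is internally inconsistent. The fix is exactly the paper's \pref{lem:fps-main-technical}: record the error as $\|M\| \le O(\|v_0\|_4^3 n^{-1/4} + \|v_0\|_4^2 n^{-1/2} + \|v_0\|_4 n^{-3/4} + n^{-1})$, note that each term is at most $\epsilon^{1/4}\|v_0\|_4^4$ (the first term being tight), and then apply the spectral-gap lemma to get $\iprod{u,x_0}^2 \ge 1 - O(\epsilon^{1/4})$ before transferring back through the rotation and the $\tO(d/n)^{1/2}$ orthogonalization error.
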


We have little control over the basis vectors the algorithm is given.
However, there is a particularly nice (albeit non-orthogonal) basis for the subspace which exposes the underlying randomness.
Suppose that we are given the basis vectors $v_0,\ldots,v_d$, where $v_0$ is the sparse vector normalized so that $\|v_0\| = 1$, and $v_1,\ldots,v_{d-1}$ are iid samples from $\cN(0,\tfrac{1}{n}\Id_n)$.
The following lemma shows that if the algorithm had been handed this good representation of the basis rather than an arbitrary orthogonal one, its output would be the correlated to the vector of coefficients giving of the planted sparse vector (in this case the standard basis vector $e_1$).

\begin{lemma}
  \label{lem:fps-main-technical}
  Let $v_0 \in \R^n$ be a unit vector.
  Let $v_1,\ldots,v_{d-1} \in \R^n$ be iid from $\cN(0,\tfrac 1 n \Id)$.
  Let $a_i$ be the $i$th row of the $n \times d$ matrix $S \seteq \Paren{\begin{array}{ccc} v_0 & \cdots & v_{d-1}\end{array}}$.
  Then there is a universal constant $\epsilon^* > 0$ so that for any $\epsilon \leq \epsilon^*$, so long as $d \leq n^{1/2} / \polylog(n)$, \wovp
  \[
    \sum_{i = 1}^n (\|a_i\|^2 - \cconst) \cdot a_i a_i^\top = \|v_0\|_4^4 \cdot e_1 e_1^\top + M
  \]
  where $e_1$ is the first standard basis vector and $\|M\| \leq O(\|v_0\|_4^3 \cdot n^{-1/4} + \|v_0\|_4^2 \cdot n^{-1/2} + \|v_0\|_4 \cdot n^{-3/4} + n^{-1})$.
\end{lemma}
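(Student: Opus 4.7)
The plan is to exploit the nice structure of the basis $(v_0,v_1,\ldots,v_{d-1})$ given by the lemma: the sparse vector $v_0$ sits entirely in the first column and the other columns are pure Gaussian. Concretely, each row $a_i \in \R^d$ of $S$ decomposes as $a_i = v_0(i)\,e_1 + b_i$, where $b_i = (0, v_1(i), \ldots, v_{d-1}(i))$ has zero first coordinate and the remaining entries are iid $\cN(0, 1/n)$, and $\|a_i\|^2 = v_0(i)^2 + \|b_i\|^2$. Expanding $(\|a_i\|^2 - d/n)\,a_i a_i^\top$ in block form (with respect to $\R^d = \R e_1 \oplus e_1^\perp$) and summing over $i$ gives
\[
\sum_{i=1}^n (\|a_i\|^2 - \tfrac{d}{n})\, a_i a_i^\top \;=\; \begin{pmatrix} \|v_0\|_4^4 + \alpha & \beta^\top \\ \beta & C \end{pmatrix},
\]
where $\alpha = \sum_i v_0(i)^2(\|b_i\|^2 - d/n)$, $\beta = \sum_i v_0(i)^3 b_i + \sum_i v_0(i)(\|b_i\|^2 - d/n) b_i$, and $C = \sum_i v_0(i)^2 b_i b_i^\top + \sum_i (\|b_i\|^2 - d/n)\, b_i b_i^\top$. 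Hence the error matrix $M$ satisfies $\|M\| \le |\alpha| + 2\|\beta\| + \|C\|$, and the problem reduces to bounding these three quantities via concentration.

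For $|\alpha|$, the sum is over independent centered $\chi^2$-type random variables with mean $-\|v_0\|^2/n = -1/n$ and total variance $O(\|v_0\|_4^4\,d/n^2)$, so Bernstein's inequality gives $|\alpha| = O(1/n + \|v_0\|_4^2\sqrt{d}/n)$ with overwhelming probability. For $\|\beta\|$, split $\beta = \beta_1 + \beta_2$ along the two sums. Conditional on $v_0$, the vector $\beta_1 = \sum_i v_0(i)^3 b_i$ is Gaussian in $\R^{d-1}$ with covariance $(\|v_0\|_6^6/n)\,I$, hence $\|\beta_1\| = O(\|v_0\|_6^3 \sqrt{d/n})$ up to logarithmic factors; combining the $\ell_p$-norm monotonicity $\|v_0\|_6 \le \|v_0\|_4$ with the hypothesis $d \le \sqrt n/\polylog(n)$ yields $\|\beta_1\| = O(\|v_0\|_4^3\, n^{-1/4})$, which is the dominant term of the target bound. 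The piece $\beta_2$ is mean zero with coordinate variance of order $d^2/n^3$, so $\|\beta_2\| = O(d/n^{3/2}) = O(n^{-1})$.

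Finally, for $\|C\|$ write $C = C_1 + C_2$ along the same lines. Both summands are sums of $n$ independent $(d-1)\times(d-1)$ matrices whose expectations, after the Wishart moment computations $\E b_i b_i^\top = I/n$ and $\E \|b_i\|^2 b_i b_i^\top = O(d/n^2)\,I$, are each a scalar multiple of $I_{d-1}/n$. Truncating the $\|b_i\|^2$ factors at the sub-exponential scale $O(d/n)$ and applying matrix Bernstein gives $\|C_1 - \E C_1\| = O(\|v_0\|_4^2 \sqrt d/n)$ and $\|C_2 - \E C_2\| = O(d/n^{3/2})$, which together with the $O(1/n)$ contribution of the expectations yield $\|C\| = O(\|v_0\|_4^2 n^{-3/4} + n^{-1})$. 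Summing $|\alpha|$, $2\|\beta\|$, and $\|C\|$ produces a bound of the form claimed. The only mildly delicate part of the argument is the bookkeeping in the $C_2$ bound, where one must pair a tail-truncation of the weights $\|b_i\|^2 - d/n$ with matrix Bernstein using the right variance proxy so that the $\sqrt d$ factors can be absorbed using $d \le \sqrt n/\polylog(n)$; everything else is a standard Gaussian or $\chi^2$ concentration estimate made transparent by the block decomposition above.
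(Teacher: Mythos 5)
Your proposal is correct and takes essentially the same route as the paper's own proof: the same decomposition $a_i = (v_0(i),\,b_i)$ with the weight split $\|a_i\|^2 - \tfrac dn = v_0(i)^2 + (\|b_i\|^2 - \tfrac dn)$, followed by blockwise truncated Bernstein/Gaussian concentration on the scalar, vector, and $(d-1)\times(d-1)$ pieces. The only cosmetic differences are that you bound the cross term $\sum_i v_0(i)^3 b_i$ directly as a Gaussian vector with covariance $(\|v_0\|_6^6/n)\Id$ where the paper invokes an operator-norm Cauchy--Schwarz, and a couple of intermediate constants (the per-coordinate variance of $\beta_2$, and the $C_1$ deviation where the Bernstein truncation term of order $\|v_0\|_4^2\, d/n$ actually dominates your stated $\|v_0\|_4^2\sqrt d/n$) are slightly misstated but harmless, since the resulting bounds still fall within the lemma's allowed error terms.
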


The second ingredient we need is that the algorithm is robust to exchanging this good basis for an arbitrary orthogonal basis.

\begin{lemma}
  \label{lem:fps-basis-change}
  Let $v_0 \in \R^n$ have $\|v_0\|_4^4 \geq \tfrac 1 {\epsilon n}$.
  Let $v_1,\ldots,v_{d-1} \in \R^n$ be iid from $\cN(0,\tfrac 1 n \Id_n)$.
  Let $w_0,\ldots,w_{d-1}$ be an orthogonal basis for $\Span \{v_0,\ldots,v_{d-1} \}$.
  Let $a_i$ be the $i$th row of the $n \times d$ matrix $S \seteq \Paren{\begin{array}{ccc} v_0 & \cdots & v_{d-1}\end{array}}$.
  Let $a_i'$ be the $i$th row of the $n \times d$ matrix $S' \seteq \Paren{\begin{array}{ccc} w_0 & \cdots & w_{d-1}\end{array}}$.
  Let $A \seteq \sum_i a_i a_i^\top$.
  Let $Q \in \R^{d \times d}$ be the orthogonal matrix so that $S A^{-1/2} = S' Q$, which exists since $S A^{-1/2}$ is orthogonal, and which has the effect
  that $a_i' = Q A^{-1/2} a_i$.
  Then when $d \leq n^{1/2} / \polylog(n)$, \wovp
  \[
    \left \| \sum_{i = 1}^n (\|a_i'\|^2 - \tfrac d n)\cdot a_i' a_i'^\top
    - Q \Paren{\sum_{i = 1}^n (\|a_i\|^2 - \tfrac d n) \cdot a_i a_i^\top} Q^\top \right \|
    \leq O\Paren{\frac 1 n} + o(\|v\|_4^4)
  \]
\end{lemma}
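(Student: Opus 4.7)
My plan is to unfold the definition of $a_i'$ in terms of the good basis, reducing the spectral norm bound to one on matrices defined purely from the $a_i$ and their Gram matrix $A = S^\top S$, then control those matrices via Gaussian concentration. Using $a_i' = Q A^{-1/2} a_i$, one has $\|a_i'\|^2 = a_i^\top A^{-1} a_i$ and $a_i' a_i'^\top = Q A^{-1/2} a_i a_i^\top A^{-1/2} Q^\top$, together with $\sum_i a_i' a_i'^\top = \Id_d$ (since the columns of $S'$ are orthonormal). Plugging these in yields
\[
B' - Q B Q^\top = Q \bigl[ A^{-1/2} C A^{-1/2} - D + \tfrac{d}{n} (A - \Id_d) \bigr] Q^\top,
\]
where $C = \sum_i (a_i^\top A^{-1} a_i)\, a_i a_i^\top$ and $D = \sum_i \|a_i\|^2 a_i a_i^\top$. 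Since $Q$ is orthogonal, it will suffice to bound the spectral norm of the bracketed quantity.

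The first main input will be $\|A - \Id_d\| \le \tO(\sqrt{d/n})$, which I expect to hold with overwhelming probability. The bottom-right $(d-1) \times (d-1)$ block of $A$ is a standard Wishart matrix controlled by Bai-Yin-type bounds, while the first row/column ($A_{00} = \|v_0\|^2 = 1$ exactly) has deviations $\langle v_0, v_j \rangle \sim \cN(0, 1/n)$ controlled by Gaussian concentration for a $(d-1)$-vector. This propagates to $\|A^{1/2} - \Id_d\|, \|A^{-1/2} - \Id_d\|, \|A^{-1} - \Id_d\| \le \tO(\sqrt{d/n})$, and in particular $(d/n)\|A - \Id_d\| \le \tO(d^{3/2}/n^{3/2})$, which is absorbed into the $O(1/n)$ term of the stated bound under $d \le \sqrt{n}/\polylog(n)$.

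To bound $\|A^{-1/2} C A^{-1/2} - D\|$, I will decompose it as $A^{-1/2}(C-D)A^{-1/2} + (A^{-1/2} D A^{-1/2} - D)$. For the first piece, $C - D = \sum_i a_i^\top (A^{-1} - \Id_d) a_i \cdot a_i a_i^\top$; the quadratic-form inequality $|x^\top (C-D) x| \le \|A^{-1} - \Id_d\| \cdot (x^\top D x)$ then yields $\|C - D\| \le \tO(\sqrt{d/n}) \cdot \|D\|$. The second piece is a standard conjugation error of the same order. Invoking \prettyref{lem:fps-main-technical} via $B = D - (d/n) A$ gives $\|D\| \le O(\|v_0\|_4^4) + O(d/n)$, so both pieces combine to $\tO(\sqrt{d/n}) \cdot (\|v_0\|_4^4 + d/n)$, which splits into $o(\|v_0\|_4^4) + O(1/n)$ in the regime $d \le \sqrt{n}/\polylog(n)$.

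The main obstacle will be ensuring that the $\polylog(n)$ slack in the hypothesis is enough to drive the multiplicative $\tO(\sqrt{d/n})$ factor strictly below $1$, so that the $\|v_0\|_4^4$ contribution is genuinely multiplied by $o(1)$; a looser analysis would give only an $O(\|v_0\|_4^4)$ bound, which would destroy rather than preserve the signal. Secondarily, $A - \Id_d$ has two qualitatively different parts (the $v_0$-Gaussian cross terms with an exact zero on the diagonal versus the Wishart block), and treating them uniformly would likely lose polynomial factors; splitting the concentration bound explicitly along these two pieces is what keeps the $(d/n)\|A - \Id_d\|$ contribution within $O(1/n)$.
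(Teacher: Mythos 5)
Your algebraic reduction to bounding $\bigl\lVert A^{-1/2}CA^{-1/2} - D + \tfrac dn(A-\Id)\bigr\rVert$ is correct, but the way you then bound the pieces discards exactly the cancellation the lemma depends on, and the resulting error is too large in the regime the lemma is for. Since $\lVert A-\Id\rVert$ is genuinely of order $\sqrt{d/n}$ (the Wishart fluctuation is not just an upper bound), your isolated term $\tfrac dn\lVert A-\Id\rVert$ is of order $d^{3/2}/n^{3/2}$; for $d$ near $n^{1/2}/\polylog(n)$ this is about $n^{-3/4}$, which is neither $O(1/n)$ nor $o(\lVert v_0\rVert_4^4)$ when $\lVert v_0\rVert_4^4 = \Theta(1/n)$ (constant sparsity $\epsilon$), i.e.\ precisely the setting of \pref{thm:fps-correctness}; it is $O(1/n)$ only when $d = O(n^{1/3})$. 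The same $d^{3/2}/n^{3/2}$ loss recurs twice more in your plan: in the conjugation error $\lVert A^{-1/2}DA^{-1/2}-D\rVert \le \tO(\sqrt{d/n})\,\lVert D\rVert$, because $\lVert D\rVert$ is at least of order $d/n$, and in $\lVert C-D\rVert \le \lVert A^{-1}-\Id\rVert\,\lVert D\rVert$. This is not fixable by sharper constants: since $D \approx \tfrac dn A$ in bulk, the matrix $A^{-1/2}DA^{-1/2}-D$ really contains a piece of size $\tfrac dn\lVert A-\Id\rVert$ that exactly cancels your $+\tfrac dn(A-\Id)$ term, and bounding the two separately by the triangle inequality forfeits that cancellation.

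The paper's proof avoids both losses. First, it never splits off $\tfrac dn(A-\Id)$: it conjugates the already-centered matrix $B := \sum_i(\lVert a_i\rVert^2 - \tfrac dn)a_ia_i^\top = D - \tfrac dn A$, so the conjugation error is at most $\tO(\sqrt{d/n})\,\lVert B\rVert \le \tO(\sqrt{d/n})\,O(\lVert v_0\rVert_4^4) = o(\lVert v_0\rVert_4^4)$ by \pref{lem:fps-main-technical} --- unlike $\lVert D\rVert$, the norm $\lVert B\rVert$ has no $d/n$ bulk. Second, for the piece corresponding to your $C-D$ it does not use the operator norm of $A^{-1}-\Id$ (which is only $\tO(\sqrt{d/n})$) but the much finer per-row statement $\bigl|\lVert A^{-1/2}a_i\rVert^2 - \lVert a_i\rVert^2\bigr| \le \tO\bigl((d+\sqrt n)/n\bigr)\lVert a_i\rVert^2$ (\pref{lem:ortho-4}, proved via a Sherman--Morrison leave-one-out argument), exploiting that the quadratic form of $A^{-1}-\Id$ at the specific rows $a_i$ concentrates at the $\tO(1/\sqrt n)$ scale rather than the extreme-eigenvalue scale $\sqrt{d/n}$; with it the bulk contribution becomes $\tO\bigl((d+\sqrt n)/n\bigr)\cdot\tO(d/n) = O(1/n)$. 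Your proposal is missing both ingredients, and without them it establishes the lemma only for $d = O(n^{1/3})$ (or for unusually dense $v_0$), not for $d$ up to $n^{1/2}/\polylog(n)$.
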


Last, we will need the following fact, which follows from standard concentration.
The proof is in \pref{sec:fast-planted-proofs}.
\begin{lemma}
  \label{lem:fps-covariance-concentration}
  Let $v \in \R^n$ be a unit vector.
  Let $b_1,\ldots,b_n \in \R^{d-1}$ be iid from $\cN(0,\frac 1 n \Id_{d-1})$.
  Let $a_i \in \R^d$ be given by $a_i \seteq (v(i) \, \, b_i)$.
  Then \wovp $\|\sum_{i=1}^n a_i a_i^\top - \Id_d \| \leq \tO(d/n)^{1/2}$.
  In particular, when $d = o(n)$, this implies that \wovp $\| (\sum_{i=1}^n a_i a_i^\top)^{-1} - \Id_d \| \leq \tO(d/n)^{1/2}$
  and $\| (\sum_{i=1}^n a_i a_i^\top)^{-1/2} - \Id_d \| \leq \tO(d/n)^{1/2}$.
\end{lemma}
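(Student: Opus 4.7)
The plan is to exploit the block structure of $M \seteq \sum_{i=1}^n a_i a_i^\top$ induced by the decomposition $a_i = (v(i) \;\; b_i)$, and to bound each block separately. Writing $M$ in block form with respect to the first coordinate and the remaining $d-1$ coordinates, the $(1,1)$ scalar entry is $\sum_i v(i)^2 = \|v\|^2 = 1$; the off-diagonal $1 \times (d-1)$ block is $g \seteq \sum_i v(i) b_i$; and the bottom-right $(d-1) \times (d-1)$ block is $W \seteq \sum_i b_i b_i^\top$. Thus $M - \Id_d$ has a zero in the $(1,1)$ entry and $W - \Id_{d-1}$ in the lower-right block, and $g$ on the off-diagonal. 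By the triangle inequality on the natural block decomposition, $\|M - \Id_d\| \le \|W - \Id_{d-1}\| + 2\|g\|$ (up to constant factors).

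For the Wishart-type block $W$, since the $b_i$ are i.i.d.\ from $\cN(0, \tfrac 1 n \Id_{d-1})$, standard concentration (e.g.\ a matrix Bernstein / Matrix-Chernoff bound, or the classical Davidson–Szarek bound on the extremes of a Wishart spectrum) gives $\|W - \Id_{d-1}\| \le \tilde O(\sqrt{d/n})$ with overwhelming probability, provided $d = o(n)$ up to log factors. For the off-diagonal block $g = \sum_i v(i) b_i$, observe that since the $b_i$ are independent and $\|v\| = 1$, $g$ is distributed as $\cN(0, \tfrac{1}{n} \Id_{d-1})$; a standard $\chi^2$-tail bound then yields $\|g\| \le \tilde O(\sqrt{d/n})$ w.ov.p. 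Combining these gives the first claim $\|M - \Id_d\| \le \tilde O(\sqrt{d/n})$.

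For the statements on $M^{-1}$ and $M^{-1/2}$, observe that once $\|M - \Id_d\| \le \tfrac 1 2$ (which holds w.ov.p. under the stated regime), every eigenvalue of $M$ lies in $[\tfrac 1 2, \tfrac 3 2]$, so $M$ is invertible and its inverse and inverse square root are well-defined. Writing $M^{-1} - \Id_d = M^{-1}(\Id_d - M)$ and using $\|M^{-1}\| \le 2$ gives $\|M^{-1} - \Id_d\| \le 2\|M - \Id_d\| = \tilde O(\sqrt{d/n})$. For the square root, one can apply the operator Lipschitz property of $x \mapsto x^{-1/2}$ on the interval $[\tfrac 1 2, \tfrac 3 2]$ (or, equivalently, diagonalize and work eigenvalue by eigenvalue using $|\lambda^{-1/2} - 1| \le C |\lambda - 1|$ for $\lambda$ bounded away from $0$) to transfer the bound.

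The main technical obstacle is simply producing a clean w.ov.p.\ bound on $\|W - \Id_{d-1}\|$ with a $\polylog$ factor tight enough to match the $\tilde O$ notation; the argument is entirely routine once one invokes, say, Matrix Bernstein applied to the rescaled summands $b_i b_i^\top - \tfrac 1 n \Id_{d-1}$ with a standard truncation to handle the unbounded Gaussian norms, and no new idea is needed beyond that. Everything else in the proof is bookkeeping on the block structure and a Lipschitz-type transfer to $M^{-1}$ and $M^{-1/2}$.
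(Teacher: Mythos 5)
Your proposal is correct and follows essentially the same route as the paper: the same block decomposition with the $(1,1)$ entry equal to $\|v\|^2=1$, the observation that the off-diagonal block $\sum_i v(i)b_i$ is $\cN(0,\tfrac1n \Id_{d-1})$ and hence has norm $\tO(d/n)^{1/2}$, standard spectral concentration for the Wishart block (the paper cites Vershynin's Corollary 5.50 where you invoke matrix Bernstein/Davidson--Szarek), and a triangle inequality to combine. The transfer to $M^{-1}$ and $M^{-1/2}$ via eigenvalues bounded away from zero is likewise the same routine step the paper performs.
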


We are ready to prove \pref{thm:fps-correctness}.

\begin{proof}[Proof of \pref{thm:fps-correctness}]
  Let $b_1,\ldots,b_n$ be the rows of the matrix $S' \seteq \Paren{\begin{array}{ccc} v_0 & \cdots & v_{d-1}\end{array}}$.
  Let $B = \sum_i b_i b_i^\top$.
  Note that $S' B^{-1/2}$ has columns which are an orthogonal basis for $\Span \{ w_0,\ldots,w_{d-1} \}$.
  Let $Q \in \R^{d \times d}$ be the rotation so that $S' B^{-1/2} = S Q$.

  By \pref{lem:fps-main-technical} and \pref{lem:fps-basis-change}, we can write the matrix $A = \sum_{i=1}^n (\|a_i\|_2^2 - \cconst) \cdot a_i a_i^\top$
  as
  \[
    A = \|v_0\|_4^4 \cdot Q e_1 e_1^\top Q^\top + M
  \]
  where \wovp
  \[
    \|M\| \leq O(\|v_0\|_4^3 \cdot n^{-1/4} + \|v_0\|_4^2 \cdot n^{-1/2} + \|v_0\|_4 \cdot n^{-3/4} + n^{-1}) + o(\|v\|_4^4)\mper
  \]
  We have assumed that $\|v_0\|_4^4 \geq (\epsilon n)^{-1}$, and so since
  $A$ is an almost-rank-one matrix (\pref{lem:low-correlation}),
  the top eigenvector $u$ of $A$ has $\iprod{u,Qe_1}^2 \geq 1 - O(\epsilon^{1/4})$,
  so that $\iprod{Su,SQe_1}^2 \geq 1 - O(\epsilon^{1/4})$ by column-orthogonality of $S$.

  At the same time, $S Q e_1 = S' B^{-1/2} e_1$, and by \pref{lem:fps-covariance-concentration},
  $\| B^{-1/2} - \Id\| \leq \tO(d/n)^{1/2}$ \wovp,
  so that $\iprod{S u, S' e_1}^2 \ge \iprod{S u, S Q e_1}^2 - o(1) $.
  Finally, $S'e_1 = v_0$ by definition, so
  $\iprod{S u, v_0}^2 \ge 1 - O(\epsilon^{1/4}) - o(1)$.
\end{proof}

\subsection{Algorithm succeeds on good basis}
We now prove \pref{lem:fps-main-technical}.
We decompose the matrix in question into a contribution from $\|v_0\|_4^4$ and the rest:
explicitly, the decomposition is $\sum (\|a_i\|_2^2 - \tfrac{d}{n}) \cdot a_ia_i^\top =
\sum v(i)^2 \cdot a_ia_i^\top + \sum (\|b_i\|_2^2 - \tfrac{d}{n} \cdot a_ia_i^\top)$.
This first lemma handles the contribution from $\|v_0\|_4^4$.
\begin{lemma}
  \label{lem:fps-good-basis-1}
  Let $v \in \R^n$ be a unit vector.
  Let $b_1,\ldots,b_n \in \R^{d-1}$ be random vectors iid from $\cN(0, \tfrac 1 n \cdot \Id_{d-1})$.
  Let $a_i = (v(i) \; \, b_i) \in \R^d$.
  Suppose $d \leq n^{1/2} / \polylog(n)$.
  Then
  \[
    \sum_{i=1}^n v(i)^2 \cdot a_i a_i^\top = \|v\|_4^4 \cdot e_1 e_1^\top + M'
  \]
  where $\|M'\| \leq O(\|v\|_4^3 n^{-1/4} + \|v\|_4^2 n^{-1/2})$ \wovp.
\end{lemma}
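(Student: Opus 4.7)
The plan is to exploit the block structure arising from writing $a_i = (v(i), b_i)$. Directly expanding,
\[
v(i)^2 \, a_i a_i^\top = \begin{pmatrix} v(i)^4 & v(i)^3 b_i^\top \\ v(i)^3 b_i & v(i)^2 b_i b_i^\top \end{pmatrix},
\]
so summing over $i$ yields $\sum_i v(i)^2 a_i a_i^\top = \|v\|_4^4 \cdot e_1 e_1^\top + M_{\mathrm{cross}} + M_{\mathrm{diag}}$, where $M_{\mathrm{cross}}$ has only the two off-diagonal blocks and encodes the Gaussian vector $w \seteq \sum_i v(i)^3 b_i \in \R^{d-1}$ (a direct calculation gives $\|M_{\mathrm{cross}}\| = \|w\|_2$), while $M_{\mathrm{diag}}$ has a single nonzero lower-right block $N \seteq \sum_i v(i)^2 b_i b_i^\top$ of size $(d-1)\times(d-1)$ with $\|M_{\mathrm{diag}}\| = \|N\|$. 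By the triangle inequality the lemma reduces to proving $\|w\|_2 \leq O(\|v\|_4^3 n^{-1/4})$ and $\|N\| \leq O(\|v\|_4^2 n^{-1/2})$ \wovp.

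For the cross term, since the $b_i$ are independent centered Gaussians and $v$ is fixed, $w \sim \cN(0, (\|v\|_6^6/n)\Id_{d-1})$, so $\|w\|_2^2$ is distributed as $(\|v\|_6^6/n)$ times a $\chi^2_{d-1}$ random variable. Standard Laurent--Massart concentration gives $\|w\|_2^2 \leq O(\|v\|_6^6 \cdot (d + \polylog(n))/n)$ \wovp. Combining the power-means inequality $\|v\|_6 \leq \|v\|_4$ with the hypothesis $d \leq n^{1/2}/\polylog(n)$ then yields $\|w\|_2 \leq O(\|v\|_4^3 n^{-1/4})$.

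For the diagonal term, I would bound $\|N\| = \max_{\|u\|=1} u^\top N u$ via an $\epsilon$-net on the unit sphere in $\R^{d-1}$. For any fixed unit $u$, writing $g_i \seteq \sqrt{n}\, b_i \sim \cN(0,\Id)$, we have $u^\top N u = \tfrac{1}{n}\sum_i v(i)^2 (g_i \cdot u)^2$, a weighted sum of iid $\chi^2_1$ variables with weights $v(i)^2$. Its mean is $\tfrac{1}{n}$, and a Bernstein-type tail bound for weighted $\chi^2$-sums (Laurent--Massart) gives, with probability $1-\delta$,
\[
u^\top N u \leq \tfrac{1}{n} + O\!\left(\tfrac{\|v\|_4^2 \sqrt{\log(1/\delta)}}{n} + \tfrac{\|v\|_\infty^2 \log(1/\delta)}{n}\right),
\]
since the weights have $\ell_2$-norm $\|v\|_4^2$ and $\ell_\infty$-norm $\|v\|_\infty^2$. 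Taking a $\tfrac{1}{4}$-net of size $e^{O(d)}$ and setting $\log(1/\delta) = \Theta(d\cdot\polylog(n))$ promotes this to a uniform operator-norm bound \wovp. Using $\|v\|_\infty \leq \|v\|_4$, $d \leq n^{1/2}/\polylog(n)$, and $\|v\|_4^2 \geq n^{-1/2}$ (from $\|v\|_2 = 1$ via Cauchy--Schwarz), all three contributions above are at most $O(\|v\|_4^2 n^{-1/2})$, yielding $\|N\| \leq O(\|v\|_4^2 n^{-1/2})$.

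The main obstacle, in my view, is forcing the diagonal estimate to scale with $\|v\|_4^2$ rather than $\|v\|_2^2 = 1$: the Bernstein-type $\chi^2$ tail naturally splits into a sub-Gaussian piece controlled by $\|v\|_4^2$ and a sub-exponential piece controlled by $\|v\|_\infty^2$, and one must exploit both $\|v\|_\infty \leq \|v\|_4$ and the regime $d \leq n^{1/2}/\polylog(n)$ to absorb the latter into $\|v\|_4^2 n^{-1/2}$. Once the two block bounds are combined via the triangle inequality, the claimed operator-norm bound on $M'$ follows.
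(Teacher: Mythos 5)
Your proof is correct, but it takes a genuinely different route from the paper on both blocks. For the diagonal block $N=\sum_i v(i)^2 b_ib_i^\top$, the paper applies its truncated matrix Bernstein machinery (\pref{lem:truncate} plus \pref{prop:truncated-bernstein}) to the matrix-valued summands, whereas you reduce to scalar concentration: a $1/4$-net of the sphere together with a Laurent--Massart tail for the weighted $\chi^2$ sum $\tfrac1n\sum_i v(i)^2\iprod{g_i,u}^2$, absorbing the sub-exponential part via $\|v\|_\infty\le\|v\|_4$ and $d\le n^{1/2}/\polylog(n)$, and the mean $1/n$ via $\|v\|_4^2\ge n^{-1/2}$ — all of which checks out (the paper absorbs the same $1/n$ term the same way). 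For the off-diagonal blocks, the paper never analyzes their randomness directly: it invokes the PSD Cauchy--Schwarz inequality (\pref{lem:op-cs-block}) to bound them deterministically by $\sqrt{\|v\|_4^4\cdot\|N\|}=O(\|v\|_4^3 n^{-1/4})$, while you observe that the cross block is exactly the rank-two matrix built from $w=\sum_i v(i)^3 b_i\sim\cN(0,\tfrac{\|v\|_6^6}{n}\Id_{d-1})$, so $\|M_{\mathrm{cross}}\|=\|w\|$ and a $\chi^2_{d-1}$ tail gives $O(\|v\|_6^3 n^{-1/4})\le O(\|v\|_4^3 n^{-1/4})$. Your argument is more elementary (no matrix-valued concentration or truncation) and even marginally sharper on the cross term; the paper's approach buys the convenience that the cross-term bound needs no fresh probabilistic input once the diagonal block is controlled, a pattern it reuses in \pref{lem:fps-good-basis-2} where the off-diagonal blocks are less clean than a single Gaussian vector, and matrix Bernstein spares the net/union-bound bookkeeping.
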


\begin{proof}[Proof of \pref{lem:fps-good-basis-1}]
  We first show an operator-norm bound on the principal submatrix $\sum_{i=1}^n v(i)^2 \cdot b_i b_i^\top$ using the truncated matrix Bernstein inequality \pref{prop:truncated-bernstein}.
  First, the expected operator norm of each summand is bounded:
  \[
    \E v(i)^2 \|b_i\|_2^2 \leq (\max_{j} v(j)^2) \cdot O\Paren{\frac d n} \leq \|v\|_4^2 \cdot O\Paren{\frac d n}\mper
  \]
  The operator norms are bounded by constant-degree polynomials in Gaussian variables, so \pref{lem:truncate} applies to truncate their tails in preparation for application of a Bernstein bound.
  We just have to calculate the variance of the sum, which is at most
  \[
    \left \| \E \sum_{i=1}^n v(i)^4 \|b_i\|_2^2 \cdot b_i b_i^\top \right \| = \|v\|_4^4 \cdot O\Paren{\frac d {n^2}}\mper
  \]
  The expectation $\E \sum_{i=1}^n v(i)^2 \cdot b_i b_i^\top$ is $\tfrac{\|v\|^2}{n} \cdot \Id$.
  Applying a matrix Bernstein bound (\pref{prop:truncated-bernstein}) to the deviation from expectation, we get that \wovp,
  \[
    \left \| \Paren{\sum_{i=1}^n v(i)^2 \cdot b_i b_i^\top} - \frac 1 n \cdot \Id \right \| \leq \|v\|_4^2 \cdot \tO\Paren{\frac d n} \leq O(\|v\|_4^2 n^{-1/2})
  \]
  for appropriate choice of $d \leq n^{-1/2}/\polylog(n)$.
  Hence, by triangle inequality, $\| \sum_{i=1}^n v(i)^2 \cdot b_i b_i^\top \| \leq \|v\|_4^2 n^{-1/2}$ \wovp.

  Using a Cauchy-Schwarz-style inequality (\pref{lem:op-cs-block}) we now show that the bound on this principal submatrix is essentially enough to obtain the lemma.
  Let $p_i, q_i \in \R^d$ be given by
  \[
    p_i \defeq v_0(i) \cdot \Paren{\begin{array}{c} v_0(i) \\ 0 \\ \vdots \\ 0\end{array}}
    \qquad q_i \defeq v_0(i) \cdot \Paren{\begin{array}{c} 0 \\ b_i \end{array}}\mper
  \]
  Then
  \[
    \sum_{i=1}^n v(i)^2 \cdot b_i b_i^\top = \|v\|_4^4 + \sum_{i = 1}^n p_i q_i^\top + q_i p_i^\top + q_i q_i^\top\mper
  \]
  We have already bounded $\sum_{i=1}^n q_i q_i^\top = \sum_{i=1}^n v(i)^2 \cdot b_i b_i^\top$.
  At the same time, $\| \sum_{i=1}^n p_i p_i^\top \| = \|v\|_4^4$.
  By \pref{lem:op-cs-block}, then,
  \[
    \left \| \sum_{i=1}^n p_i q_i^\top + q_i p_i^\top \right \| \leq O(\|v\|_4^3 n^{-1/4})
  \]
  \wovp.
  A final application of triangle inquality gives the lemma.
\end{proof}

Our second lemma controls the contribution from the random part of the leverage scores.
\begin{lemma}
  \label{lem:fps-good-basis-2}
  Let $v \in \R^n$ be a unit vector.
  Let $b_1,\ldots,b_n \in \R^{d-1}$ be random vectors iid from $\cN(0, \tfrac 1 n \cdot \Id_{d-1})$.
  Let $a_i = (v(i) \; \, b_i) \in \R^d$.
  Suppose $d \leq n^{1/2} / \polylog(n)$.
  Then \wovp
  \[
    \left \| \sum_{i=1}^n (\|b_i\|_2^2 - \cconst) \cdot a_i a_i^\top \right \| \leq \|v\|_4^2 \cdot O(n^{-3/4}) + \|v\|_4 \cdot O(n^{-1}) + O(n^{-1})\mper
  \]
\end{lemma}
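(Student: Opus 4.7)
The plan is to decompose the matrix $\sum_i (\|b_i\|_2^2 - \tfrac{d}{n}) a_i a_i^\top$ and apply matrix (and vector) Bernstein concentration to the pieces, with enough bookkeeping to track separately the contributions indexed by the signal direction $e_1$ and the random directions $e_2,\ldots,e_d$. The first step is to split the centering:
\[
  \|b_i\|_2^2 - \tfrac{d}{n} = \epsilon_i - \tfrac{1}{n}, \qquad \epsilon_i \defeq \|b_i\|_2^2 - \tfrac{d-1}{n},
\]
where $\epsilon_i$ is a recentered $\chi^2$ variable (scaled by $1/n$) of mean zero. The contribution from $-\tfrac{1}{n}$ gives $-\tfrac{1}{n} \sum_i a_i a_i^\top$, whose operator norm is $\tfrac{1}{n}(1 + \tO(\sqrt{d/n})) = O(n^{-1})$ \wovp by \pref{lem:fps-covariance-concentration}; this contributes the last $O(n^{-1})$ term.

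Next, I would write $a_i a_i^\top$ in block form corresponding to $\R^d = \Span(e_1) \oplus e_1^\perp$, so that $\sum_i \epsilon_i a_i a_i^\top$ decomposes into (a) the $(1,1)$-scalar $\sum_i \epsilon_i v(i)^2$, (b) the off-diagonal vector $w \defeq \sum_i \epsilon_i v(i) b_i \in \R^{d-1}$, and (c) the lower-right matrix $\sum_i \epsilon_i b_i b_i^\top$. For (a), the sum is a linear combination of independent mean-zero variables with variance $\sum_i v(i)^4 \Var(\epsilon_i) = 2(d-1)\|v\|_4^4/n^2$, so a scalar Bernstein inequality (with truncation as in \pref{lem:truncate}) gives the deviation $\|v\|_4^2 \cdot \tO(\sqrt{d}/n)$, which is $\|v\|_4^2 \cdot O(n^{-3/4})$ once $d \leq \sqrt{n}/\polylog(n)$. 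For (b), the norm of $w$ can be bounded by vector Bernstein: an explicit $\chi^2$-moment computation yields $\E[\epsilon_i^2 \|b_i\|_2^2] = O(d^2/n^3)$ and $\E[\epsilon_i^2 b_i b_i^\top] = O(d/n^3)\Id$ by rotational symmetry, giving matrix-variance bound $\|\sum_i \E[\epsilon_i^2 v(i)^2 b_i b_i^\top]\| = O(d^2/n^3)$; combining this with the pointwise bound $\|\epsilon_i v(i) b_i\| \leq \|v\|_\infty \cdot \polylog \cdot d/n^{3/2}$ (using $\|v\|_\infty \leq \|v\|_4$), the truncated Bernstein bound delivers $\|w\| \leq \|v\|_4 \cdot O(n^{-1})$. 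For (c), I apply matrix Bernstein directly: its mean $\sum_i \E[\epsilon_i b_i b_i^\top]$ is $\tfrac{2}{n}\Id_{d-1}$ (from $\E[\|b_i\|^2 b_i b_i^\top] = \tfrac{d+1}{n^2}\Id$ and the centering), its variance is controlled by $\|\sum_i \E[\epsilon_i^2 \|b_i\|_2^2 b_i b_i^\top]\| = O(d^2/n^3)$, and the result is $O(n^{-1})$ \wovp.

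Combining the three block contributions with the triangle inequality (treating the off-diagonal vector as an off-diagonal block whose spectral norm equals $\|w\|$) and adding the $O(n^{-1})$ coming from the $-\tfrac{1}{n}$ centering correction yields the claimed bound $\|v\|_4^2 \cdot O(n^{-3/4}) + \|v\|_4 \cdot O(n^{-1}) + O(n^{-1})$.

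The main obstacle is the coupling between $\epsilon_i$ and $b_i$ within each summand: because $\epsilon_i$ is a nonlinear function of $b_i$, we cannot factor expectations like $\E[\epsilon_i^2 b_i b_i^\top]$ and must instead compute the relevant cross-moments explicitly using the $\chi^2$-structure of $\|b_i\|_2^2$ and the rotational invariance of $b_i$. A secondary technical issue is that all summands are polynomials in Gaussians and hence unbounded, so the Bernstein step requires first truncating the heavy tails via \pref{lem:truncate} so that \pref{prop:truncated-bernstein} applies; once the truncation is in place, the three variance calculations above drive the final estimate.
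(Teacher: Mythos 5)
Your proposal is correct and takes essentially the same route as the paper's proof: the identical block decomposition into the $(1,1)$ scalar $\sum_i (\|b_i\|^2-\cconst)v(i)^2$, the off-diagonal vector $\sum_i (\|b_i\|^2-\cconst)v(i)b_i$, and the lower-right block $\sum_i (\|b_i\|^2-\cconst)b_ib_i^\top$, each handled by truncation (\pref{lem:truncate}) plus the truncated matrix Bernstein inequality (\pref{prop:truncated-bernstein}) with the same variance estimates. The only difference is cosmetic: you recenter the coefficient to its exact mean $(d-1)/n$ and absorb the leftover $-\tfrac1n\sum_i a_ia_i^\top$ via \pref{lem:fps-covariance-concentration}, whereas the paper keeps the centering at $\cconst$ and simply notes the resulting block means are negligible; both yield the stated bound.
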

\begin{proof}
  Like in the proof of \pref{lem:fps-good-basis-1}, $\sum_{i=1}^n (\|b_i\|_2^2 - \cconst) \cdot a_i a_i^\top$ decomposes into a convenient block structure; we will bound each block separately.
  \begin{align}
    \sum_{i=1}^n (\|b_i\|_2^2 - \cconst) \cdot a_i a_i^\top = \sum_{i=1}^n (\|b_i\|_2^2 - \cconst) \cdot \left (
      \begin{array}{cc}
        v(i)^2 & v(i) \cdot b_i^\top\\
        v(i) \cdot b_i & b_i b_i^\top
      \end{array}
      \right )\mper \label{eqn:fps-block}
    \end{align}
  In each block we can apply a (truncated) Bernstein inequality.
  For the large block $\sum_{i=1}^n (\|b_i\|_2^2 - \cconst) b_i b_i^\top$, the choice $\cconst$ ensures that $\E (\|b_i\|_2^2 - \cconst) b_i b_i^\top = O(\tfrac 1 {n^2}) \cdot \Id$.
  The expected operator norm of each summand is small:
  \begin{align*}
    \E \|(\|b_i\|_2^2 - \cconst) b_i b_i^\top\| & = \E |(\|b_i\|_2^2 - \cconst)| \|b_i\|_2^2\\
                                                & \leq (\E (\|b_i\|_2^2 - \cconst)^2)^{1/2} (\E \|b_i\|_2^4)^{1/2} \quad \text{by Cauchy-Schwarz}\\
                                                & \leq O\Paren{\frac{d^{1/2}}{n}} \cdot O\Paren{\frac d n} \quad \text{variance of $\chi^2$ with $k$ degrees of freedom is $O(k)$}\\
                                                & = O\Paren{\frac{d^{3/2}}{n^2}}\mper
  \end{align*}
  The termwise operator norms are bounded by constant-degree polynomials in Gaussian variables, so \pref{lem:truncate} applies to truncate the tails of the summands in preparation for a Bernstein bound.
  We just have to compute the variance of the sum, which is small because we have centered the coefficients:
  \begin{align*}
    & \left \| \sum_i \E (\|b_i\|_2^2 - \cconst)^2 \|b_i\|_2^2 \cdot b_i b_i^\top \right \| \leq O\Paren{\frac{d^2}{n^3}}
  \end{align*}
  by direct computation of $\E (\|b_i\|_2^2 - \cconst)^2 \|b_i\|_2^2 b_i b_i^\top$ using \pref{fact:gaussian-poly}.
  These facts together are enough to apply the matrix Bernstein inequality (\pref{prop:truncated-bernstein}) and conclude that \wovp
  \[
    \left \| \sum_{i=1}^n (\|b_i\|_2^2 - \cconst) \cdot b_i b_i^\top \right \| \leq \tO\Paren{\frac d {n^{3/2}}} \leq O\Paren{\frac 1 n}
  \]
  for appropriate choice of $d \leq n/\polylog(n)$.

  We turn to the other blocks from \pref{eqn:fps-block}.
  The upper-left block contains just the scalar $\sum_{i=1}^n (\|b_i\|_2^2 - \cconst) v(i)^2$.
  By standard concentration each term is bounded: \wovp,
  \[
    (\|b_i\|_2^2 - \cconst) v(i)^2 \leq (\max_i v(i)^2) \cdot \tO\Paren{\frac{d^{1/2}}{n}} \leq \|v\|_4^2 \cdot \tO\Paren{\frac{d^{1/2}}{n}}\mper
  \]
  The sum has variance at most $\sum_{i=1}^n v(i)^4 \E (\| b_i\|_2^2 - \cconst)^2 \leq \|v\|_4^4 \cdot O(d/n^2)$.
  Again using \pref{lem:truncate} and \pref{prop:truncated-bernstein}, we get that \wovp
  \[
    \left | \sum_{i=1}^n (\|b_i\|_2^2 - \cconst) v(i)^2 \right | \leq \|v\|_4^2 \cdot \tO\Paren{\frac{d^{1/2}} n}\mper
  \]

  It remains just to address the block $\sum_{i=1}^n (\|b_i\|_2^2 - \cconst) v(i) \cdot b_i$.
  Each term in the sum has expected operator norm at most
  \[
    (\max_i v(i)^2)^{1/2} \cdot O\Paren{\frac{d}{n^{3/2}}} \leq \|v\|_4 \cdot O\Paren{\frac{d}{n^{3/2}}} \cdot\mcom
  \]
  and once again the since the summands' operator norms are bounded by constant-degree polynomials of Gaussian variables \pref{lem:truncate} applies to truncate their tails in preparation to apply a Bernstein bound.
  The variance of the sum is at most $\|v\|_2^2 \cdot O(d^2/n^3)$, again by \pref{fact:gaussian-poly}.
  Finally, \pref{lem:truncate} and \pref{prop:truncated-bernstein} apply to give that \wovp
  \[
    \left \| \sum_{i=1}^n (\|b_i\|_2^2 - \cconst) v(i) \cdot b_i \right \| \leq  \|v\|_4 \cdot \tO\Paren{\frac {d}{n^{3/2}}} + \tO\Paren{\frac d {n^{3/2}}} = \|v\|_4 \cdot n^{-1} + n^{-1}
  \]
  for appropriate choice of $d \leq n^{1/2} / \polylog(n)$.
  Putting it all together gives the lemma.
\end{proof}

We are now ready to prove \pref{lem:fps-main-technical}
\begin{proof}[Proof of \pref{lem:fps-main-technical}]
  We decompose $\|a_i\|_2^2 = v_0(i)^2 + \|b_i\|_2^2$ and use \pref{lem:fps-good-basis-1} and \pref{lem:fps-good-basis-2}.
  \begin{align*}
    \sum_{i=1}^n (\|a_i\|_2^2 - \cconst) \cdot a_i a_i^\top & = \Paren{ \sum_{i=1}^n v_0(i)^2 \cdot a_i a_i^\top } + \Paren{ \sum_{i=1}^n (\|b_i\|_2^2 - \cconst) \cdot a_i a_i^\top}\\
                                                            & = \|v_0\|_4^4 \cdot e_1 e_1^\top + M\mcom
  \end{align*}
  where
  \[
    \|M\| \leq O(\|v_0\|_4^3 \cdot n^{-1/4} + \|v_0\|_4^2 \cdot n^{-1/2}) + O(\|v_0\|_4 \cdot n^{-1} + n^{-1})\mper
  \]
  Since $\|v_0\|_4^4 \geq (\epsilon n)^{-1}$, we get $\|v_0\|_4^4 / \|M\| \geq \frac 1 {\epsilon^{1/4}}$, completing the proof.
\end{proof}

\subsection{Closeness of input basis and good basis}
We turn now to the proof of \pref{lem:fps-basis-change}.
We recall the setting.
We have two matrices: $M$, which the algorithm computes, and $M'$, which is induced by a basis for the subspace which reveals the underlying randomness and which we prefer for the analysis.
$M'$ differs from $M$ by a rotation and a basis orthogonalization step (the good basis is only almost orthogonal).
The rotation is easily handled.
The following lemma gives the critical fact about the orthogonalization step: orthogonalizing does not change the leverage scores too much.
\footnote{Strictly speaking the good basis does not have leverage scores since it is not orthogonal, but we can still talk about the norms of the rows of the matrix whose columns are the basis vectors.}
\begin{lemma}[Restatement of \pref{lem:ortho-4}]
\label{lem:ortho-4-restatement}
Let $v \in \R^n$ be a unit vector and let $b_1,\ldots,b_n \in \R^{d-1}$ be iid from $\cN(0,\frac 1 n \Id_{d-1})$.
  Let $a_i \in \R^d$ be given by $a_i \seteq (v(i) \, \, b_i)$.
  Let $A \seteq \sum_i a_i a_i^\top$.
  Let $c \in \R^{d-1}$ be given by $c \seteq \sum_i v(i) b_i$.
  Then for every index $i \in [n]$, \wovp,
  \[
    \left | \|A^{-1/2} a_i \|^2 - \|a_i\|^2 \right | \leq \tO\Paren{\frac{ d + \sqrt{n}}{n}} \cdot \|a_i\|^2\mper
  \]
\end{lemma}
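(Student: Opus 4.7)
The plan is to write $A = \Id_d + E$ and analyze $\|A^{-1/2} a_i\|^2 - \|a_i\|^2 = a_i^\top (A^{-1} - \Id) a_i$ using the Neumann expansion $A^{-1} = \sum_{k \ge 0} (-E)^k$, which converges because \pref{lem:fps-covariance-concentration} gives $\|E\| \le \tO(\sqrt{d/n}) \ll 1$ \wovp. The tail of the expansion is easy: $\sum_{k \ge 2} |a_i^\top E^k a_i| \le 2\|E\|^2 \|a_i\|^2 = \tO(d/n) \|a_i\|^2$, which already fits inside the target $\tO((d+\sqrt n)/n) \|a_i\|^2$. All the real work is in controlling the first-order term $|a_i^\top E a_i|$, since the operator-norm estimate $|a_i^\top E a_i| \le \|E\|\|a_i\|^2 = \tO(\sqrt{d/n})\|a_i\|^2$ is too weak for $d \approx \sqrt n$ by a polynomial factor.

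To control $a_i^\top E a_i$ I would exploit the block structure of $E = A - \Id$: its upper-left scalar entry is $\sum_j v(j)^2 - 1 = 0$, its off-diagonal block is $c = \sum_j v(j) b_j$, and its lower-right block is $B - \Id_{d-1}$ where $B = \sum_j b_j b_j^\top$. This yields $a_i^\top E a_i = 2 v(i)\, c^\top b_i + b_i^\top (B - \Id) b_i$. The obstacle to applying concentration directly is that $c$ and $B$ both contain $b_i$; to isolate the randomness I would split $c = c_{-i} + v(i) b_i$ and $B = B_{-i} + b_i b_i^\top$, where $c_{-i} \defeq \sum_{j \ne i} v(j) b_j$ and $B_{-i} \defeq \sum_{j \ne i} b_j b_j^\top$ are independent of $b_i$. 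This gives
\[
  a_i^\top E a_i = 2 v(i) c_{-i}^\top b_i + 2 v(i)^2 \|b_i\|^2 + b_i^\top (B_{-i} - \Id) b_i + \|b_i\|^4.
\]
Conditional on $c_{-i}$ and $B_{-i}$, a Gaussian tail bound gives $|v(i) c_{-i}^\top b_i| \le \tO(|v(i)| \sqrt d/n)$ (using $\|c_{-i}\|^2 \le \tO(d/n)$), $\chi^2$ concentration gives $v(i)^2 \|b_i\|^2 \le \tO(v(i)^2 d/n)$ and $\|b_i\|^4 \le \tO(d^2/n^2)$, and Hanson--Wright applied to the quadratic form in $b_i$ gives
\[
  \Bigl| b_i^\top (B_{-i} - \Id) b_i - \tfrac{1}{n}\Tr(B_{-i} - \Id) \Bigr| \le \tO\Bigparen{\tfrac{\|B_{-i} - \Id\|_F}{n} + \tfrac{\|B_{-i} - \Id\|}{n}}.
\]

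The critical sub-step is establishing $\|B_{-i} - \Id\|_F \le \tO(d/\sqrt n)$ \wovp. I would obtain this by computing $\E \|B_{-i} - \Id\|_F^2 = \E \Tr((B_{-i} - \Id)^2)$ directly from the Wishart moments: the expectation evaluates to $\tO(d^2/n)$, and concentration of $\|B_{-i} - \Id\|_F^2$ around its mean follows from \pref{lem:truncate} since the quantity is a degree-$4$ polynomial in independent Gaussians. Combined with $\|B_{-i} - \Id\| \le \tO(\sqrt{d/n})$ and a direct $\chi^2$ bound $|\Tr(B_{-i} - \Id)/n| \le \tO(\sqrt d / n^{3/2})$ coming from the concentration of $\sum_{j \ne i} \|b_j\|^2$ around $(n-1)(d-1)/n$, the Hanson--Wright step yields $|b_i^\top (B_{-i} - \Id) b_i| \le \tO(d/n^{3/2})$.

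Finally, using $\|a_i\|^2 \ge v(i)^2 + (d-1)/n - \tO(\sqrt d/n)$ together with the AM--GM inequality $|v(i)| \sqrt d / n \le \tfrac 1 2 (v(i)^2/\sqrt n + d/n^{3/2})$, all four summands can be absorbed into $\tO((d+\sqrt n)/n)\|a_i\|^2$ whether $v(i)^2$ is larger or smaller than $d/n$. A union bound over $i \in [n]$ preserves the overwhelming-probability statement. The main obstacle is the Frobenius bound $\|B_{-i} - \Id\|_F \le \tO(d/\sqrt n)$: the naive estimate $\sqrt{d-1}\cdot \|B_{-i} - \Id\| = \tO(d/n^{1/4})$ overshoots by a factor of roughly $n^{1/4}$, so the direct moment calculation (rather than an operator-norm reduction) is what ultimately produces the correct Hanson--Wright deviation.
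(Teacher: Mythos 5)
Your proposal is correct in its main thrust but takes a genuinely different route from the paper's. The paper never expands $A^{-1}$ globally: it uses the block-inverse/Schur-complement formula (\pref{lem:ortho-3}) to reduce to the term $\iprod{b_i,B^{-1}b_i}$ plus correction terms in $c$, decouples $b_i$ from $B$ via Sherman--Morrison (\pref{lem:ortho-2}), and gets the crucial $1/\sqrt n$ rate from the observation that for a \emph{fixed} unit vector $x$ the quantity $\iprod{x,(\Id-B)x}$ is a recentered $\chi^2_n$, hence $\tO(1/\sqrt n)$ \wovp (\pref{lem:ortho-1}); the $c$-terms are handled crudely via $\|c\|^2\le\tO(d/n)$ and stay multiplicative in $v(i)^2+\|b_i\|^2$. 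You instead expand once around $\Id$, peel $b_i$ out of $c$ and $B$ by hand inside the first-order term, and obtain the $1/\sqrt n$ rate from Hanson--Wright (equivalently \pref{lem:gaus-polys} applied to the conditional quadratic form) together with $\|B_{-i}-\Id\|_F\le\tO(d/\sqrt n)$ and the trace centering. Both arguments are leave-one-out at heart; yours avoids the matrix-inversion formulas entirely at the price of the Frobenius/trace bookkeeping. That bookkeeping is lighter than you think: $\|B_{-i}-\Id\|_F\le\sqrt{d-1}\,\|B_{-i}-\Id\|\le\tO\bigl(\sqrt{d}\cdot\sqrt{d/n}\bigr)=\tO(d/\sqrt n)$, so the ``naive'' estimate already gives exactly the bound you need (your figure $\tO(d/n^{1/4})$ is an arithmetic slip), and the direct Wishart-moment computation is unnecessary.

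One step needs repair. Your final absorption converts the additive errors $\tO(d/n^{3/2})$ (from Hanson--Wright and from the AM--GM on $v(i)\,c_{-i}^\top b_i$) into the multiplicative form using $\|a_i\|^2\ge v(i)^2+(d-1)/n-\tO(\sqrt d/n)$. That inequality does hold \wovp, but it only yields $\|a_i\|^2=\tOmega(d/n)$ when $d\ge\polylog(n)$; for $d=O(\polylog n)$ its right-hand side can be nonpositive, and at sparse coordinates with $v(i)=0$ the event $\|b_i\|^2=\Omega(d/n)$ fails with probability that is only polylogarithmically small, so the claimed \wovp bound does not follow from your argument in that regime, whereas the lemma (and the paper's proof) imposes no lower bound on $d$. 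The fix is a one-line case split: when $d\le\polylog(n)$ one has $\sqrt{d/n}=\tO(1/\sqrt n)$, so the crude multiplicative bounds $|b_i^\top(B_{-i}-\Id)b_i|\le\|B_{-i}-\Id\|\,\|b_i\|^2$ and $|v(i)\,c_{-i}^\top b_i|\le\|c_{-i}\|\,|v(i)|\,\|b_i\|\le\tO(\sqrt{d/n})\,\|a_i\|^2$ already give $\tO\bigl((d+\sqrt n)/n\bigr)\|a_i\|^2$, and your absorption argument is only needed (and then works) for $d\ge\polylog(n)$.
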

The proof again uses standard concentration and matrix inversion formulas, and can be found in \pref{sec:fast-planted-proofs}.
We are ready to prove \pref{lem:fps-basis-change}.

\begin{proof}[Proof of \pref{lem:fps-basis-change}]
  The statement we want to show is
  \[
    \left \| \sum_{i = 1}^n (\|a_i'\|^2 - \tfrac d n)\cdot a_i' a_i'^\top
    - Q \Paren{\sum_{i = 1}^n (\|a_i\|^2 - \tfrac d n) \cdot a_i a_i^\top} Q^\top \right \|
    \leq O\Paren{\frac 1 n} + o(\|v\|_4^4) \mper
  \]
  Conjugating by $Q$ and multiplying by $-1$ does not change the operator norm,
  so that this is equivalent to
  \[
    \left \| \sum_{i = 1}^n (\|a_i\|^2 - \tfrac d n)\cdot a_i a_i^\top
    - Q^\top \Paren{\sum_{i = 1}^n (\|a_i'\|^2 - \tfrac d n) \cdot a_i' a_i'^\top} Q \right \|
    \leq O\Paren{\frac 1 n} + o(\|v\|_4^4) \mper
  \]

  Finally, substituting $a_i' = Q A^{-1/2} a_i$, and using the fact that $Q$ is a rotation,
  it will be enough to show
  \begin{align}
    \label{eqn:fps-2}
    \left \| \Paren{\sum_{i = 1}^n (\|a_i\|^2 - \cconst) \cdot a_i a_i^\top} - A^{-1/2} \Paren{\sum_{i = 1}^n (\|A^{-1/2} a_i\|^2 - \cconst) \cdot a_i a_i^\top} A^{-1/2} \right \|
    \leq O\Paren{\frac 1 n} + o(\|v\|_4^4) \mper
  \end{align}
  We write the right-hand matrix as
  \begin{align*}
    & A^{-1/2} \Paren{\sum_{i = 1}^n (\|A^{-1/2} a_i\|^2 - \cconst) \cdot a_i a_i^\top} A^{-1/2} \\
    & = A^{-1/2} \Paren{\sum_{i = 1}^n (\|A^{-1/2} a_i\|^2 - \|a_i\|^2) \cdot a_i a_i^\top} A^{-1/2} + A^{-1/2} \Paren{\sum_{i = 1}^n (\|a_i\|^2 - \cconst) \cdot a_i a_i^\top} A^{-1/2}\mper
  \end{align*}
  The first of these we observe has bounded operator norm \wovp:
  \begin{align*}
    \left \| A^{-1/2} \Paren{\sum_{i = 1}^n (\|A^{-1/2} a_i\|^2 - \|a_i\|^2) \cdot a_i a_i^\top} A^{-1/2} \right \|
    & \leq \left \| A^{-1/2} \Paren{\sum_{i = 1}^n | \|A^{-1/2} a_i\|^2 - \|a_i\|^2 | \cdot a_i a_i^\top} A^{-1/2} \right \|\\
    & \leq \tO\Paren{\frac{d + \sqrt{n}}{n}} \cdot \left \| \sum_{i = 1}^n \|a_i\|^2 \cdot a_i a_i^\top \right \|\\
    \intertext{where we have used \pref{lem:fps-covariance-concentration} to find that $A^{1/2}$ is close to identity, and \pref{lem:ortho-4-restatement} to simplify the summands}
    & = \tO\Paren{\frac{d + \sqrt{n}}{n}} \cdot \Paren{ \left \| \sum_{i = 1}^n v_0(i)^2 \cdot a_i a_i^\top \right \| + \left \| \sum_{i=1}^n \|b_i\|_2^2 \cdot a_i a_i^\top \right \| }\\
    & \leq \tO\Paren{\frac{d + \sqrt{n}}{n}} \cdot \Paren{O(\|v\|_4^4) + \tO\Paren{\frac d n}}\mcom
  \end{align*}
  using in the last step \pref{lem:fps-good-basis-1} and standard concentration to bound $\sum_{i=1}^n \|b_i\|_2^2 \cdot a_i a_i^\top$ (\pref{lem:fps-covariance-concentration}).
  Thus, by triangle inequality applied to \pref{eqn:fps-2}, we get
  \begin{align*}
    & \left \| \Paren{\sum_{i = 1}^n (\|a_i\|^2 - \cconst) \cdot a_i a_i^\top} - A^{-1/2} \Paren{\sum_{i = 1}^n (\|A^{-1/2} a_i\|^2 - \cconst) \cdot a_i a_i^\top} A^{-1/2} \right \|\\
    & \leq  \tO\Paren{\frac{d + \sqrt{n}}{n}} \cdot \Paren{O(\|v\|_4^4) + \tO\Paren{\frac d n}}
    + \left \| \Paren{\sum_{i = 1}^n (\|a_i\|^2 - \cconst) \cdot a_i a_i^\top} - A^{-1/2} \Paren{\sum_{i = 1}^n (\|a_i\|^2 - \cconst) \cdot a_i a_i^\top} A^{-1/2} \right \|\mper
  \end{align*}
  Finally, since \wovp $\|A^{-1/2} - \Id\| = \tO(d/n)^{1/2}$, we get
  \begin{align*}
    & \left \| \Paren{\sum_{i = 1}^n (\|a_i\|^2 - \cconst) \cdot a_i a_i^\top} - A^{-1/2} \Paren{\sum_{i = 1}^n (\|A^{-1/2} a_i\|^2 - \cconst) \cdot a_i a_i^\top} A^{-1/2} \right \|\\
    & \leq \tO\Paren{\frac{d + \sqrt{n}}{n}} \cdot \Paren{O(\|v\|_4^4) + \tO\Paren{\frac d n}} + \tO\Paren{\frac d n}^{1/2} \cdot \left \| \sum_{i=1}^n (\|a_i\|_2^2 - \cconst) \cdot a_i a_i^\top \right \|\\
    & \leq  \tO\Paren{\frac{d + \sqrt{n}}{n}} \cdot \Paren{O(\|v\|_4^4) + \tO\Paren{\frac d n}} + \tO\Paren{\frac d n}^{1/2} \cdot O(\|v\|_4^4)\mper
  \end{align*}
  using \pref{lem:fps-main-technical} in the last step.
  For appropriate choice of $d \leq n^{-1/2} / \polylog(n)$, this is at most $O(n^{-1}) + o(\|v\|_4^4)$.
\end{proof}

\section{Overcomplete tensor decomposition} \label{sec:tdecomp}

In this section, we give a polynomial-time algorithm for the following problem
when $n \le d^{4/3}/(\polylog d)$:
\begin{problem}
  Given an order-$3$ tensor $\bT = \sum_{i = 1}^n a_i \tensor a_i \tensor a_i$, where
  $a_1,\ldots,a_n \in \R^d$ are iid vectors sampled from $\cN(0,\frac{1}{
  d} \Id)$, find vectors $b_1,\ldots,b_n \in \R^n$ such that for all $i\in[n]$,
  \[
    \iprod{a_i,b_i} \ge 1 - o(1) \mper
  \]
\end{problem}

We give an algorithm that solves this problem, so long as the overcompleteness of the input tensor is bounded such that $n \ll d^{4/3}/\polylog d$.

\begin{theorem}\label{thm:tdecomp-alg}
    Given as input the tensor $\bT = \sum_{i=1}^n a_i \tensor a_i \tensor a_i$ where $a_i \sim \cN(0,\tfrac{1}{d} \Id_d)$ with $d \leq n \leq d^{4/3}/\polylog d$,\footnote{The lower bound $d \leq n$ on $n$, is a matter of technical convenience, avoiding separate concentration analyses and arithmetic in the undercomplete $(n < d)$ and overcomplete $(n \geq d)$ settings.
  Indeed, our algorithm still works in the undercomplete setting (tensor decomposition is easier in the undercomplete setting than the overcomplete one), but here other algorithms based on local search also work \cite{DBLP:conf/colt/AnandkumarGJ15}.
  }
  there is an algorithm which may run in time $\tO(nd^{1 + \omega})$ or $\tO(nd^{3.257})$, where $d^\omega$ is the time to multiply two $d\times d$ matrices, which with probability $1 - o(1)$ over the input $\bT$ and the randomness of the algorithm finds unit vectors $b_1,\ldots, b_n \in \R^d$ such that for all $i\in[n]$,
    \[
      \iprod{a_i, b_i} \ge 1 - \tO\Paren{\frac {n^{3/2}} {d^{2}}}\mper
    \]
\end{theorem}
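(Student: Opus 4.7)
}
The plan is to realize the spectral-partial-trace strategy outlined in \pref{sec:techniques-tensor-decomp}. The core object is the matrix representation of the polynomial $(\bT(x,x,x))^2$, namely
\[
  M \;=\; \sum_{i,j \in [n]} (a_i a_j^\top) \tensor (a_i a_i^\top) \tensor (a_j a_j^\top) \;\in\; \R^{d^3 \times d^3},
\]
which can be assembled in time $\tO(d^3)$ from $\bT^{\tensor 2}$ and split as $M = \Mdiag + \Mcross$ with $\Mdiag = \sum_i (a_i a_i^\top)^{\tensor 3}$. First I would carry out the ``pure signal'' analysis: choose a random $\Phi \in \R^{d\times d}$ and compute the partial trace in the first mode,
\[
  N_\Phi \;\defeq\; \Tr_{\R^d}\!\bigl[(\Phi \tensor \Id \tensor \Id)\,M\bigr] \;=\; \sum_{i,j} \iprod{a_j,\Phi a_i}\cdot(a_i a_i^\top)\tensor(a_j a_j^\top).
\]
The diagonal contribution $\sum_i \iprod{a_i,\Phi a_i}(a_i a_i^\top)^{\tensor 2}$ is the ``signal''; the off-diagonal $i \neq j$ part together with $\Tr_{\R^d}[(\Phi\tensor\Id\tensor\Id)\Mcross]$ is the ``noise''.

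The key design choice, as in the overview, is to sample $\Phi$ so that for some (random) index $i^\star$ the scalar $\iprod{a_{i^\star},\Phi a_{i^\star}}$ beats all other $\iprod{a_j,\Phi a_k}$. A natural choice is $\Phi = gg^\top$ for $g \sim \cN(0,\Id_d)$: then $\iprod{a_i,\Phi a_i} = \iprod{g,a_i}^2$, and with probability $\Omega(1/\poly(n))$ one $\iprod{g,a_{i^\star}}^2$ is a logarithmic factor larger than all the rest, while the off-diagonal terms $\iprod{g,a_i}\iprod{g,a_j}$ are typically much smaller. The first step of the analysis is to verify, via standard Gaussian anticoncentration together with a union bound over the $n$ indices, that such a ``gap event'' occurs with at least inverse-polynomial probability, and that conditional on it the diagonal signal puts mass $\gtrsim \polylog(d)/d$ on the rank-one term $(a_{i^\star}a_{i^\star}^\top)^{\tensor 2}$.

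The main technical obstacle, and the step where the hypothesis $n \le d^{4/3}/\polylog d$ must be exploited, is bounding the spectral norm of the noise matrix $N_\Phi - \iprod{a_{i^\star},\Phi a_{i^\star}}(a_{i^\star}a_{i^\star}^\top)^{\tensor 2}$. I would handle this in two pieces. For the diagonal off-target $i\neq i^\star$ contribution $\sum_{i\neq i^\star}\iprod{g,a_i}^2 (a_i a_i^\top)^{\tensor 2}$, the required spectral bound follows from a matrix Bernstein argument for sums of iid rank-one matrices of the form $X_i \seteq \iprod{g,a_i}^2 (a_i a_i^\top)^{\tensor 2}$, after truncating their heavy tails via \pref{lem:truncate}; the variance proxy scales like $n/d^3$ and the per-term bound like $\tO(1/d^2)$, which is where $n\ll d^{4/3}$ enters. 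For the cross-term contribution $\sum_{i\neq j}\iprod{a_j,\Phi a_i}(a_ia_i^\top)\tensor(a_j a_j^\top)$, I would view it as a quadratic form in the independent Gaussians $a_1,\ldots,a_n$ and apply a decoupling/matrix-Bernstein argument (or expand into the Wigner-like building blocks of $\bT^{\tensor 2}$), once again showing the operator norm is $o$ of the signal in the regime $n \le d^{4/3}/\polylog d$. With these two bounds in hand, an application of \pref{lem:low-correlation} (or a direct Davis--Kahan argument) gives that the top eigenvector of $N_\Phi$ is $\tO(n^{3/2}/d^2)$-close to $a_{i^\star}\tensor a_{i^\star}$.

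The rest of the proof is bookkeeping. Given a unit vector $w\in\R^{d^2}$ with $\iprod{w,a_{i^\star}\tensor a_{i^\star}} \ge 1 - \tO(n^{3/2}/d^2)$, the reshaping $W\in\R^{d\times d}$ of $w$ is $\tO(n^{3/2}/d^2)$-close in Frobenius norm to $a_{i^\star}a_{i^\star}^\top$, so its top left singular vector $b$ satisfies $\iprod{a_{i^\star},b}\ge 1 - \tO(n^{3/2}/d^2)$; this step also determines the sign. Implementation-wise, $N_\Phi$ is a $d^2\times d^2$ matrix whose top eigenvector can be obtained by $O(\log d)$ matrix-vector multiplies, each of which costs $\tO(d^{1+\omega})$ (or $\tO(d^{3.257})$ using fast rectangular multiplication) because multiplication by $N_\Phi$ reduces to contracting $\bT$ against structured vectors. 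To recover all $n$ components I would run the randomized procedure $\tO(n)$ times with fresh $g$'s and keep, for each $i$, the $b$ whose correlation with one of the already-discovered components is largest, giving the stated $\tO(n d^{1+\omega})$ runtime. The expected main obstacle is the matrix-concentration bound on the cross terms after the partial trace, since there the ``good'' behavior only kicks in for $n \le d^{4/3}$ and tight estimates require decoupling rather than the off-the-shelf Bernstein bound used in the diagonal case.
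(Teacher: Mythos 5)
There is a genuine gap, and it is exactly the failure mode the paper warns about before stating its construction. Your choice of random contraction $\Phi = gg^\top$ gives coefficients $\iprod{a_j,\Phi a_i} = \iprod{g,a_i}\iprod{g,a_j}$, so your partial-trace matrix collapses to
\[
  N_\Phi \;=\; \sum_{i,j}\iprod{g,a_i}\iprod{g,a_j}\,(a_ia_i^\top)\tensor(a_ja_j^\top) \;=\; B\tensor B, \qquad B := \bT(g,\Id,\Id)=\sum_{i}\iprod{g,a_i}\,a_ia_i^\top,
\]
i.e.\ your algorithm is just the classical random contraction of the input $3$-tensor, squared. The claim that the off-diagonal coefficients $\iprod{g,a_i}\iprod{g,a_j}$ are ``typically much smaller'' than the diagonal ones is false: both are $\Theta(1)$, while the gap event only boosts the winning diagonal term to $\Theta(\log n)$. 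Meanwhile the cross-term matrix $\sum_{i\neq j}\iprod{g,a_i}\iprod{g,a_j}(a_ia_i^\top)\tensor(a_ja_j^\top)$ has spectral norm of order $\|B\|^2 = \tilde\Theta(n/d)$ (a Gaussian matrix series bound gives $\|B\|\approx\sqrt{n/d}$), which swallows the $\log n$ signal throughout essentially the whole claimed range $d\le n\le d^{4/3}/\polylog d$; equivalently, the top eigenvector of $B\tensor B$ is $b_1\tensor b_1$ for $b_1$ the top eigenvector of $B$, which in the overcomplete regime is not correlated with any single $a_{i^\star}$. Relatedly, your proposed matrix-Bernstein bound for the off-target diagonal part cannot go through as stated: each summand $\iprod{g,a_i}^2(a_ia_i^\top)^{\tensor 2}$ has operator norm $\approx 1$ (not $\tO(1/d^2)$), and the PSD sum has norm comparable to the signal, so no absolute smallness bound is available.

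The missing idea, which is the crux of the paper's proof, is to route the random contraction through the unfolded tensor: take the coefficients to be $\iprod{g, T(a_i\tensor a_j)}$ with $T=\sum_\ell a_\ell(a_\ell\tensor a_\ell)^\top$ (equivalently, take $\Phi$ to be the matrix reshaping of $T^\top g=\sum_\ell\iprod{g,a_\ell}\,a_\ell\tensor a_\ell$). Then the diagonal coefficients are $\iprod{g,Ta_i^{\tensor 2}}\approx\iprod{g,a_i}\|a_i\|^4=\Theta(1)$ while the cross coefficients are suppressed to $\tO(\sqrt n/d)$, and \pref{prop:cross-terms} shows $\|\Mcross\|\le\tO(n^{3}/d^{4})^{1/2}$, which is $o(1)$ precisely when $n\ll d^{4/3}$. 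Even then, the diagonal part is handled not by a smallness bound but by a \emph{relative} spectral-gap argument: the gap event of \pref{lem:g-boost} plus the near-isometry of $\{R\,a_i^{\tensor 2}\}$ after preconditioning by $R=\sqrt2\,(\Sigma^+)^{1/2}$ (\pref{lem:bounded-signal}), which also removes the spurious eigenvector along $\sum_i a_i^{\tensor 2}$ that your plan does not address. Finally, the recovery/verification step needs an observable test such as thresholding $\sum_i\iprod{a_i,v}^3$ (\pref{lem:tensor-check-success}); ``keeping the $b$ most correlated with an already-discovered component'' is not implementable since the $a_i$ are unknown.
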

We remark that this accuracy can be improved from $1 - \tO(n^{3/2}/d^2)$ to an arbitrarily good precision using existing local search methods with local convergence guarantees---see \pref{cor:boost}.

As discussed in \pref{sec:techniques}, to decompose the tensor $\sum_i a_i^{\tensor 6}$ (note we do not actually have access to this input!) there is a very simple tensor decomposition algorithm: sample a random $g \in \R^{d^2}$ and compute the matrix $\sum_i \iprod{g, a_i^{\tensor 2}} (a_ia_i^\top)^{\tensor 2}$.
With probability roughly $n^{-O(\epsilon)}$ this matrix has (up to scaling) the form $(a_i a_i^\top)^{\tensor 2} + E$ for some $\|E\| \leq 1 - \epsilon$, and this is enough to recover $a_i$.

However, instead of $\sum_i a_i^{\tensor 6}$, we have only $\sum_{i,j} (a_i \tensor a_j)^{\tensor 3}$.
Unfortunately, running the same algorithm on the latter input will not succeed.
To see why, consider the extra terms $E' \seteq \sum_{i \neq j} \iprod{g, a_i \tensor a_j} (a_i \tensor a_j)^{\tensor 2}$.
Since $|\iprod{g, a_i \tensor a_j}| \approx 1$, it is straightforward to see that $\|E'\|_F \approx n$.
Since the rank of $E'$ is clearly $d^2$, even if we are lucky and all the eigenvalues have similar magnitudes, still a typical eigenvalue will be $\approx n/d \gg 1$, swallowing the $\sum_i a_i^{\tensor 6}$ term.

A convenient feature separating the signal terms $\sum_i (a_i \tensor a_i)^{\tensor 3}$ from the crossterms $\sum_{i \neq j} (a_i \tensor a_j)^{\tensor 3}$ is that the crossterms are not within the span of the $a_i \tensor a_i$.
Although we cannot algorithmically access $\Span \{ a_i \tensor a_i \}$, we have access to something almost as good: the unfolded input tensor,  $T = \sum_{i\in[n]} a_i (a_i\tensor a_i)^\top$.
The rows of this matrix lie in $\Span\{a_i \tensor a_i\}$, and so for $i\neq j$, $\|T(a_i \tensor a_i)\| \gg \|T(a_i \tensor a_j) \|$.
In fact, careful computation reveals that $\|T(a_i \tensor a_i) \| \geq \tOmega(\sqrt n / d) \|T(a_i \tensor a_j) \|$.

The idea now is to replace $\sum_{i,j} \iprod{g, a_i\tensor a_j} (a_i \tensor a_j)^{\tensor 2}$ with $\sum_{i,j} \iprod{g, T(a_i \tensor a_j)} (a_i \tensor a_j)^{\tensor 2}$, now with $g \sim \cN(0, \Id_d)$.
As before, we are hoping that there is $i_0$ so that $\iprod{g, T(a_{i_0} \tensor a_{i_0})} \gg \max_{j \neq i_0} \iprod{g, T(a_j \tensor a_j)}$.
But now we also require $\|\sum_{i \neq j} \iprod{g, T(a_i \tensor a_j)} (a_i \tensor a_j)(a_i \tensor a_j)^\top \| \ll  \iprod{g, T(a_{i_0} \tensor a_{i_0})} \approx \|T(a_i \tensor a_i)\|$.
If we are lucky and all the eigenvalues of this cross-term matrix have roughly the same magnitude (indeed, we will be lucky in this way), then we can estimate heuristically that
\begin{align*}
  \Norm{\sum_{i \neq j} \iprod{g, T(a_i \tensor a_j)} (a_i \tensor a_j)(a_i \tensor a_j)^\top} & \approx
 \tfrac 1 d \Norm{\sum_{i \neq j} \iprod{g, T(a_i \tensor a_j)} (a_i \tensor a_j)(a_i \tensor a_j)^\top}_F \\
 & \leq \tfrac 1 d \cdot \tfrac {\sqrt n}{d} |\iprod{g, T(a_{i_0} \tensor a_{i_0})}| \Norm{\sum_{i \neq j}  (a_i \tensor a_j)(a_i \tensor a_j)^\top}_F \\
 & \leq \tfrac{n^{3/2}}{d^2} |\iprod{g, T(a_{i_0} \tensor a_{i_0})}|\mcom
\end{align*}
suggesting our algorithm will succed when $n^{3/2} \ll d^{2}$, which is to say $n \ll d^{4/3}$.

The following theorem, which formalizes the intuition above, is at the heart of our tensor decomposition algorithm.
\begin{theorem}
  \label{thm:tensor-decomp-main}
  Let $a_1,\ldots,a_n$ be independent random vectors from $\cN(0,\tfrac{1}{d}\Id_d )$ with $d\le n \le d^{4/3}/(\polylog d)$
  and let $g$ be a random vector from $N(0, \Id_{d})$.
  Let $\Sigma \seteq \E_{x \sim \cN(0, \Id_d)} (xx^\top)^{\tensor 2}$ and let $\signh \seteq \sqrt 2 \cdot (\Sigma^+)^{1/2}$.
  Let $T = \sum_{i\in[n]} a_i (a_i \tensor a_i)^\top$.
  Define the matrix $M \in \R^{d^2 \times d^2}$,
  \begin{displaymath}
    M = \sum_{i,j\in [n]} \iprod{g, T(a_i\tensor a_j)} \cdot (a_i \tensor a_j)(a_i\tensor a_j)^\top\mper
  \end{displaymath}
  With probability $1-o(1)$ over the choice of $a_1,\ldots,a_n$, for every $\polylog d / \sqrt d < \epsilon < 1$, the spectral gap of $RMR$ is at least $\lambda_2/\lambda_1 \le 1 - O(\epsilon)$ and the top eigenvector $u\in \R^{d^2}$ of $R M R$ satisfies,
  with probability $\tilde \Omega(1/n^{O(\epsilon)})$ over the choice of $g$,
  $$
  \max_{i\in [n]}\langle  \signh u, a_i \otimes a_i \rangle^2/ \left(\lVert  u \rVert^2 \cdot \lVert  a_i \rVert^4\right) \ge 1 - \tO\Paren{\frac{n^{3/2}}{\epsilon d^2}}\mper
  $$
  Moreover, with probability $1-o(1)$ over the choice of $a_1,\ldots,a_n$, for every $\polylog d / \sqrt d < \epsilon < 1$ there are events $E_1,\ldots,E_n$ so that $\Pr_g E_i \geq \tOmega(1/n^{1 + O(\epsilon)})$ for all $i\in[n]$ and when $E_i $ occurs,
  $\langle  \signh u, a_i \otimes a_i \rangle^2 / \|u\|^2 \cdot \|a_i\|^4 \ge 1 - \tO\Paren{\frac{n^{3/2}}{\epsilon d^2}}$.
\end{theorem}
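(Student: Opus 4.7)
The plan is to split $M$ into its diagonal and cross-term contributions, $M = \Mdiag + \Mcross$, where
\[
  \Mdiag = \sum_{i\in[n]} \iprod{g,T(a_i\otimes a_i)}\,(a_i\otimes a_i)(a_i\otimes a_i)^\top
\]
and $\Mcross$ collects the $i\ne j$ terms. My strategy is to show that $R\Mdiag R$ has a single dominant eigenvector aligned with the normalization of $R(a_{i_0}\otimes a_{i_0})$ for some index $i_0$ chosen by the randomness of $g$, while $\|R\Mcross R\|=\tO(n^{3/2}/d^2)$ is a strictly smaller noise term. A rank-one approximation lemma such as \pref{lem:low-correlation} will then convert the spectral gap of $R\Mdiag R$ into the claimed correlation for the top eigenvector $u$, exploiting the fact that $R$ acts as the identity on the traceless-symmetric subspace (where $\Sigma$ has eigenvalue $2$), so that $\|R(a_i\otimes a_i)\|\approx\|a_i\|^2$ and $\iprod{Ru,\,a_i\otimes a_i}\approx \iprod{u,\,R(a_i\otimes a_i)}$ in the sense required by the statement.

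First I would analyze the action of $T$ on simple tensors. Writing $T(a_i\otimes a_j)=\sum_k \iprod{a_k,a_i}\iprod{a_k,a_j}\,a_k$ and separating the ``resonant'' contributions $k\in\{i,j\}$ from the off-resonant sum, a standard Gaussian concentration argument gives, uniformly in $(i,j)$, that $T(a_i\otimes a_i)=\|a_i\|^4 a_i+\varepsilon_i$ with $\|\varepsilon_i\|\le \tO(\sqrt n/d)$, while for $i\ne j$ the vector $T(a_i\otimes a_j)$ has norm at most $\tO(1/\sqrt d)$---a factor $\approx\sqrt n/d$ smaller than the diagonal case. Consequently, conditional on $a_1,\dots,a_n$, the diagonal coefficients $c_i := \iprod{g,T(a_i\otimes a_i)}$ form a jointly Gaussian family of approximately unit variance whose pairwise covariances are of order $\tO(1/\sqrt d)$ (hence nearly iid), while each cross coefficient $\iprod{g,T(a_i\otimes a_j)}$ is Gaussian of variance at most $\tO(1/d)$.

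To extract a winner I would use a standard anti-concentration argument. Since $R$ is approximately the identity on the symmetric subspace, the normalized vectors $u_i := R(a_i\otimes a_i)/\|R(a_i\otimes a_i)\|$ are approximately orthonormal, so $R\Mdiag R\approx \sum_i c_i\,u_i u_i^\top$ and the spectral gap is controlled by the gap between the largest and the second-largest $|c_i|$. For iid standard Gaussians $c_1,\dots,c_n$, direct integration of the Gaussian density against the tail bound $\Pr[|c_j|\le t]^{n-1}$ gives $\Pr\bigl[c_{i_0}\ge(1+\epsilon)\max_{j\ne i_0}|c_j|\bigr]\ge \tOmega(1/n^{1+O(\epsilon)})$ for each fixed $i_0$; the events $E_{i_0}$ for different $i_0$ are disjoint, so summing yields $\tOmega(1/n^{O(\epsilon)})$ probability that some winner exists. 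On $E_{i_0}$ the matrix $R\Mdiag R$ has a spectral gap of size at least $\epsilon\cdot|c_{i_0}|$, with top eigenvector $\approx u_{i_0}$.

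The hard part will be the bound $\|R\Mcross R\|\le \tO(n^{3/2}/d^2)$. The guiding heuristic is a Frobenius-norm-over-square-root-of-effective-rank estimate: one checks $\|\Mcross\|_F^2\le \sum_{i\ne j}\iprod{g,T(a_i\otimes a_j)}^2\|a_i\|^4\|a_j\|^4+(\text{off-diagonal Gram terms})\le \tO(n^2/d)$, and $\Mcross$ is effectively supported on an $O(d^2)$-dimensional symmetric subspace with sufficiently delocalized spectrum. Making this rigorous is the bulk of the technical work; I would do it by a matrix fourth-moment estimate, computing $\E_g \Tr(R\Mcross R)^4$ conditional on the $a_i$, showing it concentrates at the right order, and combining with Gaussian concentration for the $a_i$. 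The main subtlety is keeping the randomness of $g$ and of the $a_i$ carefully separated, so that the final probability statement reads ``with high probability over $a_1,\dots,a_n$, and with probability $\tOmega(n^{-O(\epsilon)})$ over $g$''. Feeding the resulting signal-to-noise ratio into the rank-one perturbation lemma yields the claimed accuracy $1-\tO(n^{3/2}/(\epsilon d^2))$, and the events $E_1,\dots,E_n$ in the second part of the statement arise directly from ranging over $i_0\in[n]$ in the anti-concentration step above.
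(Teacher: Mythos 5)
Your high-level route is the same as the paper's: split $M=\Mdiag+\Mcross$, use anti-concentration over $g$ to make one diagonal coefficient $\iprod{g,T(a_{i_0}\otimes a_{i_0})}$ beat the rest by a factor $1+\epsilon$ (the paper's \pref{lem:g-boost}/\pref{prop:diagonal-terms}), bound $\|\Mcross\|\le\tO(n^{3/2}/d^2)$ (\pref{prop:cross-terms}), and finish with the rank-one perturbation lemma \pref{lem:low-correlation}. However, the two places where you wave at the technical work are exactly where the proof is hard, and one of your proposed tools would fail. For the cross terms, a fourth-moment trace bound cannot give the operator-norm bound you need: for a matrix supported on an $\approx d^2$-dimensional space with flat spectrum, $\bigl(\E\Tr(R\Mcross R)^4\bigr)^{1/4}$ exceeds the operator norm by a factor of order $(d^2)^{1/4}=\sqrt d$, so even a sharp fourth-moment computation yields only $\|\Mcross\|\le\tO(n^{3/2}/d^{3/2})$, which is vacuous in the overcomplete regime $n\ge d$ (it does not even beat the signal once $n\gg d$). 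You would need trace moments of degree $\Theta(\log d)$ — a substantially harder computation in which the correlations among the $a_i$ dominate — or an argument like the paper's: exploit sign symmetry of the $a_i$, decouple via \pref{thm:decoupling}, split into the $\ell\in\{i,j\}$ and $\ell\notin\{i,j\}$ parts ($\Msame$, $\Mdiff$), and apply iterated matrix Rademacher/Bernstein bounds. Relatedly, your variance bookkeeping for the cross coefficients is off: for $i\ne j$ the off-resonant part of $T(a_i\otimes a_j)$ has norm $\tO(\sqrt n/d)\ge\tO(1/\sqrt d)$ when $n\ge d$, so $\|\Mcross\|_F^2\approx n^3/d^2$, not $\tO(n^2/d)$.

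The second gap is in the diagonal part. Saying the $u_i=R(a_i\otimes a_i)/\|R(a_i\otimes a_i)\|$ are "approximately orthonormal" is not enough to conclude that the winner's margin $\epsilon$ survives as a spectral gap of $R\Mdiag R$: pairwise incoherence $\iprod{u_i,u_j}^2=\tO(1/d)$ only gives $\bigl\|\sum_{i\ne i_0}u_iu_i^\top\bigr\|\le 1+\tO(n/d)$ by naive (Gershgorin-type) arguments, which is useless for $n\ge d$. What is actually needed, and what the paper proves as \pref{lem:bounded-signal} via spectral concentration for matrices with independent isotropic columns (Vershynin) together with an incoherence estimate, is $\|\signh P_S\signh\|\le 1+\tO(n/d^{3/2})$ uniformly over subsets $S$; without this the bound $\|\sum_{i\ne i_0}c_iu_iu_i^\top\|\le(1+o(1))\max_{i\ne i_0}|c_i|$ that your argument needs is unsupported. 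Your anti-concentration step and the disjoint-events bookkeeping for $E_1,\ldots,E_n$ are fine in outline (the "nearly iid" reduction can be made rigorous by splitting $g$ into its component along $a_{i_0}$ and the orthogonal complement, as the paper does), but as written the proposal is missing the two quantitative ingredients that make the theorem true at rank $n\sim d^{4/3}$ rather than $n\sim d$.
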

We will eventually set $\e = 1/\log n$, which gives us a spectral algorithm for recovering a vector $(1 - \tO(n/d^{3/2}))$-correlated to some $a_i^{\tensor 2}$.
Once we have a vector correlated with each $a_i^{\tensor 2}$, obtaining vectors close to the $a_i$ is straightforward.
We will begin by proving this theorem, and defer the algorithmic details to section \pref{sec:alg-details}.

The rest of this section is organized as follows.
In \pref{sec:proof-main-theorem} we prove \pref{thm:tensor-decomp-main} using two core facts: the Gaussian vector $g$ is closer to some $a_i$ than to any other with good probability, and the noise term $\sum_{i\neq j} \iprod{g, T(a_i \tensor a_j)} (a_i \tensor a_j)(a_i \tensor a_j)^\top$ is bounded in spectral norm.
In \pref{sec:diagonal-terms} we prove the first of these two facts, and in \pref{sec:xterm-bound} we prove the second.
In \pref{sec:alg-details}, we give the full details of our tensor decomposition algorithm, then prove \pref{thm:tdecomp-alg} using \pref{thm:tensor-decomp-main}.
Finally, \pref{sec:tensor-decomp-aux} contains proofs of elementary or long-winded lemmas we use along the way.

\subsection{Proof of \pref{thm:tensor-decomp-main}}
\label{sec:proof-main-theorem}
The strategy to prove \pref{thm:tensor-decomp-main} is to decompose the matrix $M$ into two parts $M=\Mdiag + \Mcross$, one formed by diagonal terms $\Mdiag = \sum_{i\in[n]} \langle  g, T (a_i \tensor a_i) \rangle \cdot (a_i \tensor a_i)(a_i\tensor a_i)^\top$ and one formed by cross terms $\Mcross = \sum_{i\neq j}  \langle  g, T(a_i\tensor a_j) \rangle \cdot (a_i\tensor a_j)(a_i\tensor a_j)^\top $.
We will use the fact that the top eigenvector $\Mdiag$ is likely to be correlated with one of the vectors $a_j^{\otimes 2}$, and also the fact that the spectral gap of $\Mdiag$ is noticeable.

The following two propositions capture the relevant facts about the spectra of $\Mdiag$ and $\Mcross$,
and will be proven in \pref{sec:diagonal-terms} and \pref{sec:xterm-bound}.

\begin{proposition}[Spectral gap of diagonal terms]
\torestate{
  \label{prop:diagonal-terms}
  Let $R = \sqrt{2}\cdot ((\E (xx^\top)^{\tensor 2})^+)^{1/2}$ for $x \sim \cN(0,\Id_d)$.
  Let $a_1,\ldots,a_n$ be independent random vectors from $\cN(0,\tfrac 1d \Id_d)$ with $d\le n \le d^{2-\Omega(1)}$ and let $g \sim \cN(0,\Id_d)$ be independent of all the others.
  Let $T \seteq \sum_{\inn} a_i (a_i \tensor a_i)^\top$.
  Suppose $\Mdiag = \sum_{i\in[n]}  \langle  g, T a_i^{\otimes 2} \rangle \cdot (a_i a_i^\top )^{\otimes 2} $.
  Let also $v_j$ be such that $v_jv_j^{\top} = \iprod{g,Ta_j^{\otimes 2}} \cdot (a_ja_j^{\top})^{\otimes 2}$.
  Then, with probability $1-o(1)$ over $a_1,\ldots,a_n$, for each $\e > \polylog d/\sqrt{d}$ and each $j \in [n]$, the event
  \begin{displaymath}
      E_{j,\epsilon} \defeq \left\{ \Norm{\signh \Mdiag \signh - \e \cdot \signh v_jv_j^{\top} \signh}
      \le \Norm{\signh \Mdiag \signh} - \Paren{\e - \tO\Paren{\sqrt{n}/{d}}} \cdot  \Norm{\signh v_jv_j^{\top} \signh} \right\}
  \end{displaymath}
  has probability at least $\tOmega(1/n^{1+O(\e)})$ over the choice of $g$.}
\end{proposition}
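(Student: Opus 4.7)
The plan is to decompose $R\Mdiag R = \alpha_j R P_j R + R E_j R$, where $\alpha_j \defeq \iprod{g, T a_j^{\tensor 2}}$, $P_j \defeq (a_j a_j^\top)^{\tensor 2}$, and $E_j \defeq \sum_{i\neq j}\alpha_i P_i$, and to show that on a Gaussian event of probability $\tOmega(n^{-1-O(\epsilon)})$ over $g$ the ``signal'' term $\alpha_j R P_j R$ dominates the ``noise'' term $R E_j R$ so strongly that the inequality defining $E_{j,\e}$ follows by two uses of the triangle inequality. Two preliminary observations I will use throughout: (i) $T a_i^{\tensor 2} = \|a_i\|^4 a_i + \sum_{k\neq i}\iprod{a_k,a_i}^2 a_k$, and by standard concentration the latter sum has small norm w.h.p.\ over $\{a_i\}$, so on a good event for $\{a_i\}$ the scalar $\alpha_i \approx \|a_i\|^4\iprod{g,a_i}$ is approximately $\cN(0,1)$ in $g$; (ii) from the Gaussian fourth-moment identity $\Sigma = \mathrm{vec}(\Id)\mathrm{vec}(\Id)^\top + 2\Pi_{\mathrm{sym}}$ one computes $R\Sigma R = 2\Pi_{\mathrm{sym}}$, so $R P_i R = u_i u_i^\top$ for $u_i \defeq R(a_i\tensor a_i)$, with $\|u_i\|^2 \approx 1$ and $\iprod{u_i,u_k}\approx\iprod{a_i,a_k}^2 = O(1/d)$ w.h.p.; in particular the $u_i$ are nearly orthonormal in $\R^{d^2}$.

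Next, I would define the Gaussian event $F_j$ to require $|\iprod{g, a_j}| \geq (1 + c\e)\max_{i\neq j}|\iprod{g,a_i}|$ and $|\iprod{g, a_j}|\geq \sqrt{2(1 + c\e)\log n}$ for a small enough constant $c > 0$. Since the $\{a_i\}$ are approximately orthonormal in $\R^d$, the projections $\iprod{g, a_i}$ behave as $n$ iid $\cN(0,1)$ random variables in $g$, and the standard extreme-value tail calculation gives $\Pr_g[F_j] \geq \tOmega(n^{-1 - O(\e)})$: a factor $n^{-1}$ for demanding that $j$ be the specific argmax (by symmetry), plus a factor $n^{-O(\e)}$ for the cost of the $(1+c\e)$-gap.

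On $F_j$, conditioning on the good event for $\{a_i\}$, a matrix-Gaussian (Bernstein) inequality applied to $RE_j R = \sum_{i\neq j}\alpha_i u_i u_i^\top$ yields
\[
  \|RE_j R\|\leq \tO\bigl(\sqrt{\log d}\cdot\sqrt{\|\tsum_{i\neq j} u_i u_i^\top\|}\bigr)\leq \tO(\sqrt n/d),
\]
where the inner quantity is bounded by a second matrix-Bernstein estimate giving $\|\tsum_{i\neq j} u_i u_i^\top\|\leq\tO(n/d^2)$ (using $\E u_i u_i^\top = \tfrac{1}{d^2}R\Sigma R = \tfrac{2}{d^2}\Pi_{\mathrm{sym}}$ and $n\leq d^{4/3}$). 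Two applications of the triangle inequality then give
\begin{align*}
  \|R \Mdiag R - \e R v_j v_j^\top R\| &\leq (1-\e)|\alpha_j|\|R P_j R\| + \|R E_j R\|\mcom\\
  \|R \Mdiag R\| &\geq |\alpha_j|\|R P_j R\| - \|R E_j R\|\mcom
\end{align*}
whose difference is at least $\e|\alpha_j|\|R P_j R\| - 2\|R E_j R\| \geq (\e - \tO(\sqrt n/d))\|R v_j v_j^\top R\|$, using $|\alpha_j|\|R P_j R\|\gtrsim\sqrt{\log n}$ on $F_j$; this is exactly the inequality defining $E_{j,\e}$.

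The main obstacle is the noise bound $\|R E_j R\|\leq\tO(\sqrt n/d)$: it requires two layered matrix-concentration arguments, one for the Gaussian scalars $\alpha_i$ in $g$ and one for the random rank-one matrices $u_i u_i^\top$ in $\{a_i\}$, and it crucially relies on the design $R = \sqrt{2}(\Sigma^+)^{1/2}$ to kill the $\mathrm{vec}(\Id)$ direction of $\Sigma$ — otherwise $\sum_i P_i$ would carry a spurious eigenvalue of order $n/d \gg 1$ that would swamp the signal $|\alpha_j|\|R P_j R\| \approx \sqrt{\log n}$.
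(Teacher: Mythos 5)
There is a genuine gap in your noise bound, and it is fatal to the triangle-inequality strategy as written. You claim $\Norm{\sum_{i\neq j} u_i u_i^\top} \le \tO(n/d^2)$ for $u_i = \signh(a_i \tensor a_i)$, justified by $\E u_iu_i^\top = \tfrac{2}{d^2}\Pisym$. But each summand $u_iu_i^\top$ is PSD with $\|u_i\|^2 = 1 - o(1)$ (this is exactly \pref{lem:gap-scalar}), so already $\iprod{u_1,(\sum_{i\neq j}u_iu_i^\top)u_1}/\|u_1\|^2 \ge \|u_1\|^2 \approx 1$: the sum of $n$ nearly orthonormal rank-one projectors has operator norm $\approx 1$, not $n/d^2$, and there is no concentration around the expectation at that scale because a single summand already has norm $1 \gg n/d^2$. (The correct statement is \pref{lem:bounded-signal}: $\Norm{\signh P_S \signh} \le 1 + \tO(n/d^{3/2})$.) Consequently your layered Bernstein bound $\Norm{\signh E_j \signh} \le \tO(\sqrt n / d)$ fails; in truth $\signh E_j \signh = \sum_{i\neq j}\alpha_i u_iu_i^\top$ with nearly orthonormal $u_i$ has operator norm $\approx \max_{i\neq j}|\alpha_i| \approx \sqrt{\log n}$, i.e.\ the \emph{same order as the signal} $|\alpha_j|\,\Norm{\signh P_j\signh}$ on your event $F_j$. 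Your final step needs $2\Norm{\signh E_j\signh} \le \tO(\sqrt n/d)\cdot|\alpha_j|\Norm{\signh P_j \signh}$, which is off by roughly a factor $d/\sqrt n$, so the two triangle inequalities cannot yield the event $E_{j,\e}$.

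The paper's proof never asks the noise to be small in absolute terms. Instead, on the event of \pref{lem:g-boost} (essentially your $F_j$, transferred to $\iprod{g,Ta_i^{\tensor 2}}$ via \pref{lem:signal-negligible}) it uses the PSD sandwich $-(1-\e)|\alpha_j|\sum_{i\neq j}\signh u_iu_i^\top \signh \preceq \sum_{i\neq j}\alpha_i \signh u_iu_i^\top\signh \preceq (1-\e)|\alpha_j|\sum_{i\neq j}\signh u_iu_i^\top\signh$, then bounds the whole frame by \pref{lem:bounded-signal} to get the upper bound $(1-\e)|\alpha_j|(1+\tO(n/d^{3/2}))$, which is only a relative $(1-\e)$ factor below the signal. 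The lower bound on $\Norm{\signh\Mdiag\signh}$ is obtained not by subtracting a noise norm but by evaluating the quadratic form at the test vector $\signh u_j$ and using the incoherence $\sum_{i\neq j}\iprod{u_j,\signh^2 u_i}^2 \le \tO(n/d^2)$ from \pref{lem:gap-scalar}, so that the cross terms cost only $(1-\e)|\alpha_j|\cdot\tO(n/d^2)$ at that particular vector. The spectral gap thus comes from the multiplicative $(1+\e)$ advantage of $\alpha_j$ together with near-isometry of the frame $\{\signh u_i\}$, not from absolute smallness of $\Mcross$-style noise; your observations (i), (ii) and the event $F_j$ with its $\tOmega(1/n^{1+O(\e)})$ probability are consistent with the paper, but the core spectral comparison needs to be redone along these relative lines.
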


Second, we show that when $n \ll d^{4/3}$ the spectral norm of $\Mcross$ is negligible compared to this spectral gap.
\begin{proposition}[Bound on crossterms]
\torestate{
  \label{prop:cross-terms}
  Let $a_1,\ldots,a_n$ be independent random vectors from $\cN(0,\tfrac 1 d \Id_d)$,
  and let $g$ be a random vector from $\cN(0, \Id_{d})$.
  Let $T \seteq \sum_{i \in [n]} a_i (a_i \tensor a_i)^\top$.
  Let $\Mcross \seteq \sum_{i \neq j \in [n]} \iprod{g, T(a_i \tensor a_j)} a_i a_i^\top \tensor a_j a_j^\top$.
  Suppose $n \geq d$.
  Then with \wovp,
  \[
    \Norm{\Mcross} \leq \tO\Paren{\frac{n^3}{d^4}}^{1/2}\mper
  \]}
\end{proposition}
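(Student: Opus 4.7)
My plan is to exploit the fact that, conditional on $a_1,\ldots,a_n$, the matrix $\Mcross$ is a matrix-valued Gaussian linear form in $g$. Writing $g=(g_1,\ldots,g_d)$,
\begin{equation*}
  \Mcross \;=\; \sum_{k=1}^d g_k \cdot Y_k, \qquad Y_k \defeq \sum_{i \neq j} \bigl(T(a_i \tensor a_j)\bigr)_k \cdot (a_i \tensor a_j)(a_i \tensor a_j)^\top.
\end{equation*}
The noncommutative Khintchine inequality (equivalently, matrix Bernstein applied to a Gaussian matrix series) then yields $\E_g \|\Mcross\|_{\text{op}} \lesssim \sqrt{\log d}\cdot\|\sum_{k} Y_k^2\|_{\text{op}}^{1/2}$, and Gaussian concentration applied to the $1$-Lipschitz map $g \mapsto \|\Mcross(g)\|_{\text{op}}$ promotes this to an overwhelming-probability statement over $g$. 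It therefore suffices to show $\|\sum_{k} Y_k^2\|_{\text{op}} \leq \tO(n^3/d^4)$ with overwhelming probability over $a_1,\ldots,a_n$.

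Expanding the square gives
\begin{equation*}
  \sum_{k=1}^d Y_k^2 \;=\; \sum_{\substack{i \neq j \\ i' \neq j'}} \iprod{T(a_i \tensor a_j),\; T(a_{i'} \tensor a_{j'})}\cdot \iprod{a_i, a_{i'}}\iprod{a_j, a_{j'}} \cdot (a_i \tensor a_j)(a_{i'} \tensor a_{j'})^\top.
\end{equation*}
I would split this according to the coincidence pattern among the four indices. The dominant \emph{diagonal} contribution $(i,j)=(i',j')$ reduces to $\sum_{i\neq j}\|T(a_i\tensor a_j)\|^2\cdot\|a_i\|^2\|a_j\|^2 \cdot (a_i\tensor a_j)(a_i\tensor a_j)^\top$. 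Using $T=\sum_k a_k(a_k\tensor a_k)^\top$, routine $\chi^2$-style concentration for Gaussian quadratic forms, combined with a union bound, shows $\|T(a_i \tensor a_j)\|^2 \leq \tO(n/d^2)$ uniformly over all $i \neq j$ (here the assumption $n\ge d$ enters, ensuring the ``cross'' contribution $\sum_{k\notin\{i,j\}}\iprod{a_k,a_i}\iprod{a_k,a_j}a_k$ dominates the ``self'' contribution). A further matrix Bernstein bound on $\sum_{i \neq j}(a_i \tensor a_j)(a_i \tensor a_j)^\top$, after centering, gives spectral norm $\tO(n^2/d^2)$ off of its expected direction. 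Multiplying the two bounds yields the target $\tO(n^3/d^4)$ for the diagonal piece.

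The main obstacle is to bound the off-diagonal contributions, namely the quadruples $(i,j,i',j')$ with only partial coincidence among the indices. Each factor $\iprod{a_\ell, a_m}$ with $\ell\neq m$ is only of order $1/\sqrt d$, but the number of such quadruples is very large, and the coefficients involve fourth-order Gaussian interactions through $\iprod{T(a_i\tensor a_j),T(a_{i'}\tensor a_{j'})}$. I would perform a careful case analysis by the size of the overlap $\{i,j\}\cap\{i',j'\}$, expand $T$ to write each case as a sum over additional indices $k,k'$, and use standard moment bounds for products of Gaussian inner products to control each coincidence class. It is precisely the overcompleteness hypothesis $n \leq d^{4/3}/\polylog d$ which ensures these off-diagonal classes stay dominated by the diagonal piece with a comfortable $\polylog$ slack; carefully tracking this tradeoff in each combinatorial case is the technical heart of the argument. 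A union bound combining the $g$-conditional Gaussian estimate with these overwhelming-probability bounds on the $a_i$'s then yields the claim.
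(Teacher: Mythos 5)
Your proposal takes a genuinely different route from the paper. You condition on $a_1,\ldots,a_n$, view $\Mcross=\sum_{k}g_k Y_k$ as a Gaussian matrix series in $g$, and invoke noncommutative Khintchine plus Gaussian concentration to reduce everything to the matrix-variance bound $\norm{\sum_k Y_k^2}\le \tO(n^3/d^4)$. The paper instead exploits the sign-symmetry of the $a_i$ (replacing $a_i$ by $s_ia_i$), splits $\Mcross$ into $\Msame$ (where the hidden index $\ell$ from $T$ collides with $i$ or $j$) and $\Mdiff$, applies the de la Pe\~na--Montgomery-Smith decoupling inequality to $\Mdiff$, and then runs iterated matrix Rademacher/Bernstein bounds, obtaining $\norm{\Msame}\le\tO(n/d^{1.5})$ and $\norm{\Mdiff}\le\tO(n^3/d^4)^{1/2}$. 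Your reduction step is sound (the Lipschitz constant of $g\mapsto\norm{\Mcross(g)}$ is at most $\norm{\sum_k Y_k^2}^{1/2}$, so the concentration over $g$ works at the right scale), and your treatment of the diagonal class $(i,j)=(i',j')$ is correct: $\norm{T(a_i\tensor a_j)}^2\le\tO(n/d^2)$ uniformly and $\norm{\sum_{i\ne j}(a_i\tensor a_j)(a_i\tensor a_j)^\top}\le\tO(n^2/d^2)$ give $\tO(n^3/d^4)$ for that piece.

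The gap is that the off-diagonal coincidence classes of $\sum_k Y_k^2$ are never actually bounded, and this is not a routine omission --- it is the entire difficulty, as you yourself note. The matrix $\sum_k Y_k^2$ is a degree-$8$ polynomial in the $a_i$ with dependencies across the four visible indices $i,j,i',j'$ and two further hidden indices coming from the two copies of $T$ inside $\iprod{T(a_i\tensor a_j),T(a_{i'}\tensor a_{j'})}$. ``Standard moment bounds for products of Gaussian inner products'' control individual coefficients or Frobenius norms, not operator norms; summing absolute values over a coincidence class is far too lossy, and getting operator-norm control at the $\tO(n^3/d^4)$ scale for, say, the class $i=i'$, $j\ne j'$ requires decoupling/sign-symmetrization plus iterated matrix concentration of at least the same sophistication the paper applies directly to $\Mcross$ (which is only degree $6$ in the $a_i$, so your route arguably makes the combinatorics harder, not easier). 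Moreover, you lean on the hypothesis $n\le d^{4/3}/\polylog d$, which is not among the proposition's assumptions: the statement assumes only $n\ge d$, and the paper's proof gives the bound in that generality (the $d^{4/3}$ condition is only used downstream, to compare this noise bound against the signal). So either your case analysis must be shown to close without that assumption, or the plan, even if completed, establishes a strictly weaker statement than the proposition. As it stands, the proposal is a plausible strategy with the decisive estimates missing.
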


Using these two propositions we will conclude that the top eigenvector of $RMR$ is likely to be correlated with one of the vectors $a_j^{\otimes 2}$.
We also need two simple concentration bounds; we defer the proof to the appendix.
\begin{lemma}
\torestate{
  \label{lem:signal-negligible}
    Let $a_1,\ldots,a_n$ be independently sampled vectors from $\cN(0,\tfrac{1}{d}\Id_d)$, and let $g$ be sampled from $\cN(0,\Id_d)$.
    Let $T = \sum_i a_i (a_i \tensor a_i)^\top$.
    Then with overwhelming probability, for every $j \in [n]$,
    \[
	\left|\iprod{g, T(a_j \tensor a_j)} - \iprod{g, a_j} \|a_j\|^4 \right | \le \tO\Paren{\frac{\sqrt{n}}{d}}
    \]}
\end{lemma}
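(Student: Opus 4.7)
The plan is to expand $T(a_j \otimes a_j)$ and isolate the $i=j$ diagonal term, then show the off-diagonal remainder concentrates as a Gaussian of the right scale once we condition on the $a_i$'s. Writing out the definition of $T$ gives
\[
  T(a_j \otimes a_j) \;=\; \sum_{i \in [n]} a_i \cdot \iprod{a_i, a_j}^2,
\]
so that $\iprod{g, T(a_j \otimes a_j)} = \iprod{g, a_j}\|a_j\|^4 + \sum_{i \neq j} \iprod{g, a_i}\iprod{a_i,a_j}^2$. The problem therefore reduces to bounding $X_j \defeq \sum_{i \neq j} \iprod{g,a_i}\iprod{a_i,a_j}^2$ uniformly in $j$.

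The main step is to condition on the $a_i$'s and exploit Gaussianity in $g$. With the $a_i$'s fixed, $X_j$ is a linear function of $g \sim \cN(0,\Id_d)$, hence a centered Gaussian with variance
\[
  V_j \;=\; \sum_{i \neq j} \|a_i\|^2 \iprod{a_i,a_j}^4.
\]
Standard concentration for $\chi^2$ and Gaussian inner products (applied once to each $a_i$ and each pair $(a_i,a_j)$ and union-bounded over the $\le n^2$ pairs) gives that, with overwhelming probability over the $a_i$'s, $\|a_i\|^2 \le 1 + o(1)$ and $\iprod{a_i,a_j}^2 \le \tO(1/d)$ for all $i \ne j$. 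Plugging these into $V_j$ yields $V_j \le \tO(n/d^2)$ simultaneously for every $j \in [n]$.

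Having fixed a realization of the $a_i$'s for which the variance bound holds, the conditional Gaussian tail bound gives
\[
  \Pr\!\left[|X_j| > t \cdot \tO(\sqrt{n}/d)\right] \;\le\; 2\exp(-t^2/2),
\]
so taking $t = \polylog(d)$ makes this bound $n^{-\omega(1)}$, and a union bound over $j \in [n]$ produces the desired simultaneous estimate $|X_j| \le \tO(\sqrt{n}/d)$ for all $j$ with overwhelming probability. Combining with the $a_i$-concentration event, again by a union bound, completes the proof.

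The proof is essentially routine once the decomposition is written down; the only delicate point is ensuring that the $\tO(\cdot)$ logarithmic slack is large enough to absorb both the union bound over the $n$ choices of $j$ and the $\polylog$ factors in the inner-product bounds. There is no real obstacle beyond being careful about conditioning so that we may use the independence of $g$ from the $a_i$'s when invoking Gaussian concentration.
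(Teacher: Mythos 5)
Your decomposition and the reduction to bounding $X_j = \sum_{i\neq j}\iprod{g,a_i}\iprod{a_i,a_j}^2$ are fine, and conditioning on the $a_i$'s to exploit Gaussianity of $g$ is a legitimate alternative to the paper's argument (the paper instead fixes $g$ and $a_j$ and applies a truncated scalar Bernstein inequality to the independent summands over the randomness of $a_\ell$, $\ell\neq j$). However, your variance computation is wrong, and this is a genuine gap. Conditioned on $a_1,\dots,a_n$, $X_j = \iprod{g,w_j}$ with $w_j \defeq \sum_{i\neq j} a_i\iprod{a_i,a_j}^2$, so the conditional variance is $\|w_j\|_2^2 = \sum_{i,i'\neq j}\iprod{a_i,a_{i'}}\iprod{a_i,a_j}^2\iprod{a_{i'},a_j}^2$, not the diagonal-only quantity $\sum_{i\neq j}\|a_i\|^2\iprod{a_i,a_j}^4$ you wrote; your formula would be valid only if the $a_i$ were orthogonal, whereas here the $\iprod{g,a_i}$ are correlated Gaussians with covariances $\iprod{a_i,a_{i'}} \approx d^{-1/2}$.

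The cross terms cannot be absorbed by the events you invoke: bounding them termwise using $|\iprod{a_i,a_{i'}}|\le\tO(d^{-1/2})$ and $\iprod{a_i,a_j}^2\le\tO(1/d)$ gives a contribution of order $\tO(n^2/d^{5/2})$, which dwarfs the needed $\tO(n/d^2)$ whenever $n\gg\sqrt d$ (and here $n\ge d$). Even the cruder bound $\|w_j\|\le \|A\|\cdot\|(\iprod{a_i,a_j}^2)_i\| \le \tO(\sqrt{n/d})\cdot\tO(\sqrt n/d)$ only yields $|X_j|\le\tO(n/d^{3/2})$, weaker than the claimed $\tO(\sqrt n/d)$ by a factor $\sqrt{n/d}$. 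To get the stated bound you must show that the off-diagonal part of $\|w_j\|_2^2$ cancels (its mean is zero since odd Gaussian moments vanish), e.g.\ by a vector Bernstein bound for $w_j$ or by concentrating the off-diagonal sum separately over the $a_i$'s---at which point the argument essentially becomes the paper's: condition on $g$ and $a_j$, note the summands $\iprod{g,a_\ell}\iprod{a_\ell,a_j}^2$ are independent with total variance $\tO(n/d^2)$ after truncation, and finish with the truncated matrix/scalar Bernstein inequality and a union bound over $j$. So the missing ingredient is precisely the cancellation among cross terms, which your write-up silently assumes.
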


\begin{fact}[Simple version of \pref{fact:SIP}]
  \label{fact:SIP-simple}
  Let $x, y \sim \cN(0,\tfrac 1 d \Id)$.
  With overwhelming probability, $\Abs{1 - \|x\|^2} \leq \tO(1/\sqrt{d})$ and $\iprod{x,y}^2 = \tO(1/d)$.
\end{fact}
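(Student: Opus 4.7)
The plan is to handle the two bounds separately using standard chi-squared and Gaussian concentration, and then ensure the probability is $1 - n^{-\omega(1)}$ by taking enough polylogarithmic slack in the $\tilde O$.

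For the first bound on $\|x\|^2$, write $x = \tfrac{1}{\sqrt d} z$ with $z \sim \cN(0,\Id_d)$, so that $d \cdot \|x\|^2 = \|z\|^2 \sim \chi^2_d$. The Laurent–Massart tail bound (equivalently, Bernstein for sums of independent sub-exponential $z_i^2-1$) gives
\[
    \Pr\left[\Abs{\|x\|^2 - 1} > t\right] \le 2\exp(-c \min\{d t^2, dt\})
\]
for an absolute constant $c>0$. Choosing $t = C \sqrt{\log^{C'} n / d}$ for large enough constants $C, C'$ yields $\Abs{\|x\|^2 - 1} \le \tO(1/\sqrt d)$ with probability $1 - n^{-\omega(1)}$.

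For the second bound on $\iprod{x,y}^2$, I would condition on $y$. Then $\iprod{x,y}$ is a univariate Gaussian with mean $0$ and variance $\|y\|^2 / d$, so the standard one-dimensional Gaussian tail bound gives
\[
    \Pr_x\left[\Abs{\iprod{x,y}} > s \cdot \|y\|/\sqrt d \;\Big|\; y\right] \le 2\exp(-s^2/2).
\]
Taking $s = \polylog(n)$ and then applying the first part of the fact to $y$ (which gives $\|y\|^2 \le 1 + \tO(1/\sqrt d)$ with overwhelming probability) yields $\iprod{x,y}^2 \le \tO(1/d)$ with overwhelming probability after a union bound over the two "bad" events.

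There is no real obstacle here; this is a standard concentration calculation. The only thing to be careful about is that "with overwhelming probability" means $1 - n^{-\omega(1)}$ rather than $1 - d^{-\omega(1)}$, so the polylogarithmic factors hidden in $\tilde O$ must be taken as polylogs in $n$; since the fact is invoked in the regime $d \le n \le d^{4/3}/\polylog d$, we have $\log n = \Theta(\log d)$ and this distinction costs nothing.
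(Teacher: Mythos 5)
Your proposal is correct, but it takes a different route from the paper. The paper proves this fact (as part of \pref{fact:SIP}) with a single tool: the concentration inequality for low-degree polynomials of Gaussians (\pref{lem:gaus-polys}, a hypercontractivity-type bound). There, $(\|x\|^2-1)^2$ and $\iprod{x,y}^2$ are each treated as degree-$4$ polynomials in the Gaussian entries, their second moments are computed ($O(1/d)$ and $1/d$ respectively), and the polynomial tail bound with $t=\polylog(n)$ gives the $n^{-\omega(1)}$ failure probability directly, with no conditioning. You instead use the chi-squared (Laurent--Massart/Bernstein) tail for $\|x\|^2$, and for the inner product you condition on $y$, exploit that $\iprod{x,y}\mid y$ is exactly Gaussian with variance $\|y\|^2/d$, and then feed in the norm bound for $y$ via a union bound. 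Both arguments are sound; yours is more elementary and yields slightly sharper (sub-exponential/sub-Gaussian rather than $\exp(-ct^{1/2})$-type) tails for these particular quantities, while the paper's choice buys uniformity --- the same \pref{lem:gaus-polys} hammer is reused for many messier polynomial quantities elsewhere, so no case-by-case structure (such as the exact conditional Gaussian law) needs to be identified. You also correctly flag the only subtle point, namely that ``overwhelming probability'' means $1-n^{-\omega(1)}$ and that the hidden polylogarithmic factors must be in $n$, which is harmless since $\log n=\Theta(\log d)$ in the regime where the fact is used.
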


As a last technical tool we will need a simple claim about the fourth moment matrix of the multivariate Gaussian:
\begin{fact}[simple version of \pref{fact:sigma}]
  \label{fact:sigma-simple}
  Let $\Sigma = \E _{x \sim \cN(0,\Id_d)} (x x^\top)^{\otimes 2}$ and let $R = \sqrt{2}\, (\Sigma^+)^{1/2}$.
  Then $\|R\| = 1$, and for any $v \in \R^d$,
 \[
    \|\signh (v\tensor v) \|_2^2 = \left(1-\tfrac{1}{d+2}\right)\cdot \|v\|^4.
 \]
\end{fact}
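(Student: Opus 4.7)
The plan is to compute $\Sigma$ explicitly using Isserlis' theorem and then decompose $v\otimes v$ along its natural eigenspaces. By Wick's formula, the entries of $\Sigma$ are
\[
\Sigma_{(ij),(k\ell)} = \E x_i x_j x_k x_\ell = \delta_{ij}\delta_{k\ell} + \delta_{ik}\delta_{j\ell} + \delta_{i\ell}\delta_{jk},
\]
which one recognizes as $\Sigma = \vec(\Id)\vec(\Id)^\top + \Id_{d^2} + S$, where $\vec(\Id) = \sum_i e_i\otimes e_i$ and $S$ is the swap operator $S(e_i\otimes e_j) = e_j\otimes e_i$. Since $\Id_{d^2} + S = 2\Pisym$ is twice the projector onto the symmetric subspace of $\R^d\otimes\R^d$, we obtain $\Sigma = \vec(\Id)\vec(\Id)^\top + 2\Pisym$.

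Next I would diagonalize $\Sigma$ by splitting $\Pisym = \tfrac{1}{d}\vec(\Id)\vec(\Id)^\top + \Pisym'$, where $\Pisym'$ projects onto the traceless-symmetric subspace (symmetric matrices orthogonal to $\Id$) and I used $\|\vec(\Id)\|^2 = d$. This yields
\[
\Sigma \;=\; \tfrac{d+2}{d}\,\vec(\Id)\vec(\Id)^\top + 2\,\Pisym',
\]
so $\Sigma$ has eigenvalue $d+2$ on the unit vector $\vec(\Id)/\sqrt d$, eigenvalue $2$ on the traceless-symmetric subspace, and $0$ on the antisymmetric subspace. Taking the pseudoinverse and square root then gives
\[
R \;=\; \sqrt{2}\,(\Sigma^+)^{1/2} \;=\; \sqrt{\tfrac{2}{d+2}}\cdot\tfrac{\vec(\Id)\vec(\Id)^\top}{d} + \Pisym',
\]
whose spectrum is $\{\sqrt{2/(d+2)},\,1,\,0\}$. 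Hence $\|R\| = 1$ (for $d\ge 2$).

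Finally, for any $v\in\R^d$, the vector $v\otimes v$ lies entirely in the symmetric subspace, and its component along the unit vector $\vec(\Id)/\sqrt d$ equals $\iprod{v\otimes v,\vec(\Id)/\sqrt d} = \|v\|^2/\sqrt d$. Writing $v\otimes v = \tfrac{\|v\|^2}{d}\vec(\Id) + w$ with $w$ in the traceless-symmetric subspace, we get $\|w\|^2 = \|v\|^4 - \|v\|^4/d$ by Pythagoras. Applying $R$ (which preserves $w$ and scales the $\vec(\Id)/\sqrt d$ direction by $\sqrt{2/(d+2)}$) and summing the squared norms yields
\[
\|R(v\otimes v)\|^2 \;=\; \tfrac{2}{d+2}\cdot\tfrac{\|v\|^4}{d} + \|v\|^4\Paren{1-\tfrac{1}{d}} \;=\; \|v\|^4\cdot\tfrac{d+1}{d+2} \;=\; \Paren{1-\tfrac{1}{d+2}}\|v\|^4,
\]
as desired. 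No step here looks genuinely hard; the only bookkeeping concern is correctly identifying $\Sigma$ as an operator on $\R^{d^2}$ from its four-index description, which is routine given Isserlis' formula.
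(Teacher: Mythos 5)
Your proof is correct, and it reaches the same structural endpoint as the paper --- the identity $\Sigma = 2\,\Pisym + \Phi\Phi^\top$ with $\Phi = \sum_i e_i \tensor e_i$ --- but by a somewhat different route on both ends. The paper obtains this identity from a general lemma (\pref{lem:cov-rot-sym}) about arbitrary rotationally symmetric distributions, proved by exploiting invariance under specific rotations; that generality is not gratuitous, since the same lemma is reused in \pref{fact:sigma} to compute $\tilde\Sigma$ for the uniform distribution on the sphere, which Isserlis/Wick does not cover. Your Wick-formula computation is more direct for the Gaussian case needed here, and it buys you a cleaner second half: by splitting $\Pisym = \tfrac1d\Phi\Phi^\top + \Pisym'$ you get an explicit spectral decomposition of $\Sigma$ (eigenvalues $d+2$, $2$, $0$), from which $\Sigma^+$, $R$, $\|R\|=1$, and $\|R(v\tensor v)\|^2$ all follow by reading off eigenvalues and a Pythagorean computation. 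The paper instead states $\Sigma^+$ and verifies it by checking $\Sigma\Sigma^+ = \Pisym$, then evaluates $\|R(v\tensor v)\|^2 = \iprod{v\tensor v, 2\Sigma^+(v\tensor v)}$ using $\Pisym(v\tensor v)=v\tensor v$ and $\iprod{\Phi,v\tensor v}=\|v\|^2$; the two calculations agree, and your final arithmetic checks out ($\tfrac{2}{d(d+2)} + \tfrac{d-1}{d} = \tfrac{d+1}{d+2}$). Your aside that $\|R\|=1$ requires $d\ge 2$ (for $d=1$ the traceless-symmetric subspace is empty and $\|R\|=\sqrt{2/3}$) is a correct, if immaterial, refinement that the paper glosses over.
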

We are prepared prove \pref{thm:tensor-decomp-main}.
\begin{proof}[Proof of \pref{thm:tensor-decomp-main}]
Let $d\le n \le d^{4/3}/(\polylog d)$ for some $\polylog d$ to be chosen later.
Let $a_1,\ldots,a_n$ be independent random vectors from $\cN(0,\tfrac 1 d \Id_d )$ and let $g \sim \cN(0, \Id_{d})$ be independent of the others.
Let
\begin{align*}
    \Mdiag &\seteq \sum_{\inn} \iprod{g, T(a_i \tensor a_i)} \cdot (a_i a_i^\top)^{\tensor 2} \quad \text{ and}\quad
    \Mcross &\seteq \sum_{i \neq j \in [n]} \iprod{g, T(a_i \tensor a_j)} \cdot a_i a_i^\top \tensor a_j a_j^\top\mper
\end{align*}
Note that $M \seteq \Mdiag + \Mcross$.

\pref{prop:cross-terms} implies that
\begin{equation}
\label{eq:main-1}
  \Prob{\vbig \Norm{\Mcross} \le  \cramped{\tilde O(  n^{3/2}/d^2)}} \ge 1-d^{-\omega(1)}\mper
\end{equation}
Recall that $\Sigma = \E_{x \sim \cN(0,\Id_d)} (xx^\top)^{\tensor 2}$ and $R = \sqrt 2 \cdot (\Sigma^+)^{1/2}$.
By \pref{prop:diagonal-terms}, with probability $1-o(1)$ over the choice of $a_1,\ldots,a_n$, each of the following events ${E}_{j,\e}$ for $j\in [n]$ and $\e>\polylog(d) / \sqrt d$ has probability at least $\tOmega(1/n^{1+O(\e)})$ over the choice of $g$:
\begin{align*}
  E^0_{j,\e} \colon\quad
   \Norm{\signh
       \Paren{\Mdiag - \e\langle g, T a_j^{\otimes 2} \rangle (a_j a_j^\top)^{\otimes 2}}
    \signh}
   \le \lVert  \signh \Mdiag \signh \rVert
   - (\e - \tO(\sqrt{n}/d)) \cdot \left(\cramped{\abs{\langle g,T{a_j}^{\otimes 2} \rangle}} \cdot \lVert \cramped{\signh {a_j}^{\otimes 2}} \rVert^2 \right)\mper
\end{align*}

Together with \pref{eq:main-1}, with probability $1-o(1)$ over the choice of $a_1,\ldots,a_n$, each of the following events ${E^*}_{j,\e}$ has probability at least $\tOmega(1/n^{1+O(\e)}) - d^{-\omega(1)}\ge \tOmega(1/n^{1+O(\e)})$ over the choice of $g$,
\begin{align*}
  E^*_{j,\e}\colon \quad
  & \Norm{\signh
    \Paren{M - \e \langle g, T a_j^{\otimes 2} \rangle (a_j a_j^\top)^{\otimes 2}}
    \signh} \\
  & \le
    \begin{aligned}[t]
    & \left\lVert \signh \cdot \Mdiag \cdot \signh \right\rVert\\
    & - (\e - \tO(\sqrt{n}/d)) \cdot \left(\cramped{\abs{\langle g,T{a_j}^{\otimes 2} \rangle}} \cdot \lVert \cramped{\signh {a_j}^{\otimes 2}} \rVert^2 \right)
      + \cramped{\tilde O(n^{3/2}/d^2)}
  \end{aligned}\\
  & \le \left\lVert \signh\cdot M \cdot \signh \right\rVert
  - (\e - \tO(\sqrt{n}/{d})) \cdot |\langle g,T{a_j}^{\otimes 2} \rangle|\cdot\lVert  \cramped{ \signh {a_j }^{\otimes 2}} \rVert^2 + \tO(n^{3/2}/d^{2})\mper
\end{align*}
Here, we used that $M = \Mdiag + \Mcross$ and that $\norm{\signh \cdot \Mcross \cdot \signh}\le \norm{\Mcross}$ as $\lVert \signh \rVert\le 1$ (\pref{fact:sigma-simple}).

By standard reasoning about the top eigenvector of a matrix with a spectral gap (recorded in \pref{lem:low-correlation}), the event ${E^*}_{j,\e}$ implies that the top eigenvector $\cramped{u\in \R^{d^2}}$ of $\cramped{\signh \cdot M \cdot \signh}$ satisfies
\begin{align*}
    \left\langle  u, \frac{\signh a_j^{\otimes 2}}{\|\signh a_j^{\tensor 2}\|} \right\rangle^2
    &\ge 1 - \frac{\tO(\sqrt{n}/d)}{\epsilon \|Ra_j^{\tensor 2}\|^2}  - \frac{\tO(n^{3/2}/d^2)}{\epsilon \|Ra_j^{\tensor 2}\|^2|\iprod{g,Ta_j^{\tensor 2}}|} \mper
    \intertext{
Since $\lVert  \cramped{\signh {a_j}^{\otimes 2}} \rVert^2\ge \Omega(\lVert  a_j \rVert^4)$ (by \pref{fact:sigma-simple}),
and since $\|a_j\| \ge 1 - \tO(1/\sqrt{d})$ (by \pref{fact:SIP-simple}),
    }
    &\ge 1 - \tO\Paren{\frac{\sqrt{n}}{\epsilon d}} - \frac{\tO(n^{3/2}/d^2)}{\epsilon \cdot |\iprod{g,Ta_j^{\tensor 2}}|}
\end{align*}

Now, by \pref{lem:signal-negligible} we have that for all $j\in[n]$, $|\iprod{g, Ta_j^{\tensor 2}} - \iprod{g,a_j}\|a_j\|^4| \le \tO(\sqrt{n}/{d})$ with probability $1 - \owl$. By standard concentration (see \pref{fact:SIP} for a proof) $|\iprod{g, a_j}\|a_j\|^4 - 1| \le \tO(1/\sqrt{d})$ for all $j\in[n]$ with probability $1-\owl$.
Therefore with overwhelming probability, the final term is bounded by $\tO(n^{3/2} / \epsilon d^2)$.
A union bound now gives the desired conclusion.

Finally, we give a bound on the spectral gap.
We note that the second eigenvector $w$ has $\iprod{u,w} = 0$, and therefore $$
\Iprod{w, \frac{Ra_j^{\tensor 2}}{\|Ra_j^{\tensor 2}\|}}
= \Iprod{w, \frac{Ra_j^{\tensor 2}}{\|Ra_j^{\tensor 2}\|} - u}
\le \left\| \frac{Ra_j^{\tensor 2}}{\|Ra_j^{\tensor 2}\|}- u\right\|
\le \tO(n^{3/2}/\epsilon d^2).
$$
Thus, using our above bound on $\|R(M - \epsilon \iprod{g,Ta_j^{\tensor 2}}(a_j a_j^\top)^{\tensor 2})R\|$ and the concentration bounds we have already applied for $\|a_j\|$, $\iprod{g,Ta_j^{\tensor 2}}$, and $\|R a_j^{\tensor 2}\|$, we have that
\begin{align*}
    \lambda_2(RMR)
    &= w^\top RMR w\\
    &= w^\top R\left(M  - \epsilon\iprod{g,Ta_j^{\tensor 2}}\cdot (a_j a_j^\top)^{\tensor 2}\right)Rw + \epsilon\iprod{g,Ta_j^{\tensor 2}} \iprod{w, R a_j^{\tensor 2}}^2\\
    &\le \left\|R\left(M  - \epsilon\iprod{g,Ta_j^{\tensor 2}}\cdot (a_j a_j^\top)^{\tensor 2}\right)R\right\| + \tO(n^{3/2}/\epsilon d^2)\\
    &\le 1 - \tO(\epsilon) + \tO(n^{3/2}/\epsilon d^2)\mper
\end{align*}
We conclude that the above events also imply that $\lambda_2(RMR)/\lambda_1(RMR) \le 1 -O(\epsilon)$.
\end{proof}

\subsection{Spectral gap for diagonal terms: proof of \pref{prop:diagonal-terms}}
\label{sec:diagonal-terms}
We now prove that the signal matrix, when preconditioned by $\signh$, has a noticeable spectral gap:
\restateprop{prop:diagonal-terms}
The proof has two parts.
First we show that for $a_1,\ldots,a_n \sim \cN(0, \Id_d)$ the matrix $P \seteq \sum_{\inn} (a_i a_i^\top)^{\tensor 2}$ has tightly bounded spectral norm when preconditioned with $\signh$: more precisely, that $\|\signh P \signh\| \leq 1 + \tO(n/d^{3/2})$.
\begin{lemma}
\torestate{
   \label{lem:bounded-signal}
     Let $a_1,\ldots,a_n\sim \cN(0,\tfrac 1d \Id_d)$ be independent random vectors with $d \leq n$.
     Let $\signh \seteq \sqrt{2}\cdot ((\E (a a^\top)^{\tensor 2})^+)^{1/2} $ for
     ${a \sim \cN(0,\Id_d)}$.
     For $S\subseteq [n]$, let $P_S=\sum_{i\in S} (a_{i} {a_i}^\top)^{\otimes 2}$ and let $\Pi_S$ be the projector into the subspace spanned by $\{\signh a_i^{\otimes 2} \mid i\in S\}$.
     Then, with probability $1-o(1)$ over the choice of $a_1,\ldots,a_n$,
     \begin{displaymath}
          \forall S\subseteq [n].\quad
          \left(1- \tO(n/d^{3/2})\right)\cdot \Pi_S
          \preceq   \signh P_S \signh
          \preceq \left(1 + \tO(n/d^{3/2})\right) \cdot \Pi_S
          \mper
     \end{displaymath}}
\end{lemma}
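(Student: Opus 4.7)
I plan to reduce the spectral inequality for $\signh P_S \signh$ to a bound on the operator norm of an $n\times n$ Gram matrix, and then apply matrix concentration. Writing $U_S \in \R^{d^2 \times |S|}$ for the matrix whose columns are $\{\signh a_i^{\otimes 2} : i \in S\}$, we have $\signh P_S \signh = U_S U_S^\top$, whose nonzero eigenvalues (all lying in the range of $\Pi_S$) coincide with the eigenvalues of the Gram matrix $G_S := U_S^\top U_S \in \R^{|S|\times |S|}$. Hence the conclusion of the lemma is equivalent to $\|G_S - I\| \le \tO(n/d^{3/2})$ for every $S$, and by Cauchy interlacing the operator norm of any principal submatrix of $G-I$ is at most $\|G-I\|$ for $G := G_{[n]}$. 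So it suffices to control this single $n \times n$ matrix, avoiding any union bound over the $2^n$ choices of~$S$.

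Next, I would write down $G$ explicitly. Using the standard identity $\Sigma = 2\Pisym + (\text{vec}\,\Id)(\text{vec}\,\Id)^\top$ for the fourth-moment matrix of $\cN(0,\Id_d)$, one finds that $\signh$ acts as the identity on the symmetric-traceless subspace, as $\sqrt{2/(d+2)}$ along $\text{vec}\,\Id$, and kills the antisymmetric subspace; a direct computation then yields
\[ G_{ij} = \iprod{a_i, a_j}^2 - \tfrac{1}{d+2}\|a_i\|^2 \|a_j\|^2 . \]
I would then split $G-I = (\E G - I) + (G - \E G)$. The mean piece is easy: $\E G_{ii} = (1+2/d)(d+1)/(d+2) = 1 + 1/d$ and $\E G_{ij} = 1/d - 1/(d+2) = O(1/d^2)$ for $i \ne j$, so writing the mean shift as $(1/d)\cdot I$ plus a rank-one-like perturbation of operator norm $O(n/d^2)$ gives $\|\E G - I\| = O(1/d + n/d^2) \le \tO(n/d^{3/2})$ for $n \ge d$.

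The main obstacle is bounding $\|G - \E G\|$, since the entries of $G$ are not independent ($G_{ij}$ and $G_{ik}$ both involve $a_i$), so a direct Wigner-type argument does not apply off-the-shelf. My plan is an $\e$-net argument: for a fixed unit $v \in \R^n$, one has $v^\top G v = \|M\|_F^2 - (\Tr M)^2/(d+2)$ with $M := \sum_i v_i a_i a_i^\top$, which is a degree-$4$ polynomial in the Gaussian entries of $\{a_i\}$. Its variance can be computed directly (the off-diagonal entries $G_{ij}$ have standard deviation $\tO(1/d)$), and Hanson--Wright-style concentration for Gaussian chaos yields $|v^\top(G - \E G) v| \le \tO(\sqrt n/d)$ with probability $1 - e^{-\Omega(n \log n)}$. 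A union bound over a $(1/2)$-net of size $5^n$ on the unit sphere of $\R^n$ then gives $\|G - \E G\| \le \tO(\sqrt n/d) \le \tO(n/d^{3/2})$ for $n \ge d$, completing the proof. (Alternatively, one may decouple the row dependencies and apply a truncated matrix Bernstein, or run a moment/trace-method calculation; all three routes give the same $\sqrt n/d$ scaling, which ultimately reflects the fact that a symmetric random matrix of dimension $n$ whose off-diagonal entries have standard deviation $1/d$ has operator norm $\tO(\sqrt n/d)$.)
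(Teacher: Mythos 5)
Your overall architecture is sound and largely parallels the paper: the reduction to $S=[n]$ (the paper does this via \pref{lem:p-pi-1}, which is the same interlacing observation), the passage to the Gram matrix with $G_{ij} = \iprod{a_i,a_j}^2 - \tfrac{1}{d+2}\|a_i\|^2\|a_j\|^2$, and the computation of $\E G$ are all correct. The gap is in the concentration step. First, the bound $\|G-\E G\|\le \tO(\sqrt n/d)$ that your net argument is aimed at is itself doubtful once $n\gg d$: writing $\|a_i\|^2 = 1+\delta_i$ with $\delta_i$ of typical size $1/\sqrt d$, the term $-\tfrac{1}{d+2}\|a_i\|^2\|a_j\|^2$ contributes a matrix close to $-\tfrac{1}{d+2}\bigl(\delta\mathbf{1}^\top + \mathbf{1}\delta^\top\bigr)$, whose operator norm is of order $\tfrac1d\|\delta\|\,\|\mathbf{1}\| = \Theta(n/d^{3/2})$, which dwarfs $\sqrt n/d$ for $n\ge d^{1+\Omega(1)}$ (this is still within the lemma's target, but it means a valid net argument at scale $\tO(\sqrt n/d)$ would prove something false, so the claimed per-vector tail cannot be right). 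Second, and decisively, the per-vector tail $1-e^{-\Omega(n\log n)}$ is unattainable even at the correct scale $\tO(n/d^{3/2})$: take $v=e_1$, so that $v^\top(G-\E G)v = \tfrac{d+1}{d+2}\bigl(\|a_1\|^4 - \E\|a_1\|^4\bigr)$, essentially a rescaled $\chi^2_d$ fluctuation of typical size $1/\sqrt d$. The probability that it exceeds $\tO(n/d^{3/2})$ is roughly $\exp\bigl(-\tO(n^2/d^2)\bigr)$, and since $n\ll d^2$ this is enormously larger than $9^{-n}$ (or $5^{-n}$), so the union bound over an exponential-size net cannot be closed. This is not repaired by a sharper chaos inequality: Hanson--Wright is for quadratic forms, and degree-$4$ Gaussian chaos has tails no better than $\exp(-ct^{1/2})$ in units of its standard deviation, so no fixed-direction bound at these scales comes with failure probability $e^{-\Omega(n)}$.

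This is precisely why the paper avoids a net. It renormalizes the columns to $(\tilde\Sigma^+)^{1/2}(a_i\otimes a_i)/\|a_i\|^2$, which are independent, isotropic, and of fixed norm (\pref{lem:isotropy}), applies Vershynin's independent-columns theorem (\pref{thm:indep-columns}) with the incoherence bound of \pref{lem:incoherence} to get a bound \emph{in expectation}, and then uses Markov — which is why the lemma only promises probability $1-o(1)$ rather than overwhelming probability. The $\|a_i\|^2$ fluctuations that dominate your Gram-matrix deviation are then absorbed in a separate $\tO(n/d^{3/2})$ term when swapping the normalized preconditioner back for $R^2$. To fix your argument you would need to replace the $\epsilon$-net step with such an independent-columns (or genuine trace/moment-method) argument; the alternatives you mention in passing are not developed, and the heuristic of ``independent off-diagonal entries of standard deviation $1/d$'' ignores both the dependence of the entries through the shared $a_i$'s and the dominant low-rank $n/d^{3/2}$ component coming from the norm fluctuations.
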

\begin{remark}
In \cite[Lemma 5]{DBLP:conf/approx/GeM15} a similar lemma to this one is proved in the context of the SoS proof system.
However, since Ge and Ma leverage the full power of the SoS algorithm their proof goes via a spectral bound on a different (but related) matrix.
Since our algorithm avoids solving an SDP we need a bound on this matrix in particular.
\end{remark}

The proof of \pref{lem:bounded-signal} proceeds by standard spectral concentration for tall matrices with independent columns (here the columns are $Ra_i^{\tensor 2}$).
The arc of the proof is straightforward but it involves some bookkeeping; we have deferred it to \pref{sec:bounded-signal}.

We also need the following lemma on the concentration of some scalar random variables involving $\signh$; the proof is straightforward by finding the eigenbasis of $\signh$ and applying standard concentration, and it is deferred to the appendix.
\begin{lemma}
\torestate{
\label{lem:gap-scalar}
Let $a_1,\ldots,a_n \sim \cN(0,\tfrac 1 d \Id_d)$.
Let $\Sigma, R$ be as in \pref{fact:sigma-simple}.
Let $u_i = a_i \tensor a_i$.
With overwhelming probability, every $j \in [n]$ satisfies $\sum_{i \neq j} \iprod{u_j, R^2 u_i}^2 = \tO(n/d^2)$ and $|1 - \|Ru_j\|^2| \leq \tO(1/\sqrt d)$.}
\end{lemma}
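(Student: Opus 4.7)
The plan is to reduce both bounds to a closed-form description of $R^2$ on symmetric tensors, after which everything follows from standard one-dimensional Gaussian concentration. Using Isserlis's formula for Gaussian fourth moments ($\E x_ix_jx_kx_\ell = \delta_{ij}\delta_{k\ell}+\delta_{ik}\delta_{j\ell}+\delta_{i\ell}\delta_{jk}$ for $x\sim\cN(0,\Id_d)$), a direct computation gives
\[
  \Sigma \;=\; 2\,\Pi_{\mathrm{sym}} \;+\; d\,\iota\iota^\top,
\]
where $\Pi_{\mathrm{sym}}\in\R^{d^2\times d^2}$ is the projector onto the symmetric subspace of $\R^d\tensor\R^d$ and $\iota\in\R^{d^2}$ denotes the unit-norm flattening of $\Id_d/\sqrt d$, so that $\iprod{\iota,a^{\tensor 2}}=\|a\|^2/\sqrt d$ for every $a\in\R^d$. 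Since $\iota$ lies in the range of $\Sigma$, diagonalizing and inverting yields
\[
  R^2 \;=\; 2\Sigma^+ \;=\; \Pi_{\mathrm{sym}} \;-\; \tfrac{d}{d+2}\,\iota\iota^\top,
\]
and because each $u_i = a_i^{\tensor 2}$ is symmetric, this produces the key identity
\[
  \iprod{u_j,R^2u_i} \;=\; \iprod{a_i,a_j}^2 \;-\; \tfrac{1}{d+2}\,\|a_i\|^2\|a_j\|^2.
\]

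The second bound then follows immediately: \pref{fact:sigma-simple} records $\|Ru_j\|^2 = (1-\tfrac{1}{d+2})\|a_j\|^4$, and since $d\|a_j\|^2\sim\chi^2_d$, standard $\chi^2$-concentration gives $\|a_j\|^2 = 1+\tO(1/\sqrt d)$ \wovp, hence $\|a_j\|^4 = 1+\tO(1/\sqrt d)$ and $|1-\|Ru_j\|^2|\le \tO(1/\sqrt d)$. A union bound over $j\in[n]$ preserves this simultaneously for all $j$.

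For the first bound I would fix $j$ and condition on $a_j$. Then for each $i\neq j$, the scalar $\iprod{a_i,a_j}$ is a centered Gaussian of variance $\|a_j\|^2/d$, so its standard sub-Gaussian tail gives $\iprod{a_i,a_j}^2 \le \tO(1/d)$ \wovp. Combined with $\|a_i\|^2\|a_j\|^2/(d+2)\le \tO(1/d)$ from the $\chi^2$ bound, each summand satisfies $\iprod{u_j,R^2u_i}^2 \le \tO(1/d^2)$, and union-bounding over the $\le n^2$ pairs $(i,j)$ gives $\sum_{i\neq j}\iprod{u_j,R^2u_i}^2 \le \tO(n/d^2)$ simultaneously for every $j\in[n]$ \wovp.

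I do not anticipate a real obstacle: once $R^2$ is in closed form, the lemma is essentially bookkeeping on top of scalar Gaussian and $\chi^2$ tails. The one conceptual point worth flagging is that there is no sharp cancellation in $\iprod{u_j,R^2u_i}$—both $\iprod{a_i,a_j}^2$ and $\|a_i\|^2\|a_j\|^2/(d+2)$ are of order $1/d$ with fluctuations of the same order, so the per-pair bound is $\tO(1/d)$ rather than something smaller. This is, however, exactly the scale needed for the required $\tO(n/d^2)$ summed bound.
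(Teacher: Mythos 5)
Your proposal is correct and follows essentially the same route as the paper: it uses the closed form $R^2 = 2\Sigma^+ = \Pisym - \tfrac{1}{d+2}\Phi\Phi^\top$ (which the paper records in \pref{fact:sigma}, and which you rederive rather than cite) to get $\iprod{u_j,R^2u_i} = \iprod{a_i,a_j}^2 - \tfrac{1}{d+2}\|a_i\|^2\|a_j\|^2$, bounds each such term by $\tO(1/d)$ via scalar Gaussian and $\chi^2$ concentration with a union bound, and handles $\|Ru_j\|^2 = (1-\tfrac{1}{d+2})\|a_j\|^4$ by norm concentration. This matches the paper's proof (\pref{fact:sigma} plus \pref{fact:SIP}), with your observation about the lack of cancellation per term being exactly the scale the paper also uses.
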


The next lemma is the linchpin of the proof of \pref{prop:diagonal-terms}: one of the inner products $\langle  g, T{a_j}^{\otimes 2} \rangle$, is likely to be a $\approx (1 + 1/\log(n))$-factor larger than the maximum of the inner products $\langle  g, T{a_i}^{\otimes 2} \rangle$ over $i\neq j$.
Together with standard linear algebra these imply that the matrix $\Mdiag = \sum_{\inn} \iprod{g, T a_i^{\tensor 2}} (a_i a_i^\top)^{\tensor 2}$ has top eigenvector highly correlated or anticorrelated with some $a_i$.

\begin{lemma}
  \label{lem:g-boost}
  Let $a_1,\ldots,a_n\in\R^d$ be independent random vectors from $\cN(0,\frac{1}{d}\Id_{d})$, and let $g$ be a random vector from $\cN(0,\Id_{d})$.
  Let $T = \sum_{i\in[n]} a_i (a_i \tensor a_i)^\top$.
  Let $\e > 0$ and $j\in[n]$.
  Then with overwhelming probability over $a_1,\ldots,a_n$, the following event $\hat E_{j,\e}$ has probability $1/n^{1+O(\e)+\tO(1/\sqrt{d})}$ over the choice of $g$,
  \[
      \hat E_{j,\e} = \left\{\iprod{g,T a_j^{\otimes 2}} \ge (1+\e)(1-\tO(1/\sqrt{d}))\cdot \max_{i\neq j}\left|\iprod{g,T a_i^{\otimes 2}}\right|\right\}\mper
  \]
\end{lemma}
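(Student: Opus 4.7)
The plan is to peel off the component of $g$ in the direction of $a_j$, reducing $\hat E_{j,\e}$ to a statement about the probability that one of $n$ near-independent standard Gaussians exceeds all the others by a factor of $1+\e$, and then to lower bound this probability by a direct Mill's-ratio calculation. First I would combine \pref{lem:signal-negligible} with \pref{fact:SIP-simple} to reduce to a statement about the raw inner products $\iprod{g, a_i}$: with overwhelming probability $\iprod{g, T a_i^{\otimes 2}} = \iprod{g,a_i}\|a_i\|^4 \pm \tO(\sqrt n/d)$ and $\|a_i\|^4 = 1 \pm \tO(1/\sqrt d)$ for every $i\in[n]$. Since the largest of the $\iprod{g,a_i}$ is typically of order $\sqrt{\log n}$, these errors turn into multiplicative $1\pm\tO(1/\sqrt d)$ factors and can be absorbed into the $(1-\tO(1/\sqrt d))$ slack in the definition of $\hat E_{j,\e}$. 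It thus suffices to lower bound the probability over $g$ of $\iprod{g, a_j} \ge (1+\e)(1 - \tO(1/\sqrt d)) \cdot \max_{i\neq j} |\iprod{g, a_i}|$. To expose the underlying near-independence, I would decompose $g = Z_j \cdot (a_j/\|a_j\|) + g^{\perp}$ with $Z_j \sim \cN(0,1)$ and $g^\perp$ the projection of $g$ onto $a_j^\perp$, independent of $Z_j$; then $\iprod{g, a_j} = Z_j \|a_j\|$ while $\iprod{g, a_i} = Z_j \iprod{a_j, a_i}/\|a_j\| + \iprod{g^\perp, a_i^\perp}$ for $i \neq j$, where $a_i^\perp$ is the component of $a_i$ orthogonal to $a_j$.

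Setting $t_\star \defeq (1+\e)\sqrt{2 \log n}$, I would lower bound $\Pr[\hat E_{j,\e}]$ by the probability of the intersection of the events $Z_j \in [t_\star, t_\star + 1/t_\star]$ and $\max_{i\neq j}|\iprod{g,a_i}| \le t_\star / ((1+\e)(1-\tO(1/\sqrt d)))$. The first has probability at least $\Omega(\phi(t_\star)/t_\star) = \Omega(n^{-(1+\e)^2}/\sqrt{\log n})$ by the standard Mill's-ratio lower bound on the Gaussian tail. For the second, conditional on the first, each $\iprod{g, a_i}$ for $i\neq j$ is Gaussian with variance $1 + \tO(1/\sqrt d)$ (since $\|a_i^\perp\|^2 = \|a_i\|^2 - \iprod{a_i, a_j}^2/\|a_j\|^2 = 1 \pm \tO(1/\sqrt d)$) and mean $\tO(|Z_j|/\sqrt d) = \tO(\sqrt{\log n / d}) = o(1)$ by \pref{fact:SIP-simple}. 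A union bound over $i \neq j$ using the Gaussian tail bound $\Pr[|\cN(0,\sigma^2)| > s] \le 2\phi(s/\sigma)/(s/\sigma)$ shows that the second event holds with conditional probability $1 - o(1)$, provided we inflate the threshold past $\sqrt{2\log n}$ by a sufficiently large absolute constant. Multiplying the two bounds and using $(1+\e)^2 \le 1 + O(\e)$ yields $\Pr[\hat E_{j,\e}] \ge 1/n^{1+O(\e)+\tO(1/\sqrt d)}$.

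The main obstacle is careful bookkeeping of $\tO(1/\sqrt d)$ factors. The slack $(1-\tO(1/\sqrt d))$ in $\hat E_{j,\e}$ has to absorb simultaneously (i) the additive error $\tO(\sqrt n/d)$ from \pref{lem:signal-negligible}, which is $\tO(1/\sqrt d)$ relative to typical magnitudes of order $\sqrt{\log n}$; (ii) the $\|a_i\|^4 \ne 1$ factors on both sides; (iii) the shift of $\iprod{g, a_i}$ by the conditional mean $Z_j\iprod{a_j, a_i}/\|a_j\|$, of magnitude $\tO(\sqrt{\log n/d})$; and (iv) the slight inflation of the union-bound threshold required for the $o(1)$ conclusion. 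Each contributes a multiplicative factor $1+\tO(1/\sqrt d)$, and the cumulative effect fits comfortably within the slack in the regime $d \gg \log n$ of interest.
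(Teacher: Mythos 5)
Your proposal is correct and takes essentially the same route as the paper's proof: reduce to the raw inner products $\iprod{g,a_i}$ via \pref{lem:signal-negligible} and norm concentration, split $g$ into its component along $a_j$ and the orthogonal remainder, lower-bound the parallel component by a Mills-ratio estimate, and union-bound the contaminated inner products $\iprod{g,a_i}$, $i\neq j$ (the paper caps $|\iprod{g,\hat a_j}|$ at $d^{1/4}$ and inflates the comparison threshold multiplicatively by $1+\Theta(\log\log n/\log n)$, but this is the same calculation as your conditional mean/variance bookkeeping). One small remark: the additive-constant inflation you hedge with is unnecessary — the $1/\sqrt{\log n}$ factor from the Gaussian tail already makes the union bound $o(1)$ — and if you did inflate additively and raised $t_\star$ to match, you would lose a factor $e^{-\Theta(\sqrt{\log n})}$, which overshoots the stated exponent when $\e=1/\log n$; the paper's $1+O(\log\log n/\log n)$ multiplicative inflation costs only a polylogarithmic factor and is the right amount of slack.
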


Now we can prove \pref{prop:diagonal-terms}.
\begin{proof}[Proof of \pref{prop:diagonal-terms}]
\newcommand{\etanorm}{\eta_{\text{norm}}}
\newcommand{\etagap}{\eta_{\text{gap}}}
Let $u_i \seteq a_i^{\tensor 2}$.
Fix $j \in [n]$.
We begin by showing a lower bound on the spectral norm $\|R \Mdiag R \|$.
\begin{align*}
  \|R \Mdiag R\| & = \max_{\|v\|=1} \left| \iprod{v, R\Mdiag Rv} \right| \\
  & \geq \frac{\iprod{Ru_j, (R\Mdiag R)Ru_j}}{\|Ru_j\|^2}\\
  & = \frac 1 {\|Ru_j\|^2} \Paren{\iprod{g,Tu_j} \|Ru_j\|^4 + \iprod{Ru_j, \sum_{i \neq j} \iprod{g,T_i} Ru_iu_i^\top R \cdot Ru_j}}
\end{align*}
From \pref{lem:g-boost}, the random vector $g$ is closer to $Tu_j$ than to all $Tu_i$ for $i \neq j, i \in[n]$ with reasonable probability.
More concretely there is some $\polylog d$ so that as long as $\epsilon > \polylog d /\sqrt{d}$ there is some $\alpha = \Theta(\epsilon)$ with $1 - \epsilon = 1/[(1+\alpha)(1-\tO(d^{-1/2}))]$ so that with \wovp over $a_1,\ldots,a_n$ the following event (a direct consequence of $\hat E_{j,\e}$) has probability
$\tOmega(1/n^{1+O(\alpha) + \tO(d^{-1/2})}) = \tOmega(1/n^{1 + O(\epsilon)})$ over $g$:
\begin{equation}\label{eq:gap-1}
    -(1-\epsilon) |\iprod{g,T u_j}| \cdot
  \Paren{\sum_{i\neq j} R u_i u_i^\top R}
  \preceq
  \sum_{i\neq j} \iprod{g,T u_i} \cdot R u_i u_i^\top R
  \preceq (1-\e) |\iprod{g,T u_j}| \cdot
  \Paren{\sum_{i\neq j} R u_i u_i^\top R}.
\end{equation}

When \pref{eq:gap-1} occurs,
\begin{align}
    \|R \Mdiag R\| & \geq \frac 1 {\|Ru_j\|^2} \Paren{|\iprod{g,Tu_j}| \|Ru_j\|^4 - (1-\epsilon) |\iprod{g, Tu_j}|\iprod{Ru_j, \sum_{i \neq j} Ru_iu_i^\top R \cdot Ru_j}}\nonumber \\
				 & = \frac {|\iprod{g, Tu_j}|} {\|Ru_j\|^2} \Paren{\|Ru_j\|^4 - (1-\epsilon) \sum_{i \neq j} \iprod{ u_j, R^2 u_i}^2 } \nonumber\\
     & \geq \frac{|\iprod{g, Tu_j}|\Paren{1 - \tO(1/\sqrt d) - (1-\e) \tO(n/d^2)} }{1 + \tO(1/\sqrt d)} \quad \text{\wovp over $a_1,\ldots,a_n$ (\pref{lem:gap-scalar})}\nonumber \\
  &\ge |\iprod{g,Tu_j}| \cdot (1-\etanorm ) \mcom\label{eq:gap-2}
\end{align}
where we have chosen some $0 \le \etanorm \le \tO(1/\sqrt{d}) + \tO(n/d^2)$ (since for any $x \in \R$, $(1 +x)(1-x) \le 1$).

Next we exhibit an upper bound on $\|R \Mdiag R - \epsilon \iprod{g, Tu_j} Ru_j u_j^\top R\|$.
Again when \pref{eq:gap-1} occurs,
\begin{align}
   & \|R \Mdiag R - \e \iprod{g, Tu_j} Ru_j u_j^\top R\|\\
   & \quad \quad = \Norm{ (1-\e) \iprod{g,Tu_j} Ru_j u_j^\top R + \sum_{i \neq j} \iprod{g,Tu_i} Ru_i u_i^\top R}\nonumber\\
   & \quad \quad \leq (1-\e) |\iprod{g, Tu_j}| \Norm{\sum_{i \in[n]} R u_i u_i^\top R}\quad \text{when \pref{eq:gap-1} occurs}\nonumber\\
   & \quad \quad \leq (1-\e) |\iprod{g, Tu_j}| (1 + \tO(n/d^{1.5}))\quad \text{w.p. $1 - o(1)$ over $a_1,\ldots,a_n$ by \pref{lem:bounded-signal}} \nonumber \\
   & \quad \quad \leq (1-\e) |\iprod{g,Tu_j}| (1 + \etagap)\label{eq:gap-3}
\end{align}
where we have chosen some $0\le \etagap \le \tO(n/d^{1.5})$.

Putting together \pref{eq:gap-2} and \pref{eq:gap-3} with our bounds on $\etanorm$ and $\etagap$ and recalling the conditions on \pref{eq:gap-1}, we have shown that
\begin{align*}
     \Pr_{a_1,\ldots,a_n}
     \big \{ \Pr_g \{ \|R \Mdiag R - & \e \iprod{g,Tu_j} R u_j u_j^\top R \|
   \le  \|R\Mdiag R\| -  (\e - \tO(\sqrt{n}/{d}))\cdot|\iprod{g,Tu_j}|\cdot \|R u_j\|^2 \} \\
    & \geq \tOmega(1/n^{1 + O(\epsilon)}) \big \}
    \geq 1 - o(1)\mper
\end{align*}
This concludes the argument.
\end{proof}

We now turn to proving that with reasonable probability, $g$ is closer to some $Ta_j^{\tensor 2}$ than all others.

\begin{proof}[Proof of \pref{lem:g-boost}]
  To avoid proliferation of indices, without loss of generality fix $j = 1$.
    We begin by expanding the expression $\iprod{g,Ta_i^{\tensor 2}}$,
    \[
	\iprod{g, Ta_i^{\tensor 2}}
	= \sum_{\ell \in[n]} \iprod{g,a_\ell} \iprod{a_\ell,a_i}^2
	= \|a_i\|^4 \iprod{g,a_i} + \sum_{\ell \neq i} \iprod{g,a_\ell}\iprod{a_\ell,a_i}^2.
    \]
    The latter sum is bounded by
    \[
	\left|\sum_{\ell \neq i} \iprod{g,a_\ell}\iprod{a_\ell,a_i}^2\right| \le \tO\Paren{\frac{\sqrt{n}}{d}}\mcom
    \]
    with overwhelming probability for all $i$ and choices of $g$; this follows from a Bersntein bound, given in \pref{lem:signal-negligible}.

  For ease of notation, let $\hat a_{i} \defeq a_{i}/\|a_{i}\|_2$.
  We conclude from \pref{fact:SIP-simple} that with overwhelming probability, $1 - \tO(1/\sqrt{d}) \le \|a_i\|_2 \le 1 + \tO(1/\sqrt{d})$ for all $i\in[n]$.
  Thus $\|a_{i}\|_2$ is roughly equal for all $i$, and we may direct our attention to $\iprod{g,\hat a_i}$.

  Let $\Eone$ be the event that
  $\sqrt{2\ampc} \log^{1/2} n\le |\iprod{g,\hat a_1}|\le d^{1/4}$ for some $\ampc \leq d^{1/2 - \Omega(1)}$ to be chosen later.
  We note that $\iprod{g,\hat a_1}$ is distributed as a standard gaussian,
  and that $g$ is independent of $a_1,\ldots,a_n$.
  Thus, we can use standard tail
  estimates on univariate Gaussians (\pref{lem:gtail}) to conclude that
  \[
    \Pr\Paren{|\iprod{g,\hat a_1}| \ge \sqrt{2\ampc} \log^{1/2} n}
    = \tTheta (n^{-\ampc}) \ \ \text{and}\ \
    \Pr\Paren{|\iprod{g,\hat a_1}| \ge d^{1/4}} =
    \Theta\Paren{\frac{\exp(-\sqrt{d}/2)}{d^{1/4}}}\mper
  \]
  So by a union bound, $\Pr(\Eone) \ge \tOmega(n^{-\ampc}) - O(e^{-{d^{1/2}}/3})= \tOmega(n^{-\ampc})$.

  Now, we must obtain an estimate for the probability that all other inner products with $g$ are small.
  Let $\Eother$ be the event that $|\iprod{g,\hat a_{i}}| \le \sqrt{\restc} \log^{1/2} n$ for all $i\in[n],i>1$ and for some $\rho$ to be chosen later.
  We will show that conditioned on $\Eone$, $\Eother$ occurs with probability $1 - O(n^{1 -\restc/2})$.
  Define $g_{1} \seteq \iprod{g, \hat a_1} \hat{a_1}$ to be the component of $g$ parallel to $a_1$, let $g_{\perp} \seteq g - g_1$ be the component of $g$ orthogonal to $\hat a_1$, and similarly let $\hat a_2^{\perp}, \ldots, \hat a_n^{\perp}$ be the components of $\hat a_2,\ldots, \hat a_n$ orthogonal to $a_1$.
  Because $g_{\perp}$ is independent of $g_{1}$, even conditioned on $\Eone$ we may apply the standard tail bound for univariate Gaussians (\pref{lem:gtail}), concluding that for all $i>1$,
  \[
    \Pr\Paren{|\iprod{g_{\perp},\hat a_i}|\ge \sqrt{\restc} \log^{1/2} n \, \big | \, \Eone} =
    \tTheta(n^{-\restc/2}).
  \]
  Thus, a union bound over $i\neq 1$ allows us to conclude that conditioned
  on $\Eone$, with
  probability $1- \tO(n^{-\rho/2})$
  every $i\in[n]$ with $i>1$ has $| \iprod{g_{\perp},\hat a_i^{\perp}}| \le \sqrt{\restc} \log^{1/2} n$.

  On the other hand, let $\hat a_2^{\parallel},\ldots,\hat a_n^{\parallel}$
  be the components of the $\hat a_i$ parallel to $\hat a_1$. We compute the
  projection of $\hat a_i$ onto $\hat a_1$. With overwhelming probability,
  \begin{align*}
    \iprod{\hat a_1, \hat a_i}
    &=  \frac{\iprod{a_1,a_i}}{ \|a_1\|_2 \cdot \|a_i\|_2}\\
    &= (1 \pm \tO(1/\sqrt{d})) \cdot \iprod{a_1,a_i} \quad \text{\wovp by $\|a_i\|, \|a_1\| = 1 \pm \tO(1/\sqrt{d})$ (\pref{fact:SIP-simple})}\\
    &= (1 \pm \tO(1/\sqrt{d}))\cdot \tO(1/\sqrt{d}) \quad \text{\wovp by $\iprod{a_1, a_i} = \tO(1/\sqrt{d})$ (\pref{fact:SIP-simple})},
  \end{align*}
  Thus \wovp,
  \begin{align*}
    \iprod{g_1, \hat a_i^{\parallel}}
    &= \iprod{g,\hat a_1} \cdot \iprod{\hat a_1,\hat a_i} \ \le\ \iprod{g,\hat a_1} \cdot \tO(1/\sqrt{d}),
  \end{align*}
  for all $i\in [n]$.
  Now we can analyze $\Eother$.
  Taking a union bound over the overwhelmingly probable events (including $\|a_i\| \leq 1 + \tO(1 / \sqrt d)$) and the event that $\iprod{g_{\perp},a_i}$ is small for all $i$, we have that with probability $1 - \tO(n^{-\rho/2})$,
  for every $i\in[n]$ with $i> 1$,
  \begin{align*}
    |\iprod{g,\hat a_i}|
    &\le |\iprod{g_{\perp}, \hat a_i}| + |\iprod{g_1, \hat a_i}|\\
    &\sqrt{(2 + \rho)}\log^{1/2} n + \tO(1/\sqrt{d}) \cdot \iprod{g, \hat a_1}\\
      &\le \sqrt{(2 + \rho)}\log^{1/2} n + \tO(1/d^{1/4}).
  \end{align*}
  We conclude that
  \begin{align*}
    \Pr\Paren{\Eone,\Eother}
    &= \Pr\Paren{\Eother | \Eone} \cdot
      \Pr\Paren{\Eone}\\
    &\ge (1-O(n^{-\rho/2})) \cdot \tOmega(n^{-\ampc})
  \end{align*}
  Setting $\rho = 2\frac{\log\log n}{\log n}$ and $\alpha = (1+\e)^2(1 + \log\log n/\log n + \tO(1/\sqrt{d}))$, the
  conclusion follows.
\end{proof}

\subsection{Bound for cross terms: proof of \pref{prop:cross-terms}}
\label{sec:xterm-bound}
We proceed to the bound on the cross terms $\Mcross$.
\restateprop{prop:cross-terms}
The proof will use two iterations of Matrix Rademacher bounds.
The first step will be to employ a classical decoupling inequality that has previously been used in a tensor decomposition context
\cite{DBLP:conf/approx/GeM15}.

\begin{theorem}[Special Case of Theorem 1 in \cite{de1995decoupling}]
\label{thm:decoupling}
  Let $\{ s_i \}, \{ t_i \}$ be independent iid sequences of random signs.
Let $\{ M_{ij} \}$ be a family of matrices.
There is a universal constant $C$ so that for every $t > 0$,
\[
  \Pr \Paren{ \op{\sum_{i \neq j} s_i s_j M_{ij} } > t} \leq C\cdot \Pr \Paren{ C \op{\sum_{i \neq j} s_i t_j M_{ij}} > t}\mper
\]
\end{theorem}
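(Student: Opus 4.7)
The plan is to reproduce the classical decoupling argument of de la Peña: introduce an auxiliary uniform random partition of the index set $[n]$ and then exploit the conditional independence this creates to replace the coupled pair $(s_i, s_j)$ by a genuinely independent pair $(s_i, t_j)$.

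Concretely, let $\eta_1, \ldots, \eta_n$ be iid Bernoulli$(1/2)$ variables, independent of both sign sequences. Since $\E[4\eta_i(1-\eta_j)] = 1$ whenever $i \ne j$, one has the key identity
\[
    X \;\defeq\; \sum_{i \neq j} s_i s_j M_{ij}
    \;=\; 4\,\E_\eta\!\Brac{\sum_{i \neq j} \eta_i (1-\eta_j) \, s_i s_j \, M_{ij}}.
\]
Now fix a realization of $\eta$. Because the index sets $\{i : \eta_i = 1\}$ and $\{j : \eta_j = 0\}$ are disjoint, the sign families $\{s_i\}_{\eta_i = 1}$ and $\{s_j\}_{\eta_j = 0}$ are independent of one another, so the second family may be replaced by an independent copy $\{t_j\}$ without altering the joint distribution of the inner sum. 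This is the single step that decouples the two roles previously played by a single sign sequence; after the replacement the inner sum equals $Y_\eta \defeq \sum_{i \neq j} \eta_i(1-\eta_j) s_i t_j M_{ij}$.

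To convert this in-expectation identity into the in-probability tail bound stated in the theorem, I would apply Jensen's inequality for the operator norm to obtain $\|X\| \le 4\,\E_\eta \|Y_\eta\|$; on the event $\{\|X\| > t\}$ this forces $\E_\eta \|Y_\eta\| > t/4$, and a Fubini/Markov argument then yields a lower bound on $\Pr(\|Y_\eta\| > t/C)$ proportional to $\Pr(\|X\| > t)$ for some absolute constant $C$. The main obstacle is the final step of passing from the masked sum $Y_\eta$ to the fully decoupled sum $Y \defeq \sum_{i \neq j} s_i t_j M_{ij}$, i.e.\ absorbing the Bernoulli masks $\eta_i(1-\eta_j)$ without losing a dimension-dependent factor; this is handled cleanly by a contraction / symmetrization principle, and rather than reproduce that argument I would simply invoke Theorem~1 of de la Pe\~na (1995) as already cited.
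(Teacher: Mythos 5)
The paper does not prove this statement at all: it is imported wholesale as a special case of Theorem~1 of de la Pe\~na (and Montgomery-Smith), and since your proposal ultimately rests on ``simply invoke Theorem~1 of de la Pe\~na (1995) as already cited,'' your treatment coincides with the paper's.

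One caution about the sketch you wrap around the citation, in case you intended it as a self-contained argument: the step ``Jensen gives $\|X\| \le 4\,\E_\eta\|Y_\eta\|$, hence on $\{\|X\|>t\}$ a Fubini/Markov argument lower-bounds $\Pr(\|Y_\eta\|>t/C)$'' does not work as stated. Markov's inequality only gives \emph{upper} bounds on tail probabilities; to convert the lower bound $\E_\eta\|Y_\eta\| \ge t/4$ (with the signs frozen) into $\Pr_\eta\bigl(\|Y_\eta\| \ge ct\bigr) \ge c'$ one needs a Paley--Zygmund-type argument, which in turn requires comparing $\E_\eta\|Y_\eta\|^2$ with $(\E_\eta\|Y_\eta\|)^2$ --- this is supplied by moment-equivalence (hypercontractivity / Kahane--Khintchine-type) inequalities for Banach-space-valued polynomials of bounded degree in independent Bernoulli variables. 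That conversion, together with the removal of the masks $\eta_i(1-\eta_j)$ and the fact that the almost-sure identity $X = 4\,\E_\eta[\cdot]$ holds with the \emph{same} signs $s$ inside (so the distributional swap $s\to t$ cannot simply be performed inside that identity), is precisely the nontrivial content of the tail-probability decoupling theorem, as opposed to the much easier in-expectation or moment version that your averaging identity does capture. Since you defer exactly these points to the cited theorem, the proposal is acceptable, but the Jensen/Markov bridge should not be presented as if it closed the argument.
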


Once the simplified cross terms are decoupled, we can use a matrix Rademacher bound on one set of signs.
\begin{theorem}[Adapted from Theorem 4.1.1 in \cite{DBLP:journals/FOCM/Tropp12}\footnote{We remark that Tropp's bound is phrased in terms of $\lambda_{max} \sum_i s_i M_i$.
Since $\lambda_{max} \sum_i s_i M_i = \lambda_{min} \sum_i -s_i M_i$,
and the distribution of $s_i M_i$ is negation-invariant, the result we state here follows from an easy union bound.}
]
  \label{thm:matrix-rad}
  Consider a finite sequence $\{ M_i \}$ of fixed $m \times m$ Hermitian matrices.
  Let $s_i$ be a sequence of independent sign variables.
  Let $\sigma^2 \seteq \| \sum_i M_i^2 \|$.
  Then for every $t \geq 0$,
  \[
    \Pr \Paren{ \op{\sum_i s_i M_i} \geq t} \leq 2 m\cdot e^{-t^2 / 2\sigma^2}\mper
  \]
  Also,
  \[
    \E \left \| \sum_i s_i M_i \right \| \leq \sqrt{8 \sigma^2 \log d}\mper
  \]
\end{theorem}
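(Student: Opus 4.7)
The statement is an almost-immediate corollary of Tropp's matrix Rademacher bound (Theorem 4.1.1 of \cite{DBLP:journals/FOCM/Tropp12}), which under the stated hypotheses gives the \emph{one-sided} bound
\[
  \Pr\Paren{\lambda_{\max}\Paren{\sum_i s_i M_i} \geq t} \leq m \cdot \exp\Paren{-t^2/2\sigma^2}.
\]
The only real content of the proposition is to upgrade this one-sided bound to an operator-norm bound, which is exactly what the footnote alludes to.

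The plan for the probability inequality is as follows. Write $X \defeq \sum_i s_i M_i$, and note that since $X$ is Hermitian,
\[
  \op{X} \;=\; \max\Paren{\lambda_{\max}(X),\; -\lambda_{\min}(X)}.
\]
For the second term, I use the identity $-\lambda_{\min}(X) = \lambda_{\max}(-X) = \lambda_{\max}\bigl(\sum_i (-s_i) M_i\bigr)$. Because the joint law of $\{-s_i\}$ equals that of $\{s_i\}$, the random variable $-\lambda_{\min}(X)$ is distributionally identical to $\lambda_{\max}(X)$ and therefore satisfies the same one-sided Tropp tail bound. A union bound over the two events $\{\lambda_{\max}(X) \ge t\}$ and $\{-\lambda_{\min}(X) \ge t\}$ then yields the factor $2m$ asserted in the proposition.

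For the expectation bound, I would use the layer-cake formula
\[
  \E \op{X} \;=\; \int_0^\infty \Pr\Paren{\op{X} \ge t}\,dt,
\]
split the integral at some threshold $t_0$, bound the $[0,t_0]$ piece trivially by $t_0$, and bound the $[t_0,\infty)$ piece using the tail inequality just established. Choosing $t_0 = \sqrt{8\sigma^2 \log d}$ (using the paper's convention for $d$ in place of $m$), a standard Gaussian-tail estimate shows that the remaining integral $\int_{t_0}^\infty 2m \exp(-t^2/2\sigma^2)\,dt$ is of order $\sigma^2/t_0$, which is dominated by $t_0$ itself, so $\E\op{X} \le \sqrt{8\sigma^2 \log d}$ after absorbing constants.

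There is essentially no obstacle: the only step with any content is the symmetrization argument reducing two-sided eigenvalue control to one-sided control, which is an elementary use of the sign-invariance of the Rademacher distribution. The tail integration for the expectation bound is routine and the constants can be tuned to produce the exact form stated.
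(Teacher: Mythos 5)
Your proposal is correct, and for the tail bound it is exactly the argument the paper intends: the paper gives no proof beyond its footnote, which is precisely your symmetrization-plus-union-bound step ($\op{X}=\max(\lambda_{\max}(X),\lambda_{\max}(-X))$, sign-invariance of the Rademacher sequence, factor $2m$). For the expectation bound your route differs slightly from the most natural reading of ``adapted from Tropp'': Tropp's Theorem 4.1.1 already contains the one-sided expectation bound $\E\,\lambda_{\max}\bigl(\sum_i s_i M_i\bigr)\le\sqrt{2\sigma^2\log m}$, so the cleanest adaptation is to apply it to both $X$ and $-X$ and use $\E\max(A,B)\le \E A+\E B$, which yields $2\sqrt{2\sigma^2\log m}=\sqrt{8\sigma^2\log m}$ with the stated constant exactly and no integration. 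Your layer-cake derivation from the tail bound also works, but it needs marginally more care than you indicate: with the threshold $t_0=\sqrt{8\sigma^2\log d}$ the bound you obtain is $t_0$ \emph{plus} a positive tail term, so to land at or below $\sqrt{8\sigma^2\log d}$ you should instead take $t_0=\sqrt{2\sigma^2\log(2m)}$ and check that $t_0+\sigma^2/t_0\le\sqrt{8\sigma^2\log m}$ (true for $m\ge 2$; the $m=1$ case is a degenerate corner where the stated bound is vacuous anyway, reflecting the paper's slightly loose use of $d$ versus $m$). This is a cosmetic point; the substance of your argument is fine.
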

\begin{corollary}
  \label{cor:rad-and-tensor}
  Let $s_1,\ldots,s_n$ be independent signs in $\{ -1, 1 \}$.
  Let $A_1,\ldots, A_n$ and $B_1, \ldots, B_n$ be Hermetian matrices.
  Then \wovp,
  \[
    \Norm{\sum_i s_i \cdot A_i \tensor B_i } \leq \tO\Paren{\max_i \|B_i\| \cdot \Norm{ \sum_i A_i^2 }^{1/2}}\mper
  \]
\end{corollary}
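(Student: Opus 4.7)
The plan is to apply the Matrix Rademacher bound (Theorem~\ref{thm:matrix-rad}) directly to the family of Hermitian matrices $\{A_i \tensor B_i\}$, so the work reduces to estimating the matrix variance proxy $\sigma^2 = \bignorm{\sum_i (A_i \tensor B_i)^2}$.

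First I would use the mixed-product property of the Kronecker product, which gives $(A_i \tensor B_i)^2 = A_i^2 \tensor B_i^2$. Each summand is PSD (being a tensor product of PSD matrices), so I can exploit the operator-order bound $B_i^2 \preceq \|B_i\|^2 \cdot I \preceq (\max_j \|B_j\|^2) \cdot I$. Tensoring with the PSD matrix $A_i^2$ preserves the ordering, giving $A_i^2 \tensor B_i^2 \preceq (\max_j \|B_j\|^2) \cdot (A_i^2 \tensor I)$. Summing over $i$ and taking spectral norms yields
\[
    \sigma^2 = \Bignorm{\sum_i A_i^2 \tensor B_i^2} \;\le\; (\max_j \|B_j\|^2) \cdot \Bignorm{\sum_i A_i^2}\mper
\]

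With this bound in hand, I would invoke Theorem~\ref{thm:matrix-rad} with $t = \sqrt{2 \sigma^2 \log^c n}$ for a sufficiently large constant $c$; this gives $\bignorm{\sum_i s_i A_i \tensor B_i} \le \tO(\max_i \|B_i\| \cdot \bignorm{\sum_i A_i^2}^{1/2})$ with probability at least $1 - n^{-\omega(1)}$, which is exactly the overwhelming-probability guarantee required. The dimension factor from Theorem~\ref{thm:matrix-rad} only contributes a polylogarithmic factor (absorbed into the $\tO$), since the ambient dimension of $A_i \tensor B_i$ is polynomial in $n$ in our applications.

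I do not anticipate any real obstacle: the only subtlety is recognizing that one should apply the operator-norm bound inside each tensor factor \emph{before} summing, so that the $\max_i \|B_i\|^2$ is pulled out cleanly and the remaining $\bignorm{\sum_i A_i^2}$ factor is retained, rather than naively bounding $\bignorm{\sum_i A_i^2 \tensor B_i^2}$ by $\sum_i \|A_i\|^2 \|B_i\|^2$, which would be far too lossy for the intended use.
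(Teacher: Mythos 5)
Your proposal is correct and follows essentially the same route as the paper: apply the matrix Rademacher bound to $\{A_i \tensor B_i\}$, compute $\sigma^2 = \bignorm{\sum_i A_i^2 \tensor B_i^2}$, and bound it by $\max_i \|B_i\|^2 \cdot \bignorm{\sum_i A_i^2}$ via the same PSD ordering argument ($B_i^2 \preceq \|B_i\|^2\Id$ tensored against the PSD matrices $A_i^2$). The only cosmetic difference is that you pull out $\max_i\|B_i\|^2$ before summing while the paper does so after; this changes nothing.
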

\begin{proof}
We use a matrix Rademacher bound and standard manipulations:
\begin{align*}
   \Norm{\sum_i s_i \cdot A_i \tensor B_i } \quad & \wovple \quad \tO\Paren{\Norm{\sum_i A_i^2 \tensor B_i^2}}^{1/2}\\
   & \leq \tO\Paren{\Norm{ \sum_i \|B_i\|^2 \cdot (A_i^2 \tensor  \Id)}}^{1/2} \quad \text{since $A_i^2$ is PSD for all $i$}\\
   & \leq \tO\Paren{\max_i \|B_i\|^2 \cdot \Norm{ \sum_i A_i^2}}^{1/2} \quad \text{since $A_i^2 \tensor \Id$ is PSD for all $i$}
\mper \qedhere
\end{align*}
\end{proof}

We also need a few further concentration bounds on matrices which will come up as parts of $\Mcross$.
These can be proved by standard inequalities for sums of independent matrices.
\begin{lemma}[Restatement of \pref{fact:GC} and \pref{lem:Mj-bound}]
  \label{lem:Mj-GC-simple}
  Let $a_1,\ldots,a_n$ be independent from $\cN(0,\tfrac 1 d \Id_d)$ with $n \geq d \polylog(d)$.
  With overwhelming probability, $\tOmega(n/d) \cdot \Id \preceq \sum_{i\in[n]} a_i a_i^\top \preceq \tO(n/d)\cdot \Id$.
  Additionally, if $g \sim \cN(0,\Id_d)$ is independent of the rest, for every $j \in [n]$ \wovp
  \[
    \Norm{\sum_{\substack{ \inn \\ i \neq j}} \iprod{g,a_i} \|a_i\|^2 \iprod{a_i, a_j} \cdot a_i a_i^\top} \leq \tO(n/d^2)^{1/2}\mper
  \]
\end{lemma}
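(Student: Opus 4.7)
The plan is to prove both parts by applying a truncated matrix Bernstein inequality (as in \pref{prop:truncated-bernstein}) to the relevant sum of independent matrix-valued random variables. The truncation step is needed because the summands are polynomial functions of Gaussians and so are only bounded in operator norm with overwhelming probability rather than almost surely.

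For the covariance estimate, I would write $S \defeq \sum_{i=1}^n a_i a_i^\top$, noting $\E S = (n/d)\Id$. Each summand has $\|a_i a_i^\top\| = \|a_i\|^2 \le \tO(1)$ \wovp by standard $\chi^2$ concentration, so after truncation we may take $R = \tO(1)$. A direct Gaussian moment computation shows $\sum_i \E (a_i a_i^\top)^2 = n\,\E[\|a_i\|^2 a_i a_i^\top] = \tO(n/d^2)\cdot \Id$, so the matrix variance is $\sigma^2 = \tO(n/d^2)$. Matrix Bernstein then yields
\[
\|S - (n/d)\Id\| \le \tO(\sqrt{n/d^2}) + \tO(1) = \tO(\sqrt n/d) + \tO(1) \qquad \text{\wovp}\mcom
\]
which is $o(n/d)$ whenever $n \ge d\polylog d$, establishing the sandwich.

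For the cross-term bound I would fix $j$ and condition on $(a_j,g)$, which are independent of $\{a_i\}_{i\neq j}$; after this conditioning the summands $X_i \defeq \iprod{g,a_i}\|a_i\|^2\iprod{a_i,a_j}\cdot a_i a_i^\top$ become independent. Standard Gaussian tail bounds (with a union bound over $i$) give $|\iprod{g,a_i}| \le \tO(1)$, $\|a_i\|^2 \le \tO(1)$, and $|\iprod{a_i,a_j}| \le \tO(1/\sqrt d)$ \wovp, hence after truncation $\|X_i\| \le R \defeq \tO(1/\sqrt d)$. For the variance parameter, the substitution $a_i = z/\sqrt d$ with $z \sim \cN(0,\Id)$ and a test direction $v$ gives
\[
v^\top (\E X_i^2)\,v = \frac{1}{d^6}\,\E\bigl[\iprod{g,z}^2\iprod{a_j,z}^2\iprod{v,z}^2\|z\|^6\bigr]\mper
\]
Since $\|z\|^6$ concentrates around $d^3$ and Isserlis' theorem bounds the remaining sixth moment by $\tO(\|g\|^2\|a_j\|^2\|v\|^2) = \tO(d)$ (using $\|g\|^2 \le \tO(d)$ and $\|a_j\|^2 \le \tO(1)$), we get $\|\E X_i^2\| \le \tO(1/d^2)$, so $\sigma^2 \defeq \|\sum_{i\neq j}\E X_i^2\| \le \tO(n/d^2)$. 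Matrix Bernstein then gives
\[
\Bignorm{\sum_{i\neq j} X_i - \E\sum_{i\neq j} X_i} \le \tO(\sqrt{n/d^2}) + \tO(1/\sqrt d) = \tO(\sqrt{n/d^2})\mper
\]
A brief Isserlis calculation identifies $\E\sum_{i\neq j} X_i = (n-1)\tfrac{d+4}{d^3}(\iprod{g,a_j}\Id + g a_j^\top + a_j g^\top)$, whose operator norm is $\tO(n/d^{5/2})$ --- negligible compared to $\tO(\sqrt n/d)$ for $n \le d^{4/3}$. A final union bound over $j\in[n]$ loses only a $\polylog n$ factor.

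The main technical step is obtaining the tight variance estimate $\|\E X_i^2\| \le \tO(1/d^2)$: cruder bounds, such as pulling out $\max_i |c_i|$ and using the covariance estimate from the first part, would only yield the weaker $\tO(n/d^{3/2})$, which is insufficient in the relevant regime $n \approx d^{4/3}$. Getting the right exponent requires carefully tracking the concentration of $\|z\|^6$ around $d^3$ and exploiting the independence of $\iprod{v,z}^2$ from the other factors in the Isserlis expansion.
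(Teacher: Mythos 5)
Your overall strategy coincides with the paper's for both parts (truncated matrix Bernstein via \pref{prop:truncated-bernstein}; for the cross-term matrix you condition on $(g,a_j)$ to make the summands independent and compute the variance proxy $\sigma^2 = \tO(n/d^2)$, exactly as in the paper's proof of \pref{lem:Mj-bound}), and you are right --- indeed more careful than the paper --- to flag that the summands $X_i$ have a nonzero conditional mean which Bernstein does not control. The breakdown is in your estimate of that mean. Your Isserlis formula $\E\sum_{i\neq j}X_i = (n-1)\tfrac{d+4}{d^3}\bigl(\iprod{g,a_j}\Id + g a_j^\top + a_j g^\top\bigr)$ is correct, but its operator norm is not $\tO(n/d^{5/2})$: the rank-one pieces contribute $\|g a_j^\top + a_j g^\top\| \approx 2\|g\|\,\|a_j\| \approx 2\sqrt{d}$, and the prefactor is $\Theta(n/d^2)$, so the mean has norm $\tTheta(n/d^{3/2})$ (you seem to have dropped the $(d+4)$ factor or treated $\|g\|$ as $O(1)$). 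In the lemma's regime $n \ge d\,\polylog d$ one has $n/d^{3/2} \ge \sqrt{n}/d$, so the mean term dominates the Bernstein fluctuation $\tO(\sqrt{n}/d)$ instead of being negligible, and your argument does not reach the claimed $\tO(n/d^2)^{1/2}$ bound.

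This is not merely a reparable slip: the stated bound is itself too strong once $n \gg d$. Testing against $u = g/\|g\|$ and $v = a_j/\|a_j\|$ gives $u^\top M_j v = \tfrac{1}{\|g\|\,\|a_j\|}\sum_{i\neq j}\iprod{g,a_i}^2\|a_i\|^2\iprod{a_i,a_j}^2$, a sum of nonnegative terms of typical size $1/d$, so with high probability $\|M_j\| \ge \tOmega(n/d^{3/2})$, which exceeds $\tO(\sqrt{n}/d)$ for $n \gg d$. The paper's own proof of \pref{lem:Mj-bound} shares this defect: it invokes \pref{prop:truncated-bernstein}, which only controls $\|M_j - \E M_j\|$, and silently ignores $\E M_j$. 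The honest conclusion of either argument is $\|M_j\| \le \tO(n/d^{3/2})$, which is in fact all that the application needs --- in the proof of \pref{prop:cross-terms} it gives $\|\Msame\| \le \tO(n/d^{3/2})\cdot\tO(n/d)^{1/2} = \tO(n^{3/2}/d^2)$, leaving the final $\tO(n^3/d^4)^{1/2}$ bound intact. Separately, in your first part the intermediate claim $\E[\|a_i\|^2 a_i a_i^\top] = \tO(1/d^2)\cdot\Id$ is off by a factor of $d$ (it equals $\tfrac{d+2}{d^2}\Id = \Theta(1/d)\cdot\Id$, so $\sigma^2 = \Theta(n/d)$ and the deviation is $\tO(\sqrt{n/d})$), but this is harmless since $\tO(\sqrt{n/d}) = o(n/d)$ when $n \ge d\,\polylog d$, matching \pref{fact:GC}.
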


\begin{proof}[Proof of \pref{prop:cross-terms}]
  We expand $\Mcross$:
    \begin{align*}
	\Mcross
	&= \sum_{i\neq j} \iprod{g, T(a_i\tensor a_j)} \cdot a_ia_i^\top \tensor a_j a_j^\top\\
	&= \sum_{i\neq j} \Paren{ \sum_{\ell\in[n]} \iprod{a_\ell,a_i}\iprod{a_\ell,a_j} \iprod{g,a_\ell}} \cdot a_i a_i^\top \tensor a_j a_j^\top\mper
        \intertext{Since the joint distribution of $(a_1,\ldots,a_n)$ is identical to that of $(s_1a_1,\ldots,s_na_n)$, this is distributed identically to}
	\Mcross' &= \sum_{\ell\in[n]} \sum_{i\neq j} s_i s_j s_\ell \iprod{g,a_\ell} \iprod{a_\ell,a_i}\iprod{a_\ell,a_j}  \cdot a_i a_i^\top \tensor a_j a_j^\top\mcom
    \end{align*}
    (where we have also swapped the sums over $\ell$ and $i \neq j$).
    We split $\Mcross'$ into $\Mdiff$, for which $i\neq \ell$ and $j\neq \ell$, and $\Msame$, for which $\ell = i$ or $\ell = j$, and bound the norm of each of these sums separately.
    We begin with $\Msame$.
\begin{align*}
    \Msame & \defeq \sum_{i\neq j} s_i^2 s_j \iprod{g,a_i} \iprod{a_i,a_i}\iprod{a_i,a_j}  \cdot a_i a_i^\top \tensor a_j a_j^\top
     + \sum_{i \neq j} s_j^2 s_i \iprod{g,a_j} \iprod{a_j,a_j}\iprod{a_i,a_j}  \cdot a_i a_i^\top \tensor a_j a_j^\top\mper
\end{align*}
By a union bound and an application of the triangle inequality it will be enough to show that just one of these two sums is $\tO(n^3/d^4)^{1/2}$ \wovp.
We rewrite the left-hand one:
\[
  \sum_{i\neq j} s_i^2 s_j \iprod{g,a_i} \iprod{a_i,a_i}\iprod{a_i,a_j}  \cdot a_i a_i^\top \tensor a_j a_j^\top
   = \sum_{j\in[n]} s_j a_j a_j^\top \tensor \Paren{\sum_{i\neq j} \iprod{g,a_i} \|a_i\|^2 \iprod{a_i,a_j}  \cdot a_i a_i^\top}\mper
\]
Define
\[
  M_j \defeq \sum_{i\neq j} \iprod{g,a_i} \|a_i\|^2 \iprod{a_i,a_j}  \cdot a_i a_i^\top
\]
so that now we need to bound $\sum_{j \in [n]} s_j a_j a_j^\top \tensor M_j$.
By \pref{cor:rad-and-tensor},
\begin{align*}
    \Norm{\sum_{j\in[n]} s_j a_j a_j^\top \tensor M_j}
    \quad & \wovple \quad \tO(\max_j \|M_j\|) \cdot \tO\Paren{\Norm{\sum_{j \in [n]} \|a_j\|^2 a_j a_j^\top}^{1/2}}\\
    & \leq \tO(\max_j \|M_j\|) \cdot \max_j \|a_j\| \cdot \tO\Paren{\Norm{\sum_{j \in [n]} a_j a_j^\top}^{1/2}}\\
    \intertext{In \pref{lem:Mj-GC-simple}, we bound $\max_j \|M_j\| \leq \tO(n/d^2)^{1/2}$ \wovp using a matrix Bernstein inequality. Combining this bound with the concentration of $\|a_j\|$ around 1 (\pref{fact:SIP-simple}), we obtain}
    & \wovple \quad \tO(n/d^2)^{1/2} \cdot \tO(n/d)^{1/2} \\
    & = \tO(n/d^{1.5})\mper
\end{align*}
Having finished with $\Msame$, we turn to $\Mdiff$.
\begin{align*}
    \Norm{\Mdiff}
    &= \Norm{\sum_{\ell\neq i\neq j} s_i s_j s_\ell \iprod{g,a_\ell} \iprod{a_\ell,a_i}\iprod{a_\ell,a_j}  \cdot a_i a_i^\top \tensor a_j a_j^\top}\\
    &= \Norm{\sum_{\ell} s_\ell \iprod{g,a_\ell}  \Paren{\sum_{i\neq \ell} s_i \iprod{a_\ell,a_i}a_i a_i^\top \tensor \Paren{\sum_{j\neq \ell,i}s_j \iprod{a_\ell,a_j}a_j a_j^\top}}}\mper
\end{align*}
Letting $t_1,\ldots,t_n$ and $r_1,\ldots,r_n$ be independent uniformly random signs, by \pref{thm:decoupling}, it will be enough to bound the spectral norm after replacing the second and third occurrences of $s_i$ for $t_i$ and $r_i$.
To this end, we define
\[
  \Mdiff' \defeq \sum_{\ell} s_\ell \iprod{g,a_\ell} \Paren{\sum_{i\neq \ell} t_i \iprod{a_\ell,a_i}a_i a_i^\top \tensor \Paren{\sum_{j\neq \ell,i}r_j \iprod{a_\ell,a_j} a_j a_j^\top}}\mper
\]
Let
\[
  N_\ell \defeq \sum_{i\neq \ell} t_i \iprod{a_\ell,a_i}a_i a_i^\top \tensor \Paren{\sum_{j\neq \ell,i}r_j \iprod{a_\ell,a_j} a_j a_j^\top}
\]
so that we are to bound $\Norm{\sum_{\ell \in [n]} s_\ell \iprod{g,a_\ell} \cdot N_\ell}$.
By a matrix Rademacher bound and elementary manipulations,
\begin{align*}
  \Norm{\sum_{\ell \in [n]} s_\ell \iprod{g, a_\ell} \cdot N_\ell} \quad & \wovple \quad \tO\Paren{\Norm{\sum_{\ell \in [n]} \iprod{g, a_\ell}^2 \cdot N_\ell^2}}^{1/2}\\
  & \leq \tO(\sqrt n) \cdot \max_{\ell \in [n]} |\iprod{g, a_\ell}| \cdot \max_{\ell \in [n]} \|N_\ell\|\\
  & \wovple \tO(\sqrt n) \cdot \max_{\ell \in [n]} \|N_\ell\| \quad \text{since $|\iprod{g, a_i}| \leq \tO(1)$ (\pref{fact:SIP-simple})}\mper
\end{align*}
The rest of the proof is devoted to bounding $\|N_\ell \|$.

We start with \pref{cor:rad-and-tensor} to get
\[
\|N_\ell\| \quad \wovple \quad \tO\Paren{\Paren{ \max_i \Norm{\sum_{j\neq \ell,i} r_j \iprod{a_\ell,a_j}\cdot a_j a_j^\top}}
  \cdot \Norm{\sum_{i \neq \ell} \iprod{a_\ell, a_i}^2 \|a_i\|^2 \cdot a_i a_i^\top}^{1/2}}
\]
We use a matrix Rademacher bound for the left-hand matrix,
\begin{align*}
    \Norm{\sum_{j\neq \ell,i} r_j \iprod{a_\ell,a_j} \cdot a_j a_j^\top}
    \quad & \wovple \quad \tO\Paren{\Norm{\sum_{j\neq \ell,i} \iprod{a_\ell,a_j}^2 \|a_j\|^2 \cdot a_j a_j^\top}}^{1/2}\\
    & \leq \tO\Paren{{\max_{j\neq \ell} \iprod{a_\ell,a_j}^2 \|a_j\|^2 \Norm{\sum_{j} a_j a_j^\top}}}^{1/2}\\
    & \wovple \quad \tO\Paren{\frac{\sqrt{n}}{d}},
\end{align*}
where we have used that $\iprod{a_\ell,a_i}^2$ concentrates around $\frac{1}{d}$ (\pref{fact:SIP-simple}), that $\|a_i\|^2$ concentrates around 1 (\pref{fact:SIP-simple}), and that $\Norm{\sum_{i} a_i a_i^\top}$ concentrates around $\tfrac{n}{d}$ (\pref{lem:Mj-GC-simple}) within logarithmic factors all with overwhelming probability.

For the right-hand matrix, we use the fact that the summands are PSD to conclude that
\begin{align*}
\Norm{\sum_{i \neq \ell} \iprod{a_\ell, a_i}^2 \|a_i\|^2 \cdot a_i a_i^\top}
  & \leq \max_{i \neq \ell} \iprod{a_\ell, a_i}^2 \|a_i\|^2 \cdot \Norm{\sum_{i \neq \ell} a_i a_i^\top}\\
  & \wovple \quad \tO(1/d) \cdot \tO(n/d)\mcom
\end{align*}
using the same concentration facts as earlier.

Putting these together, \wovp
\[
  \|N_\ell\| \leq \tO(\sqrt n /d) \cdot \tO(\sqrt n /d) = \tO(n/d^2)\mper
\]
Now we are ready to make the final bound on $\Mdiff'$.
With overwhelming probability,
\[
  \|\Mdiff'\| \leq \tO(\sqrt n) \cdot \max_{\ell \in [n]} \|N_\ell\| \leq \tO(n^3 /d^4)^{1/2}
\]
and hence by \pref{thm:decoupling}, $\|\Mdiff\| \leq \tO(\sqrt n) \cdot \max_{\ell \in [n]} \|N_\ell\| \leq \tO(n^3 /d^4)^{1/2}$ \wovp.

Finally, by triangle inequality and all our bounds thus far, \wovp
\[
  \|\Mcross\| \leq  \|\Msame\| + \|\Mdiff\| \leq \tO(n/d^{1.5}) + \tO(n^3/d^4)^{1/2} \leq \tO(n^3/d^4)^{1/2}\mper\qedhere
\]
\end{proof}

\subsection{Full algorithm and proof of \pref{thm:tdecomp-alg}}\label{sec:alg-details}
In this subsection we give the full details of our tensor decomposition algorithm.
As discussed above, the algorithm proceeds by constructing a random matrix from the input tensor, then computing and post-processing its top eigenvector.

\begin{center}
\fbox{\begin{minipage}{\textwidth}
\begin{center}\textbf{Spectral Tensor Decomposition (One Attempt)}\end{center}
This is the main subroutine of our algorithm---we will run it $\tO(n)$ times and show that this recovers all of the components $a_1,\ldots,a_n$.
\begin{algo}
\label{alg:four-thirds}
Input: $\bT = \sum_{i = 1}^n a_i \tensor a_i \tensor a_i$.
Goal: Recover $a_i$ for some $i\in[n]$.
\begin{itemize}
\item
  Compute the matrix unfolding $T \in \R^{d^2\times d}$ of $\bT$.
  Then compute a $3$-tensor $\bS \in \R^{d^2 \times d^2 \times d^2}$ by starting with the $6$-tensor $\bT \tensor \bT$, permuting indices, and flattening to a $3$-tensor.
  Apply $T$ in one mode of $\bS$ to obtain $\bM \in \R^{d\tensor d^2 \tensor d^2}$, so that:
  \begin{align*}
    T = \sum_{i\in[n]} a_i(a_i\tensor a_i)^\top\mcom \qquad
    \bS = \bT^{\tensor 2} = \sum_{i,j = 1}^n (a_i \tensor a_j)^{\tensor 3}\mcom
  \end{align*}
  \[
	  \bM = \bS(T, \Id_{d^2}, \Id_{d^2})
      = \sum_{i,j \in [n]} T (a_i \tensor a_j) \tensor (a_i \tensor a_j) \tensor (a_i \tensor a_j) \mper
  \]
\item Sample a vector $g \in \R^{d}$ with iid standard gaussian entries.
  Evaluate $\bM$ in its first mode in the direction of $g$ to obtain $M \in \R^{d^2\times d^2}$:
    \[
	M := \bM(g,\Id_{d^2},\Id_{d^2}) = \sum_{i,j \in [n]} \langle g, T (a_i \tensor a_j)\rangle \cdot (a_i \tensor a_j)(a_i \tensor a_j)^\top
    \]
\item Let $\Sigma \defeq \E[(aa^\top)^{\tensor 2}]$ for ${a\sim\cN(0,\Id_d)}$.
    Let $\signh \defeq \sqrt{2}\cdot (\Sigma^{+})^{1/2}$.
Compute the top eigenvector $u\in \R^{d^2}$ of $\signh M\signh$, and reshape $\signh u$ to a matrix $U \in \R^{d\times d}$.
\item For each of the signings of the top 2 unit left (or right) singular vectors
    ${\pm}u_1, {\pm}u_2$ of $U$,
      check if $\sum_{i \in [n]} \iprod{a_i,{\pm}u_j}^3 \geq 1 - c(n,d)$ where $c(n,d) = \Theta(n/d^{3/2})$ is an appropriate threshold.
      If so, output ${\pm}u_j$.
      Otherwise output nothing.
\end{itemize}
\end{algo}
\end{minipage}}
\end{center}

\pref{thm:tensor-decomp-main} gets us most of the way to the correctness of \pref{alg:four-thirds}, proving that the top eigenvector of the matrix $RMR$ is correlated with some $a_i^{\tensor 2}$ with reasonable probability.
We need a few more ingredients to prove \pref{thm:tdecomp-alg}.
First, we need to show a bound on the runtime of \pref{alg:four-thirds}.
\begin{lemma}
  \label{lem:tdecomp-implementation}
  \pref{alg:four-thirds} can be implemented in time $\tO(d^{1+\omega})$,
  where $d^{\omega}$ is the runtime for multiplying two $d \times d$ matrices.
  It may also be implemented in time $\tO(d^{3.257})$.
\end{lemma}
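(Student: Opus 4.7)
The plan is to show that each step of \pref{alg:four-thirds} can be executed within the claimed budget, with the dominant cost being the repeated applications of $RMR$ during power iteration. First, I would invoke \pref{thm:tensor-decomp-main} with parameter $\epsilon = 1/\log n$, so that $RMR$ has spectral gap $\lambda_2/\lambda_1 \le 1 - \Omega(1/\log n)$; by the standard analysis of power iteration, this implies that $\tilde O(1)$ matrix--vector products against $RMR$ suffice to recover its top eigenvector to inverse-polynomial precision. Applying $R$ to a vector costs only $O(d^2)$ using the explicit form of $\Sigma$ from \pref{fact:sigma-simple} (essentially, $R$ is a symmetrization map with a rank-one correction), so the runtime hinges on the cost of a single matrix--vector product $Mv$ with $v \in \R^{d^2}$.

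The key step is to express $M$ directly in terms of the entries of $\bT$, never forming the $d^2 \times d^2$ matrix $M$ explicitly (which would already cost $\Omega(d^4)$). Tracing through the definitions of $\bS$, $T$, and $\bM$ in \pref{alg:four-thirds}, and using that $\bT$ is symmetric in its three modes, I would verify that
\[
   M_{(q_1,q_2),(r_1,r_2)} \;=\; \sum_{p_1,p_2} U_{p_1,p_2}\, \bT_{p_1,q_1,r_1}\, \bT_{p_2,q_2,r_2}, \qquad \text{where } U_{p_1,p_2} := \sum_\ell g_\ell\, \bT_{\ell,p_1,p_2}.
\]
Forming $U \in \R^{d \times d}$ takes $O(d^3)$. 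For the product $Mv$, letting $V \in \R^{d \times d}$ be the reshape of $v$, I would then contract one mode at a time, producing only intermediate tensors of size $d^3$: first $D_{p_2,q_2,r_1} = \sum_{r_2} \bT_{p_2,q_2,r_2}\, V_{r_1,r_2}$, then $C_{p_1,r_1,q_2} = \sum_{p_2} U_{p_1,p_2}\, D_{p_2,q_2,r_1}$, and finally $(Mv)_{(q_1,q_2)} = \sum_{p_1,r_1} \bT_{p_1,q_1,r_1}\, C_{p_1,r_1,q_2}$. After the appropriate reshape each of these is a multiplication of a $d^2 \times d$ matrix by a $d \times d$ matrix (or vice versa); splitting the tall side into $d$ blocks of size $d \times d$ and invoking fast square matrix multiplication gives cost $O(d^{1+\omega})$ per contraction. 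For the alternative bound $\tilde O(d^{3.257})$, I would instead apply the best known rectangular matrix multiplication algorithm directly on each $d^2 \times d$ by $d \times d$ product.

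The remaining steps are all strictly cheaper. Forming $T$ and sampling $g$ are $O(d^3)$ and $O(d)$ respectively; the top two left (or right) singular vectors of the $d \times d$ reshape of $Ru$ can be found in $\tilde O(d^2)$ by another power iteration (exploiting a spectral gap implied by \pref{thm:tensor-decomp-main}) or in $O(d^3)$ by any standard dense SVD; and for each of the constantly many candidate signings, evaluating $\sum_i \iprod{a_i,u_j}^3 = \bT(u_j,u_j,u_j)$ is a single cubic-time tensor contraction. Summing across the $\tilde O(1)$ power-iteration steps then yields the two claimed runtime bounds. The main point that warrants care is exhibiting the three-step factorization of $Mv$ without ever materializing an object of size exceeding $d^3$; once this is in hand, the remaining cost accounting is routine bookkeeping against standard fast matrix multiplication bounds.
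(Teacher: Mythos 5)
Your proposal is correct and follows essentially the same route as the paper: the matrix--vector product $Mv$ is computed implicitly by contracting $\bT$ against the reshaped vector $V$ and the reshaped $gT$ (your $U$ is exactly the paper's $G$, and your three-step contraction reproduces the paper's $T_v T^\top$ identity), with each contraction a $d^2\times d$ by $d\times d$ product handled either by $d$-blocking into square multiplies ($O(d^{1+\omega})$) or by fast rectangular multiplication ($O(d^{3.257})$), and power iteration converges in polylogarithmically many steps thanks to the $1-O(1/\log n)$ spectral gap from \pref{thm:tensor-decomp-main} with $\epsilon = 1/\log n$. The remaining bookkeeping (applying $R$, the SVD of the $d\times d$ reshape, and the $O(d^3)$ check of $\sum_i \iprod{a_i,u_j}^3$) matches the paper's accounting, with your $O(d^2)$ cost for applying $R$ being a slight sharpening of the paper's $O(d^3)$ bound.
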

\begin{proof}
    To run the algorithm, we only require access to power iteration using the matrix $RMR$.
    We first give a fast implementation for power iteration with the matrix $M$, and handle the multiplications with $R$ separately.

    Consider a vector $v \in \R^{d^2}$, and a random vector $g \sim \cN(0,\Id_d)$, and let $V,G \in \R^{d\times d}$ be the reshapings of $v$ and $gT$ respectively into matrices.
    Call $\bT_v = \bT(\Id_d,V,G)$, where we have applied $V$ and $G$ in the second and third modes of $\bT$, and call $T_v$ the reshaping of $\bT_v$ into a $d\times d^2$ matrix.
    We have that
    \[
	T_v = \sum_{i\in[n]}  a_i(V a_i \tensor Ga_i)^\top
    \]
    We show that the matrix-vector multiply $Mv$ can be computed as a flattening of the following product:
    \begin{align*}
	T_v T^\top
	&= \Paren{\sum_{i\in[n]}a_i (Va_i \tensor Ga_i)^\top}\Paren{\sum_{j\in[n]} (a_j \tensor a_j)a_j^\top}\\
	&= \sum_{i,j\in[n]} \iprod{a_j, Va_i}\cdot \iprod{a_j, Ga_i}\cdot a_i a_j^\top\\
	&= \sum_{i,j\in[n]} \iprod{a_i \tensor a_j,v} \cdot \iprod{gT, a_i \tensor a_j}\cdot a_i a_j^\top
    \end{align*}
    Flattening $T_v T^{\top}$ from a $d \times d$ matrix to a vector $v_{TT}\in \R^{d^2}$, we have that
    \[
	v_{TT} = \sum_{i,j\in[n]} \iprod{gT, a_i\tensor a_j} \cdot \iprod{a_i\tensor a_j, v}\cdot a_i \tensor a_j = Mv.
    \]
So we have that $Mv$ is a flattening of the product $T_v T^\top$, which we will compute as a proxy for computing $Mv$ via direct multiplication.

Computing $T_v = \bT(\Id,V,G)$ can be done with two matrix multiplication operations,
both times multiplying a $d^2 \times d$ matrix with a $d \times d$ matrix.
Computing $T_v T^\top$ is a multiplication of a $d \times d^2$ matrix by a $d^2 \times d$ matrix.
Both these steps may be done in time $O(d^{1 + \omega})$, by regarding the $d \times d^2$ matrices
as block matrices with blocks of size $d \times d$.
Alternatively, the asymptotically fastest known algorithm for rectangular matrix multiplication
gives a time of $O(d^{3.257})$ \cite{6375330}.

Now, to compute the matrix-vector multiply $RMRu$ for any vector $u\in \R^{d^2}$, we may first compute $v = Ru$, perform the operation $Mv$ in time $O(d^{1+\omega})$ as described above, and then again multiply by $R$.
The matrix $R$ is sparse: it has $O(d)$ entries per row (see \pref{fact:sigma}), so the multiplication $Ru$ requires time $O(d^3)$.

Performing the update $RMRv$ a total of $O(\log^2 n)$ times is sufficient for convergence, as we have that with reasonable probability, the spectral gap $\lambda_2(RMR)/\lambda_1(RMR) \le 1-O(\tfrac{1}{\log n})$, as a result of applying \pref{thm:tensor-decomp-main} with the choice of $\epsilon = O(\tfrac{1}{\log n})$.

Finally, checking the value of $\sum_i \iprod{a_i, x}^3$ requires $O(d^3)$ operations, and we do so a constant number of times, once for each of the signings of the top 2 left (or right) singular vectors of $U$.
\end{proof}

Next, we need to show that given $u$ with $\iprod{Ru, a_i \tensor a_i}^2 \geq (1 - \tO(n^{3/2}/\epsilon d^2)) \cdot \|u\|^2 \cdot \|a_i\|^4$ we can actually recover the tensor component $a_i$.
Here \pref{alg:four-thirds} reshapes $Ru$ to a $d \times d$ matrix and checks the top two left- or right-singular vectors; the next lemma shows one of these singular vectors must be highly correlated with $a_i$.
(The proof is deferred to \pref{sec:linalg}.)
\begin{lemma}
\torestate{
\label{lem:symflat}
    Let $M \in \R^{d^2 \times d^2}$ be a symmetric matrix with $\|M\| \le 1$, and let $v \in \R^d$ and $u \in \R^{d^2}$ be vectors.
    Furthermore, let $U$ be the reshaping of the vector $Mu \in \R^{d^2}$ to a matrix in $\R^{d \times d}$.
    Fix $c > 0$, and suppose that $\iprod{Mu, v \tensor v}^2 \ge c^2 \cdot\|u\|^2 \cdot \|v\|^4$.
    Then $U$ has some left singular vector $a$ and some right singular vector $b$ such that
    \[
	|\iprod{a,v}|, |\iprod{b,v}| \ge c \cdot \|v\|\mper
    \]
    Furthermore, for any $0 < \alpha < 1$, there are $a',b'$ among the top $\lfloor \tfrac{1}{\alpha c^2}\rfloor$ singular vectors of $U$ with
    \[
	|\iprod{a',v}|, |\iprod{b',v}| \ge \sqrt{1-\alpha} \cdot c \cdot \|v\| \mper
    \]
    If $c \ge \sqrt{\tfrac{1}{2}(1 + \eta)}$ for some $\eta > 0$, then $a,b$ are amongst the top $\lfloor\frac{(1+\eta)}{\eta c^2} \rfloor $ singular vectors.
}
\end{lemma}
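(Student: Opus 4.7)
The plan is to reduce the statement to a clean linear-algebraic fact about singular vectors of $U$, and then apply Cauchy-Schwarz together with the Frobenius-norm constraint. First, note the reshape identity $\iprod{Mu, v \tensor v} = v^\top U v$, so the hypothesis becomes $|v^\top U v|^2 \ge c^2 \|u\|^2 \|v\|^4$. Since $\|M\| \le 1$, we have $\|U\|_F = \|Mu\|_2 \le \|u\|_2$. After normalizing $\|u\| = \|v\| = 1$, the hypothesis reads $|v^\top Uv| \ge c$ while $\|U\|_F \le 1$, and the task is to find left/right singular vectors of $U$ correlated with $v$.

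Write the SVD $U = \sum_j \sigma_j a_j b_j^\top$ and set $\alpha_j \seteq \iprod{v,a_j}$, $\beta_j \seteq \iprod{v,b_j}$, so the hypothesis becomes $\bigl|\sum_j \sigma_j \alpha_j \beta_j\bigr| \ge c$. Applying Cauchy-Schwarz with the grouping $(\sigma_j \beta_j) \cdot \alpha_j$, and using $\sum_j \alpha_j^2 \le \|v\|^2 = 1$ by orthonormality of $\{a_j\}$, gives $c^2 \le \sum_j \sigma_j^2 \beta_j^2$; swapping the roles of $\alpha$ and $\beta$ yields $c^2 \le \sum_j \sigma_j^2 \alpha_j^2$. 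Combined with $\sum_j \sigma_j^2 = \|U\|_F^2 \le 1$, these imply $\max_j \alpha_j^2 \ge c^2$ and $\max_j \beta_j^2 \ge c^2$, proving the first assertion.

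For the second assertion, suppose for contradiction that every one of the top $k$ right singular vectors satisfied $\beta_j^2 < (1-\alpha) c^2$. Split $\sum_j \sigma_j^2 \beta_j^2$ into the top-$k$ and remaining contributions. For the tail, use $(k+1)\sigma_{k+1}^2 \le \sum_{j \le k+1} \sigma_j^2 \le 1$ to conclude $\sigma_{k+1}^2 \le 1/(k+1)$, and bound $\sum_{j > k} \sigma_j^2 \beta_j^2 \le \sigma_{k+1}^2 \sum_{j > k} \beta_j^2 \le \sigma_{k+1}^2$. This yields $c^2 \le (1-\alpha) c^2 + \sigma_{k+1}^2$, so $\sigma_{k+1}^2 > \alpha c^2$. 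Incompatible with $\sigma_{k+1}^2 \le 1/(k+1)$ once $k+1 \ge 1/(\alpha c^2)$, which yields the promised index bound (symmetric argument for left singular vectors).

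For the third assertion, the main technical step is a sharper estimate exploiting orthogonality. Let $k$ be the index of a right singular vector with $\beta_k^2 \ge c^2$. Then $\sum_{j \ne k} \beta_j^2 \le 1 - \beta_k^2$ and $\sum_{j \ne k} \sigma_j^2 \le 1 - \sigma_k^2$, giving
\[
c^2 \le \sigma_k^2 \beta_k^2 + \sum_{j \ne k} \sigma_j^2 \beta_j^2 \le \sigma_k^2 \beta_k^2 + (1-\beta_k^2)(1-\sigma_k^2).
\]
The right-hand side is affine in $\beta_k^2$ with slope $2\sigma_k^2 - 1$. Case analysis on the sign of this slope, together with $\beta_k^2 \in [c^2, 1]$ and the hypothesis $c^2 \ge (1+\eta)/2$, forces $\sigma_k^2$ to exceed a positive threshold depending on $\eta$ and $c$. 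Finally, $k \sigma_k^2 \le \sum_{j \le k} \sigma_j^2 \le 1$ translates this lower bound on $\sigma_k^2$ into the stated upper bound $k \le \lfloor (1+\eta)/(\eta c^2)\rfloor$ (using $c \le 1$ to make the constant convenient). The hard part is this last case analysis; everything else is a transparent application of Cauchy-Schwarz and the constraints $\sum \alpha_j^2, \sum \beta_j^2, \sum \sigma_j^2 \le 1$.
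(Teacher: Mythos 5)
Your proof is correct. The first two claims track the paper's argument closely: the same reshaping identity, Cauchy--Schwarz against $\|U\|_F=\|Mu\|\le\|u\|$, and then a counting argument on singular values --- your prefix-plus-tail contradiction using $\sigma_{k+1}^2\le 1/(k+1)$ is just different bookkeeping from the paper's device of thresholding the $\sigma_i^2$ at $\alpha c^2\|U\|_F^2$ and bounding the size of the set above threshold. Your handling of the third claim, however, is genuinely different. The paper gets it by re-running the second claim with $\alpha=\eta/(1+\eta)$, so that $(1-\alpha)c^2\ge\tfrac12$, and then observing that a unit vector cannot have squared correlation larger than $\tfrac12$ with two distinct orthonormal singular vectors, which forces the vector found inside the top block to coincide with the globally best one from the first claim. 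You instead lower-bound the singular value attached to the correlated index directly via $c^2\le \sigma_k^2\beta_k^2+(1-\sigma_k^2)(1-\beta_k^2)$. The case analysis you deferred as ``the hard part'' does close, and is in fact short: if $\sigma_k^2<\tfrac12$ the right-hand side is decreasing in $\beta_k^2$, so substituting $\beta_k^2=c^2$ and cancelling the factor $(1-\sigma_k^2)$ gives $c^2\le 1-c^2$, contradicting $c^2\ge\tfrac12(1+\eta)$; hence $\sigma_k^2\ge\tfrac12$ and $k\le 2\le\lfloor(1+\eta)/(\eta c^2)\rfloor$, the last comparison using $c\le 1$ (which Cauchy--Schwarz forces from the hypothesis) and hence $\eta\le 1$. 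Indeed, solving the same inequality for $\sigma_k^2$ and minimizing $(c^2+t-1)/(2t-1)$ over $t\in[c^2,1]$ gives the sharper bound $\sigma_k^2\ge c^2$, so your route actually proves more than the lemma asks: when $c^2>\tfrac12$ the correlated left and right singular vectors are the top ones (up to ties), not merely among the top $\lfloor(1+\eta)/(\eta c^2)\rfloor$. The trade-off is that the paper's argument reuses its second claim verbatim and needs only the orthonormality observation, while yours needs the extra rank-one-versus-rest split of $\sum_j\sigma_j^2\beta_j^2$; both are elementary, and yours yields a quantitatively stronger conclusion.
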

Since here $c^2 = 1 - o(1)$, we can choose $\eta = 1 - o(1)$ and check only the top $2$ singular vectors.

Next, we must show how to choose the threshold $c(n,d)$ so that a big enough value $\sum_{i \in [n]} \iprod{a_i, u_j}^3$ is ensures that $u_j$ is close to a tensor component.
The proof is at the end of this section.
(A very similar fact appears in \cite{DBLP:conf/approx/GeM15}. We need a somewhat different parameterization here, but we reuse many of their results in the proof.)
\begin{lemma}\label{lem:tensor-check-success}
    Let $T = \sum_{i\in[n]} a_i \tensor a_i \tensor a_i$ for normally distributed vectors $a_i \sim \cN(0,\tfrac{1}{d}\Id_d)$.
    For all $0 < \gamma, \gamma' < 1$,
    \begin{enumerate}
      \item
    With overwhelming probability, for every $v \in \R^d$ such that $\sum_{i\in[n]} \iprod{a_i, v}^3 \ge 1 -\gamma$,
    \[
	\max_{i\in[n]} |\iprod{a_i,v}| \ge 1 - O(\gamma) - \tO(n/d^{3/2})\mper
    \]

  \item
    With overwhelming probability over $a_1,\ldots,a_n$ if $v \in \R^d$ with $\|v\| = 1$ satisfies $\iprod{v,a_j} \geq 1 - \gamma'$ for some $j$ then $\sum_i \iprod{a_i, v}^3 \geq 1 - O(\gamma') - \tO(n/d^{3/2})$.
\end{enumerate}
\end{lemma}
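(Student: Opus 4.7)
The plan is to handle Part 2 first (this is the ``if $v$ is close to some $a_j$ then the tensor value is large'' direction, which is more straightforward) and then bootstrap the key expansion for Part 1.

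For Part 2, I would decompose $v = \alpha \hat a_j + w$ where $\hat a_j = a_j/\|a_j\|_2$ and $w \perp a_j$. Using $\|a_j\|_2 = 1 \pm \tO(1/\sqrt d)$ from \pref{fact:SIP-simple}, the hypothesis $\iprod{v,a_j} \geq 1-\gamma'$ yields $\alpha \geq 1 - O(\gamma') - \tO(1/\sqrt d)$ and hence $\|w\|^2 \leq O(\gamma') + \tO(1/\sqrt d)$. The principal contribution is $\iprod{v,a_j}^3 \geq (1-\gamma')^3 \geq 1 - 3\gamma'$. For $i \neq j$, I would cube-expand $\iprod{v,a_i}^3 = (\alpha \iprod{\hat a_j, a_i} + \iprod{w, a_i})^3$ and sum over $i \neq j$. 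Each of the four resulting sums is handled by standard concentration: $|\sum_{i \neq j}\iprod{\hat a_j, a_i}^3| \leq n \cdot \tO(d^{-3/2}) = \tO(n/d^{3/2})$ using \pref{fact:SIP-simple}; the three cross terms involving $\iprod{w,a_i}$ can be controlled via Cauchy--Schwarz with $\sum_{i\neq j}\iprod{w,a_i}^2 \leq \|\sum_i a_i a_i^\top\| \cdot \|w\|^2 \leq \tO(n/d) \cdot O(\gamma' + 1/\sqrt d)$ from \pref{lem:Mj-GC-simple}. All error contributions are absorbed into $O(\gamma') + \tO(n/d^{3/2})$.

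For Part 1, let $c_i = \iprod{a_i,v}$ and suppose $c^* = c_j = \max_i |c_i|$. The goal is to show $|c^*| \geq 1 - O(\gamma) - \tO(n/d^{3/2})$. The strategy has two stages. First, a \emph{coarse} stage to argue that $|c^*|$ is close to $1$ in order: a naive application of H\"older only gives $c^* \geq \tOmega(d/n)$ via $\sum_i c_i^3 \leq c^* \sum_i c_i^2 \leq c^* \cdot \tO(n/d)$, which for $n \sim d^{4/3}$ only yields $c^* \geq \tOmega(d^{-1/3})$. To improve this I plan to invoke the higher-moment analysis of Ge--Ma (the ``similar fact'' cited in the lemma statement), which, adapted to our $\cN(0,\tfrac{1}{d}\Id_d)$ normalization, establishes that uniformly over unit $v$ the fourth moment $\sum_i c_i^4$ is comparable to $(c^*)^4$ up to $\tO(n/d^{3/2})$ for random overcomplete tensors with $n \leq d^{4/3}/\polylog d$; combining this with $(\sum_i c_i^3)^2 \leq \sum_i c_i^2 \cdot \sum_i c_i^4$ and the same spectral bound then forces $|c^*|$ to be bounded below by a constant. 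In the second stage, once we know $v$ is within constant correlation of $\hat a_j$, we run the expansion from Part 2 in reverse: we have $T(v,v,v) = \alpha^3 \|a_j\|^3 + \text{cross terms}$, and the same concentration arguments as in Part 2 bound the cross terms by $\tO(n/d^{3/2}) + O(1-\alpha^2)$. Rearranging gives $(c^*)^3 = \alpha^3 \|a_j\|^3 \geq 1 - O(\gamma) - \tO(n/d^{3/2})$, hence $|c^*| \geq 1 - O(\gamma) - \tO(n/d^{3/2})$ as required.

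The main obstacle is the bootstrapping step in Part 1: proving that $|c^*|$ is at least a constant. A purely spectral argument via $\|\sum_i a_i a_i^\top\| \leq \tO(n/d)$ is too weak in the overcomplete regime, and I expect the right move is to import Ge--Ma's stronger fourth-moment control on random overcomplete tensors (their Lemma 4.3 and its supporting spectral estimates), adapting the normalization to match ours. Once this is in place, the rest of Part 1 mirrors Part 2 algebraically, and Part 2 itself is essentially a direct calculation using tools already developed in \pref{sec:diagonal-terms} and \pref{sec:xterm-bound}.
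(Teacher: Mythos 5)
There is a genuine gap, and it sits exactly at the step you flag as the ``main obstacle'' in Part 1. The coarse stage as you describe it cannot work: the inequality $(\sum_i c_i^3)^2 \le (\sum_i c_i^2)(\sum_i c_i^4)$ together with $\sum_i c_i^4 \le 1+\tO(n/d^{3/2})$ and $\sum_i c_i^2 \le \tO(n/d)$ only yields a \emph{lower} bound on $\sum_i c_i^2$, not on $\max_i |c_i|$. Indeed, a profile with $k \approx (n/d)^2$ coordinates of magnitude $\epsilon \approx \sqrt{d/n}$ satisfies both moment constraints while having $\sum_i c_i^3 \approx (n/d)^{1/2} \ge 1$ and $\max_i|c_i| = o(1)$; so second- and fourth-moment information alone cannot force $c^*$ to be a constant, let alone $1-O(\gamma)$. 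The statement you attribute to Ge--Ma (``$\sum_i c_i^4$ is comparable to $(c^*)^4$ uniformly over the sphere'') is not what they prove and would be circular here. What the paper actually imports is a \emph{sixth}-moment lower bound, uniform over unit $x$: $\sum_i \iprod{a_i,x}^6 \ge 1 - O\bigl(\sum_i\iprod{a_i,x}^3 - 1\bigr) - \tO(n/d^{3/2})$, alongside the fourth-moment upper bound and $|\sum_i\iprod{a_i,x}^3|\le 1+\tO(n/d^{3/2})$. With these, Part 1 is a one-line argument with no bootstrapping: $\sum_i c_i^6 \le (\max_i c_i^2)\sum_i c_i^4$, so $\max_i c_i^2 \ge (1-\tO(n/d^{3/2}))\sum_i c_i^6 \ge 1 - O(\gamma) - \tO(n/d^{3/2})$. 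Your two-stage plan also has a secondary fragility: even granting a ``constant correlation'' coarse bound, the reverse expansion gives $1-\gamma \lesssim \alpha^3 + (1-\alpha^2)^{3/2} + \tO(n/d^{3/2})$, which only pins $\alpha$ near $1$ if $\alpha$ is already known to exceed a specific threshold, not merely some constant.

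Part 2 of your proposal matches the paper's route (decompose $v$ along $a_j$, cube-expand, bound cross terms), but one term is not handled by your tools: the pure orthogonal cube $\sum_{i\ne j}\iprod{w,a_i}^3$. Cauchy--Schwarz with the spectral bound $\|\sum_i a_ia_i^\top\|\le\tO(n/d)$ gives only $\tO(n/d)\,\|w\|^3$, and since $n/d$ can be as large as $d^{1/3}$ this is not $O(\gamma')+\tO(n/d^{3/2})$ across the stated range $0<\gamma'<1$. The paper instead bounds this term by applying the uniform cubic bound $|\sum_i\iprod{a_i,x}^3|\le\|x\|^3(1+\tO(n/d^{3/2}))$ (again from Ge--Ma) with $x = w$, giving $O(\gamma'^{3/2}) + \tO(n/d^{3/2})$. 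So both parts hinge on the same uniform higher-moment inequalities over the sphere; once you import those (rather than only the fourth-moment and spectral bounds), your Part 2 goes through and Part 1 collapses to the direct argument above.
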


We are now ready to prove \pref{thm:tdecomp-alg}.

\begin{proof}[Proof of \pref{thm:tdecomp-alg}]
  By \pref{thm:tensor-decomp-main}, with probability $1 - o(1)$ over $a_1,\ldots,a_n$ there are events $E_1,\ldots,E_n$ so that $\Pr_g(E_i) \geq \tO(1/n^{1 + O(\epsilon)})$ such that when event $E_i$ occurs the top eigenvector $u$ of $RMR$ satisfies
    \[
      \frac{\iprod{Ru, a_i \tensor a_i}^2}{\|u\|^2 \cdot \|a_i\|^4} \ge 1 - \tO\Paren{\frac{n^{3/2}}{\epsilon d^2}}\mper
    \]
    For a particular sample $g \sim \cN(0,\Id_d)$, let $u_g$ be this eigenvector.

    The algorithm is as follows.
    Sample $g_1,\ldots,g_r \sim \cN(0,\Id_d)$ independently for some $r$ to be chosen later.
    Compute $Ru_{g_1},\ldots,Ru_{g_r}$, reshape each to a $d \times d$ matrix, and compute its singular value decomposition.
    This gives a family of (right) singular vectors $v_1,\ldots,v_{dr}$.
    For each, evaluate $\sum_i \iprod{a_i, v_j}^3$.
    Let $c(n,d)$ be a threshold to be chosen later.
    Initialize $S \subset \R^d$ to the empty set.
    Examining each $1 \leq j \leq dr$ in turn, add $v_j$ to $S$ if $\sum_i \iprod{a_i, v_j}^3 \geq 1 - c(n,d)$ and for every $v$ already in $S$, $\iprod{v,v_j}^2 \leq 1/2$.
    Output the set $S$.

    Choose $\epsilon = 1/\log n$.
    By \pref{lem:symflat}, when $E_i$ occurs for $g_j$ one of $v \in \{ \pm v_{jr},\ldots, \pm v_{(j+1)r} \}$ has $\iprod{v, a_i} \geq (1 - \tO(n^{3/2}/d^2))(\|u_j\|^2 \cdot \|a_j\|^4)$.
    Then by \pref{lem:tensor-check-success}, when $E_i$ occurs for $g_j$, this $v$ we will have $\sum_i \iprod{a_i, {\pm}v}^3 \geq 1 - \tO(n/d^{3/2})$.
    Choose $c(n,d) = \tilde \Theta(n^{3/2}/d^2)$ so that when $E_i$ occurs for $g_j$, so long as it has not previously occurred for some $j' < j$, the algorithm adds ${\pm}v$ to $S$.

    The events $E_{i}^{(t)}$ and $E_{i}^{(t')}$ are independent for any two executions of the algorithm $t$ and $t'$ and have probability $\tOmega(1/n)$.
    Thus, after $r = \tO(n)$ executions of the algorithm, with high probability for every $i \in [n]$ there is $j \in [r]$ so that $E_i$ occurs for $g_j$.
    Finally, by \pref{lem:tensor-check-success}, the algorithm can never add to $S$ a vector which is not $(1 - \tO(n/d^{3/2}))$-close to some $a_i$.
 \end{proof}

It just remains to prove \pref{lem:tensor-check-success}.

\begin{proof}[Proof of \pref{lem:tensor-check-success}]
  We start with the first claim.
    By \cite[Lemma 2, (proof of) Lemma 8, Theorem 4.2]{DBLP:conf/approx/GeM15}, the following inequalities all hold \wovp.
    \begin{align}
      \sum_{i \in n} \iprod{a_i, x}^4 & \leq 1 + \tO(n/d^{3/2}) \quad \text{for all $\|x\| =1$}\mcom \label{eq:check-1} \\
     \sum_{i \in [n]} \iprod{a_i,x}^6 & \geq 1 - O\Paren{\sum_{i \in [n]} \iprod{a_i,x}^3 -1} - \tO(n/d^{3/2}) \quad \text{for all $\|x\|=1$}\mcom \label{eq:check-2} \\
    \Abs{\sum_{i \in [n]} \iprod{a_i, x}^3} & \leq 1 + \tO(n/d^{3/2}) \quad \text{for all $\|x\|=1$}\mper \label{eq:check-3}
    \end{align}
    To begin,
    \[
	\sum_{i\in[n]} \iprod{a_i, v}^6 \le \Paren{\max_{i\in[n]} \iprod{a_i,v}^2} \cdot \Paren{\sum_{i\in[n]} \iprod{a_i, v}^4}.
    \]
    By \pref{eq:check-1}, this implies
    \begin{equation}
	\max_{i\in[n]} \iprod{a_i,v}^2 \ge (1 - \tO(n/d^{3/2}))\cdot \sum_{i\in[n]} \iprod{v,a_i}^6. \label{eq:two-siv}
    \end{equation}
    Now combining \pref{eq:check-2} with \pref{eq:two-siv} we have
    \begin{align*}
	\max_{i\in[n]} \iprod{a_i,v}^2
        &\ge (1-\tO(n/d^{3/2}))\cdot (1 - O(1 - \sum_i \iprod{a_i, v}^3 )- \tO(n/d^{3/2}))\mper
    \end{align*}
    Together with \pref{eq:check-3} this concludes the of the first claim.

    For the second claim, we note that by \pref{eq:check-3}, and homogeneity, $|\sum_{i \neq j} \iprod{a_i, x}^3| \leq \|x\|^3(1 + \tO(n/d^{3/2})$ \wovp.
    We write $v = \iprod{a_j, x} a_j + x^\perp$, where $\iprod{x^\perp, a_j} = 0$.
    Now we expand
    \begin{align*}
      \sum_i \iprod{a_i,v}^3 & \geq (1 - \gamma')^3 + \sum_{i \neq j} \iprod{\iprod{a_j,x} a_j + x^\perp, a_i}^3\\
                             & = ( 1- \gamma')^3 + \sum_{i \neq j} \iprod{a_j,x}^3 \iprod{a_j, a_i}^3 + 3 \iprod{a_j,x}^2 \iprod{a_j,a_i}^2 \iprod{x^\perp, a_i}\\
                             & + 3 \iprod{a_j,x}\iprod{a_j,a_i} \iprod{x^\perp a_i}^2 + \iprod{x^\perp, a_i}^3\mper
    \end{align*}
    We estimate each term in the expansion:
    \begin{align*}
      \Abs{\sum_{i \neq j} \iprod{a_j,x}^3 \iprod{a_j, a_i}^3} & \leq |\iprod{a_j, x}^3|\sum_{i \neq j} |\iprod{a_j, a_i}|^3 \leq \tO\Paren{\frac{n}{d^{3/2}}} \\
                                                         & \text{\wovp by Cauchy-Schwarz and standard concentration.}\\
      \Abs{\sum_{i \neq j} \iprod{a_j,x}^2 \iprod{a_j, a_i}^2 \iprod{x^\perp, a_i}} & \leq \Paren{\sum_{i \neq j} \iprod{a_j, x}^4 \iprod{a_j, a_i}^4}^{1/2} \Paren{\sum_{i \neq j} \iprod{x^\perp, a_i}^2}^{1/2} \quad \text {by Cauchy-Schwarz}\\
                                                                                    & \leq O(\sqrt n) \cdot \max_{i \neq j} \iprod{a_j, a_i}^2 \cdot \tO\Paren{\frac n d}^{1/2} \quad \text{\wovp by standard concentration.}\\
                                                                                    & \leq \tO\Paren{\frac n {d^{3/2}}} \quad \text{\wovp by standard concentration}\\
        \Abs{\sum_{i \neq j} \iprod{a_j,x} \iprod{a_j, a_i} \iprod{x^\perp, a_i}^2} & \leq O(1) \cdot \max_{i \neq j} |\iprod{a_j, a_i}| \cdot \sum_{i \neq j}\iprod{x^\perp, a_i}^2 \quad \text{\wovp by standard concentration}\\
      & \leq \tO\Paren{\frac 1 {\sqrt d}} \cdot \tO\Paren{\frac n d} \quad \text{\wovp by standard concentration}\\
      & \leq \tO\Paren{\frac n {d^{3/2}}}\\
      \Abs{\sum_{i \neq j} \iprod{x^\perp, a_i}^3} & \leq \gamma' + \tO\Paren{\frac n {d^{3/2}}} \quad \text{\wovp by \pref{eq:check-3} and homogeneity}\mper
    \end{align*}

    Now we estimate
    \[
      \sum_i \iprod{a_i, v}^3 \geq (1 - \gamma')^3 + \sum_{i \neq j} \iprod{a_i,x}^3 \geq (1 - \gamma)^3 - \gamma' - \tO(n/d^{3/2}) \geq 1 - O(\gamma') - \tO(n/d^{3/2})\mper
    \]
    since $\gamma' < 1$.
\end{proof}

\subsubsection{Boosting Accuracy with Local Search}
We remark that \pref{alg:four-thirds} may be used in conjunction with a local search
algorithm to obtain much greater guarantees on the accuracy of the recovered vectors.
Previous progress on the tensor decomposition problem has produced iterative
algorithms that provide local convergence guarantees given a good enough
initialization, but which leave the question of how to initialize the procedure
up to future work, or up to the specifics of an implementation.
In this context, our contribution can be seen as a general method of obtaining
good initializations for these local iterative procedures.

In particular, Anandkumar et al. \cite{DBLP:conf/colt/AnandkumarGJ15} give an algorithm
that combines tensor power iteration and a form of coordinate descent,
which when initialized with the output of \pref{alg:four-thirds}, achieves a linear
convergence rate to the true decomposition within polynomial time.

\begin{theorem}[Adapted from Theorem 1 in \cite{DBLP:conf/colt/AnandkumarGJ15}]\label{thm:noisy-tpi}
    Given a rank-$n$ tensor $\bT = \sum_{i} a_i \tensor a_i \tensor a_i$ with random Gaussian components $a_i \sim \cN(0, \tfrac{1}{d}\Id_d)$.
    There is a constant $c > 0$ so that if a set of unit vectors $\{x_i \in \R^d\}_i$ satisfies
    \[
   \iprod{x_i, a_i} \geq 1-c, \quad\forall i \in [n],
    \]
  then there exists a procedure which with overwhelming probability over
  $\bT$ and for any $\epsilon > 0$, recovers a set of vectors
  $\{\hat{a}_i\}$ such that
    \[
   \iprod{\hat{a}_i, a_i} \geq 1-\epsilon, \quad \forall i \in [n],
    \]
  in time $O(\poly(d) + nd^3\log \epsilon)$.

  \Jnote{}
\end{theorem}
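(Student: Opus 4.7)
The plan is to adapt the noisy tensor power iteration scheme of \cite{DBLP:conf/colt/AnandkumarGJ15}. For each initial vector $x_i$ satisfying $\iprod{x_i, a_i} \geq 1 - c$, I would iteratively apply the power iteration update $x \mapsto \bT(\Id, x, x)/\|\bT(\Id, x, x)\|$. Expanding this, we have $\bT(\Id, x, x) = \sum_j \iprod{a_j, x}^2 a_j$, so when $x$ is close to $a_i$ the update isolates the component $a_i$ up to an error term $\sum_{j \neq i} \iprod{a_j, x}^2 a_j$ that must be controlled.

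The first step would be to prove a contraction lemma: if $\iprod{x, a_i}^2 \geq 1 - \delta$ then after one TPI step the new iterate $x'$ satisfies $\iprod{x', a_i}^2 \geq 1 - O(\delta^2) - O(\xi)$ for a noise floor $\xi = \tO(n/d^2)$ arising from the cross terms. The key spectral fact, which follows from the near-orthogonality $|\iprod{a_j, a_k}| = \tO(1/\sqrt d)$ of random Gaussian vectors (\pref{fact:SIP-simple}), is that one can bound $\|\sum_{j \neq i} \iprod{a_j, x}^2 a_j\|$ uniformly by combining $\|\sum_j a_j a_j^\top\| \leq \tO(n/d)$ (from \pref{lem:Mj-GC-simple}) with the bound $\max_{j \neq i} \iprod{a_j, x}^2 = \tO(1/d)$ valid whenever $x$ stays close to $a_i$. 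This near-quadratic contraction yields convergence to within the noise floor in $O(\log \log(1/c))$ iterations.

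To push the error all the way below the noise floor to an arbitrary $\epsilon > 0$, I would interleave power iteration with a deflation step: once reasonable estimates $\hat a_j$ of the other components are available, refine $x_i$ by applying TPI to the residual tensor $\bT - \sum_{j \neq i} \hat a_j^{\otimes 3}$, so that the cross-term noise now shrinks linearly with the current estimate errors rather than saturating at $\xi$. This produces a linear convergence rate (each iteration halves the error), requiring $O(\log(1/\epsilon))$ iterations per component. Each TPI or deflation step touches the $O(d^3)$ entries of $\bT$ (or of an implicitly represented deflated version), so the per-iteration cost is $O(d^3)$ and the total cost across all $n$ components is $O(nd^3 \log(1/\epsilon))$, plus some $\poly(d)$ preprocessing to set up the data structures.

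The main obstacle is establishing the contraction lemma with the right uniformity: the cross-term noise must be controlled not just at a single point but along the entire iteration trajectory, and across all $n$ components simultaneously, even as the already-recovered estimates $\hat a_j$ are themselves evolving during deflation. This requires a matrix-concentration argument of the same flavor as the ones in \pref{sec:xterm-bound} (leveraging the Gaussianity of the $a_j$), combined with an inductive invariant that bounds both the current iterate error and the accumulated deflation error throughout the process.
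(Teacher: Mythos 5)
The paper does not actually prove this statement: it is imported wholesale from Theorem~1 of Anandkumar et al.\ \cite{DBLP:conf/colt/AnandkumarGJ15}, and the accompanying remark only explains the translation (symmetric rather than asymmetric tensors via their equations (14) and (27), zero perturbation tensor, unit weights, and linear convergence from their Lemma~12). Your proposal instead sketches a re-derivation of that result via tensor power iteration plus deflation. That is the same algorithmic mechanism that underlies their proof, so the route is not conceptually different --- but as a proof it stops short precisely where the real work of \cite{DBLP:conf/colt/AnandkumarGJ15} lies.

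Concretely, two gaps. First, your noise-floor bound is asserted rather than derived, and the stated value looks too strong: for $x$ near $a_i$ one has $\bigl\|\sum_{j\neq i}\iprod{a_j,x}^2 a_j\bigr\| \le \|A\|\cdot\|c\|$ with $\|A\|\le \tO(\sqrt{n/d})$ and $\|c\|\le \tO(\sqrt n/d)$, giving $\tO(n/d^{3/2})$, not $\tO(n/d^2)$; any improvement requires a genuine cancellation argument, and in the overcomplete regime $n\gg d$ this floor is far above the target accuracy $\epsilon$, so everything hinges on the deflation phase. Second, the deflation phase is exactly the step you label ``the main obstacle'' and leave unresolved: the residual $\sum_{j\neq i}\bigl(a_j^{\otimes 3}-\hat a_j^{\otimes 3}\bigr)$ aggregates $n-1$ per-component errors, and with $n=\omega(d)$ a naive triangle-inequality bound accumulates an error of order $n$ times the individual error, which does not contract. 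One needs a simultaneous inductive invariant over all $n$ evolving estimates together with a matrix-concentration argument showing the deflation errors are spread incoherently across directions --- this is the content of Anandkumar et al.'s analysis (their Lemma~12 and surrounding machinery), not a routine adaptation of the cross-term bounds in \pref{sec:xterm-bound}, which concern a fixed random matrix rather than iterates that depend on the $a_j$. As written, the proposal is a plausible plan but not a proof; citing \cite{DBLP:conf/colt/AnandkumarGJ15}, as the paper does, or carrying out their error-accumulation analysis in full, is required to close it.
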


\begin{remark}
Theorem 1 of Anandkumar et al. is stated for random asymmetric tensors,
but the adaptation to symmetric tensors is stated in equations (14) and (27)
in the same paper.

The theorem of Anandkumar et al. allows for a perturbation tensor $\Phi$, which
is just the zero tensor in our setting.
Additionally, the weight ratios specifying the weight of each rank-one
component in the input tensor are $w_{max} = w_{min} = 1$.
Lastly, the initialization conditions are given in terms of the distance between
the intialization vectors and the true vectors $|x_i - a_i|$,
which is related to our measure of closeness $\iprod{x_i,a_i}$ by the
equation $|x_i - a_i|^2 = |x_i|^2 + |a_i|^2 - 2\iprod{x_i,a_i}$.

The linear convergence guarantee is stated in Lemma 12 of Anandkumar et al.

\end{remark}

\begin{corollary}[Corollary of \pref{thm:tdecomp-alg}]\label{cor:boost}
  Given as input the tensor $\bT = \sum_{i=1}^n a_i \tensor a_i \tensor a_i$ where $a_i \sim \cN(0,\tfrac{1}{d} \Id_d)$ with $d \leq n \leq d^{4/3} / \polylog d$, there is a polynomial-time algorithm which with probability $1 - o(1)$ over the input $\bT$ and the algorithm randomness finds unit vectors $\hat{a}_1, \ldots, \hat{a}_n \in \R^d$ such that for all $i\in[n]$,
    \[
      \iprod{\hat{a}_i, a_i} \ge 1 - O\Paren{2^{-n}}\mper
    \]
\end{corollary}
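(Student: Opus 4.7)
The plan is to chain \pref{thm:tdecomp-alg} together with \pref{thm:noisy-tpi} in the obvious way: use our spectral algorithm to produce initializations that are good enough to fall inside the local-convergence basin of the Anandkumar--Ge--Janzamin tensor power iteration procedure, then run that procedure with target accuracy $\epsilon = 2^{-n}$.

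First I would invoke \pref{thm:tdecomp-alg}: in time $\tilde O(n d^{1+\omega})$, with probability $1 - o(1)$, we recover unit vectors $b_1, \ldots, b_n$ satisfying $\iprod{a_i, b_i} \ge 1 - \tilde O(n^{3/2}/d^2)$ for every $i \in [n]$. The key observation is that under the hypothesis $n \le d^{4/3}/\polylog d$, the quantity $\tilde O(n^{3/2}/d^2) = \tilde O(1/d^{1/3})$ tends to $0$ as $d$ grows, so for any fixed constant $c > 0$ (in particular, the constant appearing in the initialization condition of \pref{thm:noisy-tpi}), the output of \pref{thm:tdecomp-alg} satisfies $\iprod{a_i, b_i} \ge 1 - c$ for all $i \in [n]$, provided $d$ is sufficiently large.

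Next I would apply \pref{thm:noisy-tpi} with the $b_i$ as the initializations $x_i$ and with target accuracy $\epsilon = 2^{-n}$. The hypothesis of the theorem is met by the previous paragraph, so with overwhelming probability over $\bT$ the procedure outputs vectors $\hat a_1, \ldots, \hat a_n$ with $\iprod{\hat a_i, a_i} \ge 1 - 2^{-n}$ for every $i$. The running time is $O(\poly(d) + n d^3 \log(1/\epsilon)) = O(\poly(d) + n^2 d^3)$, which is polynomial in $n$ and $d$, as required. A final union bound over the $1 - o(1)$ success probability of \pref{thm:tdecomp-alg} and the overwhelmingly high success probability of \pref{thm:noisy-tpi} gives total success probability $1 - o(1)$.

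There is essentially no obstacle here; the only thing to verify is that $\tilde O(n^{3/2}/d^2) \le c$ in the parameter regime of interest, which is immediate from $n \le d^{4/3}/\polylog d$. The rest is a direct quotation of the two theorems.
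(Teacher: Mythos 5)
Your proposal is correct and follows essentially the same route as the paper: invoke \pref{thm:tdecomp-alg} to obtain initializations within the constant-accuracy basin required by \pref{thm:noisy-tpi} (using that $\tO(n^{3/2}/d^2) = o(1)$ when $n \le d^{4/3}/\polylog d$), then run the local-search procedure to accuracy $2^{-n}$ in polynomial time. The paper's own proof is just a terser statement of this same two-step argument.
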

\begin{proof}
  We repeatedly invoke \pref{alg:four-thirds} until we obtain a full set of $n$ vectors as characterized by \pref{thm:tdecomp-alg}.
  Apply \pref{thm:noisy-tpi} to the recovered set of vectors until the desired accuracy is obtained.
\end{proof}

\section{Tensor principal component analysis}\label{sec:tpca}
The Tensor PCA problem in the spiked tensor model is similar to the
setting of tensor decomposition, but here the goal is to recover
a single large component with all smaller components of the
tensor regarded as random noise.

\begin{problem}[Tensor PCA in the Order-$3$ Spiked Tensor Model]
\label{prob:spiked-tensor}
  Given an input tensor $\bT = \tau \cdot v^{\tensor 3} + \bA$, where $v \in \R^n$ is an arbitrary unit vector, $\tau \geq 0$ is the signal-to-noise ratio, and $\bA$ is a random noise tensor with iid standard Gaussian entries, recover the signal $v$ approximately.
\end{problem}
Using the partial trace method,
we give the first linear-time algorithm for this problem that
recovers $v$ for signal-to-noise ratio $\tau = O(n^{3/4}/\poly \log n)$.
In addition, the algorithm requires only $O(n^2)$ auxiliary space (compared to the input size of $n^3$) and uses only one non-adaptive pass over the input.

\subsection{Spiked tensor model}
This spiked tensor model (for general order-$k$ tensors) was introduced by
Montanari and Richard \cite{DBLP:conf/nips/RichardM14},
who also obtained the first algorithms to solve the model with provable
statistical guarantees.
Subsequently, the SoS approach was applied to the model to improve the
signal-to-noise ratio required for odd-order tensors
\cite{DBLP:conf/colt/HopkinsSS15}; for $3$-tensors reducing the requirement
from $\tau = \Omega(n)$ to $\tau = \Omega(n^{3/4} \log(n)^{1/4})$.

Using the linear-algebraic objects involved in the analysis of the SoS
relaxation, the previous work has also described algorithms with guarantees
similar to those of the SoS SDP relaxation,
while requiring only nearly subquadratic or linear time \cite{DBLP:conf/colt/HopkinsSS15}.

The algorithm here improves on the previous results by use of the partial
trace method,
simplifying the analysis and improving the runtime by a factor of $\log n$.

\subsection{Linear-time algorithm}
\begin{center}
\fbox{\begin{minipage}{\textwidth}
\begin{center}\textbf{Linear-Time Algorithm for Tensor PCA}\end{center}
\begin{algo}
\label{alg:tpca}
Input: $\bT = \tau \cdot v^{\tensor 3} + \bA$.
Goal: Recover $v'$ with $\iprod{v,v'} \geq 1 - o(1)$.
\begin{itemize}
\item
Compute the partial trace $M \seteq \Tr_{\R^n} \sum_i T_i \tensor T_i \in \R^{n \times n}$, where $T_i$ are the first-mode slices of $\bT$.
\item Output the top eigenvector $v'$ of $M$.
\end{itemize}
\end{algo}
\end{minipage}}
\end{center}

\begin{theorem}
  \label{thm:tpca-success}
  When $\bA$ has iid standard Gaussian entries and $\tau \geq C n^{3/4} \log(n)^{1/2}/\epsilon$ for some constant $C$,
  \pref{alg:tpca} recovers $v'$ with $\iprod{v,v'} \geq 1 - O(\epsilon)$ with high probability over $\bA$.
\end{theorem}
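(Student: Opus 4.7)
The plan is to expand the matrix $M$ computed by the algorithm, split it into a signal piece and zero-mean noise, and control each noise contribution in operator norm. Writing $T_i = \tau v_i\cdot vv^\top + A_i$ where $v_i$ is the $i$th coordinate of $v$ and $A_i$ is the $i$th first-mode slice of $\bA$, substituting into $\sum_i T_i\tensor T_i$ and using $\|v\|=1$ together with the identity $\Tr_{\R^n}(B\tensor C) = \Tr(B)\cdot C$, I obtain
\begin{equation*}
  M \;=\; \tau^2\, vv^\top \;+\; \tau\sum_i v_i A_i \;+\; \tau\!\left(\sum_i v_i\,\Tr(A_i)\right)\! vv^\top \;+\; \sum_i \Tr(A_i)\,A_i.
\end{equation*}
A direct calculation gives $\E\sum_i \Tr(A_i) A_i = n\cdot\Id$, so I would rewrite $M = \tau^2 vv^\top + n\Id + E$, where $E$ is zero-mean. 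The bias $n\Id$ shifts all eigenvalues equally and does not change the top eigenvector, so the task reduces to showing $\|E\|\ll \tau^2$ and then invoking a standard spectral perturbation bound (e.g.\ the ``almost rank-one'' lemma already used elsewhere in the paper).

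I would bound the three pieces of $E$ separately. The term $\tau\sum_i v_i A_i$ is $\tau$ times a matrix with iid $\cN(0,1)$ entries (since $\|v\|=1$), hence has operator norm $\tilde O(\tau\sqrt n)$ by standard Gaussian matrix concentration. The rank-one term $\tau(\sum_i v_i\Tr(A_i))vv^\top$ has coefficient a centered scalar Gaussian of variance $\|v\|^2\cdot n=n$, giving operator norm $\tilde O(\tau\sqrt n)$. The remaining term $\sum_i \Tr(A_i)A_i - n\Id$ is the heart of the argument, and I claim it has operator norm $\tilde O(n^{3/2})$ with high probability. Assembling these, $\|E\|\leq \tilde O(\tau\sqrt n + n^{3/2})$, and the hypothesis $\tau\geq Cn^{3/4}\log^{1/2}(n)/\epsilon$ makes this $\leq O(\epsilon\tau^2)$ for $C$ large enough, so the perturbation bound yields $\iprod{v,v'}^2\geq 1-O(\epsilon)$. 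The $O(d^3)$ runtime is immediate: assembling $M$ is a single linear pass over the input, and the $\tau^2$-sized spectral gap makes $O(\log n)$ power iterations suffice.

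The main obstacle is the bound $\|\sum_i \Tr(A_i)A_i - n\Id\| \leq \tilde O(n^{3/2})$. A naive matrix Bernstein application fails because $\Tr(A_i)$ and $A_i$ are not independent: they share the diagonal entries of $A_i$. My plan is to split each summand into an off-diagonal and diagonal part. Writing $A_i = \hat A_i + D_i$ where $D_i$ is the diagonal of $A_i$, the cross term $\sum_i \Tr(A_i)\hat A_i$ involves, for each off-diagonal $(j,l)$ entry of $\hat A_i$, a product with $\Tr(A_i)$ that is independent of that entry, so a matrix Bernstein argument (after a truncation step as in \pref{prop:truncated-bernstein}) applies cleanly with variance proxy of order $n^3$, yielding operator norm $\tilde O(n^{3/2})$. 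The diagonal piece $\sum_i \Tr(A_i)D_i = \sum_{i,k,\ell} (A_i)_{kk}(A_i)_{\ell\ell} e_\ell e_\ell^\top$ is a sum of Gaussian quadratic forms; after subtracting its mean $n\Id$, Hanson-Wright (applied entrywise and union-bounded, or in its matrix form) gives concentration of order $\tilde O(n)$, which is negligible. Together these bounds give the required spectral estimate on $E$, completing the proof.
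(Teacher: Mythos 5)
Your proposal is correct and, at the level of architecture, it is the same as the paper's proof: expand $\sum_i T_i \otimes T_i$ slice-wise into $\tau^2 vv^\top$ plus three noise terms, bound the two $\tau$-linear terms by $\tilde O(\tau\sqrt n)$ and the pure-noise partial trace $\sum_i \Tr(A_i)A_i$ by $\tilde O(n^{3/2})$, and finish with the spectral perturbation lemma (\pref{lem:low-correlation}). The only genuine divergence is in how you prove the key bound, and your stated motivation for it rests on a misconception: a ``naive'' matrix Bernstein application does \emph{not} fail here, because Bernstein requires independence only across the summands, and the matrices $\Tr(A_i)A_i$ are iid across $i$ (the slices are independent). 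The within-slice correlation between $\Tr(A_i)$ and $A_i$ merely makes each summand non-centered ($\E\,\Tr(A)A=\Id$, which is where your $n\Id$ comes from) and enters the variance proxy, which the paper bounds directly by $\bigl\| n\,\E\,\Tr(A)^2AA^\top\bigr\| = O(n^3)$ before applying the truncated inequality \pref{prop:truncated-bernstein}, yielding the same $\tilde O(n^{3/2})$ bound of \pref{lem:tpca-concentration} with no splitting. Your diagonal/off-diagonal split is nonetheless a valid alternative route: it buys exactly centered summands in the cross part, with $\Tr(A_i)$ independent of the off-diagonal matrix $\hat A_i$, so the moment and truncation computations become trivial, at the cost of a separate (easy) Hanson--Wright or scalar-Bernstein step for the diagonal contribution, which is $\tilde O(n)$ and negligible. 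Either way one gets $\|E\| \le \tilde O(\tau\sqrt n + n^{3/2}) \le O(\epsilon\tau^2)$ under the stated hypothesis on $\tau$, and the conclusion follows as you say.
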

\begin{theorem}
  \label{thm:tpca-speed}
  \pref{alg:tpca} can be implemented in linear time and sublinear space.
\end{theorem}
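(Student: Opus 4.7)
The plan is direct and has two main steps mirroring the structure of \pref{alg:tpca}: (i) form the matrix $M$, and (ii) extract its top eigenvector.

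For step (i), I would first observe that by linearity and the defining identity $\Tr_{\R^n}(A \tensor B) = \Tr(A) \cdot B$ of the partial trace,
\[
M \;=\; \Tr_{\R^n} \sum_{i \in [n]} T_i \tensor T_i \;=\; \sum_{i \in [n]} \Tr(T_i) \cdot T_i.
\]
So forming $M$ reduces to computing the $n$ scalars $\tau_i \seteq \Tr(T_i) = \sum_{j \in [n]} \bT_{ijj}$ and then the weighted slice-sum. This can be done by streaming over the input one slice at a time: load slice $T_i$ into a fixed $O(n^2)$ buffer, read $\tau_i$ off its diagonal, update the running accumulator $M \gets M + \tau_i \cdot T_i$, and then discard the buffer before loading the next slice. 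Each of the $n^3$ entries of $\bT$ is touched $O(1)$ times, giving $O(n^3)$ arithmetic operations; the simultaneous storage is one slice buffer, one $n \times n$ accumulator for $M$, and an $O(n)$ array of traces, for a total of $O(n^2)$.

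For step (ii), I would invoke power iteration on the $n \times n$ matrix $M$. Each iteration is an $O(n^2)$ matrix-vector multiply plus an $O(n)$ normalization. By \pref{thm:tpca-success}, whenever $\tau \ge C n^{3/4} \log(n)^{1/2}/\epsilon$ the matrix $M$ has a constant multiplicative spectral gap with high probability (the signal $\tau^2 \|v\|^4 \cdot v v^\top$ dominates the noise contribution $\sum_i \Tr(A_i) \cdot A_i$ whose norm is controlled in the proof of \pref{thm:tpca-success}). Hence $O(\log n)$ iterations suffice to produce an eigenvector of the required accuracy, contributing only $O(n^2 \log n)$ time and $O(n^2)$ space.

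Adding the two steps yields $O(n^3)$ total running time, which is linear in the input of size $n^3$, and $O(n^2) = o(n^3)$ auxiliary space, which is sublinear in the input. There is no real obstacle: the content of the theorem is just the observation that the partial-trace identity collapses the $n^2 \times n^2$ matrix $\sum_i T_i \tensor T_i$ all the way to an $n \times n$ matrix, and that this $n \times n$ matrix can be assembled slice-by-slice in a streaming fashion without ever materializing the larger matrix.
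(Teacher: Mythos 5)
Your proposal is correct and follows essentially the same route as the paper: collapse $\sum_i T_i \tensor T_i$ to $\sum_i \Tr(T_i)\cdot T_i$, assemble this $n\times n$ matrix in a single $O(n^3)$-time, $O(n^2)$-space pass over $\bT$, and then run $O(\log n)$ power-iteration steps of cost $O(n^2)$ each, justified by the constant spectral gap established in \pref{thm:tpca-success}. No substantive differences to report.
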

These theorems are proved by routine matrix concentration results, showing
that in the partial trace matrix, the signal dominates the noise.

To implement the algorithm in linear time it is enough to show that this
(sublinear-sized) matrix has constant spectral gap;
then a standard application of the matrix power method
computes the top eigenvector.

\begin{lemma}
  \label{lem:tpca-concentration}
  For any $v$, with high probability over $\bA$, the following occur:
  \begin{align*}
    \left \| \sum_i \Tr(A_i) \cdot A_i \right \| & \leq O(n^{3/2} \log^2 n)\\
        \left \| \sum_i v(i) \cdot A_i \right \| & \leq O(\sqrt n \log n)\\
    \left \|\sum_i \Tr(A_i) v(i) \cdot vv^\top \right \| & \leq O(\sqrt n \log n)\mper
  \end{align*}
\end{lemma}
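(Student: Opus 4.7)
The plan is to prove each of the three inequalities using standard Gaussian and matrix concentration. The first is the main work, while the second and third reduce to elementary scalar/matrix Gaussian tail bounds.

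For the first bound $\|\sum_i \Tr(A_i)\cdot A_i\|\leq O(n^{3/2}\log^2 n)$, the difficulty is that the scalar coefficient $\Tr(A_i)$ is correlated with the matrix $A_i$. To decouple them I would use the orthogonal decomposition
\[
    A_i = \tfrac{\Tr(A_i)}{n}\, \Id + A_i''\mcom
\]
where $A_i''$ is the traceless part. A direct covariance computation shows that $\Tr(A_i)$ and the entries of $A_i''$ are uncorrelated (the diagonal entries satisfy $\Cov(A_i(k,k)-\Tr(A_i)/n,\Tr(A_i))=1-1=0$), hence independent by jointly Gaussianity, and distinct $i$ are independent because the tensor entries are iid. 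Substituting,
\[
    \sum_i \Tr(A_i)\cdot A_i \;=\; \tfrac{1}{n}\Paren{\sum_i \Tr(A_i)^2}\cdot \Id \;+\; \sum_i \Tr(A_i)\cdot A_i''\mper
\]
The first summand is a scalar multiple of identity; since $\sum_i \Tr(A_i)^2$ is $n$ times a $\chi^2_n$ variable scaled appropriately (each $\Tr(A_i)\sim \cN(0,n)$), standard $\chi^2$ concentration gives spectral norm at most $O(n)+\tO(n)$. For the second summand, I would condition on $\{A_i''\}$, noting that conditionally $\{\Tr(A_i)\}$ is an iid $\cN(0,n)$ sequence independent of the $A_i''$, so matrix Khintchine (matrix Gaussian series, e.g.\ \pref{thm:matrix-rad}) gives
\[
    \Norm{\sum_i \Tr(A_i)\cdot A_i''} \;\leq\; O\!\Paren{\sqrt{n\log n}\cdot \Norm{\sum_i (A_i'')^2}^{1/2}}\mper
\]
Finally, since each $A_i''$ is a Wigner-type matrix up to a rank-one correction, $\E(A_i'')^2\approx n\Id$, and a matrix Bernstein/Wishart-style bound yields $\|\sum_i (A_i'')^2\|\leq O(n^2)$ w.h.p. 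Combining, the second summand is $O(n^{3/2}\sqrt{\log n})$, comfortably within the claimed bound.

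For the second inequality, observe that $M\defeq \sum_i v(i)\cdot A_i$ is itself a Gaussian $n\times n$ matrix: its $(j,k)$-entry equals $\sum_i v(i)A_i(j,k)$, which is $\cN(0,\|v\|^2)=\cN(0,1)$, and distinct entries involve disjoint tensor coordinates, hence are independent. The classical bound on the norm of a matrix with iid Gaussian entries gives $\|M\|\leq O(\sqrt{n})\leq O(\sqrt{n}\log n)$ with high probability. For the third inequality, since $\|vv^\top\|=1$, it suffices to bound the scalar $|\sum_i \Tr(A_i)v(i)|=|\sum_{i,j} v(i)A_i(j,j)|$. For each fixed $j$ the inner sum $\sum_i v(i)A_i(j,j)$ is $\cN(0,1)$, and distinct $j$ give independent terms since they involve disjoint diagonal coordinates, so the whole sum is distributed as $\cN(0,n)$, giving $|\sum_i \Tr(A_i)v(i)|\leq O(\sqrt{n\log n})$ by the standard Gaussian tail.

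The main obstacle is the first bound: the decomposition into trace and traceless parts must be done carefully to genuinely decouple the coefficients from the matrices, and the auxiliary fact $\|\sum_i (A_i'')^2\|\leq O(n^2)$ requires (routine but nontrivial) matrix concentration for a sum of Wishart-type matrices. The other two bounds are immediate once one observes that the relevant quantities are in fact scalar or matrix Gaussians with easily computed variances.
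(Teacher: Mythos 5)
Your proposal is correct, and parts two and three coincide with the paper's own argument (the matrix $\sum_i v(i)A_i$ has iid standard Gaussian entries; the third term is $\|vv^\top\|$ times a scalar Gaussian of variance $n$). For the first bound, however, you take a genuinely different route. The paper treats $\sum_i \Tr(A_i)\cdot A_i$ directly as a sum of $n$ iid heavy-tailed random matrices and applies its truncated matrix Bernstein inequality (\pref{prop:truncated-bernstein}), doing the truncation bookkeeping for the product $\Tr(A_i)\cdot A_i$ and bounding the variance proxy by $O(n^3)$. You instead decouple the coefficient from the matrix via the exact orthogonal decomposition $A_i = \tfrac{\Tr(A_i)}{n}\Id + A_i''$, observe that $\Tr(A_i)$ is independent of the traceless part (zero covariance for jointly Gaussian variables), handle the identity component with scalar $\chi^2$ concentration, and bound $\sum_i \Tr(A_i)A_i''$ by conditioning on $\{A_i''\}$ and applying a matrix Gaussian series (Khintchine) bound. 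This buys you a proof with no truncation step at all and in fact a slightly sharper bound ($O(n^{3/2}\sqrt{\log n})$ versus the paper's $O(n^{3/2}\log^2 n)$), at the cost of relying on the special independence structure of Gaussian slices; the paper's route is more generic and reuses the truncation machinery it develops for the other sections. One small repair your argument needs: the slices $A_i$ are square but \emph{not} symmetric (they are Ginibre-type, not Wigner-type), so $(A_i'')^2$ is not PSD and is not the right variance proxy; you should invoke the rectangular (or Hermitian-dilation) form of the matrix Gaussian series bound with parameter $\max\bigl\{\bigl\|\sum_i A_i''(A_i'')^\top\bigr\|,\ \bigl\|\sum_i (A_i'')^\top A_i''\bigr\|\bigr\}$. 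Both of these quantities are $O(n^2)$ with high probability (e.g.\ crudely, $\|A_i''\|\le O(\sqrt n)$ for every $i$ by a union bound), so the final estimate $O(n^{3/2}\sqrt{\log n})$ is unaffected.
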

The proof may be found in \pref{app:tensor-pca}.

\begin{proof}[Proof of \pref{thm:tpca-success}]
  We expand the partial trace $\Tr_{\R^n} \sum_i T_i \tensor T_i$.
  \begin{align*}
    \Tr_{\R^n} \sum_i T_i \tensor T_i & = \sum_i \Tr(T_i) \cdot T_i\\
    & = \sum_i \Tr(\tau \cdot v(i) vv^\top + A_i) \cdot (\tau \cdot v(i) vv^\top + A_i)\\
    & = \sum_i (\tau v(i)\|v\|^2 + \Tr(A_i)) \cdot (\tau \cdot v(i) vv^\top + A_i)\\
    & = \tau^2 vv^\top + \tau \Paren{ \sum_i v(i) \cdot A_i + \sum_i \Tr(A_i) v(i) vv^\top} + \sum_i \Tr(A_i) \cdot A_i\mper
  \end{align*}
  Applying \pref{lem:tpca-concentration} and the triangle inequality, we see that
  \[
    \left \|\tau \Paren{ \sum_i v(i) \cdot A_i + \sum_i \Tr(A_i) v(i) vv^\top} + \sum_i \Tr(A_i) \cdot A_i \right \| \leq O(n^{3/2} \log n)
  \]
  with high probability.
  Thus, for appropriate choice of $\tau = \Omega(n^{3/4} \sqrt{(\log n)/\epsilon})$,
  the matrix $\Tr_{\R^n} \sum_i T_i \tensor T_i$ is close to rank one, and the result follows by standard manipulations.
\end{proof}
\begin{proof}[Proof of \pref{thm:tpca-speed}]
  Carrying over the expansion of the partial trace from above
  and setting $\tau = O(n^{3/4} \sqrt{(\log n)/\epsilon})$,
  the matrix $\Tr_{\R^n} \sum_i T_i \tensor T_i$ has a spectral gap ratio
  equal to $\Omega(1/\epsilon)$ and so the matrix
  power method finds the top eigenvector in $O(\log (n/\epsilon))$ iterations.
  This matrix has dimension $n \times n$, so a single iteration takes $O(n^2)$ time, which is sublinear in the input size $n^3$.
  Finally, to construct $\Tr_{\R^n} \sum_i T_i \tensor T_i$ we use
  \[
    \Tr_{\R^n} \sum_i T_i \tensor T_i = \sum_i \Tr(T_i) \cdot T_i
  \]
  and note that to construct the right-hand side it is enough to examine each entry of $\bT$ just $O(1)$ times and perform $O(n^3)$ additions.
  At no point do we need to store more than $O(n^2)$ matrix entries at the same time.
\end{proof}

\section*{Acknowledgements}
We would like to thank Rong Ge for very helpful discussions.
We also thank Jonah Brown Cohen, Pasin Manurangsi and Aviad Rubinstein for helpful comments in the preparation of this manuscript.

\addreferencesection
\bibliographystyle{amsalpha}
\bibliography{bib/mr,bib/dblp,bib/scholar,bib/sos-speedups,bib/zblatt}

\appendix

\section{Additional preliminaries}

\subsection{Linear algebra}
\label{sec:linalg}

Here we provide some lemmas in linear algebra.

This first lemma is closely related to the sos Cauchy-Schwarz from \cite{DBLP:conf/stoc/BarakKS14}, and the proof is essentially the same.
\begin{lemma}[PSD Cauchy-Schwarz]
  \label{lem:op-cs-block}
  Let $M \in \R^{d \times d}$, $M \succeq 0$ and symmetric.
  Let $p_1, \ldots, p_n, q_1,\ldots,q_n \in \R^d$.
  Then
  \[
    \iprod{M, \sum_{i=1}^n p_i q_i^\top} \leq \iprod{M, \sum_{i=1}^n p_i p_i^\top}^{1/2} \iprod{M, \sum_{i=1}^n q_i q_i^\top}^{1/2}\mper
  \]
\end{lemma}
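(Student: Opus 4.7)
The plan is to reduce this statement to the ordinary Cauchy--Schwarz inequality by exploiting the PSD structure of $M$. Since $M \succeq 0$, I will factor it as $M = U^\top U$ for some matrix $U$ (for instance, $U = M^{1/2}$). Then for any vectors $p, q \in \R^d$, the Frobenius inner product rewrites as
\[
\iprod{M, p q^\top} = \Tr(U^\top U p q^\top) = \Tr\bigl((Uq)^\top (Up)\bigr) = \iprod{Up, Uq}.
\]
In particular, $\iprod{M, pp^\top} = \|Up\|^2$ and $\iprod{M, qq^\top} = \|Uq\|^2$.

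Applied termwise and summed, this gives
\[
\Bigiprod{M, \sum_{i=1}^n p_i q_i^\top} = \sum_{i=1}^n \iprod{Up_i, Uq_i}.
\]
Now I would invoke the ordinary Cauchy--Schwarz inequality for the Euclidean inner product on the concatenated vectors $(Up_1,\ldots,Up_n)$ and $(Uq_1,\ldots,Uq_n)$:
\[
\sum_{i=1}^n \iprod{Up_i, Uq_i} \le \Bigl(\sum_{i=1}^n \|Up_i\|^2\Bigr)^{1/2} \Bigl(\sum_{i=1}^n \|Uq_i\|^2\Bigr)^{1/2}.
\]
Translating each factor back through the identities above yields
\[
\sum_{i=1}^n \|Up_i\|^2 = \Bigiprod{M, \sum_{i=1}^n p_i p_i^\top}, \qquad \sum_{i=1}^n \|Uq_i\|^2 = \Bigiprod{M, \sum_{i=1}^n q_i q_i^\top},
\]
which is exactly the claimed inequality.

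There is no real obstacle here: the only subtlety is the bookkeeping step of rewriting the bilinear form $\iprod{M, pq^\top}$ as $\iprod{Up, Uq}$, which is what converts a matrix inner product into a genuine Euclidean inner product suitable for ordinary Cauchy--Schwarz. Everything else is direct substitution.
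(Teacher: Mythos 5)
Your proof is correct, but it takes a different route from the paper's. You factor $M = U^\top U$ (e.g.\ $U = M^{1/2}$), convert the matrix inner product $\iprod{M, p q^\top} = p^\top M q$ into the Euclidean inner product $\iprod{Up, Uq}$, and then invoke ordinary Cauchy--Schwarz on the concatenated vectors in $\R^{dn}$. The paper instead never factors $M$: it first proves the AM--GM-type inequality $\iprod{M, \sum_i p_i q_i^\top} \le \tfrac12 \iprod{M, \sum_i p_i p_i^\top} + \tfrac12 \iprod{M, \sum_i q_i q_i^\top}$, by observing that the difference equals $\sum_i (p_i - q_i)^\top M (p_i - q_i) \ge 0$, and then applies this to the rescaled vectors $p_i' = p_i / \iprod{M, \sum_i p_i p_i^\top}^{1/2}$ and $q_i' = q_i / \iprod{M, \sum_i q_i q_i^\top}^{1/2}$ to recover the product form (this is the standard ``sum-of-squares Cauchy--Schwarz'' argument, which only uses nonnegativity of the quadratic form and so transfers to settings, like pseudo-expectations, where a square-root factorization is not available). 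Your factorization argument is arguably cleaner and handles the degenerate case $\iprod{M, \sum_i p_i p_i^\top} = 0$ automatically (all $Up_i = 0$ forces the left-hand side to vanish), whereas the paper's normalization step tacitly assumes the denominators are nonzero; on the other hand, the paper's argument is the one that generalizes to the SoS proof system context it is borrowed from. Either proof is complete and valid for the lemma as stated.
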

In applications, we will have $\sum_i p_i q_i$ as a single block of a larger block matrix
containing also the blocks $\sum_i p_i p_i^\top$ and $\sum_i q_i q_i^\top$.
\begin{proof}
  We first claim that
  \[
    \iprod{M, \sum_{i=1}^n p_i q_i^\top} \leq \frac 1 2 \iprod{M, \sum_{i=1}^n p_i p_i^\top}
    + \frac 1 2 \iprod{M, \sum_{i=1}^n q_i q_i^\top}\mper
  \]
  To see this, just note that the right-hand side minus the left is exactly
  \[
    \iprod{M, \sum_{i=1}^n (p_i - q_i)(p_i - q_i)^\top} = \sum_i (p_i - q_i)^\top M (p_i - q_i) \geq 0\mper
  \]
  The lemma follows now be applying this inequality to
  \[
    p_i' = \frac{p_i}{\iprod{M, \sum_{i=1}^n p_i p_i^\top}^{1/2}} \qquad
    q_i' = \frac{q_i}{\iprod{M, \sum_{i=1}^n q_i q_i^\top}^{1/2}}\mper\qedhere
  \]
\end{proof}

\begin{lemma}[Operator Norm Cauchy-Schwarz for Sums]
\label{lem:op-cs}
  Let $A_1,\ldots,A_m, B_1,\ldots,B_m$ be real random matrices.
  Then
  \[
    \left \| \sum_i \E A_i B_i \right \| \leq \left \| \sum_i \E A_i^\top A_i \right \|^{1/2} \left \| \sum_i \E B_i^\top B_i \right \|^{1/2}\mper
  \]
\end{lemma}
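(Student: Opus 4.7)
My plan is to reduce this matrix operator-norm bound to a scalar Cauchy--Schwarz inequality on the product Hilbert space that combines the discrete index $i$ with the underlying probability space. First, by the variational characterization of the spectral norm, it suffices to bound $\iprod{u, (\sum_i \E A_i B_i) w}$ for arbitrary unit test vectors $u$ and $w$. Expanding by bilinearity and linearity of expectation rewrites this as $\sum_i \E \iprod{A_i^\top u, B_i w}$, exhibiting the quantity as an inner product between the random vectors $A_i^\top u$ and $B_i w$ summed over $i$ and averaged over $\omega$.

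Second, I would apply scalar Cauchy--Schwarz in this combined Hilbert space to bound the quantity above by
\[
\Paren{\sum_i \E \snormt{A_i^\top u}}^{1/2} \Paren{\sum_i \E \snormt{B_i w}}^{1/2}.
\]
Third, I would identify each factor as a quadratic form evaluated at a unit vector: $\sum_i \E \snormt{A_i^\top u} = u^\top\bigl(\sum_i \E A_i A_i^\top\bigr) u \le \bignorm{\sum_i \E A_i A_i^\top}$ by the variational characterization of the operator norm applied to a positive semidefinite symmetric matrix, and analogously $\sum_i \E \snormt{B_i w} \le \bignorm{\sum_i \E B_i^\top B_i}$. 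Taking the supremum over unit $u, w$ then closes the argument.

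An equivalent derivation uses the classical block-matrix construction: set $\mathcal A \seteq [A_1 \mid \cdots \mid A_m]$ (horizontal block concatenation) and $\mathcal B \seteq [B_1;\cdots;B_m]$ (vertical block concatenation), so that $\mathcal A \mathcal B = \sum_i A_i B_i$, and then submultiplicativity of the spectral norm combined with the identities $\|\mathcal A\|^2 = \|\mathcal A \mathcal A^\top\| = \bignorm{\sum_i A_i A_i^\top}$ and $\|\mathcal B\|^2 = \|\mathcal B^\top \mathcal B\| = \bignorm{\sum_i B_i^\top B_i}$ yields the same bound; the expectation version follows by performing the same manipulation on the product of the counting and probability measures (or by a Jensen step, since $M \mapsto \|M\|$ is convex). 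I anticipate no substantive obstacle; the only point requiring care is tracking the placement of transposes so that the pairing of factors on each side of the Cauchy--Schwarz step is consistent with the multiplication $A_i B_i$.
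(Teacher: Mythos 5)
Your proposal is correct and takes essentially the same route as the paper's proof: fix unit test vectors, expand the bilinear form, apply scalar Cauchy--Schwarz jointly over the index $i$ and the probability space, and identify the two resulting sums as quadratic forms of PSD matrices bounded by their operator norms (your block-matrix remark is just a repackaging of the same computation). One point worth recording: your careful transpose bookkeeping yields the bound $\bignorm{\sum_i \E A_i A_i^\top}^{1/2}\bignorm{\sum_i \E B_i^\top B_i}^{1/2}$, and this is in fact the correct form --- the lemma as printed, with $\sum_i \E A_i^\top A_i$, is false as stated (take $m=2$, deterministic $A_1 = e_1e_1^\top$, $A_2 = e_1e_2^\top$, $B_1 = e_1v^\top$, $B_2 = e_2v^\top$ for a unit vector $v$: the left side is $2$ while the printed right side is $\sqrt 2$). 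The paper's own proof makes the same silent correction, since its first step replaces $x^\top A_i B_i y$ by $\iprod{A_i x, B_i y} = x^\top A_i^\top B_i y$, so what it literally establishes is the transpose-consistent version, equivalent to yours after renaming $A_i \leftrightarrow A_i^\top$; your write-up, which flags exactly this pairing issue, is if anything the more careful of the two.
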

\begin{proof}
  We have for any unit $x, y$,
  \begin{align*}
    x^\top \sum_i \E A_i B_i x & = \sum_i \E \iprod{A_i x, B_i y}\\
    & \leq \sum_i \E \|A_i x\| \|B_i y\|\\
    & \leq \sum_i (\E \|A_i x\|^2)^{1/2} (\E \|B_i x\|^2)^{1/2}\\
    & \leq \sqrt{\sum_i \E \|A_i x\|^2 } \sqrt{\sum_i \E \|B_i y\|^2}\\
    & = \sqrt{\E x^\top \sum_i A_i^\top A_i x} \sqrt{\E y^\top \sum_i B_i^\top B_i y}\\
    & \leq \left \| \sum_i \E A_i^\top A_i \right \|^{1/2} \left \| \sum_i \E B_i^\top B_i \right \|^{1/2}\mper
  \end{align*}
  where the nontrivial inequalities follow from Cauchy-Schwarz for expectations, vectors and scalars, respectively.
\end{proof}

The followng lemma allows to argue about the top eigenvector of matrices with spectral gap.

\begin{lemma}[Top eigenvector of gapped matrices]
  \label{lem:low-correlation}
  Let $M$ be a symmetric $r$-by-$r$ matrix and let $u,v$ be a vectors in $\R^r$ with $\lVert  u \rVert=1$.
  Suppose $u$ is a top singular vector of $M$ so that $\abs{\langle  u,M u \rangle} = \lVert  M \rVert$ and $v$ satisfies for some $\e>0$,
  \begin{displaymath}
    \lVert  M  - v v ^\top \rVert \le \lVert  M \rVert - \e \cdot \lVert v \rVert^2
  \end{displaymath}
  Then, $\langle  u,v \rangle^2 \ge \e \cdot \lVert  v \rVert^2$.
\end{lemma}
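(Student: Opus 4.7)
The plan is to use $u$ itself as a test vector in the quadratic form of $M - vv^\top$ and then combine the resulting lower bound on $\|M - vv^\top\|$ with the hypothesized upper bound. Concretely, since $\|M - vv^\top\| \ge |\langle u, (M-vv^\top)u\rangle| = |\langle u, Mu\rangle - \langle u,v\rangle^2|$, and since the hypothesis gives $|\langle u,Mu\rangle| = \|M\|$, the only real work is to track the sign of $\langle u, Mu\rangle$ and solve for $\langle u,v\rangle^2$.

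First I would handle the case $\langle u, Mu\rangle = \|M\|$ (positive top eigenvalue). Here the lower bound gives $\|M - vv^\top\| \ge \bigl|\|M\| - \langle u,v\rangle^2\bigr|$, while the hypothesis gives $\|M - vv^\top\| \le \|M\| - \epsilon\|v\|^2$. If $\langle u,v\rangle^2 \le \|M\|$, we can drop the absolute value and rearrange to obtain $\langle u,v\rangle^2 \ge \epsilon\|v\|^2$ directly. If instead $\langle u,v\rangle^2 > \|M\|$, then $\langle u,v\rangle^2 > \|M\| \ge \epsilon\|v\|^2$ (this last inequality must hold because otherwise the hypothesis would force $\|M - vv^\top\| < 0$), and the conclusion holds trivially.

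Second, I would rule out the case $\langle u, Mu\rangle = -\|M\|$. In that case the same expansion yields $\langle u,(M-vv^\top)u\rangle = -\|M\| - \langle u,v\rangle^2$, so $\|M - vv^\top\| \ge \|M\| + \langle u,v\rangle^2 \ge \|M\|$. This contradicts the hypothesis $\|M - vv^\top\| \le \|M\| - \epsilon\|v\|^2$ unless $v = 0$, in which case the conclusion $\langle u,v\rangle^2 \ge \epsilon\|v\|^2$ is $0 \ge 0$.

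There is no substantive obstacle here; the only mildly delicate point is remembering that $|\langle u,Mu\rangle| = \|M\|$ allows either sign (so $u$ could be a bottom eigenvector of a matrix whose top eigenvalue in absolute value is negative), and verifying that this branch is incompatible with the gap hypothesis. Everything else is one line of arithmetic.
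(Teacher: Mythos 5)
Your proposal is correct and takes essentially the same approach as the paper: test the quadratic form of $M - vv^\top$ at $u$ and compare the resulting lower bound with the hypothesized upper bound $\lVert M\rVert - \epsilon\lVert v\rVert^2$. The only difference is presentational — the paper collapses your sign/size case analysis into one line via the reverse triangle inequality, $\lvert\langle u,(M-vv^\top)u\rangle\rvert \ge \lvert\langle u,Mu\rangle\rvert - \langle u,v\rangle^2 = \lVert M\rVert - \langle u,v\rangle^2$, which holds regardless of the sign of $\langle u,Mu\rangle$ or whether $\langle u,v\rangle^2$ exceeds $\lVert M\rVert$.
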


\begin{proof}

We lower bound the quadratic form of $M- vv^\top$ evaluated at $u$ by
\begin{displaymath}
  \lvert \langle  u, (M - v v^\top) u \rangle\rvert
  \ge \lvert \langle  u,Mu \rangle \rvert -  \langle  u,v \rangle^2
  = \lVert  M  \rVert - \langle  u,v \rangle^2\mper
\end{displaymath}
At the same time, this quadratic form evaluated at $u$ is upper bounded by $\lVert  M \rVert - \e \cdot \lVert  v \rVert^2$.
It follows that $\langle  u,v \rangle^2 \ge \e\cdot \lVert  v \rVert^2$ as desired.

\end{proof}

The following lemma states that a vector in $\R^{d^2}$ which is close to a
symmetric vector $v^{\tensor 2}$, if flattened to a matrix, has top eigenvector
correlated with the symmetric vector.
\restatelemma{lem:symflat}
\begin{proof}
    Let $\hat{v} = v / \|v\|$.
    Let $(\sigma_i, a_i, b_i)$ be the $i$th singular value, left and right (unit) singular vectors of $U$ respectively.

Our assumptions imply that
\[
    \left| \hat v^\top U \hat v \right| = \left| \iprod{M u, \hat v\tensor \hat v} \right| \ge c \cdot \| u\|.
\]
Furthermore, we observe that $\|U\|_F = \|M u\| \le \|M \| \cdot \|u\|$, and that therefore $\|U\|_F \le \|u \|$.
We thus have that,
\[
    c \cdot \|u\|
    \le \left| \hat v^\top U \hat v \right|
    = \left| \sum_{i \in [d]} \sigma_i \cdot \iprod{\hat v, a_i}\iprod{\hat v,b_i} \right|
    \le \|u\|\cdot \sqrt{\sum_{i \in [d]} \iprod{\hat v, a_i}^2\iprod{\hat v, b_i}^2},
\]
where to obtain the last inequality we have used Cauchy-Schwarz and our bound on $\|U\|_F$.
We may thus conclude that
\begin{align}
    c^2
    &\le \sum_{i \in [d]} \iprod{\hat v, a_i}^2 \iprod{\hat v, b_i}^2
    \le \max_{i \in [d]} \iprod{a_i,\hat v}^2\cdot \sum_{i\in[d]} \iprod{b_i,\hat v}^2 = \max_{i\in[d]} \iprod{a_i,\hat v}^2\mcom\label{eq:large-cor}
\end{align}
where we have used the fact that the left singular values of $U$ are orthonormal.
The argument is symmetric in the $b_i$.

Furthermore, we have that
\[
    c^2 \cdot \|u\|^2
    \le \left| \hat v^\top U \hat v \right|^2
    = \left| \sum_{i \in [d]} \sigma_i \cdot \iprod{\hat v, a_i}\iprod{\hat v,b_i} \right|^2
\le \left(\sum_{i \in [d]} \sigma_i^2 \iprod{\hat v, a_i}^2\right) \cdot \left(\sum_{i\in[d]}\iprod{\hat v, b_i}^2\right)
= \sum_{i \in [d]} \sigma_i^2 \iprod{\hat v, a_i}^2,
\]
where we have applied Cauchy-Schwarz and the orthonormality of the $b_i$.
In particular,
\[ \sum_{i \in [d]} \sigma_i^2 \iprod{\hat v, a_i}^2 \ge c^2 \|u\|^2 \ge c^2 \|U\|_F^2 \mper \]
On the other hand,
let $S$ be the set of $i\in[d]$ for which $\sigma_i^2 \le \alpha c^2 \|U\|_F^2$.
By substitution,
\[
    \sum_{i \in S} \sigma_i^2 \iprod{\hat v, a_i}^2
    \le \alpha c^2 \|U\|_F^2 \sum_{i\in S} \iprod{\hat v, a_i}^2
    \le \alpha c^2 \|U\|_F^2 \mcom
\]
where we have used the fact that the right singular vectors are orthonormal.
The last two inequalities imply that $S \ne [d]$.
Letting $T = [d] \setminus S$, it follows from subtraction that
\[
    (1-\alpha) c^2 \|U\|_F^2
    \le \sum_{i\in T} \sigma_i^2 \iprod{\hat v, a_i}^2
    \le \max_{i\in T} \iprod{\hat v, a_i}^2 \sum_{i\in T} \sigma_i^2
    = \max_{i\in T} \iprod{\hat v, a_i}^2 \|U\|_F^2 \mcom
\]
so that $\max_{i \in T} \iprod{\hat v, a_i}^2 \ge (1-\alpha)c^2$.
Finally,
\[
    |T| \cdot \alpha c^2 \|U\|_F^2
    \le |T| \cdot \min_{i \in T} \sigma_i^2
    \le \sum_{i\in[d]} \sigma_i^2
    = \|U\|_F^2 \mcom
\]
so that $|T| \le \lfloor \frac{1}{\alpha c^2}\rfloor$.
Thus, one of the top $\lfloor \frac{1}{\alpha c^2}\rfloor$ right singular vectors $a$ has correlation $|\iprod{\hat v, a}| \ge \sqrt{(1-\alpha)}c$.
The same proof holds for the $b$.

Furthermore, if $c^2 > \tfrac{1}{2}(1 + \eta)$ for some $\eta > 0$, and $(1-\alpha)c^2 > \tfrac{1}{2}$, then by \pref{eq:large-cor} it must be that $\max_{i\in T} \iprod{\hat v, a_i}^2 = \max_{i\in [d]} \iprod{\hat v, a_i}^2$, as $\hat v$ cannot have square correlation larger than $\tfrac{1}{2}$ with more than one left singular vector.
Taking $\alpha = \tfrac{\eta}{1+\eta}$ guarantees this.
The conclusion follows.
\end{proof}
\subsection{Concentration tools}
We require a number of tools from the literature on concentration of measure.

\subsubsection{For scalar-valued polynomials of Gaussians}
We need the some concentration bounds for certain polynomials of Gaussian random variables.

The following lemma gives standard bounds on the tails of a standard gaussian
variable---somewhat more precisely than other bounds in this paper.
Though there are ample sources, we repeat the proof here for reference.
\begin{lemma}
    \label{lem:gtail}
    Let $X \sim \cN(0,1)$. Then for $t > 0$,
    \[
	\Pr\Paren{X > t} \le \frac{e^{-t^2/2}}{t\sqrt{2\pi}},
    \]
    and
    \[
	\Pr\Paren{ X > t} \ge
	\frac{e^{-t^2/2}}{\sqrt{2\pi}}\cdot\left(\frac{1}{t} - \frac{1}{t^3}\right).
    \]
\end{lemma}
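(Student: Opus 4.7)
The plan is to bound the Gaussian tail integral $\Pr(X > t) = \frac{1}{\sqrt{2\pi}} \int_t^\infty e^{-x^2/2}\,dx$ using two classical tricks: a monotonicity argument for the upper bound and integration by parts (applied twice) for the lower bound.

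For the upper bound, I would note that on the range of integration $x \ge t$, we have $x/t \ge 1$, so multiplying the integrand by this factor only increases the integral. The resulting integral $\int_t^\infty (x/t) e^{-x^2/2}\,dx$ is elementary because $x e^{-x^2/2}$ is an exact differential of $-e^{-x^2/2}$, yielding $\frac{1}{t}e^{-t^2/2}$. Dividing by $\sqrt{2\pi}$ gives the stated upper bound.

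For the lower bound, I would write the integrand as $(1/x)\cdot x e^{-x^2/2}$ and integrate by parts with $u = 1/x$ and $dv = x e^{-x^2/2}\,dx$. The boundary term contributes $\frac{1}{t}e^{-t^2/2}$, and the remaining integral is $\int_t^\infty x^{-2} e^{-x^2/2}\,dx$, so
\[
    \int_t^\infty e^{-x^2/2}\,dx \;=\; \tfrac{1}{t}e^{-t^2/2} \;-\; \int_t^\infty \tfrac{1}{x^2} e^{-x^2/2}\,dx.
\]
Applying the same device to the residual integral (writing $x^{-2} = x^{-3} \cdot x$ and integrating by parts) gives $\int_t^\infty x^{-2} e^{-x^2/2}\,dx = \tfrac{1}{t^3} e^{-t^2/2} - 3 \int_t^\infty x^{-4} e^{-x^2/2}\,dx \le \tfrac{1}{t^3} e^{-t^2/2}$, since the remainder is a nonnegative integral. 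Combining the two identities and dividing by $\sqrt{2\pi}$ yields the claimed lower bound.

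There is no real obstacle here; both bounds follow from standard manipulations of the Mills ratio, and the only thing to be careful about is the direction of the inequality when discarding the nonnegative tail term $3 \int_t^\infty x^{-4} e^{-x^2/2}\,dx$ in the second integration by parts (which must be dropped from an upper bound on $\int x^{-2} e^{-x^2/2}$ to produce a lower bound on $\int e^{-x^2/2}$).
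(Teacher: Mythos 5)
Your proposal is correct and follows essentially the same route as the paper: the upper bound via the monotonicity factor $x/t \ge 1$, and the lower bound via integration by parts on $\tfrac1x \cdot x e^{-x^2/2}$, leaving the residual $\int_t^\infty x^{-2}e^{-x^2/2}\,dx$ to be bounded by $t^{-3}e^{-t^2/2}$. The only (immaterial) difference is how that residual is bounded — you integrate by parts a second time and drop the nonnegative remainder $3\int_t^\infty x^{-4}e^{-x^2/2}\,dx$, while the paper reuses the monotonicity trick $x^{-2} \le x/t^3$ on $[t,\infty)$; both give the identical intermediate estimate and hence the stated lower bound.
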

\begin{proof}
    To show the first statement, we apply an integration trick,
    \begin{align*}
	\Pr\Paren{X > t}
	&= \frac{1}{\sqrt{2\pi}} \int_{t}^{\infty} e^{-x^2/2} dx\\
	&\le \frac{1}{\sqrt{2\pi}}  \int_{t}^{\infty} \frac{x}{t} e^{-x^2/2} dx\\
	&= \frac{e^{-t^2/2}}{t\sqrt{2\pi}},
    \end{align*}
    where in the third step we have used the fact that $\frac{x}{t} \le x$ for
    $t\ge x$. For the second statement, we integrate by parts and repeat the
    trick,
    \begin{align*}
	\Pr\Paren{X > t}
	&= \frac{1}{\sqrt{2\pi}} \int_{t}^{\infty} e^{-x^2/2} dx\\
	&= \frac{1}{\sqrt{2\pi}} \int_{t}^{\infty} \frac{1}{x} \cdot x e^{-x^2/2} dx\\
	&= \frac{1}{\sqrt{2\pi}}\left[-\frac{1}{x}e^{-x^2/2} \cdot \right]_{t}^{\infty} -
	\frac{1}{\sqrt{2\pi}} \int_{t}^{\infty} \frac{1}{x^2} \cdot e^{-x^2/2} dx\\
	&\ge \frac{1}{\sqrt{2\pi}}\left[-\frac{1}{x}e^{-x^2/2} \cdot \right]_{t}^{\infty} -
	\frac{1}{\sqrt{2\pi}} \int_{t}^{\infty} \frac{x}{t^3} \cdot e^{-x^2/2} dx\\
	&= \frac{1}{\sqrt{2\pi}}
	\left(\frac{1}{t} - \frac{1}{t^3}\right)e^{-t^2/2}.
    \end{align*}
    This concludes the proof.
\end{proof}

The following is a
small modification of Theorem 6.7 from \cite{janson1997gaussian} which follows from Remark 6.8 in the same.
\begin{lemma}
  \label{lem:gaus-polys}
  For each $\ell \geq 1$ there is a universal constant $c_\ell > 0$
  such that for every $f$ a degree-$\ell$ polynomial of standard Gaussian random variables $X_1,\ldots,X_m$ and $t \geq 2$,
  \[
    \Pr(|f(X)| > t\E |f(X)|) \leq e^{-c_\ell t^{2/\ell}}\mper
  \]
  The same holds (with a different constant $c_\ell$) if $\E |f(x)|$ is replaced by $(\E f(x)^2)^{1/2}$.
\end{lemma}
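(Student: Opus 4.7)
The plan is to obtain this as a direct consequence of Gaussian hypercontractivity (Nelson's theorem), using moment estimates and Markov's inequality. The key ingredient is the hypercontractive moment bound for polynomials of Gaussians: if $f$ is a polynomial of degree at most $\ell$ in iid standard Gaussian variables, then for every $p\ge 2$,
\[
  \bigl(\E |f(X)|^p\bigr)^{1/p} \le (p-1)^{\ell/2} \bigl(\E f(X)^2\bigr)^{1/2}\mper
\]
This is Nelson's inequality, and is the content of Theorem 6.7 in Janson's \emph{Gaussian Hilbert Spaces} book.

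First I would prove the version with $\|f\|_2 \defeq (\E f(X)^2)^{1/2}$ on the right-hand side. Given $t\ge 2$, set $p \defeq c'_\ell \, t^{2/\ell}$ for a sufficiently small constant $c'_\ell$ depending only on $\ell$ so that $(p-1)^{\ell/2}\le t/e$. By Markov's inequality,
\[
  \Pr\bigl(|f(X)| > t\cdot \|f\|_2\bigr) \le \frac{\E |f(X)|^p}{(t\cdot \|f\|_2)^p} \le \left( \frac{(p-1)^{\ell/2}}{t}\right)^p \le e^{-p} = \exp\bigl(-c'_\ell\, t^{2/\ell}\bigr)\mper
\]
This gives the second claim with constant $c_\ell = c'_\ell$.

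To pass from $\|f\|_2$ to $\E|f(X)|$, I would use the reverse comparison of $L^p$ norms for chaos elements of bounded degree. Specifically, hypercontractivity applied with $p=2$ and standard interpolation (H\"older) gives a constant $K_\ell$ (depending only on $\ell$) such that
\[
  \|f\|_2 \le K_\ell \cdot \E |f(X)|
\]
for every polynomial $f$ of degree at most $\ell$ in standard Gaussians; this is exactly the content of Remark~6.8 in Janson's book. Substituting this into the previous display with $t$ replaced by $t/K_\ell$ and adjusting the constant $c_\ell$ accordingly yields the first statement of the lemma for all $t\ge 2K_\ell$; for $2\le t\le 2K_\ell$ the bound is trivial after further reducing $c_\ell$.

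The main (and only real) work is in the hypercontractive moment bound itself, which we invoke as a black box from~\cite{janson1997gaussian}; no new probabilistic input is needed. The only mild subtlety is that the hypocontractivity inequality, as usually stated, produces $L^p$ bounds in terms of $L^2$ rather than $L^1$ norms, which is why the two statements of the lemma have (necessarily) different constants $c_\ell$, with the $L^1$ version obtained by first paying the constant factor $K_\ell$ from the $L^2$--$L^1$ equivalence before applying Markov at moment $p\asymp t^{2/\ell}$.
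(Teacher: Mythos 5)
Your proposal is correct and matches the paper's treatment: the paper simply cites Theorem~6.7 and Remark~6.8 of Janson's \emph{Gaussian Hilbert Spaces}, and your argument is exactly the standard derivation of those results (hypercontractive moment bound plus Markov at $p \asymp t^{2/\ell}$, then the $L^1$--$L^2$ equivalence on bounded-degree chaos via H\"older interpolation). The only loose ends are routine range issues (ensuring $p \ge 2$ when $t$ is near $2$, handled by shrinking $c_\ell$ just as you do for $2 \le t \le 2K_\ell$), which do not affect correctness.
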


In our concentration results, we will need to calculate the expectations of
multivariate Gaussian polynomials, many of which share a common form. Below we
give an expression for these expectations.
\begin{fact}
    \label{fact:gaussian-poly}
Let $x$ be a $d$-dimensional vector with independent identically distributed gaussian entries
with variance $\sigma^2$. Let $u$ be a fixed unit vector.
Then setting $X = (\| x\|^2 - c)^p \|x\|^{2m} xx^T$, and setting $U =(\| x\|^2 -
c)^p \|x\|^{2m} uu^T$,  we have
\begin{align*}
    \ex{ X}
    &= \left(\sum_{0 \le k \le p} \binom{p}{k} (-1)^k c^k
(d+2)\cdots(d + 2p +2 m - 2k) \sigma^{2(p + m - k + 1)}\right) \cdot \Id,
\intertext{and}
    \ex{ U}
    &= \left(\sum_{0 \le k \le p} \binom{p}{k} (-1)^k c^k
d(d+2)\cdots(d + 2p + 2m - 2k -2 ) \sigma^{2(p + m - k)}\right) \cdot uu^T
\end{align*}
\end{fact}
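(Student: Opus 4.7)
The plan is to evaluate both expectations by first reducing them to moments of the scalar $\|x\|^2$ via the binomial theorem, and then invoking the standard moment formula for the chi-squared distribution.

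First I would expand the factor $(\|x\|^2 - c)^p$ by the binomial theorem, obtaining
\[
  (\|x\|^2 - c)^p \|x\|^{2m} = \sum_{k=0}^{p} \binom{p}{k} (-c)^k \|x\|^{2(p+m-k)}\mper
\]
By linearity of expectation, both $\E X$ and $\E U$ then reduce to computing, for each $j = p+m-k$, the expectations $\E[\|x\|^{2j} \cdot xx^\top]$ and $\E[\|x\|^{2j}] \cdot uu^\top$ respectively.

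The scalar moment $\E[\|x\|^{2j}]$ is the $j$-th moment of $\sigma^2$ times a chi-squared variable with $d$ degrees of freedom, which is the well-known formula
\[
  \E[\|x\|^{2j}] = \sigma^{2j}\cdot d(d+2)(d+4)\cdots(d+2j-2)\mper
\]
Substituting $j=p+m-k$ directly yields the second identity of the fact.

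For $\E[\|x\|^{2j}\cdot xx^\top]$, I would use rotational invariance of the Gaussian distribution: since $\|x\|^{2j}\cdot xx^\top$ transforms as $Q(\|x\|^{2j} xx^\top)Q^\top$ under any orthogonal $Q$ and $x$ is rotationally symmetric, the expectation must be a scalar multiple of $\Id$. Taking the trace to identify that scalar gives
\[
  \E[\|x\|^{2j}\cdot xx^\top] = \tfrac{1}{d}\E[\|x\|^{2(j+1)}]\cdot \Id = \sigma^{2(j+1)}(d+2)(d+4)\cdots(d+2j)\cdot \Id\mcom
\]
using the scalar moment formula at index $j+1$ and cancelling the factor of $d$. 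Substituting $j=p+m-k$ produces the first identity. The proof is essentially mechanical; the only care needed is in bookkeeping the index shift in the product $(d+2)\cdots(d+2p+2m-2k)$ (which starts at $d+2$ rather than $d$ because of the cancellation from the trace step), so I would be careful to write this shift out explicitly.
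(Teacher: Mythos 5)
Your proposal is correct and follows essentially the same route as the paper's proof: expand the binomial, reduce the matrix expectation to a multiple of $\Id$ by symmetry (the paper uses coordinate symmetry of $\E[\cdots x_1^2]$, you use rotational invariance plus a trace, which is the same reduction $\E[\|x\|^{2j}xx^\top] = \tfrac1d \E[\|x\|^{2(j+1)}]\cdot\Id$), and finish with the chi-squared moment formula. The index bookkeeping in your statement matches the fact exactly, so nothing is missing.
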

\begin{proof}
    \begin{align*}
    \E[X]
    &= \E[(\| x\|^2 - c)^p \|x\|^{2m} x_1^2 ] \cdot \Id
    \\
    &= \Id \cdot \sum_{0 \le k \le p} \binom{p}{k} (-1)^k c^k
    \E\left[\left(\sum_{\ell \in [d]} x_i^2 \right)^{p + m - k}
    x_1^2\right]
\intertext{Since $\left(\sum_{i \in [d]} x_i^2 \right)^{p + m - k}$ is
symmetric in $x_1,\ldots,x_d$, we have }
    &= \Id \cdot \frac{1}{d}\sum_{0 \le k \le p} \binom{p}{k} (-1)^k c^k
    \E\left[\left(\sum_{i \in [d]} x_i^2 \right)^{p + m - k + 1} \right]\\
    \intertext{We have reduced the computation to a question of the moments of a Chi-squared variable with $d$ degrees of freedom. Using these moments,}
    &= \Id \cdot \frac{1}{d}\sum_{0 \le k \le p} \binom{p}{k} (-1)^k c^k
    d(d+2)\cdots(d + 2p + 2m - 2k) \sigma^{2(p + m - k + 1)}\\
    &= \Id \cdot \left(\sum_{0 \le k \le p} \binom{p}{k} (-1)^k c^k
(d+2)\cdots(d + 2p + 2m - 2k) \sigma^{2(p + m - k + 1)}\right).
    \end{align*}
    A similar computation yields the result about $\ex{U}$.
\end{proof}

\subsubsection{For matrix-valued random variables}
On several occasions we will need to apply a Matrix-Bernstein-like theorem to a
sum of matrices with an unfortunate tail. To this end, we prove a ``truncated
Matrix Bernstein Inequality.''
Our proof uses an standard matrix Bernstein inequality as a black box.
The study of inequalities of this variety---on tails of sums of independent matrix-valued random variables---
was initiated by Ahlswede and Winter \cite{DBLP:journals/tit/AhlswedeW02}.
The excellent survey of Tropp \cite{DBLP:journals/FOCM/Tropp12} provides many results of this kind.

In applications of the following the operator norms of the summands $X_1,\ldots,X_n$ have well-behaved tails and so the truncation is a routine formality.
Two corollaries following the proposition and its proof capture truncation for all the matrices we encounter in the present work.

\begin{proposition}[Truncated Matrix Bernstein]
    \label{prop:truncated-bernstein}
    Let $X_1,\ldots, X_n \in \R^{d_1 \times d_2}$ be independent random
    matrices, and suppose that
    \[
	\Pr\left[ \left\Vert X_i - \E[X_i]\right\Vert_{op} \ge \beta \right]
	\le p \text{ for all } i \in [n].
    \]
    Furthermore, suppose that for each $X_i$,
    \[
	\left\Vert \E[X_i] -
	\E[X_i \Ind\left[\Vert X_i \Vert_{op} < \beta\right]] \right\Vert
	\le q.
    \]
    Denote
    \[
	\sigma^2
	= \max\left\{
	    \left\Vert \sum_{i\in[n]}\E\left[ X_i X_i^T \right]-\E\left[
	X_i\right]\E\left[X_i^T \right]\right\Vert_{op},
	    \left\Vert \sum_{i\in[n]}\E\left[ X_i^T X_i \right]-\E\left[
	X_i\right]^T\E\left[X_i \right]\right\Vert_{op}
    \right\}.
    \]
    Then for $X = \sum_{i\in[n]} X_i$, we have
    \[
	\Pr\left[
	    \Vert X - \E[X]\Vert_{op} \ge t
	\right]
	\le  n \cdot p
	+ (d_1 + d_2) \cdot \exp\left(\frac{-(t-nq)^2}{2(\sigma^2 + \beta (t-nq) / 3)}\right).
    \]

    \begin{proof}
        For simplicity we start by centering the variables $X_i$.
        Let $\tilde{X_i} = X_i - \E X_i$ and $\tilde X = \sum_{i \in [n]} \tilde X_i$
	The proof proceeds by a straightforward application of
	the noncommutative Bernstein's Inequality.
	We define variables $Y_1,\ldots, Y_n$, which are the truncated
	counterparts of the $\tilde X_i$s in the following sense:
	\[
	    Y_i = \begin{cases}
		\tilde X_i & \Vert \tilde X_i \Vert_{op} < \beta,\\
		0 & \text{otherwise.}
	    \end{cases}
	\]
	Define $Y = \sum_{i\in[n]} Y_i$.
        We claim that
        \begin{align}
          & \op{\sum_i \E Y_i Y_i^T - \E[Y_i]\E[Y_i]^T} \leq \op{\sum_i \E \tilde X_i \tilde X_i^T} \leq \sigma^2 \text{ and } \label{eqn:trunc-dom} \\
          & \op{\sum_i \E Y_i^T Y_i - \E[Y_i]^T \E[Y_i]} \leq \op{\sum_i \E \tilde X_i^T \tilde X_i} \leq \sigma^2\mcom \label{eqn:trunc-dom2}
        \end{align}
        which, together with the fact that $\|Y_i\| \leq \beta$ almost surely,
        will allow us to apply the non-commutative Bernstein's inequality to $Y$.
        To see \pref{eqn:trunc-dom} (\pref{eqn:trunc-dom2} is similar), we expand $\E Y_i Y_i^T$ as
        \[
          \E Y_i Y_i^T = \Pr\left [\op{\tilde X_i} < \beta \right ]\E \left
	  [\tilde X_i \tilde X_i^T\ \bigg{|}\  \op{\tilde X_i} < \beta \right ]\mper
        \]
        Additionally expanding $\E \left [ \tilde X_i \tilde X_i^T \right ]$ as
        \[
          \E \left [ \tilde X_i \tilde X_i^T \right ]
          = \Pr\left [\op{\tilde X_i} < \beta \right ]\E \left [\tilde X_i
	  \tilde X_i^T \ \bigg{|}\  \op{\tilde X_i} < \beta \right ]
           + \Pr\left [\op{\tilde X_i} \geq \beta \right ]\E \left [\tilde X_i
	   \tilde X_i^T\ \bigg{|}\  \op{\tilde X_i} \geq \beta \right ]\mcom
        \]
        we note that $\E[ \tilde X_i \tilde X_i^T\  | \ \op{\tilde X_i} \geq \beta ]$ is PSD.
        Thus, $\E[Y_i Y_i^T] \succeq \E[X_i X_i^T]$.
        But by definition $\E[Y_i Y_i^T]$ is still PSD
        (and hence $\op{\sum_i \E[Y_i Y_i^T]}$ is given by the maximum eigenvalue of $\E[Y_i Y_i^T]$), so
        \[
          \op{\sum_i \E Y_i Y_i^T} \leq \op{\sum_i \E \tilde X_i \tilde X_i^T}\mper
        \]
        Also PSD are $\E[Y_i]\E[Y_i]^T$ and $\E[(Y_i - \E[Y_i])(Y_i - \E[Y_i])^T] = \E[Y_i Y_i^T] - \E[Y_i]\E[Y_i]^T$.
        By the same reasoning again, then, we get $\op{\sum_i \E Y_i Y_i^T - \E[Y_i]\E[Y_i]^T} \leq \op{\sum_i \E[Y_i Y_i^T]}$.
        Putting this all together gives \pref{eqn:trunc-dom}.

	Now we are ready to apply the non-commutative Bernstein's inequality to $Y$.
        We have
	\[
	    \Pr\left[ \Vert Y - \E[Y] \Vert_{op} \ge \alpha \right]
	    \le
	    (d_1 + d_2)\cdot \exp\left(\frac{-\alpha^2/2}{\sigma^2 + \beta
	    \cdot \alpha / 3}\right).
	\]
	Now, we have
	\begin{align*}
	    \Pr\left[ \Vert X - \E[X] \Vert_{op} \ge t \right]
	    &=
	    \Pr\left[ \Vert X - \E[X] \Vert_{op} \ge t \ \vert \ X = Y\right]
	    \cdot \Pr\left[ X = Y\right]
	    \\ +
        & \Pr\left[ \Vert X - \E[X] \Vert_{op} \ge t \ \vert \ X \neq Y\right]
	    \cdot \Pr\left[ X \neq Y\right],\\
	    &\le
	    \Pr\left[ \Vert X - \E[X] \Vert_{op} \ge t \ \vert \ X = Y\right]
	    + n \cdot p
	\end{align*}
        by a union bound over the events $\{ X_i \neq Y_i \}$.
	It remains to bound the conditional probability
	$\Pr\left[ \Vert X - \E[X] \Vert_{op} \ge t \ \vert \ X = Y\right]$.
	By assumption, $\Vert \E[X] - \E[Y] \Vert_{op} \le nq$, and so by the
	triangle inequality,
	\[
	    \Vert X - \E[X] \Vert_{op}
	    \le \Vert X - \E[Y]\Vert_{op} + \Vert \E[Y] - \E[X]\Vert_{op} \le
	    \Vert X - \E[Y]\Vert_{op} + nq.
	\]
	Thus,
	\begin{align*}
	\Pr\left[ \Vert X - \E[X] \Vert_{op} \ge t \ \vert \ X = Y\right]
	&\le
	\Pr\left[ \Vert X - \E[Y] \Vert_{op} + nq \ge t \ \vert \ X = Y\right]\\
	&=
	\Pr\left[ \Vert Y - \E[Y] \Vert_{op} \ge t - nq \ \vert \ X = Y\right].
	\end{align*}
	Putting everything together and setting $\alpha = t - nq$,
	\[
	    \Pr[\Vert X - \E[X] \Vert_{op} \ge t]
	    \le n\cdot p + (d_1 + d_2)\cdot
	    \exp\left(\frac{-(t-nq)^2/2}{\sigma^2 + \beta (t-nq)/3}\right),
	\]
	as desired.
    \end{proof}

\end{proposition}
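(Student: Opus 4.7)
The plan is to reduce the statement to the standard non-commutative matrix Bernstein inequality by replacing each $X_i$ with a bounded surrogate and paying for the truncation via a union bound. I would first center and truncate: let $\tilde X_i \seteq X_i - \E X_i$ and define
\[
  Y_i \seteq \tilde X_i \cdot \Ind[\|\tilde X_i\|_{op} < \beta], \qquad Y \seteq \sum_{i \in [n]} Y_i.
\]
Each $Y_i$ is deterministically bounded in operator norm by $\beta$, which is exactly the boundedness hypothesis needed for the usual matrix Bernstein inequality. The two remaining ingredients are (i) a bound on the matrix variance of $Y$ in terms of $\sigma^2$, and (ii) a bound on how much $\E Y$ differs from the true centering value $\E \tilde X_i = 0$.

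For (i), I would argue a PSD-domination inequality of the form $\sum_i (\E Y_i Y_i^\top - \E Y_i \E Y_i^\top) \preceq \sum_i \E \tilde X_i \tilde X_i^\top$, and the analogous transposed version, then pass to operator norms. The key observation is that $\E \tilde X_i \tilde X_i^\top$ splits as a sum over the event $\{\|\tilde X_i\|_{op} < \beta\}$ and its complement, both summands being PSD; discarding the tail summand yields $\E[Y_i Y_i^\top]$, which is therefore PSD-dominated. Subtracting the centering correction $\E Y_i \E Y_i^\top$, which is itself PSD, only strengthens the domination.

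For (ii), since $\E \tilde X_i = 0$ we have $\E Y_i = -\E[\tilde X_i \Ind[\|\tilde X_i\|_{op} \ge \beta]]$, which is precisely the mass removed by truncation. The hypothesis involving $q$ bounds this contribution in operator norm by $q$ per summand, so by the triangle inequality $\|\E Y\|_{op} \le nq$. With (i) and (ii) in hand, the standard matrix Bernstein bound yields
\[
  \Pr[\|Y - \E Y\|_{op} \ge s] \le (d_1 + d_2) \exp\!\left( \tfrac{-s^2/2}{\sigma^2 + \beta s/3} \right).
\]

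Finally, I would conclude by taking a union bound with the ``good'' event $\mathcal{G} \seteq \{ \|\tilde X_i\|_{op} < \beta \text{ for all } i \in [n]\}$, which by hypothesis has probability at least $1 - np$. On $\mathcal{G}$ we have $X - \E X = Y$, so by the triangle inequality $\|X - \E X\|_{op} \le \|Y - \E Y\|_{op} + \|\E Y\|_{op} \le \|Y - \E Y\|_{op} + nq$; setting the Bernstein tail scale to $s = t - nq$ and adding the $np$ failure probability from $\mathcal{G}^c$ gives the claimed bound. The main obstacle will be the variance-domination step, which demands some care in handling the centering correction and in reconciling the two slightly different truncation conventions ($\Ind[\|X_i\|_{op} < \beta]$ as in the hypothesis on $q$, versus $\Ind[\|\tilde X_i\|_{op} < \beta]$ in the definition of $Y_i$); one either re-derives the $q$-bound under the centered convention or, symmetrically, defines $Y_i$ using the uncentered indicator and absorbs the mismatch into the computation of $\E Y$.
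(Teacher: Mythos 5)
Your proposal follows the paper's proof essentially step for step: the same centering and truncation $Y_i = \tilde X_i \Ind[\|\tilde X_i\|_{op} < \beta]$, the same PSD-domination argument for the variance proxy, the same application of the standard matrix Bernstein inequality, and the same union-bound-plus-triangle-inequality conclusion with the tail evaluated at $t - nq$. The truncation-convention mismatch you flag (the $q$ hypothesis uses the uncentered indicator $\Ind[\|X_i\|_{op} < \beta]$ while $Y_i$ is defined via the centered one) is in fact glossed over by the paper itself, so your explicit plan to reconcile it is, if anything, slightly more careful.
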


The following lemma helps achieve the assumptions of \pref{prop:truncated-bernstein} easily
for a useful class of thin-tailed random matrices.
\begin{lemma} \label{lem:truncate}
    Suppose that $X$ is a matrix whose entries are polynomials of constant degree $\ell $ in unknowns $x$, which we evaluate at independent Gaussians.
     Let $f(x) := \|X\|_{op}$ and $g(x) := \|XX^T\|_{op}$, and either $f$ is itself a polynomial in $x$
     of degree at most $2\ell$ or $g$ is a polynomial in $x$ of degree at most $4\ell$.
     Then if $\beta = R\cdot \alpha$ for $\alpha \ge \min\{\Ex{|f(x)|}, \sqrt{\Ex{g(x)}}\}$ and $R = \polylog(n)$,
    \begin{align}
	\Pr\left(\|X\|_{op} \ge \beta \right) \le n^{-\log n},\label{eq:normbd}\\
	\intertext{ and }
	\Ex{\|X \cdot \Ind\{\|X\|_{op} \ge \beta \}\|_{op}} \le (\beta + \alpha) n^{-\log n}\mper \label{eq:expbd}
    \end{align}
\end{lemma}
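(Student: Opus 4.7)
The plan is to reduce both conclusions to a single-shot application of the Gaussian-polynomial concentration inequality in \pref{lem:gaus-polys}, handled case-by-case depending on whether $f$ or $g$ is the polynomial guaranteed by the hypothesis. In the first case, where $f(x)=\|X\|_{op}$ is itself a polynomial of degree at most $2\ell$ in the Gaussian unknowns, \pref{lem:gaus-polys} directly gives $\Pr(|f| \ge t\,\E|f|) \le \exp(-c_{2\ell} t^{2/(2\ell)})$ for any $t\ge 2$. Setting $t=R\cdot(\alpha/\E|f|)\ge R$ and using $\E|f|\le\alpha$ yields $\Pr(f\ge \beta)=\Pr(f\ge R\alpha)\le \exp(-c\,R^{1/\ell})$, which for $R=\polylog(n)$ large enough (specifically $R\ge C(\log n)^{2\ell}$) is at most $n^{-\log n}$.

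In the second case, where $g(x)=\|XX^\top\|_{op}=\|X\|_{op}^2$ is a polynomial of degree at most $4\ell$, I would apply \pref{lem:gaus-polys} to $g$ (noting $g\ge 0$, so $\E|g|=\E g\le \alpha^2$): $\Pr(\|X\|_{op}\ge \beta)=\Pr(g\ge R^2\alpha^2)\le\Pr(g\ge R^2\,\E g)\le \exp(-c_{4\ell}(R^2)^{2/(4\ell)})=\exp(-c\,R^{1/\ell})$, again at most $n^{-\log n}$ for suitable $R=\polylog(n)$. This proves \pref{eq:normbd}.

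For the expectation bound \pref{eq:expbd}, I would integrate by parts on the tail of $Y:=\|X\|_{op}$. Writing
\[
\E[Y\cdot \Ind\{Y\ge\beta\}] \;=\; \beta\,\Pr(Y\ge\beta) + \int_\beta^\infty \Pr(Y\ge t)\,dt,
\]
the first term is at most $\beta\cdot n^{-\log n}$ by \pref{eq:normbd}. For the integral, substitute $s=t/\alpha$ and use the same concentration bound uniformly for all $s\ge R$:
\[
\int_\beta^\infty \Pr(Y\ge t)\,dt \;=\; \alpha\int_R^\infty \Pr(Y\ge s\alpha)\,ds \;\le\; \alpha\int_R^\infty \exp(-c\,s^{1/\ell})\,ds.
\]
Choosing $R$ a sufficiently large polylogarithmic function of $n$, the tail integral $\int_R^\infty \exp(-cs^{1/\ell})\,ds$ is dominated by $\exp(-cR^{1/\ell}/2)\le n^{-\log n}$ (up to an absolute constant), giving a bound of $O(\alpha)\cdot n^{-\log n}$ on the integral. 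Combining the two terms produces $(\beta+\alpha)\,n^{-\log n}$ as desired.

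There is no real obstacle here; the only subtlety is bookkeeping the choice of $R$ so that the Gaussian-polynomial tail bound exponent $c_\ell R^{1/\ell}$ exceeds $(\log n)^2$, which is possible with $R=\polylog(n)$ since $\ell$ is a fixed constant. Both claims then fall out by direct substitution.
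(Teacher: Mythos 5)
Your proposal is correct and follows essentially the same route as the paper: apply the Gaussian-polynomial tail bound (\pref{lem:gaus-polys}) to whichever of $f$ or $g$ is a polynomial to get $\Pr(\|X\|_{op}\ge s\alpha)\le e^{-c s^{\Omega(1)/\ell}}$, take $s=R=\polylog(n)$ for \pref{eq:normbd}, and prove \pref{eq:expbd} by the same integration by parts, change of variables $t=\alpha s$, and tail-integral estimate. The only differences are cosmetic (you bound the tail integral by a direct domination argument rather than the paper's second substitution and geometric sum, and your case-2 step $\E g\le\alpha^2$ is implicitly the same harmless gloss the paper makes, justified since all moments of low-degree Gaussian polynomials are comparable).
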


\begin{proof}
    We begin with \pref{eq:normbd}. Either $f(x)$ is a polynomial of degree at
    most $2\ell$, or $g(x)$ is a polynomial of degree at most $4\ell$ in
    gaussian variables.
    We can thus use \pref{lem:gaus-polys} to obtain the following bound,
    \begin{align}
	\Pr\left(|f(x)| \ge t \alpha \right)
	\le \exp\left(-c t^{1/(2\ell)} \right),\label{eq:B-tail}
    \end{align}
    where $c$ is a universal constant. Taking $t = R = \polylog(n)$
    gives us \pref{eq:normbd}.

    We now address \pref{eq:expbd}.
To this end, let $p(t)$ and $P(t)$ be the probability
density function and cumulative density function of $\|X\|_{op}$, respectively. We apply Jensen's inequality and instead bound
\begin{align*}
\| \Ex{X \Ind\{\|X\|_{op} \ge \beta \}} \|
&\le  \Ex{\| X  \|_{op} \Ind\{\|X\|_{op} \ge \beta \}}
\ =  \int_{0}^{\infty} t \cdot \Ind\{t \ge \beta \} p(t) dt
\intertext{since the indicator is 0 for $t\le \beta$,}
&=  \int_{\beta}^{\infty} (-t) (-p(t)) dt
\intertext{integrating by parts,}
&=  -t \cdot (1-P(t))\bigg{|}_{\beta}^{\infty} + \int_{\beta}^{\infty} (1-P(t)) dt
\intertext{
  and using the equality of $1-P(t)$ with $\Pr(\|X\|_{op} \ge t)$ along with
\pref{eq:normbd},}
&\le \beta n^{-\log n} + \int_{\beta}^{\infty} \Pr(\|X\|_{op} \ge t)  dt
\intertext{
    Applying the change of variables $t = \alpha s$ so as to apply \pref{eq:B-tail},}
&=  \beta n^{-\log n} + \alpha \int_{R}^{\infty} \Pr(\|X\|_{op} \ge \alpha s)  ds 
\\ &\le \beta n^{-\log n} + \alpha \int_{R}^{\infty} \exp(-c s^{1/(2\ell)}) ds
    \intertext{Now applying a change of variables so $s = (\frac{u \log
    n}{c})^{2\ell}$,}
    & = \beta n^{-\log n} +
    \alpha \int_{\tfrac{c R^{1/(2\ell)}}{\log n} }^\infty
      n^{-u} \cdot
      2\ell \left(\frac{\log n}{c}\right)^{2\ell} u^{2\ell -1}
    du \\
    & \le \beta n^{-\log n} + \alpha \int_{\tfrac{c R^{1/(2\ell)}}{\log n}}^\infty   n^{-u/2}  du \mcom \\
\intertext{where we have used the assumption that $\ell$ is constant.
We can approximate this by a geometric sum,}
&\le \beta n^{-\log n} + \alpha
\sum_{u = \tfrac{c R^{1/(2\ell)}}{\log n} }^\infty   n^{-u/2}  \\
& \le \beta n^{-\log n} +
\alpha\cdot n^{-c R^{1/(2\ell)}/ (2 \log n)}  \\
\end{align*}
Evaluating at $R = \polylog n$ for a sufficiently large polynomial in the log gives us
\[
\Ex{\|X \cdot \Ind\{\|X\|_{op} \ge \beta\}\|_{op}} \le (\beta + \alpha) n^{-\log n},
\]
as desired.
\end{proof}

\section{Concentration bounds for planted sparse vector in random linear subspace}
\label{sec:fast-planted-proofs}

\begin{proof}[Proof of \pref{lem:fps-covariance-concentration}]
  Let $c \seteq \sum_{i=1}^n v(i) b_i$.
  The matrix in question has a nice block structure:
  \[
    \sum_{i=1}^n a_i a_i^\top
    = \left ( \begin{array}{cc} \|v\|_2^2 & c^\top \\ c & \sum_{i=1}^n b_i b_i^\top \end{array} \right )\mper
  \]
  The vector $c$ is distributed as $\cN(0,\tfrac 1 n \Id_{d-1})$ so by standard concentration has
  $\|c\| \leq \tO(d/n)^{1/2}$ \wovp.
  By assumption, $\|v\|_2^2 = 1$.
  Thus by triangle inequality \wovp
  \[
    \left \| \sum_{i=1}^n a_i a_i^\top - \Id_d \right \| \leq \tO\Paren{\frac d n}^{1/2}
    + \left \| \sum_{i=1}^n b_i b_i^\top - \Id_{d-1} \right \|\mper
  \]
  By \cite[Corollary 5.50]{DBLP:journals/corr/abs-1011-3027} applied to the subgaussian vectors $n b_i$, \wovp
  \[
    \left \| \sum_{i=1}^n b_i b_i^\top - \Id_{d-1} \right \| \leq O\Paren{\frac d n}^{1/2}
  \]
  and hence $\| \sum_{i=1}^n a_i a_i^\top - \Id_d \| \leq \tO(d/n)^{1/2}$ \wovp.
  This implies $\| (\sum_{i=1}^n a_i a_i^\top)^{-1} - \Id_d \| \leq \tO(d/n)^{1/2}$
  and $\| (\sum_{i=1}^n a_i a_i^\top)^{-1/2} - \Id_d \| \leq \tO(d/n)^{1/2}$
  when $d = o(n)$ by the following facts applied to the eigenvalues of $\sum_{i=1}^n a_i a_i^\top$.
  For $0 \leq \epsilon < 1$,
  \begin{align*}
    (1 + \epsilon)^{-1} & = 1 - O(\epsilon) \quad \text{ and } \quad (1 - \epsilon)^{-1} = 1 + O(\epsilon)\mcom\\
  (1 + \epsilon)^{-1/2} & = 1 - O(\epsilon) \quad \text { and } \quad (1 - \epsilon)^{-1/2} = 1 + O(\epsilon)\mper
  \end{align*}
  These are proved easily via the identity $(1 + \epsilon)^{-1} = \sum_{k =1}^\infty \epsilon^k$ and similar.
\end{proof}

\subsection*{Orthogonal subspace basis}

\begin{lemma}
\label{lem:ortho-1}
  Let $a_1,\ldots,a_n\in \R^d$ be independent random vectors from $N(0,\tfrac 1 n \Id)$ with $d\le n$ and let $A=\sum_{i=1}^n a_i a_i^\top$.
  Then for every unit vector $x\in \R^d$, with overwhelming probability $1-d^{-\omega(1)}$,
  \begin{displaymath}
    \Abs{ \langle  x , A^{-1} x \rangle -\lVert  x \rVert^2 }
    \le \tilde O\left(\frac {d + \sqrt n } n \right)\cdot \lVert  x \rVert^2\mper
  \end{displaymath}
\end{lemma}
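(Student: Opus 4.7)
The plan is to reduce the problem to a scalar quantity via rotational invariance of the Gaussian, then use the Schur-complement inversion formula to express $\iprod{x, A^{-1} x}$ in terms of a few pieces for which standard $\chi^2$ and Wishart concentration bounds apply directly.

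First I would observe that since each $a_i \sim \cN(0,\tfrac 1 n \Id)$ is rotationally invariant, the law of $\iprod{x, A^{-1} x}$ does not depend on the unit vector $x$. It therefore suffices to prove the statement for $x = e_1$. Writing each $a_i = (\alpha_i, b_i)$ with $\alpha_i \in \R$ and $b_i \in \R^{d-1}$, set
\[
    s \defeq \sum_{i=1}^n \alpha_i^2, \qquad
    c \defeq \sum_{i=1}^n \alpha_i b_i, \qquad
    B \defeq \sum_{i=1}^n b_i b_i^\top,
\]
so that $A$ has the block form with upper-left entry $s$, upper-right row $c^\top$, and lower-right block $B$. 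The Schur-complement formula then gives $\iprod{e_1, A^{-1} e_1} = 1/(s - c^\top B^{-1} c)$, reducing the problem to controlling this scalar denominator.

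Next I would concentrate each ingredient, all with overwhelming probability. The quantity $ns$ is $\chi^2$ with $n$ degrees of freedom, so $|s - 1| \le \tO(1/\sqrt n)$. The matrix $B$ is a sample covariance of $n$ iid $\cN(0, \tfrac 1 n \Id_{d-1})$ vectors, so by the same covariance bound invoked in the proof of \pref{lem:fps-covariance-concentration} we have $\|B - \Id\| \le \tO(\sqrt{d/n})$ and hence $\|B^{-1} - \Id\| \le \tO(\sqrt{d/n})$. Finally, conditional on $\alpha_1,\ldots,\alpha_n$, the vector $c$ is Gaussian with covariance $(s/n)\Id_{d-1}$, so $n\|c\|^2/s$ is $\chi^2$ with $d-1$ degrees of freedom, yielding $\|c\|^2 = (d-1)/n + \tO(\sqrt d / n)$.

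To conclude, I would combine these three estimates. Using $|c^\top (B^{-1}-\Id) c| \le \|B^{-1}-\Id\|\,\|c\|^2$, we get
\[
    c^\top B^{-1} c = \|c\|^2 + \tO(d^{3/2}/n^{3/2}) = (d-1)/n + \tO(\sqrt d/n),
\]
since $d \le n$ makes the error term $d^{3/2}/n^{3/2}$ dominated by $\sqrt d /n$. Hence
\[
    s - c^\top B^{-1} c \;=\; 1 - (d-1)/n + \tO\bigl((d+\sqrt n)/n\bigr),
\]
and inverting via the expansion $1/(1-y) = 1 + y + O(y^2)$ (valid once $d/n$ is bounded away from $1$, which is the only interesting regime since otherwise the claimed bound is vacuous) yields $\iprod{e_1, A^{-1} e_1} = 1 + \tO((d+\sqrt n)/n)$, as required. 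The proof is essentially bookkeeping of standard concentration facts; the only real decision is the block-inversion reduction, and the only mild subtlety is ensuring the quadratic remainder in the Taylor expansion is absorbed into the stated $\tO((d+\sqrt n)/n)$ error, which follows from $(d/n)^2 \le d/n$ when $d \le n$.
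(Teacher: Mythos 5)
Your proof is correct, but it takes a genuinely different route from the paper's. The paper's proof works with an arbitrary fixed unit vector $x$ and inverts $A$ by a Neumann series: writing $B = \Id - A$, it bounds $\|B\| \le \tO(d/n)^{1/2}$ by covariance concentration, so that $\|A^{-1} - \Id - B\| \le \tO(d/n)$, and then controls the remaining linear term $\iprod{x, Bx}$ by scalar $\chi^2$ concentration (the $\sqrt n/n$ contribution). You instead use rotational invariance to reduce to $x = e_1$, and then invert by the Schur complement of the block decomposition $A = \bigl(\begin{smallmatrix} s & c^\top \\ c & B\end{smallmatrix}\bigr)$, concentrating $s$, $\|c\|^2$, and $\|B^{-1}-\Id\|$ separately. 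Both arguments rest on the same two primitives ($\chi^2$ tails and Vershynin-style covariance concentration), and both implicitly need $d/n$ bounded away from $1$ (the paper for convergence of the series, you for invertibility of $B$ and the Taylor expansion of $1/(1-y)$), so neither fully covers $d$ arbitrarily close to $n$; this is harmless in the regime of use. The paper's route is slightly slicker in that it never fixes a coordinate system and needs only one covariance bound plus one scalar bound; your route is more in the spirit of the paper's own \pref{lem:ortho-2}--\pref{lem:ortho-4} (which use Sherman--Morrison and block inversion), and it gives a more explicit picture of where each error term comes from, which would make it easier to extract the exact constant in front of $d/n$ if one ever wanted it.

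One small bookkeeping slip: you claim $d \le n$ makes $d^{3/2}/n^{3/2}$ dominated by $\sqrt d/n$, but that comparison actually requires $d \le \sqrt n$. What is true, and all you need, is $d^{3/2}/n^{3/2} \le d/n$ for $d \le n$, so the term is still absorbed into the final $\tO\bigl((d+\sqrt n)/n\bigr)$; only your intermediate statement that $c^\top B^{-1} c = (d-1)/n + \tO(\sqrt d/n)$ should read $(d-1)/n + \tO\bigl((\sqrt d + d/\sqrt n)/n\bigr)$ or simply $(d-1)/n + \tO(d/n)$.
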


\begin{proof}
Let $x \in \R^d$.
By scale invariance, we may assume $\lVert  x \rVert =1$.

By standard matrix concentration bounds, the matrix $B=\Id - A$ has spectral norm $\lVert  B \rVert\le \tilde O(d/n)^{1/2}$ \wovp \cite[Corollary 5.50]{DBLP:journals/corr/abs-1011-3027}.
Since $A^{-1} = (\Id - B)^{-1} = \sum_{k=0}^\infty B^k$, the spectral norm of $A^{-1} - \Id - B$ is at most $\sum_{k=2}^\infty \lVert  B \rVert^k$ (whenever the series converges).
Hence, $\lVert A^{-1} - \Id - B\rVert \le \tilde O(d/n)$ \wovp.

It follows that it is enough to show that $\lvert \langle  x, B x \rangle\rvert \le \tilde O(1/n)^{1/2}$ \wovp.
The random variable $n - n \langle  x, B x \rangle=\sum_{i=1}^n \langle \sqrt n\cdot  a_i, x \rangle^2$ is $\chi^2$-distributed with $n$ degrees of freedom.
Thus, by standard concentration bounds, $n \lvert \langle  x, B x \rangle\rvert\le \tilde O(\sqrt n)$ \wovp \cite{laurent2000}.

We conclude that with overwhelming probability $1-d^{-\omega(1)}$,
\begin{displaymath}
  \Abs{ \iprod{ x, A^{-1} x} - \lVert  x \rVert^2}
  \le \Abs{\iprod{x, B x}} + \tilde O(d/n)
  \le \tilde O \left( \frac {d+\sqrt n} n \right)\mper
\end{displaymath}
\end{proof}

\begin{lemma}
\label{lem:ortho-2}
  Let $a_1,\ldots,a_n\in \R^d$ be independent random vectors from $N(0,\tfrac 1 n \Id)$ with $d\le n$ and let $A=\sum_{i=1}^n a_i a_i^\top$.
  Then for every index $i\in [n]$, with overwhelming probability $1-d^{\omega(1)}$,
  \begin{displaymath}
    \Abs{ \langle  a_j , A^{-1} a_j \rangle -\lVert  a_j \rVert^2 }
    \le \tilde O\left(\frac {d + \sqrt n } n \right)\cdot \lVert  a_j \rVert^2\mper
  \end{displaymath}
\end{lemma}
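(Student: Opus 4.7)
The plan is to reduce to \pref{lem:ortho-1} via the Sherman-Morrison formula, using it to decouple $a_j$ from the matrix $A$, which contains $a_j$ among its summands. This dependence is precisely what prevents a direct application of \pref{lem:ortho-1}; the algebraic identity will move us to a matrix $A_{-j}$ that is independent of $a_j$.

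Fix $j \in [n]$ and let $A_{-j} \defeq A - a_j a_j^\top = \sum_{i \neq j} a_i a_i^\top$, which is independent of $a_j$. Applying the Sherman-Morrison formula to the rank-one update $A = A_{-j} + a_j a_j^\top$ yields
\[
  \iprod{a_j, A^{-1} a_j} = \frac{\iprod{a_j, A_{-j}^{-1} a_j}}{1 + \iprod{a_j, A_{-j}^{-1} a_j}}\mper
\]
Since $A_{-j}$ is a sum of $n-1$ iid copies of the same distribution (rather than $n$), the proof of \pref{lem:ortho-1} applies to $A_{-j}$ after trivial adjustment: the matrix-concentration bound gives $\|A_{-j} - \tfrac{n-1}{n}\Id\| \le \tO(d/n)^{1/2}$ and the centering $1/n$ is absorbed into the error term, and the chi-squared variable $n\iprod{x, (\Id - A_{-j})x\rangle}$ has $n-1$ degrees of freedom rather than $n$. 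Thus, conditioning on $a_j$ and applying this adapted lemma to the unit vector $a_j/\|a_j\|$ yields, with overwhelming probability,
\[
  \left|\iprod{a_j, A_{-j}^{-1} a_j} - \|a_j\|^2\right| \le \tO\Paren{\tfrac{d+\sqrt n}{n}}\cdot \|a_j\|^2\mper
\]

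It remains to push this estimate through the Sherman-Morrison identity. Writing $\iprod{a_j, A_{-j}^{-1} a_j} = \|a_j\|^2(1+\eta)$ with $|\eta| \le \tO((d+\sqrt n)/n)$, a short manipulation gives
\[
  \iprod{a_j, A^{-1} a_j} - \|a_j\|^2 = \frac{\|a_j\|^2\bigl(\eta - \|a_j\|^2(1+\eta)\bigr)}{1 + \|a_j\|^2(1+\eta)}\mper
\]
Standard $\chi^2$ concentration gives $\|a_j\|^2 = (1 \pm \tO(d^{-1/2}))\cdot d/n \le \tO((d+\sqrt n)/n)$ with overwhelming probability, so the numerator is bounded in absolute value by $\|a_j\|^2 \cdot \tO((d+\sqrt n)/n)$, while the denominator is bounded below by a constant (since both $\eta$ and $\|a_j\|^2$ tend to $0$). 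A union bound over $j \in [n]$ preserves the overwhelming-probability guarantee.

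The main obstacle is the algebraic bookkeeping around Sherman-Morrison; no substantively new concentration inequalities are required beyond what is already packaged in \pref{lem:ortho-1} together with standard $\chi^2$ tail bounds for $\|a_j\|^2$.
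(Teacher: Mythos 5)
Your proposal is correct and follows essentially the same route as the paper's proof: Sherman--Morrison to decouple $a_j$ from $A$, an application of \pref{lem:ortho-1} (adjusted for $n-1$ summands and conditioned on $a_j$, which is independent of $A_{-j}$) to control $\iprod{a_j, A_{-j}^{-1}a_j}$, and the observation that $\|a_j\|^2 = \tO(d/n)$ \wovp to absorb the rank-one correction term. The only cosmetic difference is that the paper bounds the correction $|\iprod{a_j,A^{-1}a_j}-\iprod{a_j,A_{-j}^{-1}a_j}| \le \iprod{a_j,A_{-j}^{-1}a_j}^2 \le 4\|a_j\|^4$ using $\|A_{-j}^{-1}\|\le 2$, whereas you carry the exact expression with $\eta$ through the identity; these are equivalent bookkeeping.
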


\begin{proof}
  Let $A_{-j} = \sum_{i\neq j} a_i a_i^\top$.
  By Sherman--Morrison,
  \begin{displaymath}
    A^{-1}
    = (A_{-j } + a_j a_j^\top)^{-1}
    =  A_{-j}^{-1} - \frac 1 {1+a_j^\top  A_{-j}^{-1} a_j }  A_{-j}^{-1} a_j a_j^\top A_{-j}^{-1}
  \end{displaymath}
  Thus, $\langle  a_j, A^{-1} a_j \rangle = \langle a_j, A_{-j}^{-1} a_j  \rangle - \langle a_j, A_{-j}^{-1} a_j  \rangle^2/ ( 1+ \langle a_j, A_{-j}^{-1} a_j  \rangle)$.
  Since $\lVert  \tfrac {n} {n-1} A_{-j} - \Id  \rVert = \tilde O(d/n)^{1/2}$ \wovp, we also have $\lVert  A_{-j}^{-1} \rVert\le 2$ with overwhelming probability.
  Therefore, \wovp,
  \begin{displaymath}
    \Abs{ \langle  a_j, A^{-1} a_j \rangle - \langle a_j, A_{-j}^{-1} a_j  \rangle } \le  \langle a_j, A_{-j}^{-1} a_j  \rangle^2 \le 4 \lVert  a_j \rVert^4 \le \tilde O(d / n) \cdot \lVert  a_j \rVert^2\mper
  \end{displaymath}
  At the same time, by \pref{lem:ortho-1}, \wovp,
  \begin{displaymath}
    \Abs{\langle a_j, \tfrac n {n-1} A_{-j}^{-1} a_j  \rangle -  \lVert  a_j \rVert^2 } \le \tilde O\left( \frac {d +\sqrt n} n \right) \cdot \lVert  a_j \rVert^2\mper
  \end{displaymath}
  We conclude that, \wovp,
  \begin{align*}
    \Abs{ \langle  a_j , A^{-1} a_j \rangle -\lVert  a_j \rVert^2 }
    &\le \Abs{ \langle  a_j , A^{-1} a_j \rangle - \langle a_j, A_{-j}^{-1} a_j  \rangle }
      + \Abs{ \langle a_j, A_{-j}^{-1} a_j \rangle - \tfrac {n-1} n \lVert  a_j \rVert^2 }
      + \tfrac 1 {n} \lVert  a_j \rVert^2
    \\
    & \le \tilde O \left ( \frac { d + \sqrt n } n \right )\mper
  \end{align*}
\end{proof}

\begin{lemma}
\label{lem:ortho-3}
  Let $A$ be a block matrix where one of the diagonal blocks is the $1 \times 1$ identity; that is,
  \[
    A = \Paren{ \begin{array}{cc} \|v\|^2 & c^\top \\ c & B \end{array} } = \Paren{ \begin{array}{cc} 1 & c^\top \\ c & B \end{array} } \mper
  \]
  for some matrix $B$ and vector $c$.
  Let $x$ be a vector which decomposes as $x = (x(1) \, \, x')$ where $x(1) = \iprod{x,e_1}$ for $e_1$ the first standard basis vector.

  Then
  \[
    \iprod{x, A^{-1} x} = \iprod{x', \Paren{ B^{-1} + \frac{B^{-1} cc^\top B^{-1}}{1 - c^\top B^{-1} c}} x'}
      + 2x(1) \iprod{ \Paren{ B^{-1} + \frac{B^{-1} cc^\top B^{-1}}{1 - c^\top B^{-1} c}} c, x'} + (1 - c^\top B^{-1} c)^{-1} x(1)^2\mper
  \]
\end{lemma}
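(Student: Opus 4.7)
The plan is a direct application of the block-matrix inverse formula (i.e., the Schur complement identity). Write
\[
A = \begin{pmatrix} 1 & c^\top \\ c & B \end{pmatrix}
\]
and let $S \defeq 1 - c^\top B^{-1} c$ denote the Schur complement of the block $B$ in $A$ (which is a scalar since the upper-left block is $1 \times 1$). The standard Schur complement formula gives
\[
A^{-1} = \begin{pmatrix} S^{-1} & -S^{-1}\, c^\top B^{-1} \\ -B^{-1} c\, S^{-1} & B^{-1} + B^{-1} c\, S^{-1}\, c^\top B^{-1} \end{pmatrix}.
\]
I would verify this by multiplying out $A \cdot A^{-1}$ block-by-block; this is one page of routine bookkeeping.

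Next, I decompose $x = (x(1)\, \, x')$ and expand the quadratic form using the block structure of $A^{-1}$:
\[
\iprod{x, A^{-1} x} = x(1)^2 \cdot S^{-1} \;+\; 2\, x(1) \cdot \iprod{v, x'} \;+\; \iprod{x', \Paren{B^{-1} + S^{-1}\, B^{-1} c\, c^\top B^{-1}} x'},
\]
where $v$ denotes the vector arising from the off-diagonal block applied to $x(1)$ (up to sign conventions, $v = \pm S^{-1} B^{-1} c$). This matches the third and first terms of the stated formula immediately.

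The last step is to rewrite the cross term in the form claimed in the lemma statement. The key algebraic simplification is
\[
\Paren{B^{-1} + \frac{B^{-1} c\, c^\top B^{-1}}{1 - c^\top B^{-1} c}} c \;=\; B^{-1} c \cdot \Paren{1 + \frac{c^\top B^{-1} c}{1 - c^\top B^{-1} c}} \;=\; \frac{B^{-1} c}{1 - c^\top B^{-1} c},
\]
which shows that $\iprod{(B^{-1} + S^{-1} B^{-1} c c^\top B^{-1}) c,\, x'} = S^{-1}\, c^\top B^{-1} x'$, i.e. it agrees up to sign with the cross term that falls out of the Schur complement expansion. There is no real obstacle here: the entire proof is a bookkeeping exercise, and the only ``content'' is the standard block-inverse identity and the one-line algebraic simplification above.
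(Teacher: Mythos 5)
Your proposal takes essentially the same route as the paper: the block-matrix (Schur complement) inverse followed by the one-line simplification $\bigl(B^{-1} + \tfrac{B^{-1}cc^\top B^{-1}}{1 - c^\top B^{-1}c}\bigr)c = \tfrac{B^{-1}c}{1-c^\top B^{-1}c}$, which is exactly the Sherman--Morrison step the paper invokes for $(B-cc^\top)^{-1}$. The only loose end is the sign of the cross term, which you leave as ``up to sign conventions'': pin it down. The correct block inverse has off-diagonal blocks $-S^{-1}c^\top B^{-1}$ (with $S = 1 - c^\top B^{-1}c$), so the expansion actually yields $-2x(1)\,S^{-1}\iprod{B^{-1}c,x'}$, whereas the lemma as stated (and the paper's displayed formula for $A^{-1}$, whose off-diagonal blocks are missing a minus sign) carries a $+$. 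So the identity holds only up to the sign of the cross term; this is immaterial downstream, since the proof of the following lemma only uses the absolute value of that term, but a complete proof should state the signed identity rather than hedge.
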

\begin{proof}
  By the formula for block matrix inverses,
  \[
    A^{-1} = \Paren{ \begin{array}{cc} (1 - c^\top B^{-1} c)^{-1} & c^T(B - cc^\top)^{-1} \\ (B - cc^\top)^{-1} c & (B - cc^\top)^{-1} \end{array} }\mper
  \]
  The result follows by Sherman-Morrison applied to $(B - cc^\top)^{-1}$ and the definition of $x$.
\end{proof}

\begin{lemma}
\label{lem:ortho-4}
  Let $v \in \R^n$ be a unit vector and let $b_1,\ldots,b_n \in \R^{d-1}$ have iid entries from $\cN(0,1/n)$.
  Let $a_i \in \R^d$ be given by $a_i \seteq (v(i) \, \, b_i)$.
  Let $A \seteq \sum_i a_i a_i^T$.
  Let $c \in \R^{d-1}$ be given by $c \seteq \sum_i v(i) b_i$.
  Then for every index $i \in [n]$, \wovp,
  \[
    \left | \iprod{a_i, A^{-1} a_i} - \|a_i\|^2 \right | \leq \tO\Paren{\frac{ d + \sqrt{n}}{n}} \cdot \|a_i\|^2\mper
  \]
\end{lemma}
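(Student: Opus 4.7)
The plan is to expand $\iprod{a_i, A^{-1} a_i}$ in closed form using the Schur complement, and then control each resulting scalar by standard Gaussian concentration. Writing
\[
A \;=\; \begin{pmatrix} 1 & c^\top \\ c & B \end{pmatrix}, \qquad B = \sum_j b_j b_j^\top,
\]
the block-matrix inverse formula gives, for $a_i = (v(i), b_i)^\top$,
\[
\iprod{a_i, A^{-1} a_i} \;=\; b_i^\top B^{-1} b_i \;+\; \frac{(v(i) - c^\top B^{-1} b_i)^2}{1 - c^\top B^{-1} c}.
\]
Since $\|a_i\|^2 = v(i)^2 + \|b_i\|^2$, the algebraic identity
\[
(v(i) - c^\top B^{-1} b_i)^2 - v(i)^2(1 - c^\top B^{-1} c) \;=\; v(i)^2\, c^\top B^{-1} c - 2v(i)\, c^\top B^{-1} b_i + (c^\top B^{-1} b_i)^2
\]
reduces the task (once the denominator is shown to satisfy $1 - c^\top B^{-1} c \ge 1/2$) to bounding the four scalars $|b_i^\top B^{-1} b_i - \|b_i\|^2|$, $v(i)^2|c^\top B^{-1} c|$, $|v(i)\, c^\top B^{-1} b_i|$, and $(c^\top B^{-1} b_i)^2$, each by $\tO((d+\sqrt n)/n) \cdot \|a_i\|^2$.

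\textbf{Concentration ingredients.} Three scalar bounds, each holding with overwhelming probability, suffice. \emph{(a)} Decomposing $B = B_{-i} + b_i b_i^\top$ with $B_{-i} \seteq \sum_{j\neq i} b_j b_j^\top$ independent of $b_i$, Sherman--Morrison gives $b_i^\top B^{-1} b_i = \alpha_i/(1+\alpha_i)$ for $\alpha_i \seteq b_i^\top B_{-i}^{-1} b_i$; applying \pref{lem:ortho-1} to $B_{-i}$ (conditional on $b_i$) yields $\alpha_i = \|b_i\|^2(1 + \tO((d+\sqrt n)/n))$, whence $|b_i^\top B^{-1} b_i - \|b_i\|^2| \le \tO((d+\sqrt n)/n)\,\|b_i\|^2$. \emph{(b)} Observe that $c^\top B^{-1} c = v^\top P_{B'} v$, where $P_{B'}$ is the orthogonal projector onto the column span of the $n \times (d-1)$ matrix $B'$ whose rows are the $b_i^\top$. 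Since $B'$ has iid Gaussian entries, $P_{B'}$ is a uniformly random rank-$(d-1)$ projector in $\R^n$; by rotational invariance, $v^\top P_{B'} v$ is Beta$((d-1)/2,(n-d+1)/2)$-distributed and concentrates around $(d-1)/n$ with fluctuations $\tO(\sqrt d/n)$, so $c^\top B^{-1} c = \tO(d/n)$ and the denominator exceeds $1/2$. \emph{(c)} Writing $c = v(i)b_i + c_{-i}$ with $c_{-i} \seteq \sum_{j\neq i} v(j)b_j$ independent of $b_i$, Sherman--Morrison again gives $c_{-i}^\top B^{-1} b_i = c_{-i}^\top B_{-i}^{-1} b_i/(1 + \alpha_i)$; conditional on $(B_{-i}, c_{-i})$, the numerator is a zero-mean Gaussian with variance $c_{-i}^\top B_{-i}^{-2} c_{-i}/n = \tO(d/n^2)$ (using $\|c_{-i}\|^2 = \tO(d/n)$ from a $\chi^2$ bound and $B_{-i}^{-1} \approx \Id$), so $|c_{-i}^\top B^{-1} b_i| \le \tO(\sqrt d/n)$.

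\textbf{Assembly and main obstacle.} Term one is bounded directly by (a). Term two uses (b): $v(i)^2 |c^\top B^{-1} c| \le \tO(d/n)\,v(i)^2 \le \tO(d/n)\,\|a_i\|^2$. For term four, expanding $c = v(i)b_i + c_{-i}$ and combining (a) with (c) gives $(c^\top B^{-1} b_i)^2 \le \tO(d^2/n^2)\,v(i)^2 + \tO(d/n^2)$, which is $\le \tO((d+\sqrt n)d/n^2)\,\|a_i\|^2$ using $\|a_i\|^2 \ge \|b_i\|^2 = \Theta(d/n)$ together with $d \le \sqrt n/\polylog n$ (so that $d^2/n \le d + \sqrt n$). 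The cross term (term three) splits the same way: $v(i)^2\, |b_i^\top B^{-1} b_i| = \tO(d/n)\,v(i)^2$ is absorbed directly, and $v(i)\,|c_{-i}^\top B^{-1} b_i|$ is handled by AM--GM with weight $\eta \seteq \tO((d+\sqrt n)/n)$ as $\le \eta\, v(i)^2 + \tO(d/n^2)/\eta = \eta\, v(i)^2 + \tO(d/(n(d+\sqrt n)))$, the latter piece absorbing into $\eta\,\|b_i\|^2 = \tO((d+\sqrt n)d/n^2)$ since $n \le (d + \sqrt n)^2$. A union bound over $i \in [n]$ upgrades the per-index bound to one holding for all indices simultaneously. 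The main delicacy is precisely this cross term: the bound must remain $\tO((d+\sqrt n)/n)\,\|a_i\|^2$ both when $v(i)^2 \gg \|b_i\|^2$ (on the support of the planted sparse vector, where $\|a_i\|^2 \approx v(i)^2$ may be much larger than $d/n$) and when $v(i) = 0$ (where $\|a_i\|^2 \approx d/n$); scheduling the AM--GM weight so that both regimes are covered at once is the subtle ingredient.
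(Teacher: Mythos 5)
Your proposal is essentially sound, and its skeleton coincides with the paper's: the Schur-complement expansion is the content of \pref{lem:ortho-3}, and your ingredient (a) is an inline re-derivation of \pref{lem:ortho-2} via the same leave-one-out Sherman--Morrison step. Where you genuinely diverge is in the two $c$-dependent estimates. For $c^\top B^{-1}c$ the paper simply bounds $\|c\|^2\|B^{-1}\|$ using $n\|c\|^2\sim\chi^2_{d-1}$; your identity $c^\top B^{-1}c = v^\top P_{B'}v$ with $P_{B'}$ a uniformly random $(d-1)$-dimensional projector is a clean alternative giving the same $\tO(d/n)$. More substantively, for the cross term the paper keeps $c$ whole and applies a Cauchy--Schwarz/AM--GM manipulation claiming a $\tO(d/n)\,\|a_i\|^2$ bound, whereas you split $c = v(i)b_i + c_{-i}$ and use the independence of $b_i$ from $(B_{-i},c_{-i})$ to get $|c_{-i}^\top B^{-1}b_i|\le\tO(\sqrt d/n)$ before balancing with a weighted AM--GM. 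This difference matters: the paper's displayed step $2\|c\|\,v(i)\,w\le\|c\|^2(v(i)^2+w^2)$ is not a valid inequality when $\|c\|<1$ without extra information about $w$, and indeed the cross term is typically of order $|v(i)|\sqrt d/n$, which exceeds $(d/n)\|a_i\|^2$ when $v(i)^2\approx d/n$ and $d\ll\sqrt n$; it is only dominated at the weaker rate $\tO((d+\sqrt n)/n)\|a_i\|^2$ that the lemma actually asserts. So your independence-based treatment is not merely a variant --- it supplies exactly the structural input needed to reach the stated rate, and is more careful precisely where the paper's write-up is loosest.

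One caveat to patch: your absorption of the leftover $\tO\bigl(d/(n(d+\sqrt n))\bigr)$ (and the analogous step in your fourth term) relies on the lower bound $\|b_i\|^2=\Omega(d/n)$, which is a $\chi^2_{d-1}$ lower-tail event and holds with overwhelming probability only when $d$ grows faster than roughly $\log n/\log\log n$; for constant or very small $d$ it fails. This is harmless, since for $d\le\polylog(n)$ the entire lemma follows from the crude bound $\|A^{-1}-\Id\|\le\tO(\sqrt{d/n})\le\tO((d+\sqrt n)/n)$, but you should state this case split explicitly. Also, your appeal to $d\le\sqrt n/\polylog n$ in the fourth term is unnecessary (and not an assumption of the lemma): $d^2/n\le d$ always, so $d^2/n^2\le(d+\sqrt n)/n$ unconditionally.
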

\begin{proof}
  Let $B \seteq \sum_i b_i b_i^T$.
  By standard concentration, $\|B^{-1} - \Id \| \leq \tO(d/n)^{1/2}$ \wovp \cite[Corollary 5.50]{DBLP:journals/corr/abs-1011-3027}.
  At the same time, since $v$ has unit norm, the entries of $c$ are iid samples from $\cN(0,1/n)$, and hence $n \|c\|^2$ is $\chi^2$-distributed with $d$ degrees of freedom.
  Thus \wovp $\|c\|^2 \leq \tfrac d n + \tO(dn)^{-1/2}$.
  Together these imply the following useful estimates, all of which hold \wovp:
  \begin{align*}
    |c^\top B^{-1} c| & \leq \|c\|^2 \|B^{-1}\|_{op}  \leq \frac d n + \tO\Paren{\frac d n}^{3/2}\\
    \|B^{-1} cc^\top B^{-1}\|_{op} & \leq \|c\|^2 \|B^{-1}\|_{op}^2 \leq \frac d n + \tO\Paren{\frac d n}^{3/2}\\
    \left \| \frac{B^{-1} cc^\top B^{-1}}{1 - c^\top B^{-1} c} \right \|_{op} & \leq \frac d n + \tO\Paren{\frac d n}^{3/2}\mcom
  \end{align*}
  where the first two use Cauchy-Schwarz and the last follows from the first two.

  We turn now to the expansion of $\iprod{a_i, A^{-1} a_i}$ offered by \pref{lem:ortho-3},
  \begin{align}
    \iprod{a_i, A^{-1} a_i} = & \iprod{b_i, \Paren{ B^{-1} + \frac{B^{-1} cc^\top B^{-1}}{1 - c^\top B^{-1} c}} b_i} \label{eqn:ortho-1}\\
      & + 2v(i) \iprod{ \Paren{ B^{-1} + \frac{B^{-1} cc^\top B^{-1}}{1 - c^\top B^{-1} c}} c, b_i} \label{eqn:ortho-2}\\
      & + (1 - c^\top B^{-1} c)^{-1} v(i)^2 \label{eqn:ortho-3}\mper
  \end{align}
  Addressing \pref{eqn:ortho-1} first, by the above estimates and \pref{lem:ortho-2} applied to $\iprod{b_i, B^{-1} b_i}$,
  \[
    \left | \iprod{b_i, \Paren{ B^{-1} + \frac{B^{-1} cc^\top B^{-1}}{1 - c^\top B^{-1} c}} b_i} - \|b_i\|^2 \right |
    \leq \tO\Paren{\frac {d + \sqrt n}{n}} \cdot \|b_i\|^2
  \]
  \wovp.
  For \pref{eqn:ortho-2}, we pull out the important factor of $\|c\|$ and separate $v(i)$ from $b_i$: \wovp,
  \begin{align*}
    \left | 2v(i) \iprod{ \Paren{ B^{-1} + \frac{B^{-1} cc^\top B^{-1}}{1 - c^\top B^{-1} c}} c, b_i} \right |
    & = \left | 2 \|c\| v(i) \iprod{ \Paren{ B^{-1} + \frac{B^{-1} cc^\top B^{-1}}{1 - c^\top B^{-1} c}} \frac c {\|c\|}, b_i} \right |\\
    & \leq \left | \|c\|^2 \Paren{ v(i)^2 + \iprod{ \Paren{ B^{-1} + \frac{B^{-1} cc^\top B^{-1}}{1 - c^\top B^{-1} c}} \frac c {\|c\|}, b_i}^2} \right |\\
    & \leq \tO\Paren{\frac d n } (v(i)^2 + \|b_i\|^2)\\
    & = \tO\Paren{\frac d n} \|a_i\|^2\mcom
  \end{align*}
  where the last inequality follows from our estimates above and Cauchy-Schwarz.

  Finally, for \pref{eqn:ortho-3}, since $(1 - c^\top B^{-1} c) \geq 1 - \tO(d/n)$ \wovp, we have that
  \[
    | (1 - c^\top B^{-1} c)^{-1} v(i)^2 - v(i)^2 | \leq \tO\Paren{\frac d n} v(i)^2\mper
  \]

  Putting it all together,
  \begin{align*}
     \left | \iprod{a_i, A^{-1} a_i} - \|a_i\|^2 \right | & \leq \left | \iprod{b_i, \Paren{ B^{-1} + \frac{B^{-1} cc^\top B^{-1}}{1 - c^\top B^{-1} c}} b_i} - \|b_i\|^2 \right |\\
     & + \left | 2v(i) \iprod{ \Paren{ B^{-1} + \frac{B^{-1} cc^\top B^{-1}}{1 - c^\top B^{-1} c}} c, b_i} \right |\\
     & + | (1 - c^\top B^{-1} c)^{-1} v(i)^2 - v(i)^2 |\\
     & \leq \tO\Paren{\frac{d + \sqrt n}{n}} \cdot \|a_i\|^2\mper\qedhere
  \end{align*}
\end{proof}

\section{Concentration bounds for overcomplete tensor decomposition}
\label{sec:tensor-decomp-aux}
We require some facts about the concentration of certain scalar-and matrix-valued random variables, which generally follow from standard concentration arguments.
We present proofs here for completeness.

The first lemma captures standard facts about random Gaussians.

\begin{fact}
    \label{fact:SIP}
    Let $a_1,\ldots, a_n\in\R^d$ be sampled $a_i \sim \cN(0,\tfrac 1 d \Id)$.
    \begin{enumerate}
      \item \label{item:aux-1}
Inner products $|\iprod{a_i, a_j}|$ are all $\approx 1/\sqrt d$:
    \[
	\Pr \left\{ \iprod{a_i, a_j}^2 \le \tO\left(\frac{1}{d}\right)
        \tallsep \forall i,j\in[n],\ i\neq j \right\} \geq 1 - n^{-\omega(1)}.
    \]
  \item
    \label{item:aux-2}
    Norms are all about $\|a_i\| \approx 1 \pm \tO(1/\sqrt d)$:
    \[
	\Pr \left\{ 1 - \tO(1/\sqrt{d}) \le \|a_i\|_2^2 \le 1 +
	    \tO(1/\sqrt{d}) \tallsep
	\forall i \in[n] \right\} \geq 1 - \owl\mper
    \]
  \item
    \label{item:aux-3}
    \label{fact:gaussian-iprods}
    Fix a vector $v \in \R^d$.
    Suppose $g\in\R^d$ is a vector with entries identically distributed $g_i \sim
    \cN(0,\sigma)$.
    Then $\iprod{g,v}^2 \approx \sigma^2\cdot \|v\|_2^2$:
    \[
	\Pr \left\{ \bigg|\iprod{g,v}^2 - \sigma^2\cdot \|v\|_2^4\bigg| \le \tO(\sigma^2\cdot \|v\|_4^2)
	\right\} \geq 1 - \owl\mper
    \]
\end{enumerate}
\end{fact}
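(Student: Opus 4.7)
The plan is a routine application of scalar Gaussian and $\chi^2$ concentration plus union bounds. The tools needed are (i) the Gaussian tail bound already recorded in \pref{lem:gtail}, $\Pr[|Z|\ge t\sigma]\le 2e^{-t^2/2}$ for $Z\sim\cN(0,\sigma^2)$, and (ii) the Laurent--Massart $\chi^2$ tail bound $\Pr[|\chi^2_k-k|\ge 2\sqrt{kt}+2t]\le 2e^{-t}$. Nothing here is subtle; the main care is to choose thresholds so that the conclusions hold with probability $1-n^{-\omega(1)}$ after a union bound over $O(n^2)$ events.

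I would establish item \ref{item:aux-2} first, since item \ref{item:aux-1} will use it. Since $d\|a_i\|_2^2\sim\chi^2_d$, plugging $t=\log^2 n$ into Laurent--Massart and dividing through by $d$ yields $\bigl|\|a_i\|_2^2-1\bigr|\le \tilde O(1/\sqrt d)$ off an event of probability $n^{-\omega(1)}$; a union bound over $i\in[n]$ gives the uniform statement.

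For item \ref{item:aux-1}, I would condition on $a_j$: conditionally, $\langle a_i,a_j\rangle\sim\cN(0,\|a_j\|_2^2/d)$, and by item \ref{item:aux-2} the conditional variance is at most $2/d$ with overwhelming probability. Gaussian tails then give $\langle a_i,a_j\rangle^2\le\tilde O(1/d)$ with probability $1-n^{-\omega(1)}$, and a union bound over the $\binom{n}{2}$ pairs $i\ne j$ finishes the argument.

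For item \ref{item:aux-3}, $\langle g,v\rangle$ is exactly $\cN(0,\sigma^2\|v\|_2^2)$, so $\langle g,v\rangle^2$ equals $\sigma^2\|v\|_2^2$ times a $\chi^2_1$ variable; one application of the $\chi^2_1$ tail bound gives the concentration. I note that the printed expectation $\sigma^2\|v\|_2^4$ and error $\sigma^2\|v\|_4^2$ appear to be typographical (the true mean of $\langle g,v\rangle^2$ is $\sigma^2\|v\|_2^2$, and no additive deviation can beat $\Theta(\sigma^2\|v\|_2^2)$ in general); in every downstream use $v$ is essentially unit norm, so the distinction is immaterial. The only obstacle, such as it is, is propagating enough slack in $\tilde O(\cdot)$ to absorb the $O(\log n)$ factors from the union bound while still achieving failure probability $n^{-\omega(1)}$.
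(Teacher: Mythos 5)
Your proof is correct, and it differs from the paper's mainly in tooling. The paper proves all three items with a single hammer: it computes the expectation of a square ($\iprod{a_i,a_j}^2$, $(\|a_i\|_2^2-1)^2$, $(\iprod{g,v}^2-\sigma^2\|v\|^2)^2$), invokes the Gaussian-polynomial tail bound \pref{lem:gaus-polys} with $t=\polylog(n)$, and union bounds. You instead use exact distributional facts: $d\|a_i\|_2^2\sim\chi^2_d$ with Laurent--Massart for item 2, conditioning on $a_j$ so that $\iprod{a_i,a_j}\mid a_j\sim\cN(0,\|a_j\|_2^2/d)$ plus the plain Gaussian tail (\pref{lem:gtail}) for item 1, and a $\chi^2_1$ tail for item 3. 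Both routes are routine and yield the same $\tO(\cdot)$ bounds with failure probability $n^{-\omega(1)}$ after the union bound; yours is more elementary and self-contained, while the paper's is uniform across the three items and reuses a lemma it needs anyway for genuinely polynomial (non-Gaussian-law) quantities elsewhere. Your remark on item 3 is also on target: the printed mean $\sigma^2\|v\|_2^4$ and error $\tO(\sigma^2\|v\|_4^2)$ cannot both be as stated (for a spread-out unit vector $v$ the fluctuation of $\iprod{g,v}^2$ is of order $\sigma^2\|v\|_2^2$, which dwarfs $\sigma^2\|v\|_4^2$), and the paper's own proof shares the slip, asserting $\E(\iprod{g,v}^2-\sigma^2\|v\|^2)^2=3\sigma^4\|v\|_4^4$ where the variance is $2\sigma^4\|v\|_2^4$; what is actually provable, and all that is used downstream, is $|\iprod{g,v}^2-\sigma^2\|v\|_2^2|\le\tO(\sigma^2\|v\|_2^2)$, exactly as you say.
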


\begin{proof}[Proof of \pref{fact:SIP}]
  We start with \pref{item:aux-1}. Consider the quantity $\iprod{a_i,a_j}^2$. We calculate the expectation,
    \[
	\Ex{\iprod{a_i,a_j}^2}
	= \sum_{k,\ell\in[d]} \Ex{a_i(k) a_i(\ell) a_j(k) a_j(\ell)}
	= \sum_{k\in[d]} \Ex{a_i(k)^2}\cdot \Ex{a_j(k)^2}
	= d \cdot \frac{1}{d^2}
	= \frac{1}{d}.
    \]
    Since this is a degree-4 square polynomial in the entries of $a_i$ and
    $a_j$, we may apply \pref{lem:gaus-polys} to conclude that
    \[
	\Pr\left( \iprod{a_i,a_j}^2 \ge t \cdot \frac{1}{d} \right) \le
	\exp\left( - O(t^{1/2})\right).
    \]
    Applying this fact with $t = \polylog(n)$ and taking a union bound over
    pairs $i,j\in[n]$ gives us the desired result.

    Next is \pref{item:aux-2}.
    Consider the quantity $\|a_i\|_2^2$.
    We will apply \pref{lem:gaus-polys} in order to
    obtain a tail bound for the value of the polynomial $(\|a_i\|_2^2 - 1)^2$.
    We have
    \[
	\Ex{(\|a_i\|_2^2 - 1)^2} = O\Paren{\frac{1}{d}},
    \]
    and now applying \pref{lem:gaus-polys} with the square root of this
    expectation, we have
    \[
	\Pr\Paren{ \left| \|a_i\|_2^2 - 1 \right| \ge \tO(\tfrac{1}{\sqrt{d}}) } \le n^{-\log n}
	\mper
    \]
    This gives both bounds for a single $a_i$.
    The result now follows from taking a union bound over all $i$.

    Moving on to \pref{item:aux-3},
    we view the expression $f(g) := (\iprod{g,v}^2 - \sigma^2 \|v\|^2)^2$ as a polynomial in the gaussian
    entries of $g$. The degree of $f(g)$ is 4, and $\E[|f(g)|] = 3\sigma^4 \cdot \|v\|_4^4$,
    and so we may apply
    \pref{lem:gaus-polys} to conclude that
    \[
	\Pr\Paren{|f(g)| \ge t\cdot 3\sigma^4 \cdot \|v\|_4^4} \le \exp(-c_4t^{1/2}),
    \]
    and taking $t = \polylog(n)$ the conclusion follows.
    \end{proof}
We also use the fact that the covariance matrix of a sum
of sufficiently many gaussian outer products concentrates about its expectation.
\begin{fact}
    \label{fact:GC}
    Let $a_1,\ldots, a_n\in\R^d$ be vectors with iid gaussian entries such that
    $\Ex{\|a_i\|_2^2} = 1$, and $n = \Omega(d)$.
    Let $\Event$ be the event that the sum $\sum_{i\in[n]} a_i a_i^\top$ is close
    to $\tfrac{n}{d} \cdot \Id$, that is
    \[
	\Pr \left\{ \tOmega(n/d) \cdot \Id \le \sum_{i\in[n]} a_i a_i^\top
	\le \tO(n/d)\cdot \Id \right\} \geq 1 - \owl\mper
    \]
\end{fact}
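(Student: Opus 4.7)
The plan is to reduce this to a standard covariance-concentration statement for sub-Gaussian vectors, exactly the tool already used elsewhere in the paper (e.g. in the proof of \pref{lem:fps-covariance-concentration}, where Corollary~5.50 of \cite{DBLP:journals/corr/abs-1011-3027} is invoked). First, I would rescale: let $b_i = \sqrt{d}\cdot a_i$, so that $b_i \sim \cN(0,\Id_d)$ and $\sum_{i\in[n]} a_i a_i^{\top} = \tfrac{1}{d}\sum_{i\in[n]} b_i b_i^{\top}$. Noting $\E[b_i b_i^{\top}] = \Id_d$, the target inequality $\tOmega(n/d)\cdot \Id \preceq \sum_i a_i a_i^{\top} \preceq \tO(n/d)\cdot \Id$ is equivalent to the statement that $\bignorm{\tfrac{1}{n}\sum_{i\in[n]} b_i b_i^{\top} - \Id}$ is at most a constant strictly less than $1$ (with log slack absorbed into $\tO/\tOmega$).

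For this, I would invoke the standard Gaussian covariance bound: for $b_1,\ldots,b_n \sim \cN(0,\Id_d)$ iid and any $t \ge 0$, with probability at least $1 - 2\exp(-ct^2)$,
\[
\Bignorm{\tfrac{1}{n}\sum_{i\in[n]} b_i b_i^{\top} - \Id} \;\le\; C\Paren{\sqrt{\tfrac{d}{n}} + \tfrac{d}{n}} + \tfrac{t}{\sqrt{n}}.
\]
Taking $t = \polylog(n)$ gives overwhelming probability; the right-hand side is then $\tO(\sqrt{d/n})$, which is $o(1)$ once $n \ge d \polylog(d)$ (the same hypothesis under which \pref{lem:Mj-GC-simple} is actually applied in the body of the paper). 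Multiplying through by $n/d$ yields the claimed two-sided spectral sandwich, proving the fact. As an in-house alternative, one could instead apply the truncated matrix Bernstein inequality \pref{prop:truncated-bernstein} directly to the summands $X_i = a_i a_i^{\top}$, using $\E X_i = \tfrac{1}{d}\Id$, the variance bound $\bignorm{\sum_i \E(X_i - \E X_i)^2} = O(n/d^2)$ (a direct computation via \pref{fact:gaussian-poly}), and the tail truncation of $\|X_i\| = \|a_i\|^2$ afforded by \pref{lem:truncate}; this would yield the same conclusion with explicit polylogarithmic factors.

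There is no real obstacle here --- the fact is a packaging of classical Gaussian covariance concentration, and both routes above are routine. The only point requiring care is keeping track of the rescaling by $1/d$ so that the additive error $\tO(\sqrt{d/n})$ on the normalized quantity produces a relative error $o(1)$ on $\sum_i a_i a_i^{\top}$ as compared to its mean $(n/d)\Id$, which is exactly what the $\tO/\tOmega$ notation in the statement permits.
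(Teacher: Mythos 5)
Your main route is correct, but it is genuinely different from the paper's. You reduce the statement, after rescaling $b_i = \sqrt{d}\,a_i$, to the off-the-shelf Gaussian covariance estimate (Vershynin's Corollary 5.50 style bound), getting $\bignorm{\tfrac1n\sum_i b_ib_i^\top - \Id} \le \tO(\sqrt{d/n})$ with overwhelming probability and then multiplying by $n/d$. The paper instead stays with its in-house machinery: it applies the truncated matrix Bernstein inequality (\pref{prop:truncated-bernstein}, with tails handled by \pref{lem:truncate}) directly to the rank-one summands $A_i = a_ia_i^\top$, computing $\E A_i = \tfrac1d\Id$, $\sigma^2 = O(n/d)$, and $\E\|A_i\| = 1$, to conclude $\|\sum_i A_i - \tfrac nd \Id\| \le \tO(\sqrt{n/d})$. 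The two conclusions are equivalent (relative error $\tO(\sqrt{d/n})$), and both implicitly need $n \ge d\,\polylog(d)$ for the two-sided sandwich, which you correctly note is how the fact is actually invoked in \pref{lem:Mj-GC-simple}; your route buys a citation to a standard theorem, while the paper's buys uniformity with the rest of its concentration appendix. One slip in your sketched alternative (which is the paper's actual route): the variance is not $O(n/d^2)$. Since $\E\|a_i\|^2 a_ia_i^\top = \tfrac{d+2}{d^2}\Id$, each term contributes $\E(X_i - \E X_i)^2 = \tfrac{d+1}{d^2}\Id \approx \tfrac1d\Id$, so $\bignorm{\sum_i \E(X_i-\E X_i)^2} = O(n/d)$, matching the paper's $\sigma^2$. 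This does not damage the argument---the resulting deviation $\tO(\sqrt{n/d})$ is still $o(n/d)$ in the regime where the fact is used---but the stated bound is off by a factor of $d$.
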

\begin{proof}[Proof of \pref{fact:GC}]
    We apply a truncated matrix bernstein inequality. For convenience,
    $A := \sum_{i \in [n]} a_i a_i^\top$ and
    let $A_i := a_i a_i^\top$ be a single summand.
    To begin, we calculate the first and second moments of the summands,
    \begin{align*}
	\Ex{A_i} &= \frac{1}{d} \cdot \Id \\
	\Ex{A_iA_i^\top} &= O\Paren{\frac{1}{d}} \cdot \Id.
    \end{align*}
    So we have $\Ex{A} = \frac{n}{d} \cdot \Id$ and $\sigma^2(A) =
    O\Paren{\frac{n}{d}}$.

    We now show that each summand is well-approximated by a truncated variable.
    To calculate the expected norm $\|A_i\|_{op}$, we observe that $A_i$ is
    rank-$1$ and thus $\Ex{\|A_i\|_{op}} = \Ex{\| a_i\|_2^2} = 1$.
    Applying \pref{lem:truncate}, we have
    \[
	\Pr\Paren{\|A_i\|_{op} \ge \tO(1)} \le n^{-\log n},
    \]
    and also
    \[
	\Ex{\|A_i\|_{op}\cdot \Ind\{\|A_i\|_{op} \ge \tO(1)\}} \le n^{-\log n}.
    \]

    Thus, applying the truncated matrix bernstein inequality from
    \pref{prop:truncated-bernstein} with
    $\sigma^2 = O(\tfrac{n}{d})$,
    $\beta = \tO(1)$,
    $p = n^{-\log n}$,
    $q = n^{-\log n}$, and
    $t = \tO\Paren{ \frac{n^{1/2}}{d^{1/2}}}$,
    we have that with overwhelming probability,
    \[
	\left\| A - \frac{n}{d}\cdot \Id \right\|_{op} \le \tO\Paren{\frac{ n^{1/2}}{
	d^{1/2}}} \mper
    \]
\end{proof}

We now show that among the terms of the polynomial $\iprod{g,Ta_i^{\tensor 2}}$,
those that depend on $a_j$ with $j \neq i$ have small magnitude.
This polynomial appears in the proof that $\Mdiag$ has a noticeable spectral gap.
\restatelemma{lem:signal-negligible}
\begin{proof}
    Fixing $a_i$ and $g$, the terms in the summation are independent, and we may apply a Bernstein inequality.
    A straightforward calculation shows that the expectation of the sum is 0 and the variance is $\tO(\tfrac{n}{d^2})\cdot \|g\|^2\|a_i\|^4$.
    Additionally, each summand is a polynomial in Gaussian variables, the square of which has expectation
    $\tO(\tfrac{1}{d^2}\cdot\|g\|^2\|a_i\|^4)$.
    Thus \pref{lem:gaus-polys} allows us to truncate each summand appropriately so as to employ \pref{prop:truncated-bernstein}.
    An appropriate choice of logarithmic factors and the concentration of $\|g\|^2$ and $\|a_i\|^2$ due to \pref{fact:SIP}
    gives the result for each $i \in [n]$.
    A union bound over each choice of $i$ gives the final result.
\end{proof}

Finally, we prove that a matrix which appears in the expression for $\Msame$ has bounded norm \wovp

\begin{lemma}
  \label{lem:Mj-bound}
  Let $a_1,\ldots,a_n$ be independent from $\cN(0,\tfrac 1 d \Id_d)$.
  Let $g \sim \cN(0,\Id_d)$.
  Fix $j \in [n]$.
  Then \wovp
  \[
    \Norm{\sum_{\substack{ \inn \\ i \neq j}} \iprod{g,a_i} \|a_i\|^2 \iprod{a_i, a_j} \cdot a_i a_i^\top} \leq \tO(n/d^2)^{1/2}\mper
  \]
\end{lemma}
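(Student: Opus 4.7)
The plan is to prove this bound via a standard matrix Bernstein argument, with some care taken to exploit the independence structure after conditioning on $a_j$ and $g$. Since $g$ is independent of all $a_i$ and only one summand involves the fixed $a_j$, the terms of the sum are conditionally independent given $(g, a_j)$.

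First, I would condition on the high-probability event $\mathcal{E}$ that $\|g\|^2 \leq \tO(d)$, $\bigl|\|a_j\|^2 - 1\bigr| \leq \tO(1/\sqrt{d})$, and $|\iprod{g, a_j}| \leq \tO(1)$, each of which holds with overwhelming probability by standard Gaussian concentration (\pref{fact:SIP}). Under this conditioning the matrices $X_i \defeq \iprod{g,a_i}\|a_i\|^2\iprod{a_i, a_j}\cdot a_ia_i^\top$ for $i \neq j$ are independent rank-one random matrices, and I can apply the truncated matrix Bernstein inequality (\pref{prop:truncated-bernstein}) to the sum $\sum_{i \neq j} X_i$.

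The three ingredients I need are the following. (a) A tail bound on $\|X_i\|_{\mathrm{op}}$: since $\|X_i\|_{\mathrm{op}} = |\iprod{g,a_i}|\cdot \|a_i\|^4\cdot |\iprod{a_i,a_j}|$, and each of these factors is either $\tO(1)$ or $\tO(1/\sqrt{d})$ with overwhelming probability by \pref{fact:SIP}, we get $\|X_i\|_{\mathrm{op}} \leq \tO(1/\sqrt{d})$ w.o.p. As $\|X_i\|_{\mathrm{op}}^2$ is a polynomial in the Gaussian entries of $a_i$ of bounded degree, \pref{lem:truncate} applies and allows us to take $\beta = \tO(1/\sqrt{d})$. (b) A bound on the expectation $\|\sum_{i\neq j}\E X_i\|$: an explicit Gaussian moment calculation (Wick/Isserlis) shows that $\E_{a_i}[X_i]$ is a fixed-degree polynomial in $g, a_j$ and their inner products, with every coefficient scaling as $O(1/d^{5/2})$; summing gives $\|\sum \E X_i\| \leq \tO(n/d^{5/2})$, which is strictly smaller than our target. (c) A bound on the variance $\sigma^2 = \|\sum_{i \neq j}\E[X_i X_i^\top]\|$: here $\E[X_iX_i^\top] = \E[\iprod{g,a_i}^2\|a_i\|^6 \iprod{a_i,a_j}^2 a_ia_i^\top]$, and using Cauchy--Schwarz on $\iprod{g,a_i}^2\iprod{a_i,a_j}^2$ together with the conditioning on $\|g\|^2 \leq \tO(d)$ gives an operator norm of $\tO(1/d^2)$ per term, hence $\sigma^2 \leq \tO(n/d^2)$.

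Plugging these parameters into \pref{prop:truncated-bernstein} gives, with overwhelming probability,
\[
\Bignorm{\sum_{i \neq j}X_i - \sum_{i\neq j}\E X_i} \leq \tO\!\left(\sqrt{\sigma^2} + \beta\right) \leq \tO(n/d^2)^{1/2} + \tO(1/\sqrt{d}),
\]
and the second term is absorbed into the first in our regime $n \geq d$. Adding back $\|\sum \E X_i\| \leq \tO(n/d^{5/2})$ preserves the bound, which together with taking a union bound over the failure of $\mathcal{E}$ proves the lemma.

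The main obstacle will be item (c), the variance computation: the summand $X_iX_i^\top$ is a degree-$10$ polynomial in the entries of $a_i$, and I will need to carry out a Gaussian integration carefully to verify that the expectation does not have hidden contributions of larger magnitude. The symmetry of the Gaussian under rotations fixing the two-dimensional subspace $\mathrm{span}(g, a_j)$ reduces the answer to a short list of tensor structures (multiples of $\Id$, $gg^\top$, $a_ja_j^\top$, $ga_j^\top + a_jg^\top$), each of which can be estimated to have operator norm $\tO(1/d^2)$ after using $\|g\| = \tO(\sqrt d)$ and $\|a_j\| = \tO(1)$. The expectation bound in (b) uses the same decomposition but with an extra factor of $\iprod{g, a_j}$ appearing in every surviving term, yielding the stronger $\tO(n/d^{5/2})$ bound.
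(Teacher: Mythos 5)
Your overall route---condition on $(g,a_j)$ so that the summands $X_i$ are independent, truncate via \pref{lem:truncate}, and apply \pref{prop:truncated-bernstein} with $\beta = \tO(1/\sqrt d)$ and $\sigma^2 = \tO(n/d^2)$---is the same as the paper's proof, and your estimates (a) and (c) are correct. The gap is in your step (b). The conditional mean is not $O(1/d^{5/2})$ per term: a Wick computation gives, for $a_i \sim \cN(0,\tfrac 1 d \Id_d)$,
\[
\E_{a_i}\Brac{\iprod{g,a_i}\iprod{a_i,a_j}\|a_i\|^2\, a_ia_i^\top} \;=\; \frac{d+4}{d^3}\Paren{\iprod{g,a_j}\,\Id + g a_j^\top + a_j g^\top}\mcom
\]
and it is not true that every surviving term carries a factor of $\iprod{g,a_j}$: the pairings matching $g$ and $a_j$ each against one copy of $a_i$ inside $a_ia_i^\top$ produce $g a_j^\top + a_j g^\top$, whose operator norm is $\approx \|g\|\,\|a_j\|/d^2 = \tTheta(d^{-3/2})$ \wovp, a factor of $d$ larger than your claim. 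Summing over $i\neq j$ gives $\Norm{\sum_{i\neq j}\E X_i} = \tTheta(n/d^{3/2})$, which exceeds the target $\tO(n/d^2)^{1/2}$ whenever $n\gg d$---precisely the regime $d\le n\le d^{4/3}$ in which the lemma is used.

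Moreover, no sharper moment computation can repair this, because the stated bound itself fails for $n \gg d$: testing the bilinear form at $g/\|g\|$ and $a_j/\|a_j\|$ gives $\tfrac{1}{\|g\|\|a_j\|}\sum_{i\neq j}\iprod{g,a_i}^2\|a_i\|^2\iprod{a_i,a_j}^2$, a sum of nonnegative terms with conditional mean $\approx \|g\|^2\|a_j\|^2/d^2$ each, so \wovp the norm is at least $\tOmega(n/d^{3/2})$ and no cancellation can occur. (The paper's own proof shares this blind spot: \pref{prop:truncated-bernstein} controls only $\|\sum_i X_i - \E\sum_i X_i\|$, and the mean is never accounted for; the lemma should be stated with $\tO(n/d^{3/2})$, or for the centered sum.) The saving grace is that the corrected bound suffices downstream: in the proof of \pref{prop:cross-terms} one gets $\|\Msame\| \le \tO(\max_j\|M_j\|)\cdot\tO(n/d)^{1/2} \le \tO(n/d^{3/2})\cdot\tO(n/d)^{1/2} = \tO(n^{3/2}/d^2)$, which still matches the claimed $\tO(n^3/d^4)^{1/2}$ bound on $\Mcross$. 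So to complete your argument you should either add the explicit mean matrix back in (it is a multiple of the identity plus a rank-two part) or simply prove the weaker $\tO(n/d^{3/2})$ bound, noting that the lemma's stated rate needs this revision.
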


\begin{proof}
  The proof proceeds by truncated matrix Bernstein, since the summands are independent for fixed $g, a_j$.
  For this we need to compute the variance:
  \begin{align*}
    \sigma^2 & = \Norm{\sum_{\substack{\inn \\ i \neq j}} \E \iprod{g,a_i}^2 \|a_i\|^6 \iprod{a_i, a_j}^2 \cdot a_i a_i^\top}
    \leq O(1/d) \cdot \Norm{\sum_{\substack{\inn \\ i \neq j}} \E a_i a_i^\top}
    \leq O(1/d) \cdot n/d
    \leq O(n/d^2)\mper
  \end{align*}
  The norm of each term in the sum is bounded by a constant-degree polynomial of Gaussians.
  Straightforward calculations show that in expectation each term is $O(\tfrac 1 d \iprod{g, a_i})$ in norm; \wovp this is $O(\sigma)$.
  So \pref{lem:gaus-polys} applies to establish the hypothesis of truncated Bernstein \pref{prop:truncated-bernstein}.
  In turn, \pref{prop:truncated-bernstein} yields that \wovp
  \[
    \Norm{\sum_{\substack{ \inn \\ i \neq j}} \iprod{g,a_i} \|a_i\|^2 \iprod{a_i, a_j} \cdot a_i a_i^\top} \leq \tO(\sigma) = \tO(n/d^2)^{1/2}\mper
  \]
\end{proof}

\subsubsection{Proof of \pref{fact:sigma}}
Here we prove the following fact.
\begin{fact}
  \label{fact:sigma}
  Let $\Sigma = \E _{x \sim \cN(0,\Id_d)} (x x^\top)^{\otimes 2}$ and let $\tilde \Sigma = \E_{x \sim \cN(0,\Id_d)} (xx^\top)^{\otimes 2} / \|x\|^4$.
  Let $\Phi = \sum_i e_i^{\otimes 2} \in \R^{d^2}$ and let $\Pisym$ be the projector to the symmetric subspace of $\R^{d^2}$ (the span of vectors of the form $x^{\otimes 2}$ for $x\in \R^d$).
    Then
    \begin{align*}
      \Sigma\hphantom{^+} &= 2\, \Pisym +  \Phi \Phi^\top\mcom
      &
      \tilde\Sigma\hphantom{^+} &= \tfrac{2}{d^2 + 2d} \Pisym + \tfrac{1}{d^2 + 2d} \Phi \Phi^\top\mcom
      \\
      \Sigma^+ &= \tfrac{1}{2} \Pisym - \tfrac{1}{2(d+2)} \Phi \Phi^\top\mcom
      &
      {\tilde\Sigma}^+ &= \tfrac{d^2 + 2d}{2} \Pisym - \tfrac{d}{2} \Phi \Phi^\top\mper
    \end{align*}
    In particular,
    \[
      R = \sqrt{2}\, (\Sigma^+)^{1/2} = \Pisym - \tfrac{1}{d} \Paren{1 - \sqrt{\tfrac{2}{d+2}}} \, \Phi \Phi^\top \quad \text{has} \quad \|R \| = 1
    \]
    and for any $v \in \R^d$,
    \[
      \|\signh (v\tensor v) \|_2^2 = \left(1-\tfrac{1}{d+2}\right)\cdot \|v\|^4.
    \]
\end{fact}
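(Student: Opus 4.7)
The plan is to compute $\Sigma$ explicitly via Isserlis' theorem, derive $\tilde\Sigma$ from it by a symmetry-plus-trace argument, then invert on the symmetric subspace by Sherman--Morrison, and finally do a direct computation of $R$ and its action on $v\otimes v$. Throughout, we use the following indexing: identify $\R^{d^2}$ with $\R^d\otimes\R^d$ so that for $M \in \R^{d^2\times d^2}$, $M_{(i,k),(j,\ell)}$ denotes the entry associated with $(e_i\otimes e_k)(e_j\otimes e_\ell)^\top$.

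\textbf{Step 1 (compute $\Sigma$).} For $x\sim \cN(0,\Id_d)$, Wick's formula yields $\E[x_i x_j x_k x_\ell] = \delta_{ij}\delta_{k\ell}+\delta_{ik}\delta_{j\ell}+\delta_{i\ell}\delta_{jk}$, so $\Sigma_{(i,k),(j,\ell)}$ equals this sum. Directly from their definitions, $(\Pisym)_{(i,k),(j,\ell)} = \tfrac12(\delta_{ij}\delta_{k\ell}+\delta_{i\ell}\delta_{jk})$ and $(\Phi\Phi^\top)_{(i,k),(j,\ell)} = \delta_{ik}\delta_{j\ell}$, so matching entries gives $\Sigma = 2\Pisym + \Phi\Phi^\top$.

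\textbf{Step 2 (compute $\tilde\Sigma$).} The vector $x/\|x\|$ is uniform on the sphere, hence its distribution is invariant under sign flips of coordinates and under coordinate permutations. By these symmetries $\tilde\Sigma_{(i,k),(j,\ell)}$ is supported on multi-indices in which each coordinate appears an even number of times; combined with full $S_4$ symmetry of the indices this forces $\tilde\Sigma = c\,(2\Pisym + \Phi\Phi^\top)$ for some scalar $c$. Take the trace: $\Tr(2\Pisym+\Phi\Phi^\top) = \Tr(\Sigma) = \E\|x\|^4 = d^2+2d$, while $\Tr(\tilde\Sigma) = \E[\|x\|^4/\|x\|^4] = 1$, so $c = 1/(d^2+2d)$, yielding the stated expression.

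\textbf{Step 3 (pseudoinverses).} A quick calculation shows that both $\Pisym$ and $\Phi\Phi^\top$ annihilate every antisymmetric vector: for $\Phi\Phi^\top$ because $(\Phi\Phi^\top v)_{(i,k)} = \delta_{ik}\sum_j v_{(j,j)}$, and for $\Pisym$ by definition. Hence $\Sigma$ also vanishes on the antisymmetric subspace, and the pseudoinverse is the inverse on the symmetric subspace extended by $0$. On $\mathrm{Im}(\Pisym)$, $\Sigma$ acts as $2\,I_{\mathrm{sym}} + \Phi\Phi^\top$, and $\Phi$ lies in the symmetric subspace with $\|\Phi\|^2 = d$. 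Sherman--Morrison then gives $(2I_{\mathrm{sym}}+\Phi\Phi^\top)^{-1} = \tfrac12 I_{\mathrm{sym}} - \tfrac{1}{2(d+2)}\Phi\Phi^\top$, so $\Sigma^+ = \tfrac12\Pisym - \tfrac{1}{2(d+2)}\Phi\Phi^\top$. The same argument applied to $\tilde\Sigma$ (or simply rescaling by $d^2+2d$) yields the formula for $\tilde\Sigma^+$.

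\textbf{Step 4 (formula for $R$ and norm bounds).} Write $\hat\Phi = \Phi/\sqrt d$ so that $\Sigma^+ = \tfrac12(\Pisym - \hat\Phi\hat\Phi^\top) + \tfrac{1}{d+2}\hat\Phi\hat\Phi^\top$, a spectral decomposition with orthogonal projectors $\Pisym - \hat\Phi\hat\Phi^\top$ and $\hat\Phi\hat\Phi^\top$. Taking the square root and multiplying by $\sqrt 2$ gives $R = (\Pisym - \hat\Phi\hat\Phi^\top) + \sqrt{2/(d+2)}\,\hat\Phi\hat\Phi^\top = \Pisym - \tfrac{1}{d}\bigl(1-\sqrt{2/(d+2)}\bigr)\Phi\Phi^\top$. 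The eigenvalues of $R$ are $0$, $1$, and $\sqrt{2/(d+2)}\le 1$, so $\|R\|=1$. For the last identity, expand $R(v\otimes v)$ using $\langle\Phi,v\otimes v\rangle = \|v\|^2$ and $\|\Phi\|^2=d$; the cross term and the squared correction combine (writing $\alpha := 1-\sqrt{2/(d+2)}$) as $\|v\|^4\bigl[1 - \tfrac1d \alpha(2-\alpha)\bigr] = \|v\|^4\bigl[1-\tfrac1d(1-(1-\alpha)^2)\bigr] = \|v\|^4\bigl[1 - \tfrac{1}{d+2}\bigr]$, giving the claimed identity.

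The only ``hard'' part is really the bookkeeping in Step~4; the remaining steps are standard Gaussian moment computation and a one-line Sherman--Morrison inversion.
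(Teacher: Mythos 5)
Most of your proposal is sound, and parts of it take a genuinely different (and more direct) route than the paper: you compute $\Sigma$ entrywise via Wick's formula, whereas the paper proves a general structural lemma (\pref{lem:cov-rot-sym}) for arbitrary rotationally symmetric distributions using explicit rotations; your Sherman--Morrison inversion on the symmetric subspace, the spectral decomposition giving $R$ with eigenvalues $\{0,1,\sqrt{2/(d+2)}\}$, and the expansion of $\|R(v\otimes v)\|^2$ are all correct (the paper instead just verifies $MM^+=\Pisym$ and evaluates the quadratic form via $R^2 = 2\Sigma^+$, which saves a little algebra but is equivalent).

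The genuine gap is Step 2. Invariance under coordinate sign flips and coordinate permutations, together with symmetry of $\E[u_iu_ju_ku_\ell]$ in the four indices, does \emph{not} force $\tilde\Sigma$ to be a multiple of $2\Pisym+\Phi\Phi^\top$. Any distribution with those discrete symmetries has fourth-moment matrix $m_{22}\,(2\Pisym+\Phi\Phi^\top) + (m_4-3m_{22})\,D$, where $m_4=\E u_1^4$, $m_{22}=\E u_1^2u_2^2$, and $D=\sum_i (e_i\otimes e_i)(e_i\otimes e_i)^\top$; proportionality holds iff $m_4=3m_{22}$, which your stated symmetries do not imply (e.g.\ $u$ uniform over $\{\pm e_1,\ldots,\pm e_d\}$ satisfies all of them but has $m_{22}=0$). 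So the step as written fails, and the trace normalization cannot rescue it, since it only fixes the constant once proportionality is known. The missing input is genuine rotational invariance --- this is exactly what the paper's \pref{lem:cov-rot-sym} supplies via the rotation sending $e_1\mapsto (e_1+e_2)/\sqrt2$, which yields $\E u_1^4 = 3\,\E u_1^2u_2^2$. Alternatively, and most compatible with your route: for $x\sim\cN(0,\Id_d)$ the radius $\|x\|$ and direction $x/\|x\|$ are independent, so $\Sigma = \E\|x\|^4\cdot\tilde\Sigma$ immediately, and since $\E\|x\|^4 = d^2+2d$ (your trace computation) this gives $\tilde\Sigma = \tfrac{1}{d^2+2d}\Sigma$ with no symmetry argument at all. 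With that one-line repair, the remainder of your proof goes through and matches all the stated formulas.
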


We will derive \pref{fact:sigma} as a corollary of a more general claim about rotationally symmetric distributions.
\begin{lemma}
\label{lem:cov-rot-sym}
  Let $\cD$ be a distribution over $\R^d$ which is rotationally symmetric; that is, for any rotation $R$, $x \sim \cD$ is distributed identically to $Rx$.
  Let $\Sigma = \E_{x \sim \cD} (x x^\top)^{\otimes 2}$, let $\Phi = \sum_i e_i^{\otimes 2} \in \R^{d^2}$ and let $\Pisym$ be the projector to the symmetric subspace of $\R^{d^2}$ (the span of vectors of the form $x^{\otimes 2}$ for $x\in \R^d$).
  Then there is a constant $r$ so that
  \begin{displaymath}
    \Sigma = 2r \, \Pisym + r \, \Phi \Phi^\top \mper
  \end{displaymath}
  Furthermore, $r$ is given by
  \[
  r = \E \iprod{x, a}^2 \iprod{x, b}^2  = \tfrac{1}{3} \E \iprod{x, a}^4
  \]
  where $a,b$ are orthogonal unit vectors.
\end{lemma}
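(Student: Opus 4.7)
}

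The plan is to expand $\Sigma$ in coordinates and use the assumed rotational invariance of $\cD$ to constrain the resulting 4-index tensor. Writing $\Sigma_{(ij),(kl)} = \E_{x \sim \cD}\, x_i x_j x_k x_l$, I would first observe that because $\cD$ is invariant under every rotation $R$, the 4-tensor $T_{ijkl} \defeq \E\, x_i x_j x_k x_l$ is a rotation-invariant tensor of rank 4, i.e.\ $T_{ijkl} = T_{abcd}\, R_{ai} R_{bj} R_{ck} R_{dl}$ for every $R \in SO(d)$. Moreover $T$ is symmetric under every permutation of its four indices (since the entries of $x$ commute). A classical fact from invariant theory is that every such totally symmetric $SO(d)$-invariant 4-tensor is a scalar multiple of the symmetrized Kronecker tensor $\delta_{ij}\delta_{kl} + \delta_{ik}\delta_{jl} + \delta_{il}\delta_{jk}$. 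Hence there exists a scalar $r$ with
\[
T_{ijkl} \;=\; r\,(\delta_{ij}\delta_{kl} + \delta_{ik}\delta_{jl} + \delta_{il}\delta_{jk})\mper
\]

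Next I would translate this identity back into matrix form. The matrix $\Phi\Phi^\top$ has entries $(\Phi\Phi^\top)_{(ij),(kl)} = \delta_{ij}\delta_{kl}$, and the projector $\Pisym$ onto the symmetric subspace of $\R^d \otimes \R^d$ has entries $(\Pisym)_{(ij),(kl)} = \tfrac{1}{2}(\delta_{ik}\delta_{jl} + \delta_{il}\delta_{jk})$. Substituting,
\[
\Sigma_{(ij),(kl)} \;=\; r\,\delta_{ij}\delta_{kl} + 2r \cdot \tfrac{1}{2}(\delta_{ik}\delta_{jl} + \delta_{il}\delta_{jk}) \;=\; \bigl(2r\,\Pisym + r\,\Phi\Phi^\top\bigr)_{(ij),(kl)}\mper
\]
Finally I would identify $r$ by evaluating particular entries: taking $i = k,\ j = l$ with $i \neq j$ gives $\Sigma_{(ij),(ij)} = \E x_i^2 x_j^2 = r$, which is exactly $\E\iprod{x,a}^2\iprod{x,b}^2$ for the orthogonal unit vectors $a = e_i,\ b = e_j$, and by rotational invariance this equals the same quantity for any orthogonal unit $a,b$. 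Taking $i=j=k=l$ gives $\E x_i^4 = 3r$, which by the same invariance yields $r = \tfrac{1}{3}\E\iprod{x,a}^4$.

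The only nontrivial step is invoking the classification of rotation-invariant rank-4 tensors; if one prefers a more elementary route, the same conclusion can be obtained by testing $T$ against specific rotations (reflections in a single coordinate and permutations of coordinates suffice to kill all non-$\delta$-product terms and to equate the three coefficients), avoiding any appeal to invariant-theoretic machinery. Either way, the rest of the argument is bookkeeping.
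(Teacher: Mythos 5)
Your proposal is correct, but it assembles the result differently from the paper. The paper never classifies the fourth-moment tensor entrywise: it first derives the needed moment identities (vanishing of the mixed moments $\E\iprod{\hat a,x}\iprod{\hat b,x}\iprod{\hat c,x}\iprod{\hat d,x}$, $\E\iprod{\hat a,x}\iprod{\hat b,x}\iprod{\hat c,x}^2$, $\E\iprod{\hat a,x}\iprod{\hat b,x}^3$, and the ratio $\E\iprod{\hat a,x}^4 = 3\,\E\iprod{\hat a,x}^2\iprod{\hat b,x}^2$) from explicit rotations, and then, rather than matching entries, evaluates the quadratic form $\iprod{u,\Sigma u}$ for $u$ in the symmetric subspace via the eigendecomposition of the unfolded matrix, using Frobenius norm $=\sum_i\lambda_i^2$ and trace $=\sum_i\lambda_i$ to recognize $2r\,\iprod{u,\Pisym u}+r\,\iprod{u,\Phi\Phi^\top u}$. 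Your route replaces that second step by the cleaner observation that $(\Phi\Phi^\top)_{(ij),(kl)}=\delta_{ij}\delta_{kl}$ and $(\Pisym)_{(ij),(kl)}=\tfrac12(\delta_{ik}\delta_{jl}+\delta_{il}\delta_{jk})$, and replaces the first step by the invariant-theoretic classification of totally symmetric rotation-invariant $4$-tensors; this is more modular and makes the identification of $r$ immediate, at the cost of importing a black-box theorem whose $SO(d)$ form needs a word in low dimensions (for $d\le 4$ the Levi--Civita tensor enters the invariant ring, though total symmetry kills it; alternatively, conditioning on $\|x\|$ reduces everything to moments of the uniform sphere measure). Two small cautions on your ``elementary'' fallback: a reflection in a single coordinate and an odd coordinate permutation have determinant $-1$ and are not rotations, so they are not directly licensed by the hypothesis---this is exactly the subtlety the paper flags with its ``$d\ge 5$'' parenthetical---and one should instead use sign flips of two coordinates or in-plane $90^\circ$ rotations (plus a genuine $45^\circ$ rotation, as in the paper, to get the factor $3$). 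With that adjustment both arguments are complete and prove the same statement.
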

\begin{proof}
  First, $\Sigma$ is symmetric and operates nontrivially only on the symmetric subspace
  (in other words $\ker \Pisym \subseteq \ker \Sigma$).
  This follows from $\Sigma$ being an expectation over symmetric matrices whose
  kernels always contain the complement of the symmetric subspace.

  Let $\hat{a},\hat{b},\hat{c},\hat{d} \in \R^d$ be any four orthogonal unit vectors.
  Let $R$ be any rotation of $\R^d$ that takes $\hat{a}$ to $-\hat{a}$, but fixes $\hat{b}$, $\hat{c}$, and $\hat{d}$
  (this rotation exists for $d \ge 5$, but a different argument holds for $d \le 4$) \Jnote{}.
  By rotational symmetry about $R$, all of these quantities are $0$:
  \[ \E \iprod{\hat{a},x}\iprod{\hat{b},x}\iprod{\hat{c},x}\iprod{\hat{d},x} = 0, \]
  \[ \E \iprod{\hat{a},x}\iprod{\hat{b},x}\iprod{\hat{c},x}^2 = 0, \qquad
     \E \iprod{\hat{a},x}\iprod{\hat{b},x}^3 = 0. \]

  Furthermore, let $Q$ be a rotation of $\R^d$ that takes $\hat{a}$ to $(\hat{a}+\hat{b})/\sqrt{2}$.
  Then by rotational symmetry about $Q$,
  \begin{align*} \E \iprod{\hat{a},x}^4 = \E \iprod{\hat{a},Qx}^4 = \E \tfrac{1}{4} \iprod{\hat{a}+\hat{b},x}^4
    = \E \tfrac{1}{4}[\iprod{\hat{a},x}^4 + \iprod{\hat{b},x}^4 + 6 \iprod{\hat{a},x}^2\iprod{\hat{b},x}^2]
  \end{align*}
  Thus, since $\E \iprod{\hat{a},x}^4 = \E \iprod{\hat{b},x}^4$ by rotational symmetry, we have
  \[ \E \iprod{\hat{a},x}^4 = 3 \E \iprod{\hat{a},x}^2\iprod{\hat{b},x}^2. \]

  So let $r \seteq \E \iprod{\hat{a},x}^2 \iprod{\hat{b},x}^2 = \tfrac{1}{3} \E \iprod{\hat{a},x}^4$.
  By rotational symmetry, $r$ is constant over choice of orthogonal unit vectors $\hat{a}$ and $\hat{b}$.

  Since $\Sigma$ operates only on the symmetric subspace,
  let $u \in \R^{d^2}$ be any unit vector in the symmetric subspace.
  Such a $u$ unfolds to a symmetric matrix in $\R^{d \times d}$,
  so that it has an eigendecomposition $u = \sum_{i=1}^d \lambda_i u_i \tensor u_i$.
  Evaluating $\iprod{u, \Sigma u}$,
  \begin{align*}
    \iprod{u, \Sigma u} & = \sum_{i,j = 1}^d \E \lambda_i \lambda_j \iprod{x,u_i}^2 \iprod {x,u_j}^2 \quad \text{ other terms are $0$ by above}\\
                        & = 3r\sum_{i=1}^d \lambda_i^2 + r\sum_{i \ne j} \lambda_i \lambda_j \\
                        & = 2r \sum_{i=1}^d \lambda_i^2 + r\Paren{\sum_{i = 1}^d \lambda_i }^2 \\
                        & = 2r \, \|u\|^2 + r \Paren{\sum_{i = 1}^d \lambda_i }^2\quad \text{Frobenious norm is sum of squared eigenvalues}\\
                        & = 2r \, \|u\|^2 + r \Paren{\sum_i u_{i,i}}^2 \quad \text{trace is sum of eigenvalues}\\
                        & = 2r \, \iprod{u, \Pisym u} + r \, \iprod{u, \Phi \Phi^\top u}\mcom
  \end{align*}
  so therefore $\Sigma = 2r \,\Pisym + r \, \Phi\Phi^{\top}$.
\end{proof}
\begin{proof}[Proof of \pref{fact:sigma}]
  When $x \sim \cN(0,\Id_d)$, the expectation $\E \iprod{x,a}^2 \iprod{x,b}^2 = 1$ is just a product of
  independent standard Gaussian second moments.
  Therefore by \pref{lem:cov-rot-sym}, $\Sigma = 2 \, \Pisym + \Phi\Phi^{\top}$.

  To find $\tilde \Sigma$ where $x$ is uniformly distributed on the unit sphere,
  we compute
  \[
    1 = \E \|x\|^4 = \sum_{i,j} \E x_i^2 x_j^2 = d \, \E x_1^4 + (d^2 - d) \, \E x_1^2 x_2^2
  \]
  and use the fact that $\E x_1^4 = 3\, \E x_1^2$ (by \pref{lem:cov-rot-sym}) to find that
  $\E x_1^2 x_2^2 = \tfrac{1}{d^2 + 2d}$, and therefore by \pref{lem:cov-rot-sym},
  $\tilde \Sigma = \frac{2}{d^2 + 2d} \Pisym + \frac{1}{d^2 + 2d} \Phi\Phi^{\top}$.

  To verify the pseudoinverses, it is enough to check that $MM^+ = \Pisym$ for each matrix $M$
  and its claimed pseudoinverse $M^+$.

  To show that
    \[
      \|\signh (v\tensor v) \|_2^2 = \left(1-\tfrac{1}{d+2}\right)\cdot \|v\|^4 \mcom
    \]
  for any $v \in \R^d$, we write
  $\|\signh(v \tensor v)\|_2^2 = (v\tensor v)^\top R^2(v\tensor v)$
  and use the substitution $R^2 = 2\Sigma^+$, along with the facts that
  $\Pisym (v \tensor v) = v \tensor v$ and $\iprod{\Phi, v \tensor v} = \|v\|^2$.

\end{proof}

Now we can prove some concentration claims we deferred:
\restatelemma{lem:gap-scalar}
\begin{proof}[Proof of \pref{lem:gap-scalar}]
We prove the first item:
\begin{align*}
  \sum_{i \neq j} \iprod{u_j, R^2 u_i}^2 & = \sum_{i \neq j} \iprod{u_j, 2 \Sigma^+ u_i}^2\nonumber \\
  & = \sum_{i \neq j} \iprod{u_j, (\Pisym - \tfrac{1}{d+2} \Phi \Phi^\top)u_i}^2 \quad \text{by \pref{fact:sigma}}\nonumber\\
  & = \sum_{i \neq j} (\iprod{a_j, a_i}^2 - \tfrac{1}{d+2} \|u_j\|^2 \|u_i\|^2)^2 \nonumber\\
  & = \sum_{i \neq j} \tO(1/d)^2 \quad \text{\wovp by \pref{fact:SIP}}\nonumber\\
  & = \tO(n/d^2)\mper
\end{align*}
And one direction of the second item, using \pref{fact:sigma} and \pref{fact:SIP} (the other direction is similar):
\begin{align*}
  \|Ru_j\|^2 = \iprod{u_j, R^2 u_j} = \iprod{u_j, (\Pisym + \tfrac{1}{d+2} \Phi \Phi^\top)u_j} = (1 - \Theta(1/d))\|a_j\|^4 = 1 - \tO(1/\sqrt{d})
\end{align*}
where the last equality holds \wovp.
\end{proof}

\subsubsection{Proof of \pref{lem:bounded-signal}}
\label{sec:bounded-signal}

To prove \pref{lem:bounded-signal} we will begin by reducing to the case $S = [n]$ via the following.
\begin{lemma}
  \label{lem:p-pi-1}
  Let $v_1,\ldots,v_n \in \R^d$.
  Let $A_S$ have columns $\{v_i\}_{i \in S}$.
  Let $\Pi_S$ be the projector to $\Span \{ v_i \}_{i \in S}$.
  Suppose there is $c \geq 0$ so that $\|A_{[n]}^\top A_{[n]} - \Id_n\| \leq c$.
  Then for every $S \subseteq [n]$, $\|A_{S} A_{S}^\top - \Pi_S\| \leq c$
\end{lemma}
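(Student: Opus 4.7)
The plan is to relate the two operator norms via the singular value decomposition of $A_S$, using the elementary fact that $A_S A_S^\top$ and $A_S^\top A_S$ share the same nonzero spectrum.

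First, I will restrict attention to the index set $S$. Observe that $A_S^\top A_S$ is a principal submatrix of $A_{[n]}^\top A_{[n]}$: it is obtained by selecting the rows and columns indexed by $S$. Since a principal submatrix of a symmetric matrix has operator norm bounded by that of the whole matrix (it equals $P^\top M P$ where $P$ is the inclusion of coordinates in $S$), the hypothesis gives
\[
\|A_S^\top A_S - \Id_{|S|}\| \leq \|A_{[n]}^\top A_{[n]} - \Id_n\| \leq c.
\]

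Next I will write $A_S = U\Sigma V^\top$ in singular value decomposition, with $U \in \R^{d \times r}$ and $V \in \R^{|S| \times r}$ having orthonormal columns, and $\Sigma \in \R^{r\times r}$ diagonal with the positive singular values $\sigma_1, \dots, \sigma_r$ on the diagonal ($r$ being the rank of $A_S$). Then $A_S A_S^\top = U \Sigma^2 U^\top$ and the projector to $\Span\{v_i\}_{i \in S}$ is precisely $\Pi_S = U U^\top$, so
\[
A_S A_S^\top - \Pi_S = U(\Sigma^2 - \Id_r) U^\top,
\]
and therefore $\|A_S A_S^\top - \Pi_S\| = \max_i |\sigma_i^2 - 1|$.

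The final step is to bound $\max_i |\sigma_i^2 - 1|$ by $c$. The eigenvalues of $A_S^\top A_S$ are $\sigma_1^2,\dots,\sigma_r^2$ together with $|S|-r$ additional zeros. The bound from the first step says every eigenvalue of $A_S^\top A_S$ lies in $[1-c,1+c]$; applied to the nonzero eigenvalues this gives exactly $|\sigma_i^2 - 1| \leq c$ for all $i \leq r$, which completes the proof. (The only mild subtlety is the case $r < |S|$, which forces $c \geq 1$ since the zero eigenvalues must also lie within $c$ of $1$; even then the argument above applies to the $\sigma_i^2$, and the bound $\max_i |\sigma_i^2 - 1| \leq c$ still holds.) There is no real obstacle here — the content of the lemma is just the translation between Gram-matrix concentration on the right and column-space concentration on the left, mediated by the SVD.
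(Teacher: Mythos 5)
Your proof is correct and follows essentially the same route as the paper: first pass from $A_{[n]}^\top A_{[n]}$ to $A_S^\top A_S$ via the principal-submatrix observation, then transfer the bound to $A_S A_S^\top - \Pi_S$ using the fact that the two Gram matrices share their nonzero spectrum. The SVD computation you give is just the explicit version of the paper's one-line appeal to that shared spectrum, so there is nothing to add.
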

\begin{proof}
  If the hypothesized bound $\|A_{[n]}^\top A_{[n]} - \Id_n\| \leq c$ holds then for every $S \subseteq [n]$ we get $\|A_S^\top A_S - \Id_{|S|} \| \leq c$ since $A_S^\top A_S$ is a principal submatrix of $A_{[n]}^\top A_{[n]}$.
  If $\|A_S^\top A_S - \Id_{|S|}\| \leq c$, then because $A_S A_S^\top$ has the same nonzero eigenvalues as $A_S^\top A_S$,
  we must have also $\|A_S A_S^\top - \Pi_S\| \leq c$.
\end{proof}

It will be convenient to reduce concentration for matrices involving $a_i \tensor a_i$ to analogous matrices where the vectors $a_i \tensor a_i$ are replaced by isotropic vectors of constant norm.
The following lemma shows how to do this.
\begin{lemma}
  \label{lem:isotropy}
  Let $a \sim \cN(0, \tfrac 1 d \Id_d)$.
  Let $\tilde \Sigma \seteq \E_{x \sim \cN(0, \Id_d)} (xx^\top)^{\tensor 2} / \|x\|^4 $.
  Then $u \seteq (\tilde \Sigma^{+})^{1/2} a \tensor a / \|a\|^2$ is an isotropic random vector in the symmetric subspace $\Span \{ y \tensor y \mid y \in \R^d \}$
  with $\|u\| = \sqrt{\dim \Span \{ y \tensor y \mid y \in \R^d \} }$.
\end{lemma}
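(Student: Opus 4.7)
The plan is to reduce the statement to a computation involving the explicit formula for $\tilde \Sigma^+$ from \pref{fact:sigma}. The first step is to observe that since $a \sim \cN(0,\tfrac{1}{d}\Id_d)$, the normalized vector $w \seteq a/\|a\|$ is distributed uniformly on the unit sphere in $\R^d$, and is independent of $\|a\|$. Consequently $a \tensor a / \|a\|^2 = w \tensor w$, and in particular this vector lies in the symmetric subspace $V \seteq \Span\{y \tensor y : y \in \R^d\}$ almost surely. Rewriting the defining expression $\tilde \Sigma = \E_{x \sim \cN(0,\Id_d)} (xx^\top)^{\tensor 2}/\|x\|^4$ as $\tilde\Sigma = \E (w \tensor w)(w \tensor w)^\top$ identifies $\tilde \Sigma$ as the second-moment matrix of the random vector $v \seteq w \tensor w$.

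For the isotropy claim, I would write $u = (\tilde \Sigma^+)^{1/2} v$ and compute
\[
\E u u^\top = (\tilde \Sigma^+)^{1/2} \, (\E v v^\top) \, (\tilde \Sigma^+)^{1/2} = (\tilde \Sigma^+)^{1/2} \, \tilde \Sigma \, (\tilde \Sigma^+)^{1/2} = \Pi_V,
\]
where the last equality holds because $\tilde \Sigma$ is PSD with image exactly $V$ (as visible from \pref{fact:sigma}), so that $(\tilde \Sigma^+)^{1/2} \tilde \Sigma (\tilde \Sigma^+)^{1/2}$ is the orthogonal projector onto $V$. This is exactly the statement that $u$ is isotropic on $V$. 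It also confirms that $u$ takes values in $V$, since $(\tilde \Sigma^+)^{1/2}$ vanishes on $V^\perp$ and maps $V$ into itself.

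The norm claim is where one must invoke the explicit formula from \pref{fact:sigma}, namely $\tilde \Sigma^+ = \tfrac{d^2+2d}{2}\Pisym - \tfrac{d}{2}\Phi\Phi^\top$. Since $v = w \tensor w$ with $\|w\|=1$, we have $\Pisym v = v$ and $\|v\|^2 = \|w\|^4 = 1$, and also $\iprod{\Phi, v} = \sum_i w_i^2 = \|w\|^2 = 1$. Therefore
\[
\|u\|^2 = v^\top \tilde \Sigma^+ v = \tfrac{d^2 + 2d}{2} \cdot 1 - \tfrac{d}{2} \cdot 1 = \tfrac{d(d+1)}{2},
\]
which is a deterministic quantity equal to $\dim V$. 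The main (mild) subtlety is the apparent surprise that $\|u\|$ is non-random despite $u$ being a nonlinear function of the random Gaussian $a$; this is resolved by the observation that the two quantities $\|v\|^2$ and $\iprod{\Phi,v}$ on which the quadratic form $v^\top \tilde\Sigma^+ v$ depends are both identically $1$ on the orbit $\{w \tensor w : \|w\|=1\}$. With these two calculations in hand the lemma follows immediately.
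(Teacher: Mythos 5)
Your proposal is correct and follows essentially the same route as the paper: the norm claim is established exactly as in the paper's proof, by plugging the explicit formula $\tilde\Sigma^+ = \tfrac{d^2+2d}{2}\Pisym - \tfrac d2 \Phi\Phi^\top$ from \pref{fact:sigma} into the quadratic form and using $\Pisym(a\tensor a)/\|a\|^2 = (a\tensor a)/\|a\|^2$ and $\iprod{\Phi, a\tensor a} = \|a\|^2$. The only difference is cosmetic: the paper dismisses isotropy as holding ``by definition'' of whitening by $(\tilde\Sigma^+)^{1/2}$, whereas you spell out the one-line verification $\E uu^\top = (\tilde\Sigma^+)^{1/2}\tilde\Sigma(\tilde\Sigma^+)^{1/2} = \Pi_V$, which is a fine (slightly more explicit) way to say the same thing.
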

\begin{proof}
  The vector $u$ is isotropic by definition so we prove the norm claim.
  Let $\tilde \Phi = \Phi / \|\Phi \|$.
  By \pref{fact:sigma},
  \[
    {\tilde \Sigma}^+ = \tfrac{d^2 + 2d}{2} \Pisym - \tfrac{d}{2} \Phi \Phi^\top
  \]
  Thus,
  \[
  \|u \|^2 = \iprod{ \tfrac{a \tensor a}{\|a\|^2}, {\tilde \Sigma}^+ \tfrac{a \tensor a}{\|a\|^2}}
    = \tfrac{d^2 + 2d}{2} - \tfrac{d}{2} = \tfrac{d^2 + d}{2} = \dim \Span \{ y \tensor y \mid y \in \R^d \}\mper\qedhere
  \]
  \end{proof}
  The last ingredient to finish the spectral bound is a bound on the incoherence of independent samples from $(\tilde \Sigma^+)^{1/2}$.
  \begin{lemma}
    \label{lem:incoherence}
    Let $\tilde \Sigma = \E_{a \sim \cN(0,\Id_d)} (aa^\top \tensor aa^\top)/\|a\|^4$.
    Let $a_1,\ldots,a_n \sim \cN(0,\Id_d)$ be independent, and let $u_i = (\tilde \Sigma^+)^{1/2} (a_i \tensor a_i) / \|a_i \|^2$.
    Let $d' = \dim \Span \{ y \tensor y \mid y \in \R^d \} = \tfrac{1}{2}(d^2 + d)$.
    Then
    \[
       \tfrac 1 {d'} \E  \max_i \sum_{j \neq i} \iprod{u_i, u_j}^2 \leq\tO(n)\mper
    \]
  \end{lemma}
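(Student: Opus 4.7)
The plan is to reduce the bound on $\E \max_i \sum_{j \neq i} \iprod{u_i, u_j}^2$ to a concentration argument for each fixed $i$, followed by a union bound. The starting point is the observation, from \pref{lem:isotropy}, that $\E u_j u_j^\top = \Pi_{\mathrm{sym}}$: since each $u_i$ lies in the symmetric subspace, this gives the exact expectation
\[
  \E\!\left[\iprod{u_i,u_j}^2 \,\big|\, u_i\right] = u_i^\top \Pi_{\mathrm{sym}} u_i = \|u_i\|^2 = d'\mper
\]
Hence, for any fixed $i$, $\E \sum_{j \neq i} \iprod{u_i, u_j}^2 = (n-1)d'$, which is already $\tO(n) \cdot d'$ as desired; the work is to boost this from a single-$i$ expectation bound to a max-over-$i$ tail bound.

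For each fixed $i$, conditional on $u_i$ the $n-1$ random variables $\iprod{u_i, u_j}^2$ are i.i.d.\ and non-negative with mean $d'$. The naive uniform bound $\iprod{u_i, u_j}^2 \le \|u_i\|^2\|u_j\|^2 = (d')^2$ is too weak to feed into Bernstein. To get a usable tail, I will use the explicit formula from \pref{fact:sigma} to write
\[
  \iprod{u_i, u_j} \;=\; \tfrac{d^2+2d}{2}\cdot\tfrac{\iprod{a_i,a_j}^2}{\|a_i\|^2\|a_j\|^2} \;-\; \tfrac{d}{2}\mper
\]
Conditioning on the overwhelmingly probable event from \pref{fact:SIP} that $\|a_j\|^2 = d(1\pm \tO(1/\sqrt d))$ for all $j$, the quantity $\iprod{u_i,u_j}$ becomes a bounded-degree rational function whose numerator is a polynomial of constant degree in the Gaussian $a_j$, and therefore $\iprod{u_i,u_j}^2$ enjoys the sub-exponential tails guaranteed by \pref{lem:gaus-polys}. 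In particular, $\iprod{u_i,u_j}^2 \le \tO(d')$ except with probability $d^{-\omega(1)}$, giving a truncation threshold much smaller than the pessimistic $(d')^2$.

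With this truncation threshold $B = \tO(d')$ and variance bounded by $\E \iprod{u_i,u_j}^4 = O((d')^2)$ (computable using the same explicit formula and standard Gaussian fourth-moment estimates), the truncated matrix/scalar Bernstein bound \pref{prop:truncated-bernstein} yields
\[
  \sum_{j\neq i} \iprod{u_i,u_j}^2 \;\le\; (n-1)d' + \tO(d'\sqrt{n})
\]
with overwhelming probability for each fixed $i$. A union bound over $i \in [n]$ gives $\max_i \sum_{j \neq i}\iprod{u_i,u_j}^2 \le \tO(n\,d')$ with overwhelming probability.

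Finally, to move from a high-probability bound to the expectation, note the deterministic inequality $\sum_{j \neq i} \iprod{u_i, u_j}^2 \le n(d')^2$ (from $\|u_i\|=\|u_j\|=\sqrt{d'}$). Splitting the expectation over the good event and its complement, whose probability is $d^{-\omega(1)}$, the contribution of the complement is $n(d')^2 \cdot d^{-\omega(1)} = o(nd')$, and dividing through by $d'$ gives the claimed $\tO(n)$. The main obstacle is Step~2: the tight enough tail on an individual $\iprod{u_i,u_j}^2$. The deterministic bound is a factor $d'$ too large, and the non-polynomial normalization by $\|a_j\|^2$ prevents a direct application of polynomial concentration; the fix is to condition on all norms being close to $d$ and treat the residual rational expression as a polynomial in $a_j$ on this event, then handle the complementary event by the crude deterministic bound.
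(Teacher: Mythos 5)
Your proof is correct, but it is considerably more machinery than the paper uses, and the two arguments part ways after the shared first step. Both proofs start from the same explicit expansion $\iprod{u_i,u_j} = \tfrac{d^2+2d}{2}\cdot\tfrac{\iprod{a_i,a_j}^2}{\|a_i\|^2\|a_j\|^2} - \tfrac d2$ (via \pref{fact:sigma}) and the same elementary fact that the normalized inner products $\iprod{a_i,a_j}^2/\|a_i\|^2\|a_j\|^2$ are $\tO(1/d)$. The paper then simply observes that $\E \max_{i\neq j}\iprod{a_i,a_j}^2/\|a_i\|^2\|a_j\|^2 \le \tO(1/d)$, so every single cross term satisfies $\iprod{u_i,u_j}^2 \le \tO(d^2) = \tO(d')$ (in expectation of the max), and the sum over the $n-1$ values of $j$ is bounded termwise by $n\cdot\tO(d')$; dividing by $d'$ finishes the proof with no concentration-of-sums argument at all. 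You instead compute the exact conditional mean $\E[\iprod{u_i,u_j}^2\mid u_i]=\|u_i\|^2=d'$ from the isotropy statement of \pref{lem:isotropy}, then run a truncated Bernstein bound (truncation at $\tO(d')$, per-term second moment $O((d')^2)$), a union bound over $i$, and a bad-event split using the deterministic bound $n(d')^2$. All of these steps check out (the conditioning-then-polynomial-tails step is slightly informal, but it is exactly the kind of truncation the paper itself handles via \pref{lem:truncate} and \pref{prop:truncated-bernstein}, and can be made clean with Cauchy--Schwarz on the truncated expectation), and your route buys a genuinely stronger conclusion --- a high-probability bound $\max_i\sum_{j\neq i}\iprod{u_i,u_j}^2 \le \tO(nd')$ rather than just an expectation bound --- at the cost of significant extra work. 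Since the downstream use in \pref{lem:bounded-signal} only needs the expectation bound required by \pref{thm:indep-columns}, the paper's crude termwise bound suffices, and it is worth noticing that because the target $\tO(n)\cdot d'$ exceeds the trivial count of $n$ terms times the typical size $\tO(d')$ of each, no cancellation or concentration of the sum is actually needed here.
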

  \begin{proof}
    Expanding $\iprod{u_i, u_j}^2$ and using ${\tilde \Sigma}^+ = \tfrac{d^2 + 2d}{2} \Pisym - \tfrac d 2 \Phi \Phi^\top$, we get
    \[
      \iprod{u_i, u_j}^2 = \Paren{\tfrac{d^2 + 2d}{2} \iprod{\tfrac{a_i \tensor a_i}{\|a_i\|^2}, \tfrac{a_j \tensor a_j}{\|a_j\|^2}} - \tfrac d 2}^2
        = \Paren{\tfrac{d^2 + 2d}{2} \cdot \tfrac{\iprod{a_i, a_j}^2}{\|a_i\|^2 \|a_j\|^2} - \tfrac d 2}^2
    \]
    From elementary concentration, $\E \max_{i \neq j} \iprod{a_i, a_j}^2 / \|a_i\|^2 \|a_j\|^2 \leq \tO(1/d)$, so the lemma follows by elementary manipulations.
  \end{proof}

We need the following bound on the deviation from expectation of a tall matrix with independent columns.
\begin{theorem}[Theorem 5.62 in \cite{DBLP:journals/corr/abs-1011-3027}]
  \label{thm:indep-columns}
  Let $A$ be an $N \times n$ matrix $(N \geq n)$ whose columns $A_j$ are independent isotropic random vectors in $\R^N$ with $\|A_j \|_2 = \sqrt N$ almost surely.
  Consider the incoherence parameter
  \[
    m \defeq \frac 1 N \E \max_{\inn} \sum_{\jni} \iprod{A_i, A_j}^2 \mper
  \]
  Then $\E \|\frac 1 N A^T A - \Id \| \leq C_0 \sqrt{\frac{m \log n}{N}}$.
\end{theorem}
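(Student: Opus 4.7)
The plan is to prove this operator-norm concentration bound for the normalized Gram matrix via the classical net-plus-decoupling strategy. Set $M \defeq \tfrac{1}{N} A^T A - \Id_n$. The assumption $\|A_j\|_2 = \sqrt{N}$ almost surely has the crucial consequence that $M$ is \emph{hollow}: $M_{ii} = 0$ and $M_{ij} = \tfrac{1}{N}\iprod{A_i, A_j}$ for $i \neq j$, with $\E M_{ij} = 0$ by independence and isotropy (mean-zero). So the problem reduces to controlling the off-diagonal degree-two chaos.

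First I would reduce from operator norm to a supremum of quadratic forms. Since $M$ is symmetric, a standard $1/4$-net $\cN \subset S^{n-1}$ of size at most $9^n$ gives $\|M\| \le 2\sup_{x \in \cN}|\iprod{x, Mx}|$. So it suffices to bound $\E\sup_{x \in \cN}|\iprod{x, Mx}|$ via a tail bound plus union bound.

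Second, for each fixed unit $x$, I would apply the de la Peña--Montgomery-Smith decoupling inequality to the hollow chaos $\sum_{i \neq j} x_i x_j \iprod{A_i, A_j}$: there is a universal constant $C$ such that
\[
    \E \varphi\Paren{\Abs{\tsum_{i \neq j} x_i x_j \iprod{A_i, A_j}}} \le \E \varphi\Paren{C\Abs{\iprod{Ax, A'x}}}
\]
for every convex nondecreasing $\varphi$, where $A'$ is an independent copy of $A$. This replaces the symmetric chaos by the bilinear form $\iprod{Ax, A'x}$. Conditional on $A$, write $\iprod{Ax, A'x} = \sum_j x_j \iprod{Ax, A'_j}$, a sum of independent centered terms (isotropy of $A'_j$ gives mean zero), with conditional variance $\sum_j x_j^2 \|Ax\|^2 = \|Ax\|^2$ and each summand bounded by $|x_j|\|Ax\|\sqrt{N}$. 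A Bernstein inequality then yields $|\iprod{Ax, A'x}| \lesssim \|Ax\|\sqrt{N \log|\cN|}$ with probability $1 - |\cN|^{-10}$ conditional on $A$, so after a union bound over the net, $\E\sup_{x \in \cN}|\iprod{x, Mx}| \lesssim \tfrac{1}{\sqrt N}\sqrt{n \log n} \cdot \E\sup_{x\in \cN}\|Ax\|/\sqrt N$.

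The main obstacle is closing the loop: the bound on $\|Ax\|$ depends on $M$ itself, and the incoherence parameter $m$ must enter the final bound in the right place. I would handle this by a bootstrap. Using Cauchy--Schwarz, one has $(\iprod{x, Mx})^2 \le \tfrac{1}{N^2}\sum_{i}x_i^2\sum_i\paren{\sum_{j \neq i}x_j\iprod{A_i, A_j}}^2 \leq \tfrac{1}{N^2}\max_i\sum_{j\neq i}\iprod{A_i, A_j}^2$, whose expectation is exactly $m/N$; this gives the correct second-moment scaling per net point, but the sup-over-net step must be run carefully so that the net cardinality $|\cN| \leq 9^n$ contributes only an $O(\log n)$ factor rather than $O(n)$. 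This is achieved by replacing the naive union bound with a chaining/Dudley argument applied to the conditionally subgaussian process from Step 3, so the final estimate self-consistently reads $\E\|M\| \leq C_0\sqrt{m \log n / N}(1 + \E\|M\|)^{1/2}$, which solves to give the stated inequality. Executing this chaining argument while preserving the $\sqrt{m}$ factor is the delicate point of the proof.
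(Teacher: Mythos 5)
This theorem is not proved in the paper at all---it is imported verbatim as Theorem 5.62 of Vershynin's survey---so the relevant comparison is with Vershynin's argument, and your proposal has a genuine gap precisely at the step you flag as ``delicate.'' A pointwise reduction (fix $x$, decouple the scalar chaos $\sum_{i\neq j}x_ix_j\iprod{A_i,A_j}$, apply Bernstein, then take a supremum over directions) cannot produce the $\sqrt{\log n}$ factor: a union bound over a $1/4$-net of size $9^n$ costs $\sqrt{n}$ in the subgaussian regime (and worse, $\max_j|x_j|\,\|Ax\|\sqrt{N}\cdot n$, in the subexponential regime), and replacing the union bound by Dudley/generic chaining does not help, since the index set is the whole sphere and the entropy integral is itself of order $\sqrt{n}$. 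The $\log n$ in the statement is the logarithm of the matrix dimension coming from matrix-level concentration, not from a net. Vershynin's proof works at the matrix level throughout: exact column norms make $\frac1N A^\top A-\Id$ hollow, a random-subset (Bourgain-style) decoupling bounds its norm by the norm of an off-diagonal block $\frac1N A_T^\top A_{T^c}$, one conditions on $A_{T^c}$ and applies the heavy-tailed-rows result (Rudelson's inequality / non-commutative Khintchine, which is where $\sqrt{\log n}$ enters with only logarithmic dimension dependence), the incoherence parameter $m$ controls the expected maximal row norm of the decoupled block, and finally a self-bounding inequality in $\E\|\frac1N A^\top A-\Id\|$ is solved. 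Your final displayed inequality has the right self-bounding shape, but nothing in the proposed route actually delivers the factor $\sqrt{m\log n}$ in front of it.

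There is also a concrete error in the step meant to introduce $m$: from Cauchy--Schwarz one gets $(x^\top Mx)^2\le \frac1{N^2}\sum_i\bigl(\sum_{j\neq i}x_j\iprod{A_i,A_j}\bigr)^2\le \frac{n}{N^2}\max_i\sum_{j\neq i}\iprod{A_i,A_j}^2$, not $\frac{1}{N^2}\max_i\sum_{j\neq i}\iprod{A_i,A_j}^2$; the extra factor of $n$ means this route only yields $\E\|M\|\le\sqrt{nm/N}$, which is weaker than the claimed bound by $\sqrt{n/\log n}$. Moreover, even as a per-point estimate it is only a second-moment bound, which cannot be combined with an exponential-size net by a union bound. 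So the hollowness observation, the legitimacy of de la Pe\~na--Montgomery-Smith decoupling for the scalar chaos, and the conditional Bernstein computation are all fine, but the supremum step and the bookkeeping that is supposed to make $m$ appear are exactly where the proof fails, and fixing them essentially forces you back to the block-decoupling plus non-commutative Khintchine argument of the cited source.
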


We are now prepared to handle the case of $S = [n]$ via spectral concentration for matrices with independent columns, \pref{thm:indep-columns}.
\restatelemma{lem:bounded-signal}

\begin{proof}[Proof of \pref{lem:bounded-signal}]
  By \pref{lem:p-pi-1} it is enough to prove the lemma in the case of $S = [n]$.
  For this we will use \pref{thm:indep-columns}.
  Let $A$ be the matrix whose columns are given by $a_i \tensor a_i$, so that $P_{[n]} = P = AA^\top$.
  Because $RAA^\top R$ and $A^{\top}RRA$ have the same nonzero eigenvalues,
  it will be enough to show that $\|A^\top R^2 A - \Id\| \leq \tO(\sqrt{n}/d) + \tO(n/d^{3/2})$ with probability $1 - o(1)$.
  (Since $n \leq d$ we have $\sqrt n / d = \tO(n/d^{3/2})$ so this gives the theorem.)

  The columns of $RA$ are independent, given by $R (a_i \tensor a_i)$.
  However, they do not quite satisfy the normalization conditions needed for \pref{thm:indep-columns}.
  Let $D$ be the diagonal matrix whose $i$-th diagonal entry is $\|a_i\|^2$.
  Let $\tilde \Sigma = \E_{x \sim \cN(0,\Id)} (xx^\top)^{\tensor 2} / \|x\|^4$.
  Then by \pref{lem:isotropy} the matrix $(\tilde \Sigma^+)^{1/2} D^{-1} A$ has independent columns from an isotropic distribution with a fixed norm $d'$.
  Together with \pref{lem:incoherence} this is enough to apply \pref{thm:indep-columns} to conclude that $\E \|\tfrac 1 {(d')^2} A^\top D^{-1} \tilde \Sigma^+ D^{-1} A - \Id\| \leq \tO(\sqrt n /d)$.
  By Markov's inequality, $\|\tfrac 1 {(d')^2} A^\top D^{-1} \tilde \Sigma^+ D^{-1} A - \Id\| \leq \tO(\sqrt n /d)$ with probability $1 - o(1)$.

  We will show next that $\|A^\top R^2 A - \tfrac 1 {(d')^2} A^\top D^{-1} \tilde \Sigma^+ D^{-1} A \| \leq \tO(n /d^{3/2})$ with probability $1 - o(1)$; the lemma then follows by triangle inequality.
  The expression inside the norm expands as
  \[
    A^{\top}(R^2 - \tfrac 1 {(d')^2} D^{-1} \tilde \Sigma^+ D^{-1} )A\mper
  \]
  and so
  \[
   \|A^\top R^2 A - \tfrac 1 {(d')^2} A^\top D^{-1} \tilde \Sigma^+ D^{-1} A \| \leq \|A\|^2 \|R^2 - \tfrac 1 {(d')^2} D^{-1} \tilde \Sigma^+ D^{-1}\|
  \]
  By \pref{fact:SIP}, with overwhelming probability $\|D - \Id\| \leq \tO(1/\sqrt{d})$.
  So $\|(1/d')^2 D^{-1} \tilde \Sigma^+ D^{-1} - (1/d')^2 \tilde \Sigma^+ \| \leq \tO(1/\sqrt{d})$ \wovp.
  We recall from \pref{fact:sigma}, given that $R = \sqrt{2}\cdot (\Sigma^+)^{1/2}$, that
  \[
    R^2 = \Pisym - \tfrac{1}{d + 2} \Phi \Phi^\top \quad \text{and} \quad \tfrac{1}{(d')^2}\tilde \Sigma^+ = \tfrac{d+2}{d+1} \Pisym - \tfrac 1 {d+1} \Phi \Phi^\top\mper
  \]
  This implies that $\|R^2 - (1/d')^2 \tilde \Sigma^+\| \leq O(1/d)$.
  Finally, by an easy application of \pref{prop:truncated-bernstein}, $\|A\|^2 = \|\sum_i (a_i a_i^\top)^{\tensor 2} \| \leq \tO(n/d)$ \wovp.
  All together, $\|A^\top R^2 A - \tfrac 1 {(d')^2} A^\top D^{-1} \tilde \Sigma^+ D^{-1} A \| \leq \tO(n/d^{3/2})$.
\end{proof}
\section{Concentration bounds for tensor principal component analysis}
\label{app:tensor-pca}
For convenience, we restate \pref{lem:tpca-concentration} here.
\begin{lemma}[Restatement of \pref{lem:tpca-concentration}]
  For any $v$, with high probability over $\bA$, the following occur:
  \begin{align*}
      \left \| \sum_i \Tr(A_i) \cdot A_i \right \| & \leq O(n^{3/2} \log^2 n)\\
    \left \| \sum_i v(i) \cdot A_i \right \| & \leq O(\sqrt n \log n)\\
    \left \|\sum_i \Tr(A_i) v(i) \cdot vv^T \right \| & \leq O(\sqrt n \log n)\mper
  \end{align*}
\end{lemma}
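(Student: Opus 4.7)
Since $\bA$ has iid standard Gaussian entries, the first-mode slices $A_1,\ldots,A_n$ are independent $n \times n$ matrices with iid $\cN(0,1)$ entries. I will treat the three bounds separately, each by a different standard tool.

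The main bound is the first, $\|\sum_i \Tr(A_i) A_i\| \le \tO(n^{3/2})$, for which I plan to use a (truncated) matrix Bernstein inequality applied to the independent summands $X_i \seteq \Tr(A_i)\cdot A_i$. The two ingredients required are an almost-sure bound on $\|X_i\|$ and a bound on the matrix variance $\sigma^2 = \max\{\|\sum_i \E X_i X_i^\top\|,\|\sum_i \E X_i^\top X_i\|\}$. The first is routine: $|\Tr(A_i)|$ is a scalar Gaussian with variance $n$, and $\|A_i\| \le O(\sqrt{n})$ with overwhelming probability, so $\|X_i\| \le \tO(n)$ \wovp (and heavier tails are controlled by \pref{lem:truncate}). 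For the variance, I compute $\E X_i X_i^\top = \E [\Tr(A_i)^2\, A_i A_i^\top]$ by Wick's theorem. The pair $(A_i,A_i^\top)$ is invariant under simultaneous conjugation by permutation matrices, so $\E X_i X_i^\top$ is a combination of $\Id$ and the all-ones matrix $J$; a direct enumeration of index pairings will show the off-diagonal entries all vanish (each would force $p=q$), while the diagonal entries equal $n^2 + O(1)$. Thus $\|\sum_i \E X_i X_i^\top\| = n(n^2 + O(1))$, and an analogous calculation handles $X_i^\top X_i$. Matrix Bernstein then gives $\|\sum_i X_i\| \le \tO(\sqrt{n \cdot n^2}) + \tO(n) = \tO(n^{3/2})$, which is slack compared to the stated $O(n^{3/2} \log^2 n)$.

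The second bound is immediate from the observation that $B \seteq \sum_i v(i) A_i$ has entries $B_{jk} = \sum_i v(i)(A_i)_{jk}$, which are jointly Gaussian with covariance $\E B_{jk}B_{j'k'} = \delta_{(j,k),(j',k')} \cdot \|v\|^2 = \delta_{(j,k),(j',k')}$. So $B$ is an $n \times n$ matrix of iid standard Gaussians, and standard Wigner/Gaussian-matrix concentration (e.g.\ \cite[Corollary 5.35]{DBLP:journals/corr/abs-1011-3027}) gives $\|B\| \le O(\sqrt{n} \log n)$ \wovp.

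The third bound factors as $\sum_i \Tr(A_i) v(i) \cdot vv^\top$ with $\|vv^\top\| = 1$. The scalar $\sum_i \Tr(A_i) v(i) = \sum_{i,j} v(i) (A_i)_{jj}$ is a centered Gaussian with variance $\sum_i v(i)^2 \cdot n = n$, so by the standard Gaussian tail bound (\pref{lem:gtail}) it is $O(\sqrt{n \log n})$ \wovp. The main obstacle is really just the Wick bookkeeping for the variance in the first bound; the other two claims are essentially immediate.
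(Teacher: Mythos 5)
Your proposal is correct and follows essentially the same route as the paper: a truncated matrix Bernstein bound for $\sum_i \Tr(A_i)A_i$ (with truncation at $\tO(n)$ and matrix variance $O(n^3)$), the observation that $\sum_i v(i)A_i$ is itself an iid Gaussian matrix, and a scalar Gaussian tail bound for $\sum_i v(i)\Tr(A_i)$. The only difference is cosmetic: you bound the variance term $\E[\Tr(A)^2 AA^\top]$ by an exact Wick enumeration, whereas the paper bounds it by integrating tail probabilities of $\|\Tr(A)^2AA^\top\|$; both yield $\sigma^2 = O(n^3)$ and the same final bounds.
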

\begin{proof}[Proof of \pref{lem:tpca-concentration}]
  We begin with the term $\sum_i \Tr(A_i) \cdot A_i$.
  It is a sum of iid matrices $\Tr(A_i) \cdot A_i$.
  A routine computation gives $\E \Tr(A_i) \cdot A_i = \Id$.
  We will use the truncated matrix Bernstein's inequality (\pref{prop:truncated-bernstein}) to bound $\| \sum_i \Tr(A_i) A_i \|$.

  For notational convenience, let $A$ be distributed like a generic $A_i$.
  By a union bound, we have both of the following:
  \begin{align*}
    \Pr\Big( \| \Tr(A) \cdot A\| \geq tn\Big)
      &\leq \Pr\Big(|\Tr(A)| \geq \sqrt{tn}\Big)
      + \Pr\Big(\|A\| \geq \sqrt{tn}\Big)\\
    \Pr\Big( \| \Tr(A) \cdot A - \Id\| \geq (t + 1)n\Big)
      &\leq \Pr\Big(|\Tr(A)| \geq \sqrt{tn}\Big)
      + \Pr\Big(\|A\| \geq \sqrt{tn}\Big)\mper
  \end{align*}

  Since $\Tr(A)$ the sum of iid Gaussians, $\Pr(|\Tr(A)| \geq \sqrt{tn}) \leq e^{-c_1t}$ for some constant $c_1$.
  Similarly, since the maximum eigenvalue of a matrix with iid entries has a subgaussian tail, $\Pr(\|A\| \geq \sqrt{tn}) \leq e^{-c_2t}$
  for some $c_2$.
  All together, for some $c_3$, we get
  $\Pr( \|\Tr(A) \cdot A \| \geq tn) \leq e^{-c_3t}$ and
  $\Pr( \|\Tr(A) \cdot A - \Id \| \geq (t + 1)n) \leq e^{-c_3t}$.

  For a positive parameter $\beta$, let $\Ind_{\beta}$ be the indicator variable
  for the event $\|\Tr(A) \cdot A \| \leq \beta$.
  Then
  \begin{align*}
    \E \|\Tr(A)\cdot A\| - \E \|\Tr(A)\cdot A\| \Ind_{\beta}
        & = \int_0^{\infty} \big[\Pr(\|\Tr A \cdot A\| > s)
                            -\Pr(\|\Tr A \cdot A\|\Ind_{\beta} > s)\big]\d s\\
        & = \beta \Pr(\|\Tr A \cdot A\| > \beta)
              + \int_{\beta}^{\infty} \Pr(\|\Tr A \cdot A\| > s) \d s \\
        & \le \beta e^{-c_3\beta/n}
              + \int_{\beta}^{\infty} \Pr(\|\Tr A \cdot A\| > s) \d s \\
        & = \beta e^{-c_3\beta/n}  + \int_{\beta/n}^\infty
            \Pr(\|\Tr A \cdot A\| \geq tn) \, n \d t\\
        & \leq \beta e^{-c_3\beta/n} + \int_{\beta/n}^\infty n e^{-c_3t} \d t\\
        & = \beta e^{-c_3\beta/n} + \tfrac{n}{c_3} e^{-c_3\beta/n} \mper
  \end{align*}
  Thus, for some $\beta = O(n \log n)$ we may take the parameters $p,q$ of \pref{prop:truncated-bernstein} to be $O(n^{-150})$.
  The only thing that remains is to bound the parameter $\sigma^2$.
  Since $(\E \Tr(A) \cdot A)^2 = \Id$, it is enough just to bound $\| \E \Tr(A)^2 AA^T \|$.
  We use again a union bound:
  \[
    \Pr( \| \Tr(A)^2 AA^T \| > tn^2 ) \leq \Pr (|\Tr(A)| > t^{1/4} \sqrt{n}) + \Pr(\|A\| > t^{1/4} \sqrt{n}) \mper
  \]
  By a similar argument as before, using the Gaussian tails of $\Tr A$ and $\|A\|$,
  we get $\Pr( \| \Tr(A)^2 AA^T \| > tn^2) \le e^{-c_4\sqrt{t}}$.
  Then starting out with the triangle inequality,
  \begin{align*}
    \sigma^2
    &=   \| n \cdot \E \Tr(A)^2AA^T \| \\
    &\le n \cdot \E \| \Tr(A)^2AA^T \| \\
    &=   n\cdot \int_0^{\infty} \Pr(\Tr(A)^2AA^T > s) \d s \\
    &=   n \cdot \int_0^{\infty} \Pr(\Tr(A)^2AA^T > tn^2) \, n^2 \d t \\
    &\le n \cdot \int_0^{\infty} e^{-c_4\sqrt{t}} \, n^2 \d t \\
    &= n \cdot \left[ -\frac{2n^2(c_4\sqrt{t} + 1)}{c_4^2} e^{-c_4\sqrt{t}}
        \right]_{t = 0}^{t = \infty} \\
    &\leq O(n^3) \mper
  \end{align*}
  This gives that with high probability,
  \[
    \left \| \sum_i \Tr(A_i) \cdot A_i \right \| \leq O(n^{3/2} \log^2 n)\mper
  \]

  The other matrices are easier.
  First of all, we note that the matrix $\sum_i v(i) \cdot A_i$ has independent standard Gaussian entries,
  so it is standard that with high probability $\|\sum_i v(i) \cdot A_i \| \leq O(\sqrt n \log n)$.
  Second, we have
  \[
    \sum_i v(i) \Tr(A_i) vv^T = vv^T \sum_i v(i) \Tr(A_i).
  \]
  The random variable $\Tr(A_i)$ is a centered Gaussian with variance $n$,
  and since $v$ is a unit vector, $\sum_i v(i) \Tr(A_i)$ is also a centered
  Gaussian with variance $n$.
  So with high probability we get
  \[
    \left\| vv^T \sum_i v(i) \Tr(A_i) \right\|
      = \left|\sum_i v(i) \Tr(A_i)\right|
      \leq O(\sqrt{n} \log n)
  \]
  by standard estimates.
  This completes the proof.
\end{proof}

\end{document}

